\newcommand{\bI}{\textbf{I}}
\newcommand{\bq}{\textbf{q}}
\newcommand{\bs}{\textbf{s}}
\newcommand{\bu}{\textbf{u}}
\newcommand{\bX}{\textbf{X}}
\newcommand{\by}{\textbf{y}}
\newcommand{\bY}{\textbf{Y}}
\newcommand{\bOmega}{\mbox{\boldmath $\Omega$}}
\newtheorem{theorem}{Theorem}
\begin{document}

\title[Bayesian joint quantile autoregression]{Bayesian joint quantile autoregression}

\author*[1]{\fnm{Jorge} \sur{Castillo-Mateo}}\email{jorgecm@unizar.es}
\author[2]{\fnm{Alan E.} \sur{Gelfand}}\email{alan@stat.duke.edu}
\author[1]{\fnm{Jes\'us} \sur{As\'in}}\email{jasin@unizar.es}
\author[1]{\fnm{Ana C.} \sur{Cebri\'an}}\email{acebrian@unizar.es}
\author[1]{\fnm{Jes\'us} \sur{Abaurrea}}\email{abaurrea@unizar.es}

\affil[1]{\orgdiv{Department of Statistical Methods}, \orgname{University of Zaragoza}, \orgaddress{\street{Pedro Cerbuna 12}, \city{Zaragoza}, \postcode{50009}, \country{Spain}}}
\affil[2]{\orgdiv{Department of Statistical Science}, \orgname{Duke University}, \orgaddress{\street{415 Chapel Dr}, \city{Durham}, \postcode{27705}, \state{NC}, \country{USA}}}

\abstract{Quantile regression continues to increase in usage, providing a useful alternative to customary mean regression.  Primary implementation takes the form of so-called \emph{multiple} quantile regression, creating a separate regression for each quantile of interest.  However, recently, advances have been made in \emph{joint} quantile regression, supplying  a quantile function which avoids crossing of the regression across quantiles.  Here, we turn to quantile autoregression (QAR), offering a fully Bayesian version.  We extend the initial quantile regression work of \cite{koenker2006} in the spirit of \cite{tokdar2012}.  We offer a directly interpretable parametric model specification for QAR.  Further, we offer a p-th order QAR(p) version, a multivariate QAR(1) version, and a spatial QAR(1) version.  We illustrate with simulation as well as a temperature dataset collected in Arag\'on, Spain.}

\keywords{copula model, Gaussian process, joint quantile model, Markov chain Monte Carlo, spatial quantile autoregression}

\pacs[MSC Classification]{62F15, 62G08, 62H05, 62M10, 62M30}

\maketitle

\section{Introduction}

For time series data, autoregressive (AR) modeling is perhaps the most common approach.  A lag one, AR(1), takes the form $Y_{t} = \mu + \rho(Y_{t-1} - \mu) +\epsilon_{t}$, with $\epsilon_{t}$ following a suitable zero-mean error distribution; a conditional mean is provided. By analogy, quantile autoregression (QAR) considers  conditional quantiles. 

An issue with quantile regression (QR) is the so-called quantile crossing problem.  Modeling quantiles individually enables rich modeling for a given quantile but allows for crossing of quantiles across quantile level $\tau$.  For arbitrary values of the regressors, $\bX$, we can not ensure that the resulting modeled quantiles will increase in $\tau$.  Such modeling is referred to as \emph{multiple} QR.  Inference typically proceeds by minimizing a check loss function or, more formally, assuming an asymmetric Laplace (AL) error term. Examples of multiple QR with AL errors appear in \cite{yu2001}, and \cite{kozumi2011} present a Gibbs sampler model fitting implementation.  \cite{lum2012}  work in the context of spatially referenced data and extend the AL model to a spatial process. \cite{castillo2022b} propose a very flexible spatial AL mixed effects QAR model.

Recent effort has focused on a \emph{joint} QR modeling to avoid quantile crossing.  Adopting restricted support for the regressors, $\bX$, the $\tau$-quantile will increase monotonically over $\tau \in (0,1)$. Foundational work appears in \cite{tokdar2012} using a Gaussian process (GP) with follow on work in \cite{das2017a} using splines. \cite{reich2011} developed a spatial joint QR model through spatially varying regression coefficients using Bernstein polynomials. \cite{yang2017} propose a novel parameterization that characterizes any collection of noncrossing quantile planes over arbitrarily shaped convex predictor domains. This parameterization was extended to spatial data by \cite{chen2021} through a copula process but a non-spatially varying quantile function results.  Joint modeling imposes strong restrictions on the class of permissible specifications; models outside of this class may be preferred.

\cite{koenker2006} offered an initial version of a joint QAR($p$) model.  Illustrating with $p = 1$, they consider
\begin{equation} \label{eq:process1}
	Y_t = \theta_0(U_t) + \theta_1(U_t) Y_{t-1},
\end{equation}
where $U_t$ is a sequence of IID standard uniform random variables. The $\theta$ functions, from $[0,1] \rightarrow \mathbb{R}$, need to be estimated. Provided that the right side of expression \eqref{eq:process1} is monotone increasing in $U_t$, the $\tau$ conditional quantile function of $Y_t$ given $y_{t-1}$ increases in $\tau$ and is
\begin{equation}\label{eq:model1}
    Q_{Y_t}(\tau \mid y_{t-1}) = \theta_0(\tau) + \theta_1(\tau) y_{t-1}.
\end{equation}
\cite{koenker2006} required both $\theta_0$ and $\theta_1$ to be strictly increasing  functions (referred to as comonotonicity).  Their suggested choices were $\theta_{0}(\tau) = \sigma\Phi^{-1}(\tau)$ with $\Phi$ the cdf of a standard normal distribution and $\theta_{1}(\tau) = \min\{\gamma_{0} + \gamma_{1}\tau, 1\}$ for $\gamma_0 \in (0,1)$ and $\gamma_1 > 0$.  If $y_{t-1} \geq 0$, comonotonicity ensures that $Q_{Y_{t}}(\tau \mid y_{t-1})$ will not cross as $\tau$ increases but under the restrictive assumption that the autoregression coefficient strictly increases in $\tau$.

Our contribution is to reconsider the work of \cite{koenker2006} in the context of \cite{tokdar2012}, providing flexible joint QAR modeling in a Bayesian framework. We characterize noncrossing QAR(1) also using two monotone curves, through a convenient class of cdf's. We note extension to the QAR($p$) model. We consider bivariate QAR, capturing dependence through a copula process. Then, for spatially referenced time series, we introduce  spatial dependence in the realizations and obtain spatially varying QAR's through spatially varying coefficients.

QAR models arise when time series are observed to display asymmetric dynamics; such data often appears in economic applications. \cite{koenker2006} show empirical applications of the QAR model to the USA unemployment rate and gasoline prices. Further examples in the literature consider dynamic additive quantile models, QR with cointegrated time series, and conditional quantiles with GARCH models. Applications include stock returns, house price returns, and gold prices.  See, e.g., \cite{li2015} or \cite{yang21} and references therein.

QR use is popular for climate data \citep[see][for an extensive review]{gao2017}. \cite{yang2018} propose a semiparametric QAR model including lagged data to develop quantile-based temperature extreme indices.  \cite{castillo2022b} use a rich QAR model to compare the effects of climate change in daily maximum temperature.  \cite{zhang2022evolutionary} use QR models conditional on the state of the previous observation time to predict short-term wind speed or velocity.

The outline of the paper is as follows. Section~\ref{sec:QAR1} provides a model characterization for the QAR(1) case.  Further, it offers explicit parametric model specifications, the resulting likelihood for Bayesian model fitting, some criteria for model assessment, and a simulation study. Section~\ref{sec:QARp} looks at the QAR($p$) case. Section~\ref{sec:MQAR1} considers the bivariate QAR(1) setting. Section~\ref{sec:SQAR1} develops a fully spatial version through the use of a Gaussian copula. Section~\ref{sec:temperature} employs time series of daily temperature data from 18 spatial locations to illustrate the previous four sections. Finally, Section~\ref{sec:summary} presents a brief summary and possibilities for future work.

\section{The QAR(1) case} \label{sec:QAR1}

\subsection{The support of the data}

For a noncrossing QAR specification we need to restrict the support of the time series data, $\{y_t^* : t=1,\ldots,T\}$, to a bounded interval on the real line.\footnote{This is the analogue of the restriction over the predictor domain in \cite{yang2017}.}  We take this interval to be $[0,1]$ and implement this by making a transformation of the data,
\begin{equation} \label{eq:trans}
    y_{t} = \frac{y_t^* - m}{M - m},
\end{equation}
where $m < \min y_t^*$ and $M > \max y_t^*$.  In fact, $m$ and $M$ are chosen such that $\min y_{t}$ is close to but above $0$ and $\max y_{t}$ is close to but below $1$.  This enables the most flexibility for the quantile function under our proposed QAR modeling and we offer an automatic selection approach below.  

Two points are important to note.  First, we can not take $m = \min y_{t}^*$ and $M = \max y_{t}^*$.  The data must be in the interior of the unit interval in order to enable distinct quantiles as $\tau$ varies across $(0,1)$.  Second, choosing $m$ and $M$ is merely a device for working on the unit interval.  There is no connection between these values and the potential practical support of the $y_t^*$'s.  Imposing bounding on the support is unavoidable for a valid linear specification of $Q_{Y_{t}}(\tau \mid y_{t-1})$ of the form $ \theta_0(\tau) + \theta_1(\tau) y_{t-1}$ because the only nonintersecting lines under unbounded support are parallel lines. 

A convenient ``automatic'' strategy for selecting $m$ and $M$ is as follows.  The idea recalls basic results from the theory of order statistics.  If we have $T$ independent observations from a uniform distribution on $(m,M)$, $\{y_{t}^* : t=1,\ldots,T\}$, then $[E(Y_{(1)}^*) - m]/(M-m) = 1/(T+1)$ and $[E(Y_{(T)}^*) - m]/(M-m) = T/(T+1)$. So we can say $y_{(1)}^* \approx (m T + M)/(T+1)$ and $y_{(T)}^* \approx (m + T M)/(T+1)$. This gives two equations in two unknowns to solve for $m$ and $M$.  We obtain  
\begin{equation} \label{eq:m}
    m = \frac{T y_{(1)}^* - y_{(T)}^*}{T-1}
    \qquad \text{and} \qquad
    M = \frac{T y_{(T)}^* - y_{(1)}^*}{T-1}.
\end{equation}
Of course the $Y_{t}^*$'s are not independent, they do not come from a distribution on a bounded interval, and marginally, we would not expect them to follow a uniform distribution on $(m,M)$.  We only implement a simple automatic bounding strategy.

\subsection{The model} \label{SecModel}

A straightforward characterization of the required monotonicity of the QAR lines is offered by the following result, inspired from \cite{tokdar2012}.

\begin{theorem}
An autoregressive specification of the form of \eqref{eq:model1} with $\theta_1(\tau) \in [-1,1]$ for $\tau \in [0,1]$, is monotonically increasing in $\tau$ for $Y_{t}$ taking values in $[0,1]$ and $y_{t-1} \in [0,1]$ if and only if
\begin{equation} \label{eq:model2}
    Q_{Y_{t}}(\tau \mid y_{t-1}) = 
    y_{t-1} \eta_1(\tau) + (1-y_{t-1}) \eta_2(\tau)
\end{equation}
where $\eta_1,\eta_2: [0,1] \longrightarrow [0,1]$ are monotonically increasing.
\end{theorem}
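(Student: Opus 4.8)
The plan is to reduce the equivalence to the elementary algebraic identity
\[
\theta_0(\tau) + \theta_1(\tau)\, y \;=\; y\bigl(\theta_0(\tau)+\theta_1(\tau)\bigr) + (1-y)\,\theta_0(\tau),
\]
which exhibits a one-to-one correspondence between parameterizations: $\eta_1:=\theta_0+\theta_1$ is the line \eqref{eq:model1} evaluated at $y_{t-1}=1$, and $\eta_2:=\theta_0$ is the same line evaluated at $y_{t-1}=0$, with inverse $\theta_0=\eta_2$, $\theta_1=\eta_1-\eta_2$. Because of this identity, \eqref{eq:model1} and \eqref{eq:model2} are literally the same functional form, so the entire content of the theorem is that the constraint ``increasing in $\tau$, $[0,1]$-valued, for every $y_{t-1}\in[0,1]$'' on the one side matches the constraint ``$\eta_1,\eta_2:[0,1]\to[0,1]$ both increasing'' on the other.

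For the ``if'' direction I would argue that, given increasing $\eta_1,\eta_2:[0,1]\to[0,1]$, for each fixed $y_{t-1}\in[0,1]$ the map $\tau\mapsto y_{t-1}\eta_1(\tau)+(1-y_{t-1})\eta_2(\tau)$ is a convex combination of increasing functions and hence increasing, and it is a convex combination of numbers in $[0,1]$ and hence $[0,1]$-valued; moreover $\theta_1(\tau)=\eta_1(\tau)-\eta_2(\tau)\in[-1,1]$. Thus every specification of the form \eqref{eq:model2} has the stated properties.

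For the ``only if'' direction---where the real (if modest) content lies---the key observation is that monotonicity in $\tau$ across all $y_{t-1}\in[0,1]$ need only be invoked at the two endpoints $y_{t-1}\in\{0,1\}$. Assuming $Q_{Y_{t}}(\tau\mid y_{t-1})$ is increasing in $\tau$ with values in $[0,1]$ for every $y_{t-1}\in[0,1]$, I set $\eta_2(\tau):=Q_{Y_{t}}(\tau\mid 0)=\theta_0(\tau)$ and $\eta_1(\tau):=Q_{Y_{t}}(\tau\mid 1)=\theta_0(\tau)+\theta_1(\tau)$; specializing the hypothesis to $y_{t-1}=0$ and to $y_{t-1}=1$ shows each $\eta_i$ is increasing and $[0,1]$-valued, while the displayed identity rewrites $Q_{Y_{t}}(\tau\mid y_{t-1})$ in the form \eqref{eq:model2}. (The hypothesis $\theta_1(\tau)\in[-1,1]$ is then automatic, since $\theta_1(\tau)=\eta_1(\tau)-\eta_2(\tau)$ is a difference of two numbers in $[0,1]$.) I do not anticipate a genuine obstacle here; the only delicate point is fixing the precise reading of the statement, namely that ``monotonically increasing in $\tau$ for $Y_{t}$ taking values in $[0,1]$ and $y_{t-1}\in[0,1]$'' means: for every $y_{t-1}\in[0,1]$, $\tau\mapsto Q_{Y_{t}}(\tau\mid y_{t-1})$ is increasing with values contained in $[0,1]$. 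The insight worth stating explicitly is the reduction to the two extreme values $y_{t-1}\in\{0,1\}$, which is exactly what makes the convex-combination representation available.
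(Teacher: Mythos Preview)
Your proposal is correct and follows essentially the same approach as the paper: both arguments hinge on identifying $\eta_1(\tau)=Q_{Y_t}(\tau\mid 1)=\theta_0(\tau)+\theta_1(\tau)$ and $\eta_2(\tau)=Q_{Y_t}(\tau\mid 0)=\theta_0(\tau)$, then using the convex-combination structure in $y_{t-1}$ to pass monotonicity back and forth. Your write-up is slightly more explicit about the reduction to the endpoints $y_{t-1}\in\{0,1\}$ and about the codomain $[0,1]$, but the substance is identical.
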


\begin{proof}
Any monotonicity obeying $Q_{Y_{t}}(\tau \mid y_{t-1})$ given by \eqref{eq:model1} can be expressed as \eqref{eq:model2} by taking $\eta_1(\tau) =  \theta_0(\tau) + \theta_1(\tau) = Q_{Y_{t}}(\tau \mid 1)$ and $\eta_2(\tau) = \theta_0(\tau) = Q_{Y_{t}}(\tau \mid 0)$.
For the converse, if $Q_{Y_{t}}(\tau \mid y_{t-1})$ is given by \eqref{eq:model2} then it must be monotonically increasing in $\tau$ for every $y_{t-1} \in [0,1]$ for which both $y_{t-1}$ and $1 - y_{t-1}$ are nonnegative. One can express such a $Q_{Y_{t}}(\tau \mid y_{t-1})$ by defining $\theta_0(\tau) = \eta_2(\tau)$ and $\theta_1(\tau) = \eta_1(\tau) - \eta_2(\tau) \in [-1,1]$. 
\end{proof}

If we focus on \eqref{eq:model2}, a model for functions $\eta_1$ and $\eta_2$, each from $[0,1] \rightarrow [0,1]$, induces a QAR(1) model over all valid QAR(1) specifications of $Q_{Y_{t}}(\tau \mid y_{t-1})$, provided the boundary conditions $Q_{Y_{t}}(0 \mid y_{t-1}) = 0$ and $Q_{Y_{t}}(1 \mid y_{t-1}) = 1$ for all $y_{t-1} \in [0,1]$ are satisfied. The above condition can be rewritten as $\eta_j(0) = 0$ and $\eta_j(1) = 1$ for $j=1,2$.  Next we show how to specify these two monotone functions.

\subsubsection{Specification for the two monotone curves}

Specifically, both $\eta_1(\cdot)$ and $\eta_2(\cdot)$ again must be strictly monotone from $[0,1] \rightarrow [0,1]$. A convenient class to work with are cdf's for continuous random variables with support $[0,1]$. In fact, a rich class would arise as probabilistic mixtures of such cdf's, leading to the general form $ \eta(\tau) = \sum_{k=1}^{K} \lambda_k F(\tau \mid \bOmega_{k})$ such that $\lambda_k \geq 0$, $\sum_k \lambda_k = 1$ and $F: [0,1] \rightarrow [0,1]$ is strictly increasing for any parameters $\bOmega_{k}$. 

A convenient class of $F$'s to work with are the cdf's of the two parameter \cite{kumaraswamy1980} distribution \citep[also known as the minimax distribution,][]{jones2009}. Specifically, the probability density function (pdf) and cdf are
\begin{equation} \label{eq:kum}
    f(x \mid a, b) = a b x^{a-1} (1 - x^a)^{b-1} 
    \quad \text{and} \quad
    F(x \mid a, b) = 1 - (1 - x^a)^b,
\end{equation}
where $x \in [0,1]$ and $a,b > 0$. The Kumaraswamy distributions are a family with behavior similar to the beta distribution.  However, for our purposes, they are much simpler to use especially in the context of simulation since the cdf and quantile function can be expressed in closed form, i.e.,  $Q(\tau \mid a,b) = (1-(1-\tau)^{1/b})^{1/a}$ where $\tau \in [0,1]$. The flexibility of the Kumaraswamy distributions is shown in Section~S1 of the Supplementary Information (SI) employing different combinations of parameters $(a,b)$.

To work with the mixture form for $\eta(\tau)$, we investigated two mixture strategies.  The first lets $K$ be small but assumes the $a$'s and $b$'s are unknowns.  The second lets $K$ be larger but adopts a fixed set of $a$'s and $b$'s, in the spirit of basis function forms.  Specifically, we consider $K$ Kumaraswamy distributions with medians $k/(K+1)$, respectively.  In the former, with $K=2$ we have a total of five parameters (two $a$'s, two $b$'s, and a $\lambda$) while in the latter, with $K=6$ again we have five parameters (five $\lambda$'s).  Increasing the number of ``basis'' components in the specification of the $\eta$'s need not provide better model performance.  From considerable simulation experience, model performance is very sensitive to the choice of parameters in the mixture components.  So, in the sequel, we work with $K=1$ or $2$ (QAR1K1 and QAR1K2, hereafter) and fit the $a$'s and $b$'s. As for priors, with $K=1$, we consider $\log a_1,\log b_1 \sim N(0, \sigma_{ab}^2)$ with $\sigma_{ab} = 3$, which gives a weak prior on the log-scale. With $K=2$, we consider $\lambda_1 \sim U(0, 1/2)$ and $\log a_1,\log a_2,\log b_1,\log b_2 \sim N(0, \sigma_{ab}^2)$ with $\sigma_{ab} = 1.5$.  Restricting $\lambda_1$ to $(0,1/2)$ avoids identification issues, while $\sigma_{ab}$ is taken smaller than in the $K=1$ case to penalize values of $a$'s and $b$'s too small or large. Values of $a$'s and $b$'s that are close to zero or very large can cause negligible numerical errors in the rootfinder to generate a numerical overflow in the likelihood (see Equations~\ref{eq:density} and \ref{eq:likelihood} below) and thus degeneracy.

\subsection{Likelihood evaluation and model fitting} \label{sec:likelihood}

An important feature of a valid joint specification of $Q_{Y_{t}}(\tau \mid y_{t-1})$ for all $\tau \in (0,1)$, following \cite{tokdar2012}, is that it uniquely defines the conditional response density given $y_{t-1} \in [0,1]$.  Specifically, this density is given by
\begin{equation} \label{eq:density}
    f_{Y_{t}}(y_{t} \mid y_{t-1}) = 
    \left. \frac{1}{\frac{d}{d\tau} Q_{Y_{t}}(\tau \mid y_{t-1})} \right\rvert_{\tau=\tau_{y_{t-1}}(y_{t})},
\end{equation}
where $\tau_{y_{t-1}}(y_{t})$ solves $y_{t} = y_{t-1}\eta_1(\tau) +  (1-y_{t-1}) \eta_2(\tau)$ in $\tau$ and is numerically approximated to arbitrary precision via a one-dimensional rootfinder. We implement the hybrid rootfinding algorithm combining the bisection method, the secant method, and inverse quadratic interpolation, called Brent's method \citep{brent1973}. Consequently, given the data at $t=1$, $y_1$, we can write a valid log-likelihood score
\begin{equation} \label{eq:likelihood}
    \ell(\bOmega \mid \by) =
     \sum_{t=2}^{T} \log f_{Y_{t}}(y_{t} \mid y_{t-1}) =
      - \sum_{t=2}^{T} \log\big\{y_{t-1}\dot{\eta}_1(u_t) +  (1-y_{t-1}) \dot{\eta}_2(u_t)\big\},
\end{equation}
where $u_t = \tau_{y_{t-1}}(y_{t})$, $\by^\top = (y_1,\ldots,y_T)$ are all of the observed data, $\bOmega$ are the model parameters, and the $\dot{\eta}$'s are the derivatives of the $\eta$'s. 

We implement an adaptive block-Metropolis sampler algorithm \citep{haario2001} to obtain Markov chain Monte Carlo (MCMC) samples from the posterior distribution of the parameters and to summarize the posterior distribution of the conditional quantile function.  Furthermore, with a posterior realization of the model parameters and a given value of $y_{t-1}$, we can use \eqref{eq:density} with discretization, to obtain a posterior realization of the density function that is driving the joint quantiles.  Averaging over these realizations provides the posterior mean of the density.

\subsection{Model comparison and simulation study}

Working within our parametric Bayesian framework, for any $\tau$, posterior samples of the model parameters, $\{\bOmega_{b}^{*} : b=1,\ldots,B\}$, produce posterior samples of the conditional quantile function for $Y_{t}$, $Q_{Y_{t}}(\tau \mid y_{t-1}; \bOmega_{b}^{*})$.  Essentially, for each $Y_{t}$ (with associated $y_{t-1}$) and any $\tau$, we obtain the posterior distribution of $Q_{Y_{t}}(\tau \mid y_{t-1}; \bOmega)$.  We use these posterior distributions along with the dataset, $\by$, to offer model assessment.  Specifically, we propose two metrics.  One, denoted by $\tilde{p}_{v}$, is created by comparing averaged estimates of $p_{t}(\tau) \equiv E[\textbf{1}(y_{t} < Q_{Y_{t}}(\tau \mid y_{t-1}; \bOmega)) \mid \by]$ with $\tau$.  The second, denoted by $\bar{R^1}$ is developed through estimation of $\Delta_{t}(\tau) \equiv \delta_{\tau}(y_{t} - E[Q_{Y_{t}}(\tau \mid y_{t-1}; \bOmega) \mid \by])$ where $\delta_{\tau}(u) = u(\tau - \textbf{1}(u<0))$ is the check loss function associated with the AL distribution.  Full details are provided in Section~S2 of the SI.

In Section~S3 of the SI we also present the results of a brief simulation study where the goals were (i) to illustrate parameter recovery under fitting for several models, (ii) to investigate model flexibility, i.e., performance when the sampling model is not the same as the fitting model, and (iii) to consider the effect of sample size with regard to (i) and (ii).

\section{The QAR(\texorpdfstring{$p$}{TEXT}) case} \label{sec:QARp}

We provide a straightforward extension of our joint QAR(1) model to the lag $p$ case. It is not a characterization of the QAR($p$) function of $Y_t$ but offers a flexible specification. In this regard, we obtain a form with some restrictions on the autoregressive coefficients but no constraints on the $y_t$’s beyond the bounded interval support. By interpreting $\eta_1(\tau)$ and $\eta_2(\tau)$ in \eqref{eq:model2} as the conditional quantiles of $Y_t$ at $y_{t-1} \in \{0,1\}$, we build a similar construction for an autoregressive process of order $p$ as follows. Define
\begin{equation} \small
\begin{aligned}
    Q_{Y_t}(\tau &\mid y_{t-1}, \ldots, y_{t-p}) \\
    &= (\eta_1(\tau),\ldots,\eta_{p+1}(\tau))
    \begin{pmatrix}
    0 & \pi_1 & 0 & \cdots & 0\\
    0 & 0 & \pi_2 & \cdots & 0\\
    \vdots & \vdots & \vdots & \ddots & \vdots &\\
    0 & 0 & 0 & \cdots & \pi_p\\
    1 & - \pi_1 & - \pi_2 & \cdots & -\pi_p\\
    \end{pmatrix}
    \begin{pmatrix}
    1\\
    y_{t-1}\\
    y_{t-2}\\
    \vdots\\
    y_{t-p}\\
    \end{pmatrix},
\end{aligned}
\end{equation}
where the functions $\eta_1,\ldots,\eta_{p+1}: [0,1] \rightarrow [0,1]$ are monotonically increasing and the weights $\pi_1,\ldots,\pi_p$ are such that $\pi_j \geq 0$ and $\sum_j \pi_j = 1$. It is easy to see that such $Q_{Y_t}(\tau \mid y_{t-1}, \ldots, y_{t-p})$ is monotonically increasing in $\tau \in [0,1]$ for every $y_{t-1}, \ldots, y_{t-p} \in [0,1]$.

In particular, for QAR(2), let $\tau,\pi \in [0,1]$.  Then, define 
\begin{equation} \small
    Q_{Y_t}(\tau \mid y_{t-1}, y_{t-2}) = 
    \pi y_{t-1} \eta_1(\tau) + 
    (1 - \pi) y_{t-2} \eta_2(\tau) +
    (1 - \pi y_{t-1} - (1 - \pi) y_{t-2}) \eta_3(\tau)
\end{equation}
where the three $\eta$ functions are all strictly increasing, using forms as above. Rewriting the expression as
\begin{equation} \small
    Q_{Y_t}(\tau \mid y_{t-1}, y_{t-2}) = 
    \eta_3(\tau) + 
    \pi (\eta_1(\tau) - \eta_3(\tau)) y_{t-1} + 
    (1 - \pi) (\eta_2(\tau) - \eta_3(\tau)) y_{t-2},
\end{equation}
both autoregressive coefficients belong to $[-1,1]$ and need not be increasing in $\tau$.  We fit this QAR(2) model to our real data in Section~\ref{sec:temp_qar2}.  In fact, we only attempt this with $K=1$ mixture components (seven parameters) to keep the model simple.  Further, the second autoregressive term results are not influential for our data. Also, we choose $\log a$'s and $\log b$'s to follow a $N(0, 1.5^2)$ prior and $\pi \sim U(0,1)$ as a noninformative prior for $\pi$.

\section{Multivariate QAR(1)} \label{sec:MQAR1}

Often a collection of dependent times series is gathered over a common time window.  For instance, our illustration below considers the dependent pairs $\{(y_{t}^{\text{max}},y_{t}^{\text{min}}):t=1,\ldots,T\}$, the daily maximum and minimum temperature for day $t$ at a site. In fact, the collection of time series might be spatially referenced (leading to a spatial copula model construction, as developed in the next section).  What we have is the quantile analogue of usual multivariate AR for time series.  Implementation using the class of joint QAR(1) models we have proposed has not appeared in the literature. Our interest is in the quantile function for each time series. We are asking about the amount of dependence between quantile levels regarding the marginal quantile functions.

Here, we illustrate with the bivariate case where we have two models each defined as in \eqref{eq:process1}, introducing dependence in the two time series by making the associated $U_t$'s dependent through $T-1$ IID $2$-dimensional Gaussian copulas.  This specification captures the acknowledged dependence between the pair of time series.  We postpone to Section~\ref{sec:SQAR1} the details of modeling using copulas; in particular, that section develops the form of the general $n$-dimensional joint density.  The only detail that we advance here is that the correlation matrix associated with the copulas contains $1$'s on the diagonal and $\rho$ on the off-diagonal, where $\rho \sim U(-1,1)$ measures the correlation between series.

Apart from introducing dependence through $U_{t}^{\text{max}}$ and $U_{t}^{\text{min}}$, we could introduce dependence in the $\eta$'s.  For instance, using Kumaraswamy cdf's, under the $K=1$ case, we consider the pairs $\log a_{j}^{\text{max}}$ and $\log a_{j}^{\text{min}}$ and the pairs $\log b_{j}^{\text{max}}$ and $\log b_{j}^{\text{min}}$ ($j=1,2$) to be bivariate normal.  In our data we found little or no correlation between the parameters of the two time series, so in subsequent analyzes we will consider them independent.  We do not pursue this case further here except to note the analogy with dependent responses in linear regression models. Introducing dependence through the $U_t$'s is analogous to introducing dependence through the errors in the linear regression while introducing dependence through the $\eta$'s is analogous to introducing dependence in the mean structure through shared parameters.

An example is presented in Section~\ref{sec:temp_mqar1}. Again, with $K=1$, this yields four $\eta$'s, i.e., four independent $\log a$'s and four independent $\log b$'s, each following a weak, say $N(0,3^2)$ prior, as well as the copula parameter.  As a by-product, we show the induced bivariate conditional pdf (arising from the bivariate analogue of Equation~\ref{eq:density}) for $(Y_{t}^{\text{max}}, Y_{t}^{\text{min}})$ with some choices for the $y_{t-1}$'s.

\section{Spatial QAR(1)} \label{sec:SQAR1}

In the spatial setting, we consider spatial point-referenced time series data. Here, $Y_t(\bs)$ denotes the observation for time $t=1,\ldots,T$ at location $\bs \in \mathcal{D}$, where $\mathcal{D} \subset \mathbb{R}^2$ is the study region.  We have a time series at each of the locations, $\{\bs_1,\ldots,\bs_n\}$, say, the locations of the monitoring stations. The joint spatial QAR model is given by
\begin{equation} \label{eq:SQAR}
    Y_t(\bs) = \theta_0(U_t(\bs);\bs) + \theta_1(U_t(\bs);\bs) Y_{t-1}(\bs),
\end{equation}
where the $\theta$ functions are quantile and spatially varying. \cite{chen2021} propose to model the spatial dependence of the realizations in a QR model using a spatial copula process. Generalizing it to our model, the vectors $(U_t(\bs_1),\ldots,U_t(\bs_n))^\top$ follow an independent copula distribution for every $t$.  

Supplementing \cite{chen2021}, in \eqref{eq:SQAR} we introduce spatially varying coefficients rather than global coefficients. As a consequence, we have dependence in the time series realizations as well as spatially varying quantile functions.

\subsection{Modeling spatial dependence}

\subsubsection{Spatially varying quantiles}

For the spatially varying coefficients, we consider only one Kumaraswamy cdf for each $\eta(\tau;\bs)$. In fact, at location $\bs$, let assume $\eta_j(\tau; \bs) = 1 - (1 - \tau^{a_j(\bs)})^{b_j(\bs)}$ with parameters $a_{j}(\bs)$ and $b_{j}(\bs)$ for $j=1,2$. Then, we introduce four independent GP's for the $a$'s and $b$'s on the log-scale. In particular, we model $\log a_j(\bs) \sim GP(a_j, \sigma_{a_j}^2 \rho(\bs,\bs';\phi_{a_j}))$ and $\log b_j(\bs) \sim GP(b_j, \sigma_{b_j}^2 \rho(\bs,\bs';\phi_{b_j}))$ where the $\rho(\bs, \bs';\phi)$'s are exponential correlation functions with $\phi$'s as corresponding decay parameters.

We take the $\phi$'s to be fixed values, according to the spatial scale, because it is usually difficult to estimate them from the data and typically interest focuses on the $\sigma^2$'s, the spatial uncertainties \citep{BCG}.  Specifically, we fix $\phi = 3 / d_{\text{max}}$, with $d_{\text{max}}$ the maximum distance between any pair of spatial locations.  Thus, the spatial GP's are only indexed by a mean and a variance parameter.  We choose the priors $a_j,b_j,\log \sigma_{a_j}^2,\log \sigma_{b_j}^2 \sim N(0,3^2)$ ($j=1,2$).

\subsubsection{The spatial copula process}

A copula is a multivariate cdf for which the marginal distribution of each variable is $U(0,1)$.  Copulas are used to model the dependence between random variables.  Particularly, Sklar's theorem \citep{sklar1959} states that any multivariate joint pdf can be written in terms of univariate marginal pdf's and a copula which describes the dependence structure between the variables.  

Gaussian spatial copulas enable computational advantages, e.g., ease of parameter estimation and scalability with sample size.  For a given correlation matrix $R$, the $n$-dimensional Gaussian copula function with parameter matrix $R$ becomes
\begin{equation}
    C_{\Phi}(\bu \mid R) = \Phi_{R}(\Phi^{-1}(u_1), \ldots, \Phi^{-1}(u_n)),
\end{equation}
where $\bu^\top = (u_1,\ldots,u_n) \in [0,1]^n$, $\Phi_{R}$ is the joint cumulative distribution function of a multivariate normal distribution with zero-mean vector and covariance matrix $R$.  According to \cite{song2000}, the associated copula density is
\begin{equation} \label{eq:copula}
    c_{\Phi}(\bu \mid R) =
    |R|^{-1/2} 
    \exp\left\{\frac{1}{2} \bq^\top (\bI_n - R^{-1}) \bq \right\},
\end{equation}
with $\bq^\top = (\Phi^{-1}(u_1), \ldots, \Phi^{-1}(u_n))$.

With regard to the copula model for \eqref{eq:SQAR}, we take the processes $U_t(\bs)$'s to follow a Gaussian copula for each $t$, induced by a stationary spatial GP.  In the spirit of \cite{chen2021}, we define 
\begin{equation} \label{eq:U}
\begin{gathered}
    U_t(\bs) = \Phi(Z_t(\bs)),\quad
    Z_t(\bs) = W_t(\bs) + \epsilon_t(\bs),\\
    W_t(\bs) \sim GP(0, \gamma \rho(\bs,\bs'; \phi)),\quad \epsilon_t(\bs) \sim \text{IID } N(0, 1 - \gamma).
\end{gathered}
\end{equation}
The process $W_t(\bs)$ captures spatial dependence while $\epsilon_t(\bs)$ is independent pure error. The parameter $\gamma \in [0,1]$ determines the proportion of spatial and independent variation. When $\gamma = 1$, the specification for $Z_{t}(\bs)$ is purely spatial. When $\gamma = 0$, we have an independent noise model.  With this approach, the Gaussian copula density has correlation matrix $R \equiv \gamma R(\phi) + (1 - \gamma) \bI_{n}$ where $R(\phi)$ is the $n \times n$ correlation matrix induced by $\rho(\bs,\bs';\phi)$.  To address the final copula piece of our model, we fix $\phi$ as above, and adopt $\gamma \sim U(0,1)$ as a noninformative prior for $\gamma$.

\subsection{Likelihood evaluation}

We are interested in the likelihood under model \eqref{eq:SQAR} using \eqref{eq:copula} and \eqref{eq:U}.  It is convenient to first obtain the joint distribution for $\bY^\top = (\bY_1^\top,\ldots,\bY_T^\top)$ where $\bY_t^\top = (Y_t(\bs_1), \ldots, Y_t(\bs_n))$, $t=1,\ldots,T$.  That is, each $\bY_t$ is $n \times 1$ and $\bY$ is $Tn \times 1$.  By Sklar’s theorem, the joint conditional density of responses, $\bY$, given initial time's data, $\by_1$, can be partitioned into a marginal part and a copula part,
\begin{equation} \small
\begin{aligned}
f_{\bY}(\by \mid \by_1) = \prod_{t=2}^{T} \Bigg[& \prod_{i=1}^{n} f_{Y_t(\bs_i)}\big(y_{t}(\bs_i) \mid y_{t-1}(\bs_i)\big) \Bigg. \\ &\times \Bigg. c_{\Phi}\big(F_{Y_t(\bs_1)}(y_{t}(\bs_1) \mid y_{t-1}(\bs_1)), \ldots, F_{Y_t(\bs_n)}(y_{t}(\bs_n) \mid y_{t-1}(\bs_n))\big)\Bigg],
\end{aligned}    
\end{equation}
where the cdf $F_{Y_t(\bs_i)}$ corresponds to the pdf $f_{Y_t(\bs_i)}$ and $c_{\Phi}$ is the Gaussian copula density in \eqref{eq:copula}. As in Section~\ref{sec:likelihood}, we evaluate $f_{Y_t(\bs_i)}$ and $F_{Y_t(\bs_i)}$ using:
\begin{equation}
\begin{aligned}
    f_{Y_t(\bs_i)}(y_{t}(\bs_i) \mid y_{t-1}(\bs_i)) &= 
    \left. \frac{1}{\frac{d}{d\tau} Q_{Y_{t}(\bs_i)}(\tau \mid y_{t-1}(\bs_i))} \right\rvert_{\tau=\tau_{y_{t-1}(\bs_i)}(y_{t}(\bs_i))}, \\
    F_{Y_t(\bs_i)}(y_{t}(\bs_i) \mid y_{t-1}(\bs_i)) &= \tau_{y_{t-1}(\bs_i)}(y_{t}(\bs_i)),
\end{aligned}
\end{equation}
where $\tau_{y_{t-1}(\bs_i)}(y_{t}(\bs_i))$ solves $y_t(\bs_i) = y_{t-1}(\bs_i)\eta_1(\tau;\bs_i) +  (1-y_{t-1}(\bs_i)) \eta_2(\tau;\bs_i)$ in $\tau$. Then, the log-likelihood score of the model parameters $\bOmega$ can be expressed by
\begin{equation}
\begin{aligned}
     \ell(\bOmega \mid \by) =  
     \sum_{t=2}^{T} \Bigg[ &- \sum_{i=1}^{n} \log\big\{y_{t-1}(\bs_i)\dot{\eta}_1(u_t(\bs_i);\bs_i) + (1-y_{t-1}(\bs_i)) \dot{\eta}_2(u_t(\bs_i);\bs_i)\big\} \Bigg. \\ & \Bigg. + \log
     c_{\Phi}\left(u_t(\bs_1),\ldots,u_t(\bs_n) \mid R\right) \Bigg],
\end{aligned}
\end{equation}
with $u_t(\bs_i) = \tau_{y_{t-1}(\bs_i)}(y_{t}(\bs_i))$.  
Finally, note that, for the calculation of the log-likelihood, the value of the $u_t(\bs_i)$'s must be solved for, so the number of rootfinders needed at each iteration of the MCMC is $n(T-1)$.  As a result, likelihood evaluation is expensive, leading to long MCMC run times.

\subsection{Spatial interpolation}

The quantile $Q_{Y_t(\bs)}(\tau \mid y_{t-1}(\bs))$ is a function of process realizations. Posterior samples for the hyperparameters are available from the model fitting. Posterior samples for the GP's are available, using posterior samples of the hyperparameters, through usual Bayesian kriging \citep{BCG}. This yields prediction of $a_j(\bs_0)$ and $b_j(\bs_0)$ ($j=1,2$) at a new $\bs_0 \in \mathcal{D}$, enabling spatially varying quantile functions.  Therefore, we can interpolate conditional quantiles to any desired location in the study region given any proposed or reference value for the previous day's temperature at that location. If we do this over a sufficiently spatially resolved grid, we can obtain the posterior mean at each point and show the posterior $\tau$ conditional quantile surface for the given day.

\section{Application to temperature data} \label{sec:temperature}

\subsection{The data} \label{sec:temp_data}

We illustrate the proposed modeling methods with analyses of persistence in point-referenced daily maximum temperatures ($^\circ\text{C}$) from $n=18$ locations in Arag\'on, Spain. We bring in daily minimum temperatures for the bivariate QAR(1) case. The data is provided by the State Meteorological Agency (AEMET, in its Spanish acronym).  \cite{castillo2022a} provide exploratory analysis and spatial hierarchical modeling for this dataset. We analyze responses for 2015, an interesting year because the summer was especially hot in Europe \citep{dong201612}.  There were numerous locations with record-breaking temperatures in July 2015 and the heat was maintained over time.  The monthly average value of temperatures was a record in July 2015 for 6 of the 18 locations and in the entire region it was among the 10 hottest monthly averages. We restrict analysis to observations from May, June, July, August, and September (denoted as MJJAS), i.e., the hottest months of the year, resulting in $T=153$ days.  The location of the  18 observatories is shown in Figure~\ref{fig:map} and their time series in Figure~S4 of the SI.    

\begin{figure}[!t]
\centering
 \includegraphics[width=12cm]{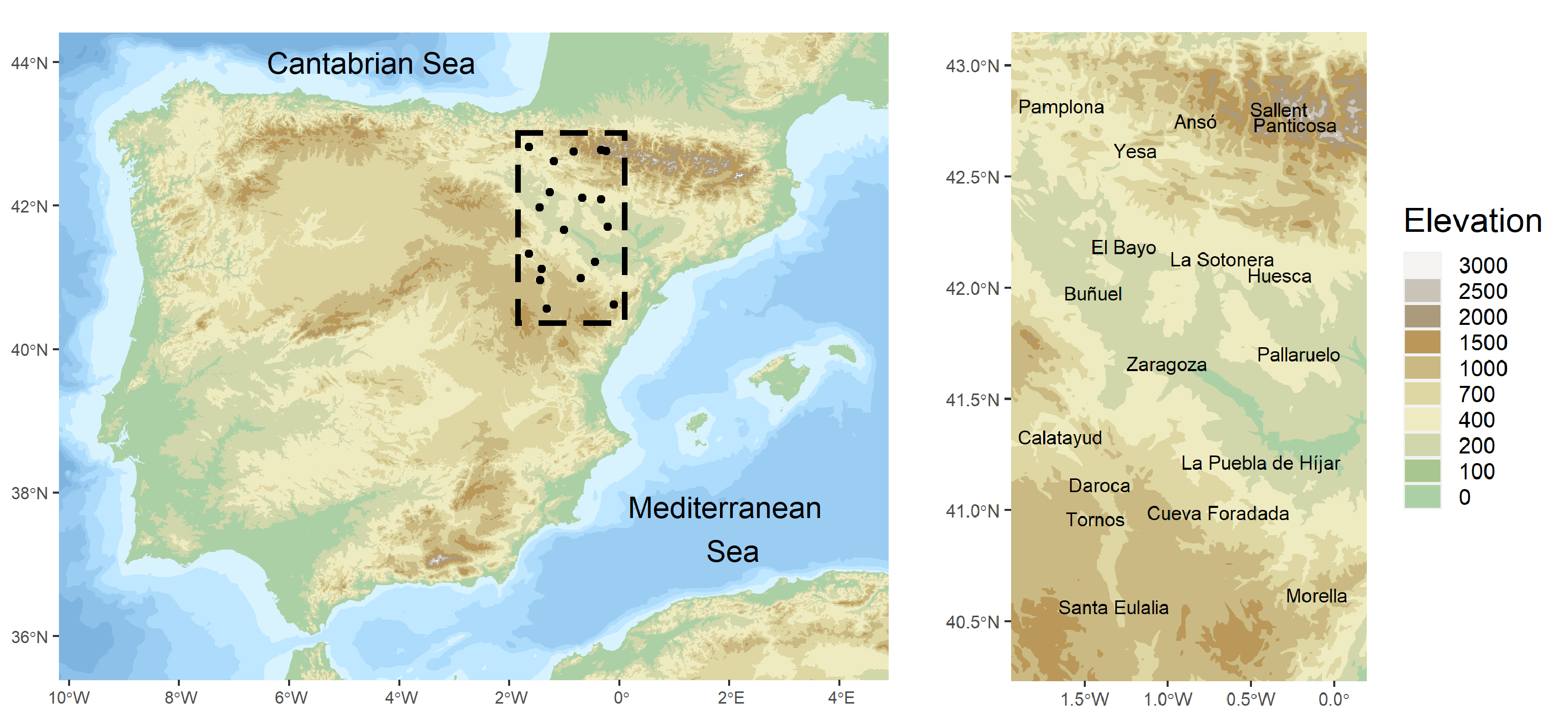}
 \caption{Location of the 18 sites around Arag\'on, northeastern Spain.}
 \label{fig:map}
\end{figure}

We begin with a model comparison using QAR(1) and QAR(2) models for all locations. Then, we analyze two illustrative locations within the region, Pamplona and Zaragoza, separately. Subsequently, we implement the bivariate QAR(1) model to the daily maximum and minimum temperature series in Zaragoza. Finally, we implement the general model for spatial QAR(1) with all the locations.  Before model fitting, we scale each of the temperature time series to $(0,1)$ using the transformation in \eqref{eq:trans} with $m$ and $M$ in \eqref{eq:m}. We adopt site-level values for $m$ and $M$.

\subsection{The QAR(1) case} \label{sec:temp_qar1}

Table~\ref{tab:measurements} shows, averaged across locations, the metrics of model adequacy $\tilde{p}_2$ and model comparison $\bar{R}^1$ defined in Section~S2 from the SI for the QAR1K1 and QAR1K2 models, and the model from \cite{koenker2006} fitted under our Bayesian framework using the density in \eqref{eq:density}. Table~S6 in the SI shows the metrics for each location. For this latter model, denoted as KX2006, we also consider a location parameter $\mu$ in the intercept, i.e., $\theta_{0}(\tau) = \mu + \sigma\Phi^{-1}(\tau)$ and $\theta_{1}(\tau) = \min\{\gamma_{0} + \gamma_{1}\tau, 1\}$ for $\gamma_0 \in (0,1)$ and $\gamma_1 > 0$. With KX2006 we work on the original scale of the data since they are all positive. We choose the priors $\mu \sim N(0,10^2)$, $\log \sigma, \log \gamma_1 \sim N(0, 3^2)$, and $\gamma_0 \sim U(0,1)$.  The $\bar{R}^1$ does not discriminate much between the proposed models, i.e., the autoregressive term explains much more variability than the difference in specification between the models.  However, our proposed models have a slightly higher performance, $0.365$, than the KX2006 model, around $0.34$.  Also, the $\tilde{p}_2$ directly measures how well the quantiles are captured, and its discriminative capacity is much higher. While QAR1K1 obtains a value of $0.633$, adding a second component to the mixing improves this measure to $0.402$. For its part, the KX2006 model obtains the worst value, $0.683$, indicating an overall poorer fitting of the quantiles.

\begin{table}[!t]
\caption{Adequacy and comparison metrics for QAR1K1, QAR1K2, QAR2K1, and KX2006 models averaged across locations.} \label{tab:measurements}
\begin{tabular}{@{}cc|cc|cc|cc@{}}
\toprule
 \multicolumn{2}{c|}{QAR1K1} & \multicolumn{2}{c|}{QAR1K2} & \multicolumn{2}{c|}{QAR2K1} & \multicolumn{2}{c}{KX2006} \\
 $\tilde{p}_2$ & $\bar{R}^1$ & $\tilde{p}_2$ & $\bar{R}^1$ & $\tilde{p}_2$ & $\bar{R}^1$ & $\tilde{p}_2$ & $\bar{R}^1$ \\ 
 \midrule
 0.633&0.365&0.402&0.365&0.542&0.365&0.683&0.339 \\
 \botrule
\end{tabular}
\end{table}

Figure~\ref{fig:theta} shows the posterior mean of the functions $\theta_0$ and $\theta_1$ in Pamplona and Zaragoza for the models QAR1K1 (dashed) and QAR1K2 (solid).  Note that we could recover the intercepts on the original scale as $\theta_0^*(\tau) = m (1-\eta_1(\tau)) + M \eta_2(\tau)$ and the autoregressive coefficients remain invariant.  Further, $\theta_1$ is not monotonic; this aspect of temperature dependence with respect to the previous day's temperature was also observed by \cite{castillo2022b}.  It cannot be reproduced by KX2006.   In Pamplona, the QAR1K2 model (the best) estimates a lower autoregressive coefficient than the QAR1K1 for $\tau \in (0.1, 0.7)$.  In Zaragoza, similar curves appear for the two values of $K$, as shown by $\tilde{p}_2$ and $\bar{R}^1$.

\begin{figure}[!t] 
\centering
\includegraphics[width=5cm]{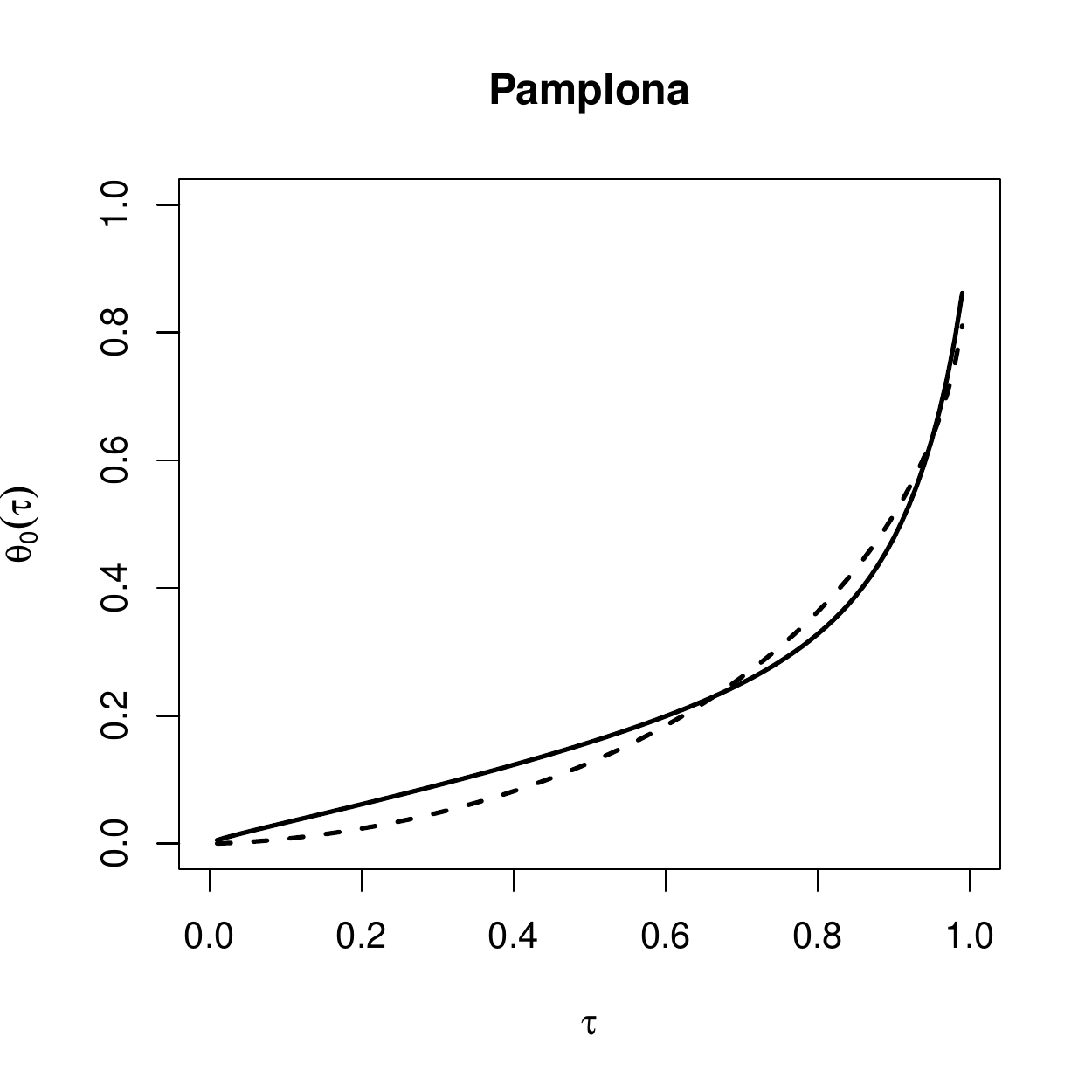}
\includegraphics[width=5cm]{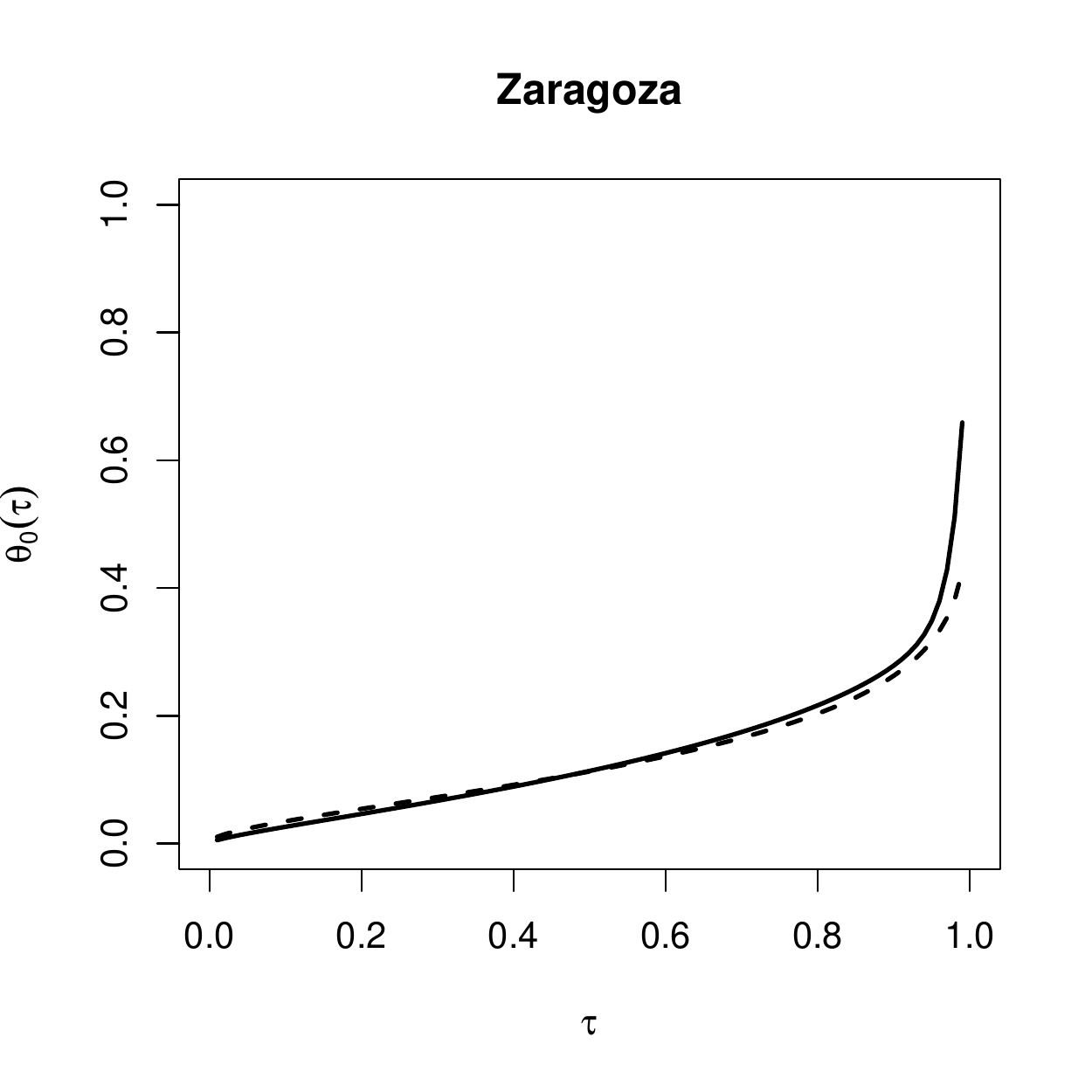} \\
\includegraphics[width=5cm]{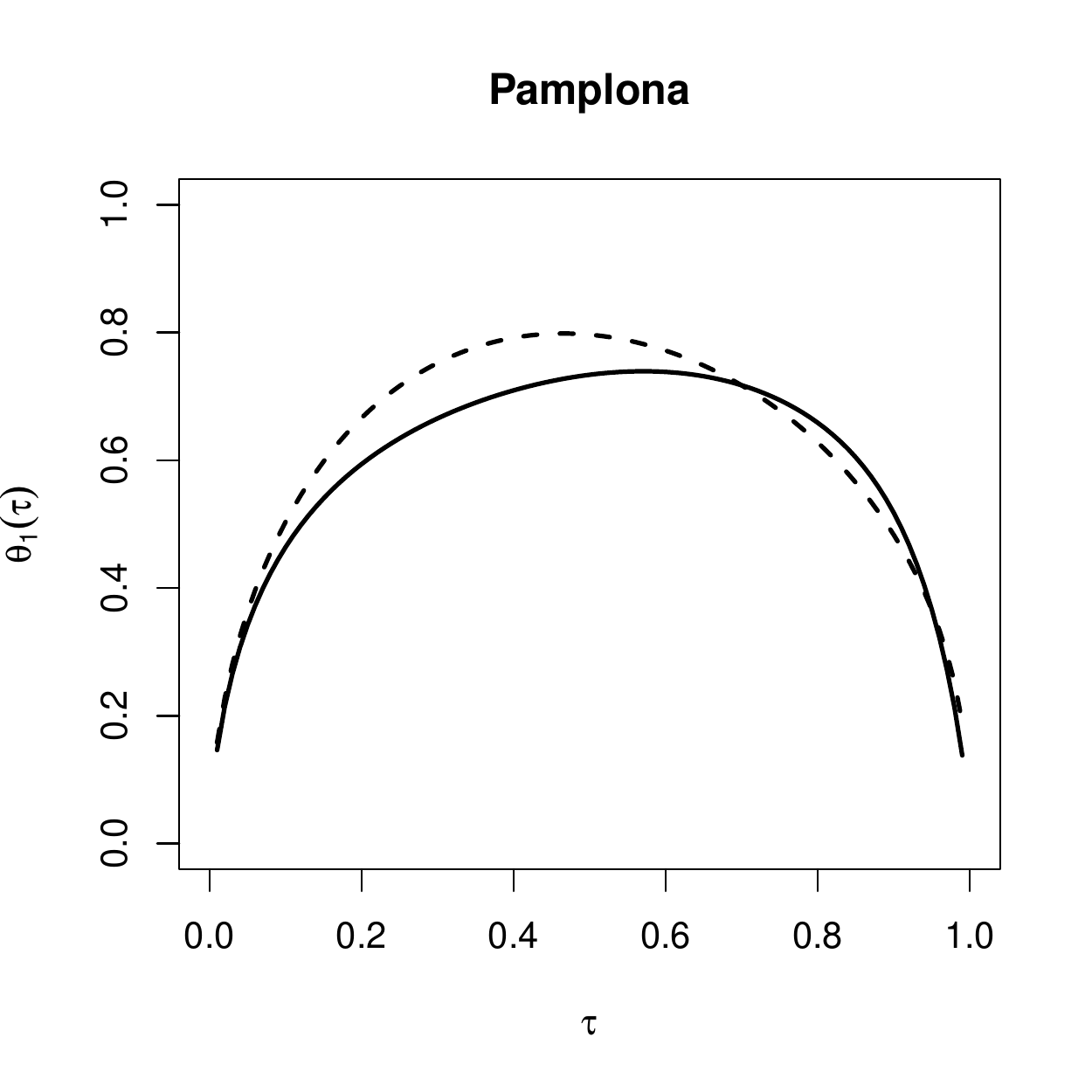}
\includegraphics[width=5cm]{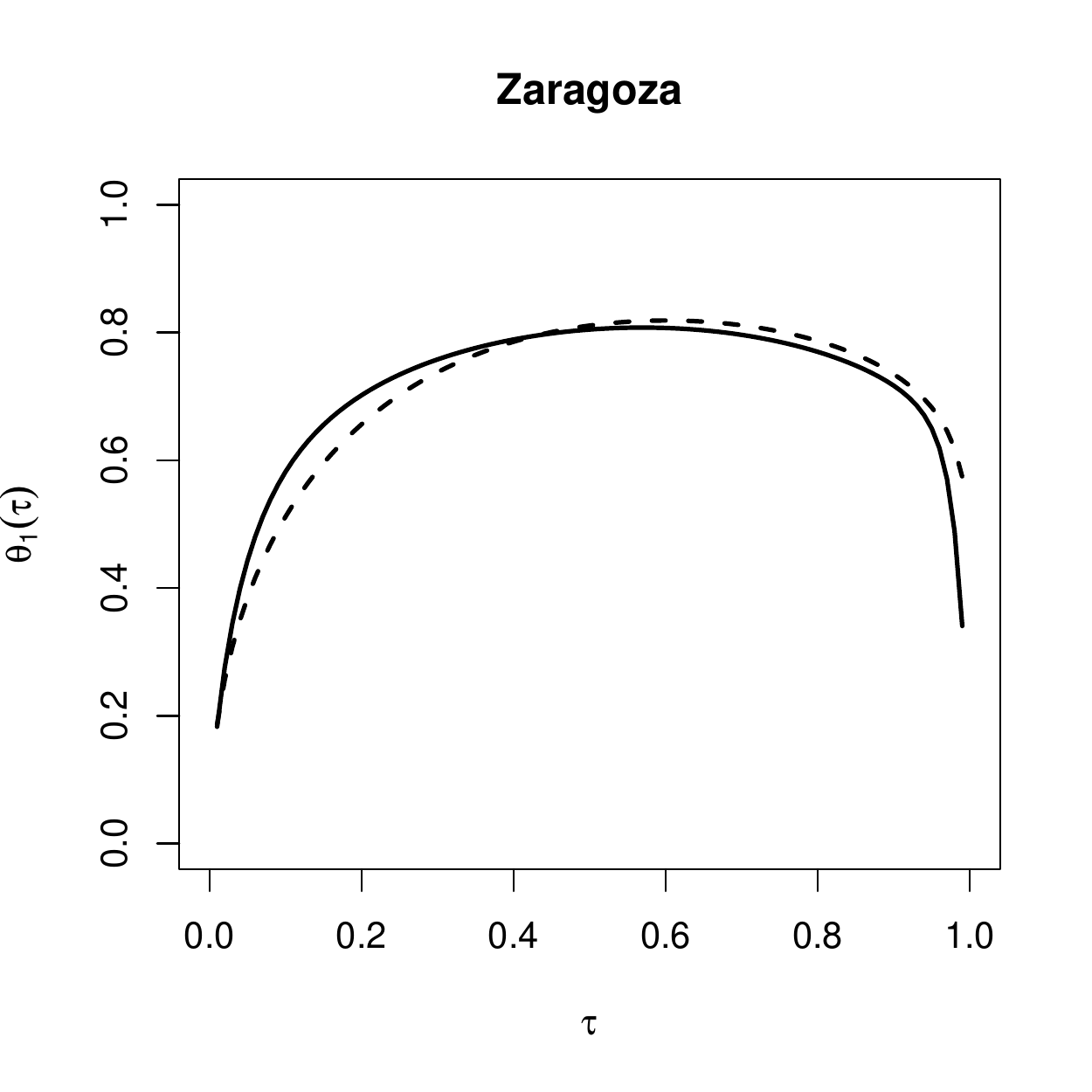}
\caption{Posterior mean of $\theta_0(\tau)$ (above) and $\theta_1(\tau)$ (below) vs. $\tau$ for QAR1K1 (dashed) and QAR1K2 (solid). Pamplona (left) and Zaragoza (right), MJJAS, 2015. } \label{fig:theta}
\end{figure}

Figure~\ref{fig:quantile} shows the posterior mean of the conditional quantile functions $Q_{Y_{t}}(\tau \mid y)$ for three situations where $y$ is the empirical $\tau$ marginal quantile for $\tau=0.1,0.5,0.9$; the legend shows the values that are conditioned on both the original scale and the $(0,1)$ scale.  The smallest values of $\theta_1$ are in extreme $\tau$'s, this means that the previous day's temperature is less influential for the extreme quantiles. In fact, the conditional quantiles in Figure~\ref{fig:quantile} overlap for $\tau$'s near 0 or near 1.

\begin{figure}[!t] 
\centering
\includegraphics[width=5cm]{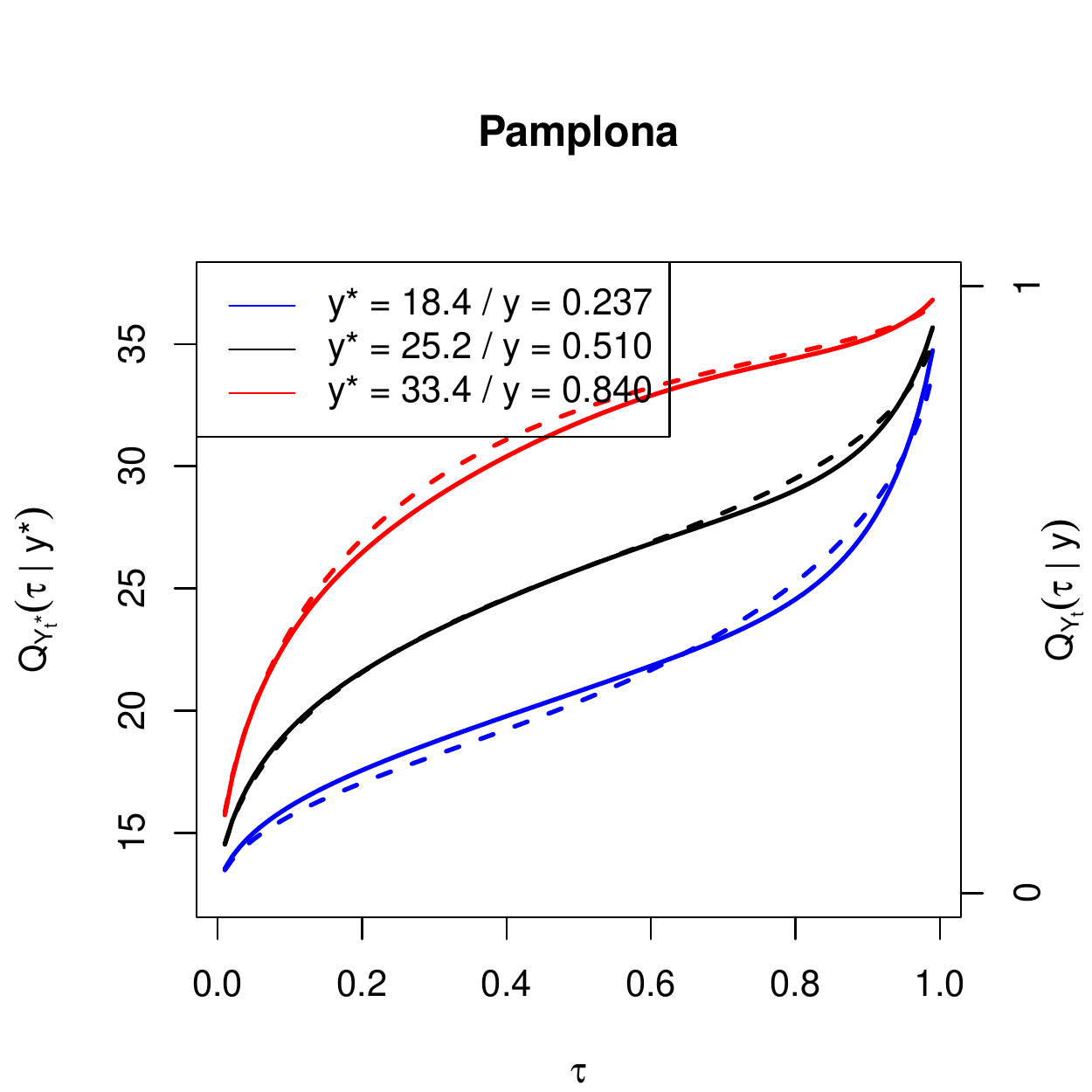}
\includegraphics[width=5cm]{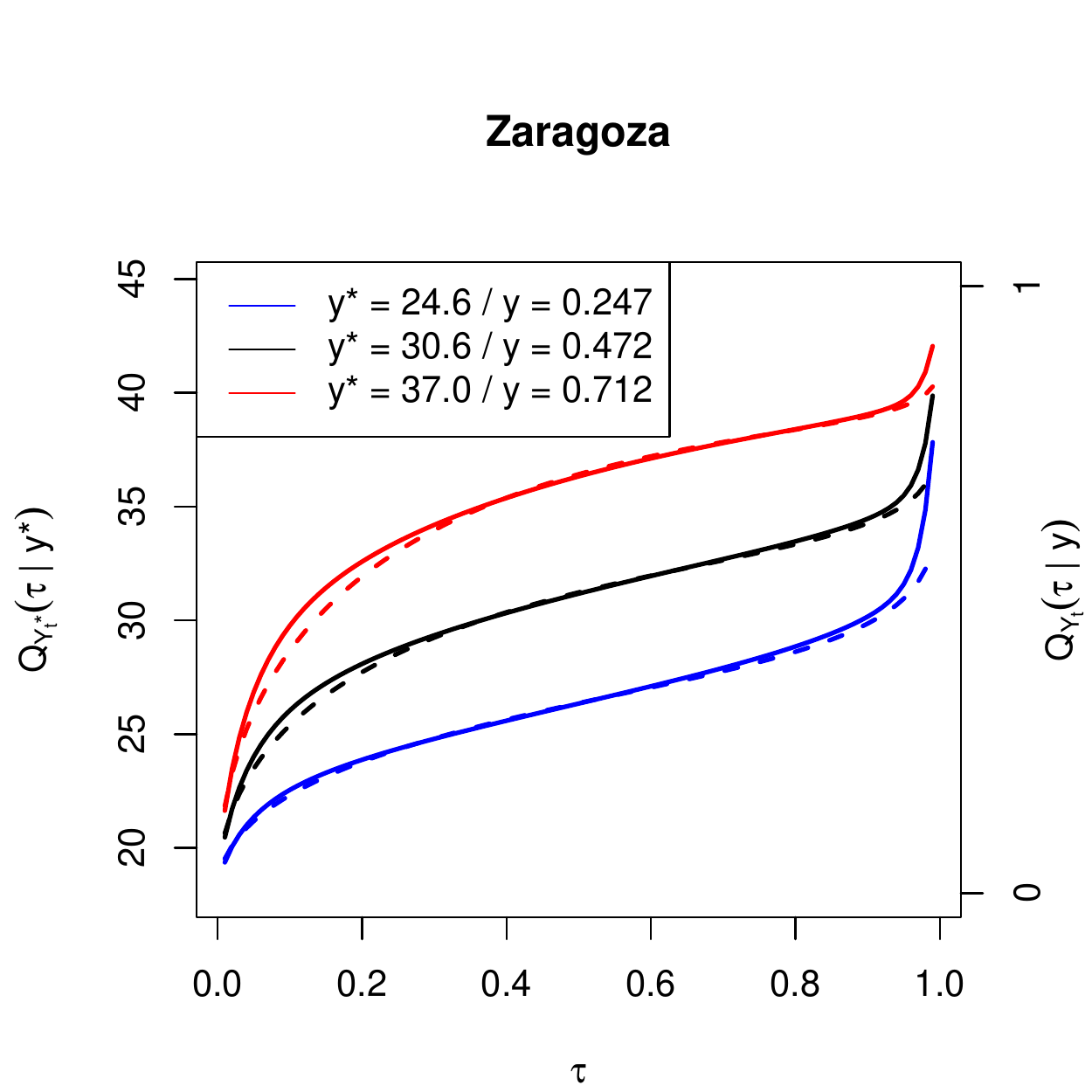}
\caption{Posterior mean of the quantile function $Q_{Y_{t}}(\tau \mid y)$ vs. $\tau$ for QAR1K1 (dashed) and QAR1K2 (solid).  Here, $y$ is the empirical $\tau$ marginal quantile for $\tau=0.1$ (blue), $0.5$ (black), $0.9$ (red). Pamplona (left) and Zaragoza (right), MJJAS, 2015. }
\label{fig:quantile}
\end{figure}

Figure~\ref{fig:density} shows the posterior mean of the conditional density function in \eqref{eq:density} under the same conditions as Figure~\ref{fig:quantile}.  Pamplona presents different shapes in $f_{Y_{t}}(y_t \mid y)$ for different values of $y$.  The distribution is asymmetrical with positive skewness if we condition on a small value for the previous day's temperature, and negative skewness if we condition on a big value. A general pattern is common in the region, the conditional distribution conditional on the $0.9$ marginal quantile is more concentrated than those conditional on the $0.1$ quantile.  Figures~S5, S6, S7, and S8 in the SI present the plots  for the 18 locations.

\begin{figure}[!t] 
\centering
\includegraphics[width=5cm]{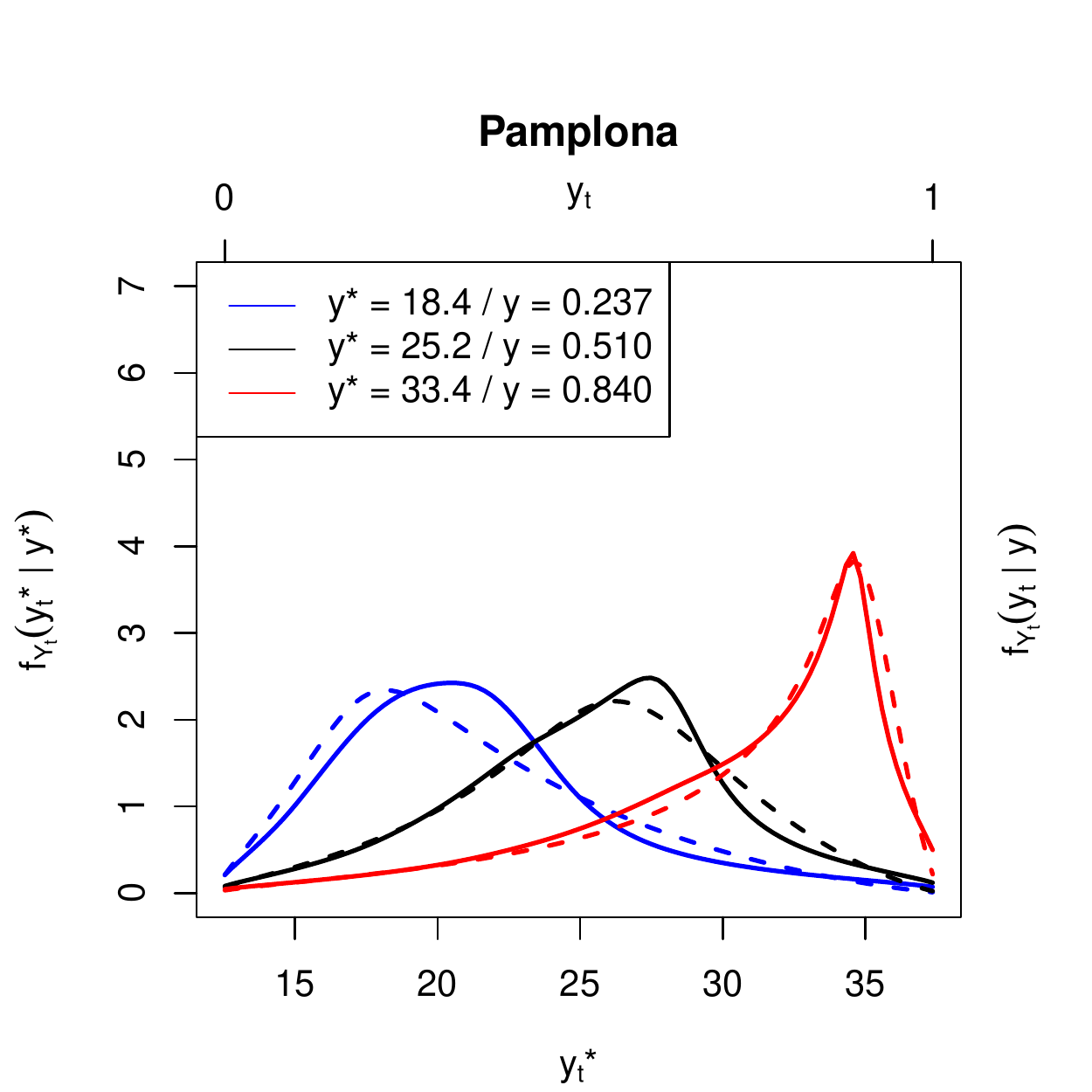}
\includegraphics[width=5cm]{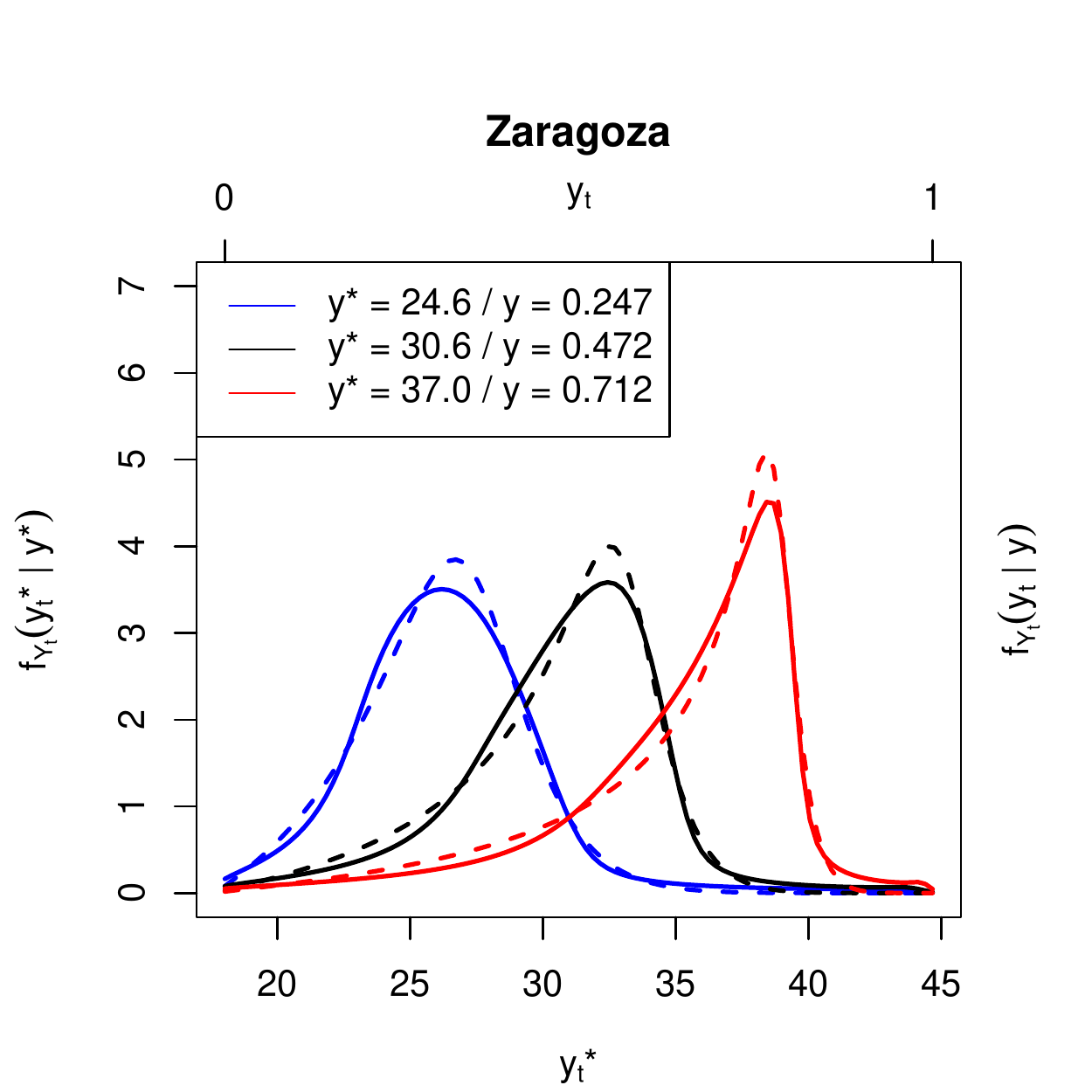}
\caption{Posterior mean of the density function $f_{Y_{t}}(x \mid y)$ for QAR1K1 (dashed) and QAR1K2 (solid).  Here, $y$ is the empirical $\tau$ marginal quantile for $\tau=0.1$ (blue), $0.5$ (black), $0.9$ (red). Pamplona (left) and Zaragoza (right), MJJAS, 2015. }
\label{fig:density}
\end{figure}

\subsection{The QAR(2) case} \label{sec:temp_qar2}

Table~\ref{tab:measurements} uses the criteria $\tilde{p}_2$ and $\bar{R}^1$ for the QAR(2) model with $K=1$ (QAR2K1). The previous subsection showed that including a first lag improved the performance of the model with respect to an empirical null model.  However, including a second lag does not increase the value of $\bar{R}^1$ with respect to a QAR(1) model. On the other hand, the measure of $\tilde{p}_2$ is somewhat better for QAR2K1 than for QAR1K1 but it is still inferior to the QAR1K2 case.  Since QAR2K1 does not improve performance, and, as we will see below, there is no evidence that the term $\theta_2(\tau)$ is different from zero for any $\tau$ across most locations, there seems to be no value in exploring a QAR(2) model with $K=2$.

Figure~\ref{fig:thetaQAR2K1} shows the $\theta$ functions in the QAR2K1 model for Pamplona and Zaragoza (see Figure~S9 in the SI for all locations). The $\theta_0$ and $\theta_1$ functions have a shape very similar to the QAR1K1 case. The $\theta_2$ functions have values that are essentially centered at zero in most locations, giving more evidence that it is not necessary to introduce a lag of order 2 in the model. However, there are four locations with a coefficient slightly away from zero; Bu\~nuel and La Sotonera have a value of $\theta_2(\tau)$ close to $0.2$ for non-extreme $\tau$'s while Huesca and La Puebla de H\'ijar have similar behavior with values around $0.1$.

\begin{figure}[!t]
\centering
\includegraphics[width=5cm]{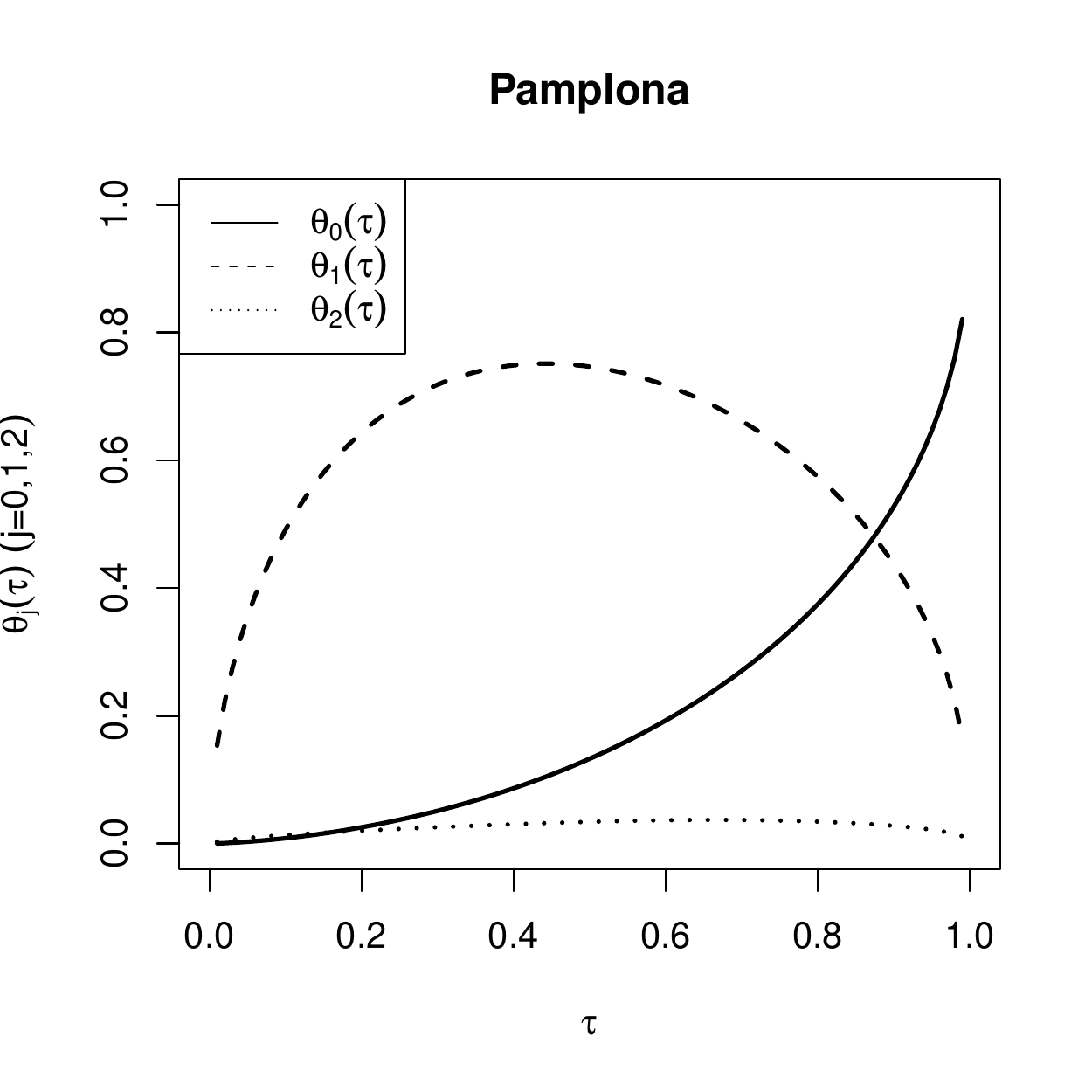}
\includegraphics[width=5cm]{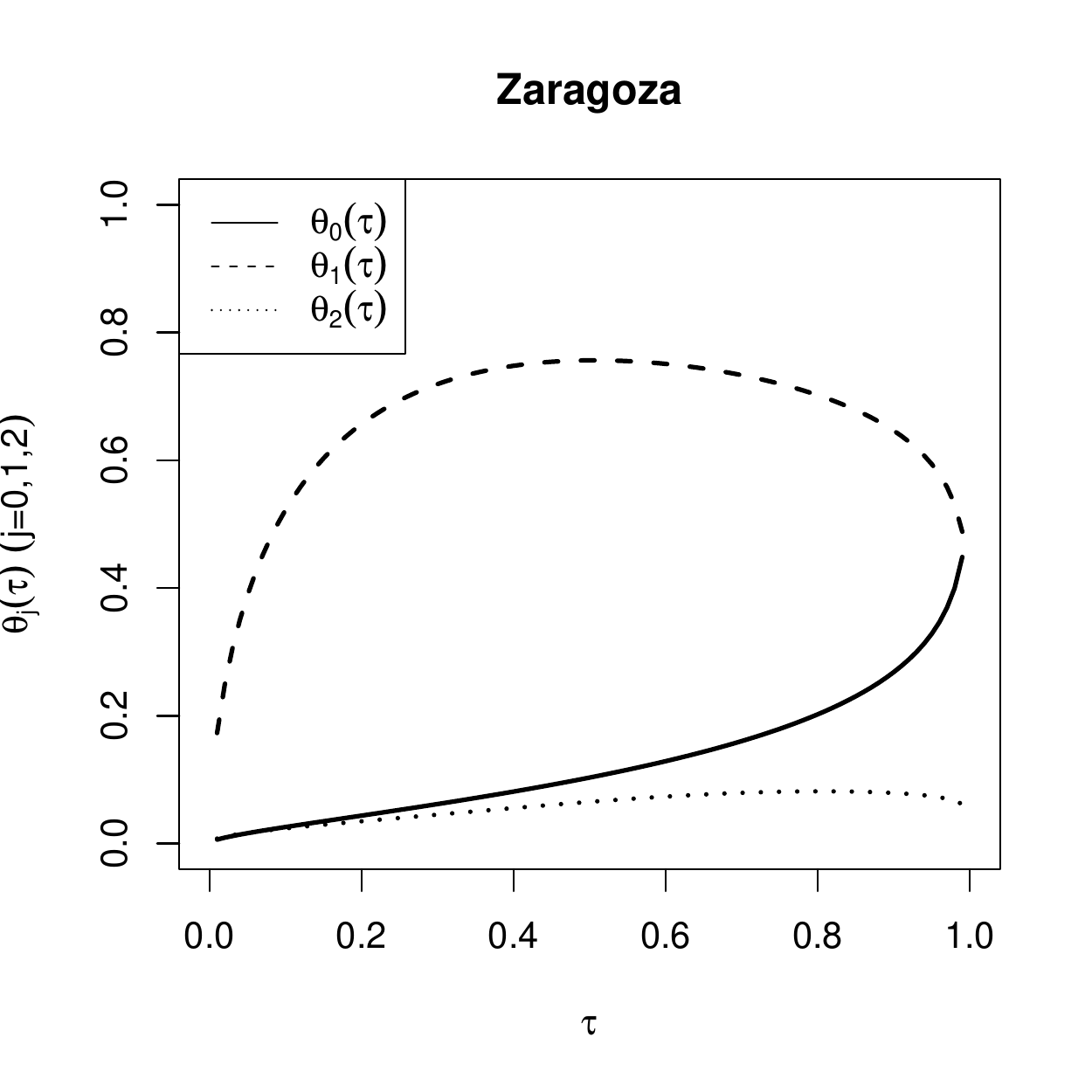}
\caption{Posterior mean of $\theta_0(\tau)$ (solid), $\theta_1(\tau)$ (dashed) and $\theta_2(\tau)$ (dotted) vs. $\tau$ for QAR2K1. Pamplona (left) and Zaragoza (right), MJJAS, 2015. } \label{fig:thetaQAR2K1}
\end{figure}

\subsection{Multivariate QAR(1)} \label{sec:temp_mqar1}

Here, we fit the multivariate QAR(1) model (MQAR1K1) to the daily maximum and minimum temperature series at Zaragoza, $\{(y_{t}^{\text{max}},y_{t}^{\text{min}}):t=1,\ldots,T\}$. The same analyses were developed for Pamplona and Daroca, but with different conclusions. Figure~\ref{fig:fitted:theta:Tx.Tn:Zaragoza:MJJAS:2015} shows the $\theta$ functions for the $y_{t}^{\text{max}}$ (red) and $y_{t}^{\text{min}}$ (blue) series.  
We see different patterns for $\theta_{1}^{\text{max}}$ and $\theta_{1}^{\text{min}}$; $y_{t}^{\text{max}}$ shows high autocorrelation for high quantiles while $y_{t}^{\text{min}}$ has less persistence for those quantiles.

\begin{figure}[!t]
\centering
\includegraphics[width=5cm]{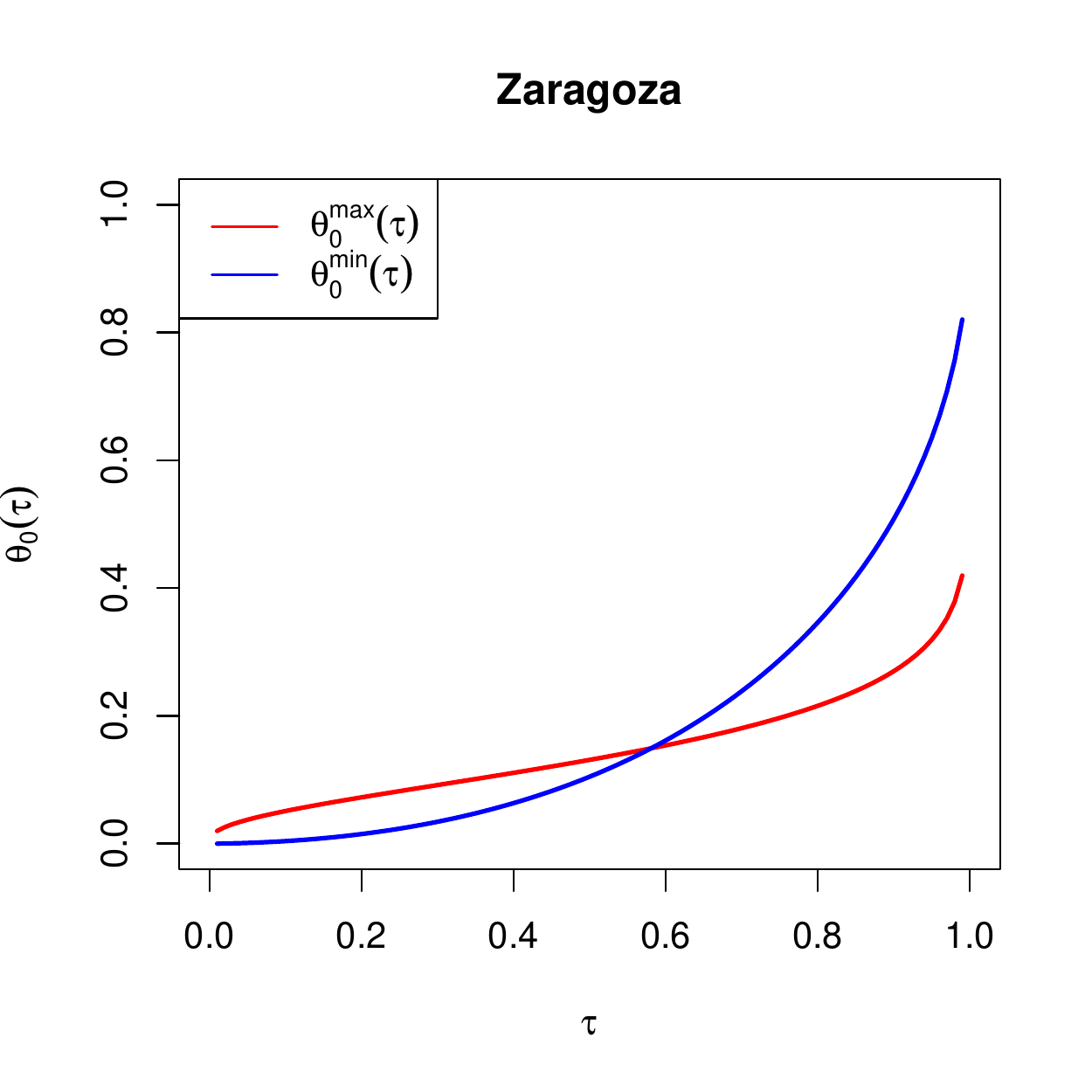}
\includegraphics[width=5cm]{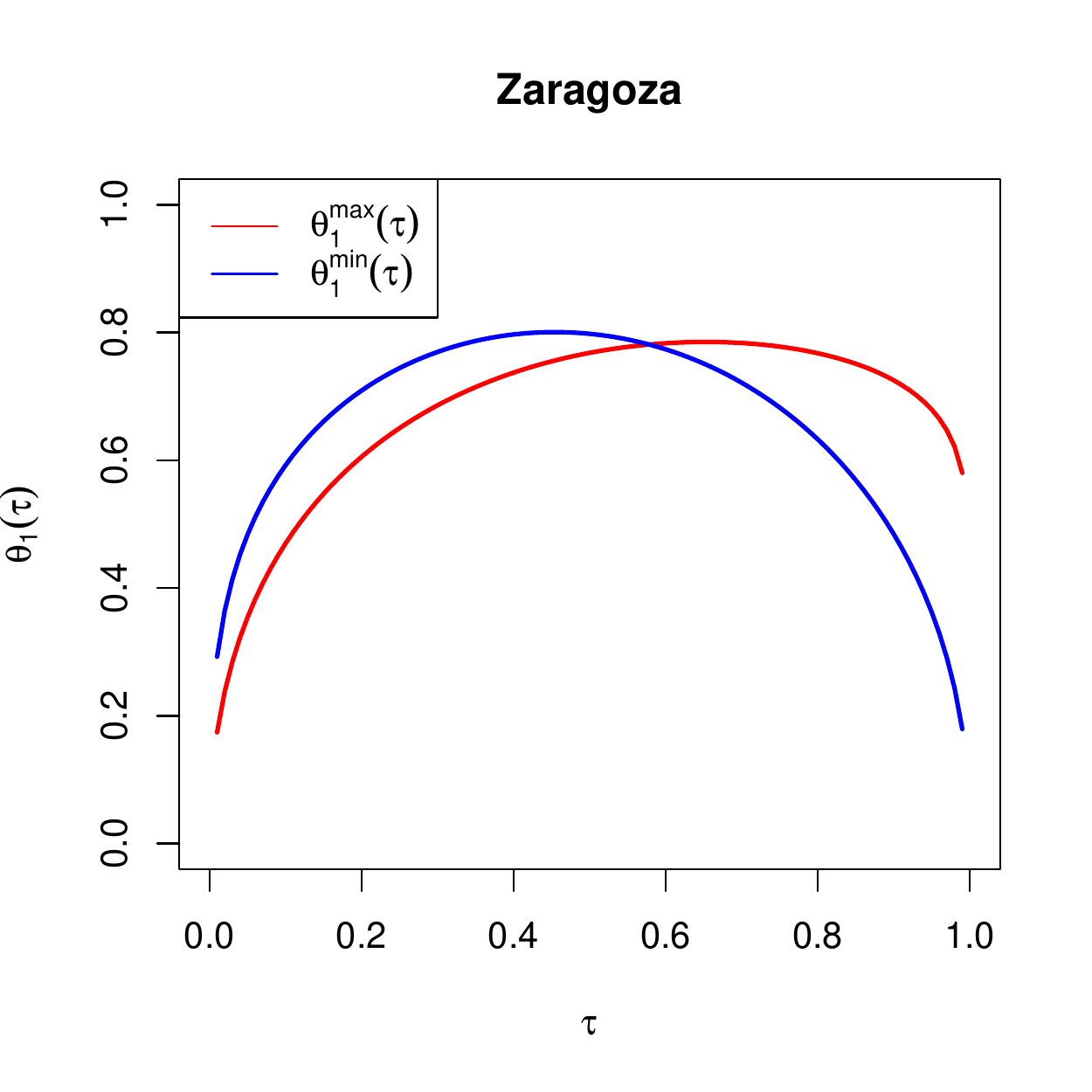}
\caption{Posterior mean of $\theta_{0}^{\text{max}}(\tau)$ and $\theta_{0}^{\text{min}}(\tau)$ (left), and $\theta_{1}^{\text{max}}(\tau)$ and $\theta_{1}^{\text{min}}(\tau)$ (right), vs. $\tau$ for MQAR1K1. Zaragoza, MJJAS, 2015. }
\label{fig:fitted:theta:Tx.Tn:Zaragoza:MJJAS:2015}
\end{figure}

For Zaragoza, the posterior mean of $\rho$ is $0.32$ with  $95\%$ credible interval $(0.17,0.45)$, indicating the need to include dependence in the quantile levels of both series. For Pamplona, the posterior mean of $\rho$ is $0.06$ with $95\%$ credible interval $(-0.11, 0.23)$. Here, independent models for the conditional quantiles could be adopted. An explanation is the frequent appearance of fresh wind from the northwest during the night in Pamplona, resulting from proximity to the Cantabrian Sea.

\begin{figure}[!t]
\centering
\includegraphics[width=5cm]{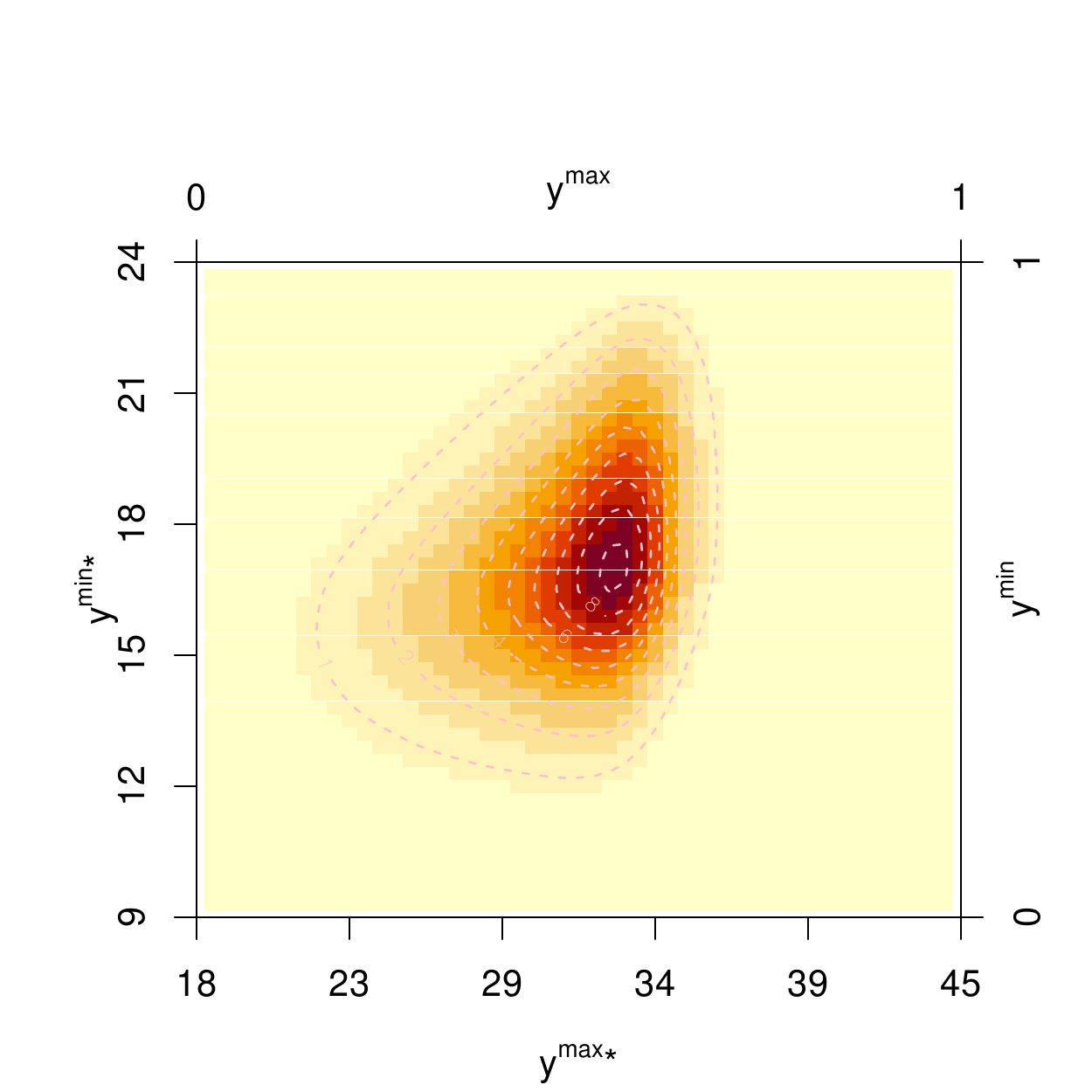}
\includegraphics[width=5cm]{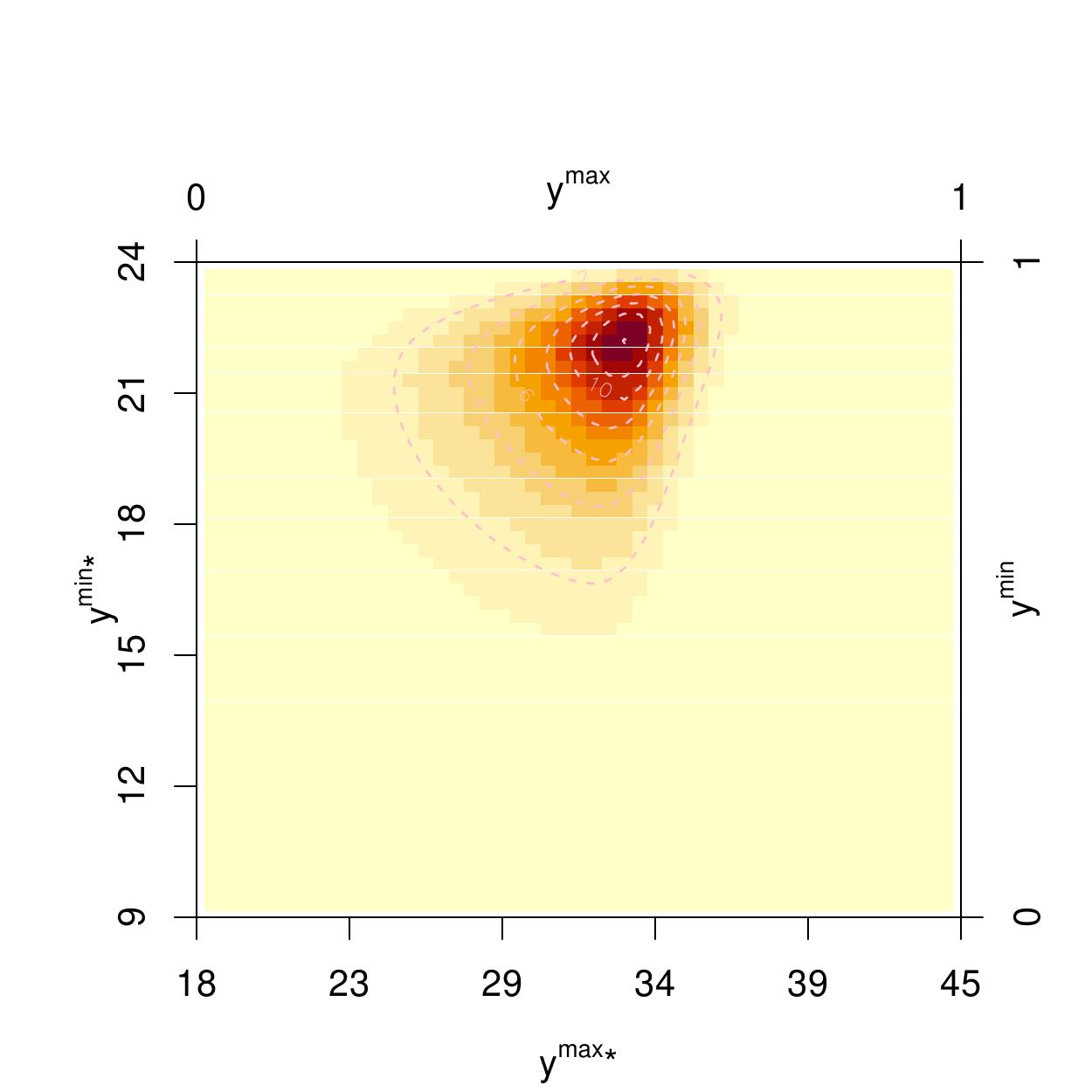}\\
\includegraphics[width=5cm]{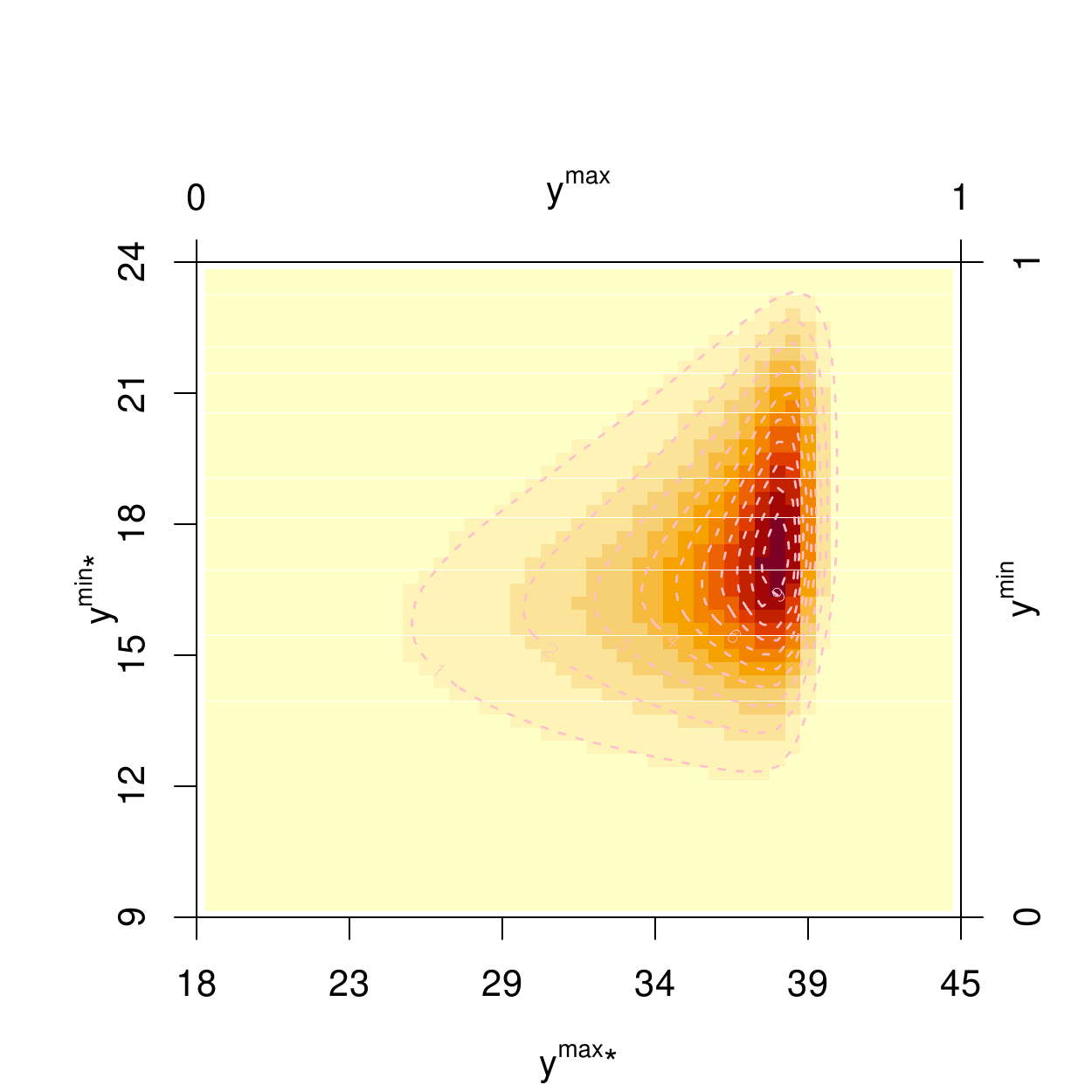}
\includegraphics[width=5cm]{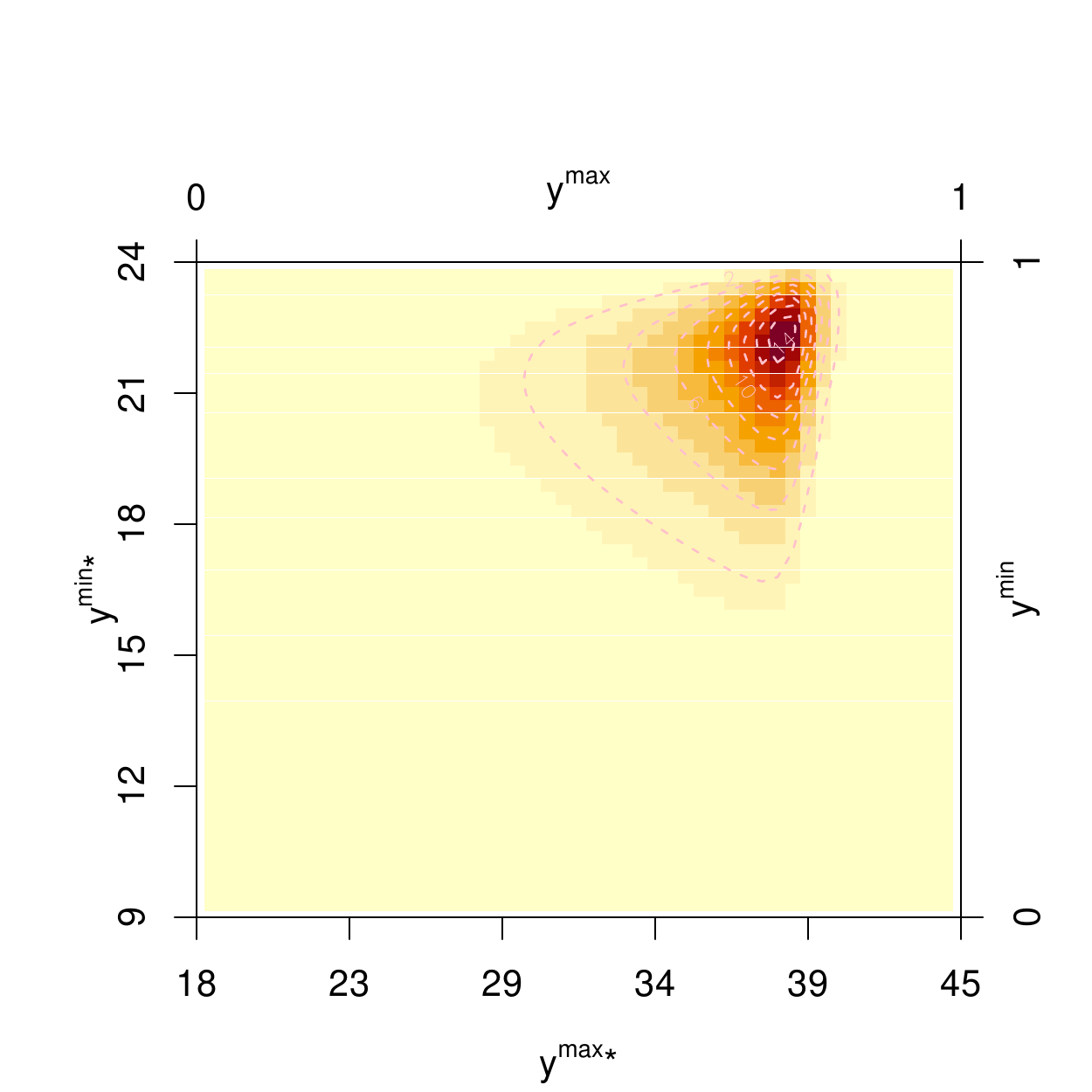}
\caption{Posterior mean of the density function of $(Y_{t}^{\text{max}}, Y_{t}^{\text{min}})$ conditioned on $(y^{\text{max}}, y^{\text{min}})$ for MQAR1K1. Here, $(y^{\text{max}}, y^{\text{min}})$ is equal to the respective empirical marginal quantiles for $\tau = 0.5$ (above), $0.9$ (below) for the maximum; and $\tau = 0.5$ (left), $0.9$ (right) for the minimum. Zaragoza, MJJAS, 2015. }
\label{fig:fitted:density:Tx.Tn:Zaragoza:MJJAS:2015}
\end{figure}

Figure~\ref{fig:fitted:density:Tx.Tn:Zaragoza:MJJAS:2015} shows level curves of the posterior conditional joint density of the vector $(Y_{t}^{\text{max}}, Y_{t}^{\text{min}})$ given the previous day's max and min temperatures, in Zaragoza (see Figures~S10 and S11 in the SI for Pamplona and Daroca). The conditioning values are empirical marginal quantiles of $Y_{t}^{\text{max}}$ and $Y_{t}^{\text{min}}$. The first row conditions on the quantile $\tau = 0.5$ ($30.6^\circ$C) and the second row on the quantile $\tau = 0.9$ ($37.0^\circ$C) of $Y_{t}^{\text{max}}$, and the same quantiles of $Y_{t}^{\text{min}}$, for the first ($17.2^\circ$C) and second ($21.8^\circ$C) columns. The different patterns observed in the plots reveal a different relation between $Y_{t}^{\text{max}}$ and $Y_{t}^{\text{min}}$ depending on the previous day's temperatures.  The conditional posterior distribution is not symmetric, with a different mean vector depending on the conditioning temperatures; the  variability of the distribution  is smaller when it is conditioned on high quantiles.

\subsection{Spatial QAR(1)} \label{sec:temp_sqar1}

The spatial QAR  model is fitted to the series of MJJAS in 18 locations in Arag\'on for the year 2015.  The posterior mean of $\gamma$, the proportion of spatial dependence in \eqref{eq:U}, is $0.955$ with $95\%$ credible interval $(0.934,0.973)$  indicating very strong spatial dependence in the quantile levels of the temperature series.  Figure~S12 in the SI provides maps of the posterior mean surface of the model GP's. We notice that $b_1(\bs)$ and $b_2(\bs)$ show approximately opposite spatial behavior since $b_1(\bs)$ has the highest values where $b_2(\bs)$ has the lowest, in the central and southeastern areas. Figure~S13 of the SI shows boxplots of the posterior distribution of the GP's at each observed location; locations are sorted by elevation. The results suggest that the GP of $a_2(\bs)$  might be not necessary since the boxplots in the 18 locations have very similar ranges.  The spatial variability of  $a_1(\bs)$ is higher and, although it is not related to the elevation, it could be related to the distance to the sea.    

The posterior distribution of $\theta_1(\tau;\bs)$, which captures the autoregressive structure, is summarized using the same type of plots.  Figure~S14 of the SI shows boxplots presenting the posterior distribution of $\theta_1(\tau;\bs)$ at the observed locations while Figure~\ref{fig:meantheta1} the maps of the posterior mean surface of $\theta_1(\tau;\bs)$, both for $\tau = 0.05,0.50,0.95$.  The spatial GP's in the parameters of the Kumaraswamy distribution allow the model to fit different spatial patterns in each $\tau$. The results show that the posterior mean of $\theta_1(\tau;\bs)$ is higher in the central quantiles.  The spatial pattern of $\theta_1(\tau;\bs)$ is not symmetric around  $\tau=0.5$ and, e.g., values of $\theta_1(0.95;\bs)$ in the Pyrenees and northwestern areas are smaller than  $\theta_1(0.05;\bs)$ in the same areas.  Although $\theta_1(\tau;\bs)$ tends to be lower in locations with higher elevation, its spatial pattern cannot be explained by elevation alone.  Consequently, the spatial GP's cannot be replaced with an elevation fixed effect. 

\begin{figure}[!t]
\centering
\includegraphics[width=4cm]{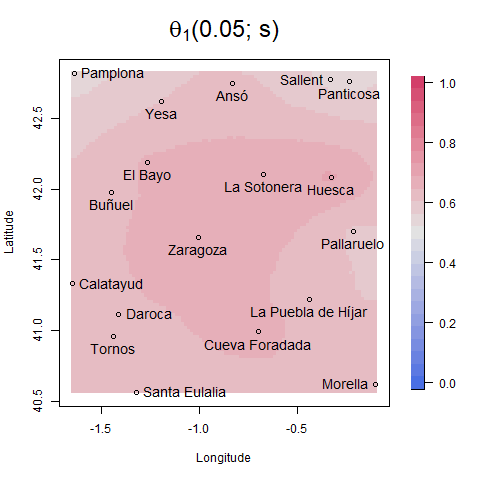}
\includegraphics[width=4cm]{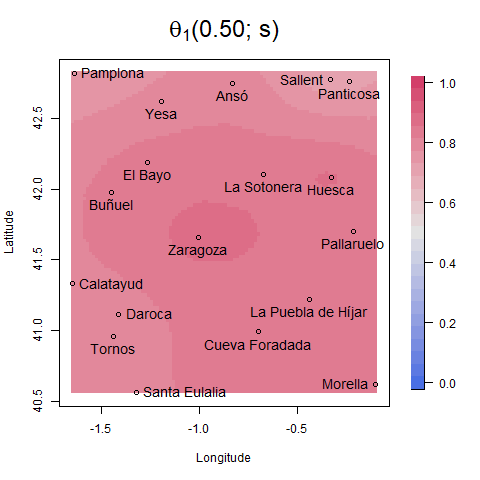}
\includegraphics[width=4cm]{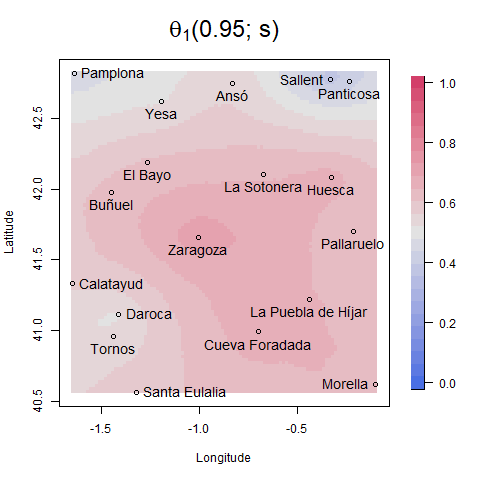}
\caption{Maps of the posterior mean of $\theta_1(\tau;\bs)$ for $\tau = 0.05,0.50,0.95$. } \label{fig:meantheta1}
\end{figure}

The spatial joint model  can also be used to estimate parameters related to the conditional distribution, e.g., conditional quantiles  at unobserved locations.  As a brief example, Figure~S15 of the SI shows this through maps of the posterior mean of $Q_{Y_t(\bs)}(\tau \mid y)$ for $\tau = 0.05,0.50,0.95$, and $y = 0.05,0.50,0.95$.  If it were desired to obtain the quantiles on the original scale of the data rather than the scale $(0,1)$, we could consider a kriging of $m(\bs)$ and $M(\bs)$. With the same kriging procedure we could condition on values $y(\bs)$'s relative to a certain empirical marginal quantile for each location.
  
The spatial modeling here is primarily illustrative.  For instance, the assumption of asymptotic tail independence, imposed by the Gaussian copula, may not be suitable. Examination of alternative copulas is beyond the scope of this work.

\section{Summary and future work} \label{sec:summary}

We have presented consequentially expanded modeling for joint (non-quantile crossing) QAR.  In particular, we have characterized the QAR(1) setting in a way that allows for a more flexible autocorrelation structure than the one in the seminal paper by \cite{koenker2006}.  We have extended this to the QAR($p$) case. We have offered a novel multiple time series version using a Gaussian copula.  We have elaborated a spatial version, using a GP copula based upon a GP in conjunction with four additional GP's.  This model enables spatially varying quantile functions. Our modeling is entirely parametric through the use of the  Kumaraswamy distributions. A software implementation of our methods is available as the R-package ``QAR" through GitHub: \url{https://github.com/JorgeCastilloMateo/QAR}.

We have illustrated the above contributions through time series of daily temperatures from sites in Arag\'on, Spain.  The joint QAR model, with greater flexibility in the modeling of the $\theta$ functions, allows us to capture autoregression structure in daily temperature data, which is not strictly increasing in $\tau$, but decreasing in both tails.

A critical challenge in employing this work is model fitting.  We can make specifications as rich as needed through the use of probabilistic mixtures of Kumaraswamy cdf's.  However, it is well-known that model fitting  employing MCMC with mixture specifications  is often poorly identified. This issue is compounded in our case by the fact that calculation of the likelihood requires constant use of a one-dimensional rootfinder.  Ongoing work is attempting to address these computational difficulties. 

We digress briefly to note a simple approximation strategy to incorporate regressors, e.g., seasonality, into our joint QAR approach.  Suppose we introduce a regression structure, $\mu_{t}$ into the QAR(1) and estimate by  $\hat{\mu}_{t}$, creating residuals $r_{t} = Y_{t} - \hat{\mu}_{t}$.  Then, we could apply the above methodology to obtain the QAR(1) for $r_{t}$.  Our strategy for selecting $m$ and $M$ can be applied to residuals.  More precisely, let $r_{t} = \theta_{(0)}(U_{t}) + \theta_{(1)}(U_{t})r_{t-1}$. This would yield the conditional quantile function, $Q_{r_{t}}(\tau \mid r_{t-1}) = \theta_{(0)}(\tau) + \theta_{(1)}(\tau)r_{t-1}$. Solving for the quantile function for $Y_{t}$ we obtain
\begin{equation}
    Q_{Y_{t}}(\tau \mid Y_{t-1}) = \hat{\mu}_{t} + \theta_{(0)}(\tau) + \theta_{(1)}(\tau)(Y_{t-1} - \hat{\mu}_{t-1}).
\end{equation}
This approximation can be criticized for two reasons:  (i) we are creating $\hat{\mu}_{t}$ as if we were fitting a usual AR(1) and (ii) the resulting quantiles are not coherent since $\hat{\mu}_{t}$ is a function of $\{Y_t : t=1,\ldots,T\}$.  The QAR(1) is not defined until the end of the observation window. Coherent implementation of covariates in the QAR setting, seeking to bridge our modeling with the work of \cite{yang2017} is in progress.  

Sections~\ref{sec:MQAR1} and \ref{sec:SQAR1} could be combined to build a bivariate spatial QAR model for daily max and daily min temperature.  Another challenge for the multivariate and spatial modeling would be to consider alternative copula choices, e.g., t-copulas in order to allow tail dependence for extreme quantiles.

\section*{Supplementary Information}
SI for ``Bayesian joint quantile autoregression" contains details on the Kumaraswamy distribution. Details on the model comparison metrics and simulation study. More results on the application with temperature series.

\section*{Acknowledgments}
This work has been supported in part by the Grants PID2020-116873GB-I00 and TED2021-130702B-I00 funded by MCIN/AEI/10.13039/501100011033 and Uni\'on Europea NextGenerationEU; and the Research Group E46\_20R: Modelos Estoc\'asticos funded by Gobierno de Arag\'on. J. C.-M. was supported by Gobierno de Arag{\'o}n under Doctoral Scholarship ORDEN CUS/581/2020. This work was done in part while J. C.-M. was a Visiting Scholar at the Department of Statistical Science from Duke University. The authors also thank Surya T. Tokdar from Duke University for fruitful discussions on joint QR modeling.

\section*{Declarations}


\textbf{Conflict of interest.} The authors declare that they have no conflict of interest.

\bibliography{arxiv_v1}

\end{document}


\maketitle

\section{Kumaraswamy distribution}

Figures~\ref{Fig.minimax.density.base} and \ref{Fig.minimax.CDF.base} show the pdf and the cdf of the Kumaraswamy distribution for different combinations of parameters $(a,b)$.

\begin{figure}[ht]
\begin{center}
\includegraphics[width=10cm]{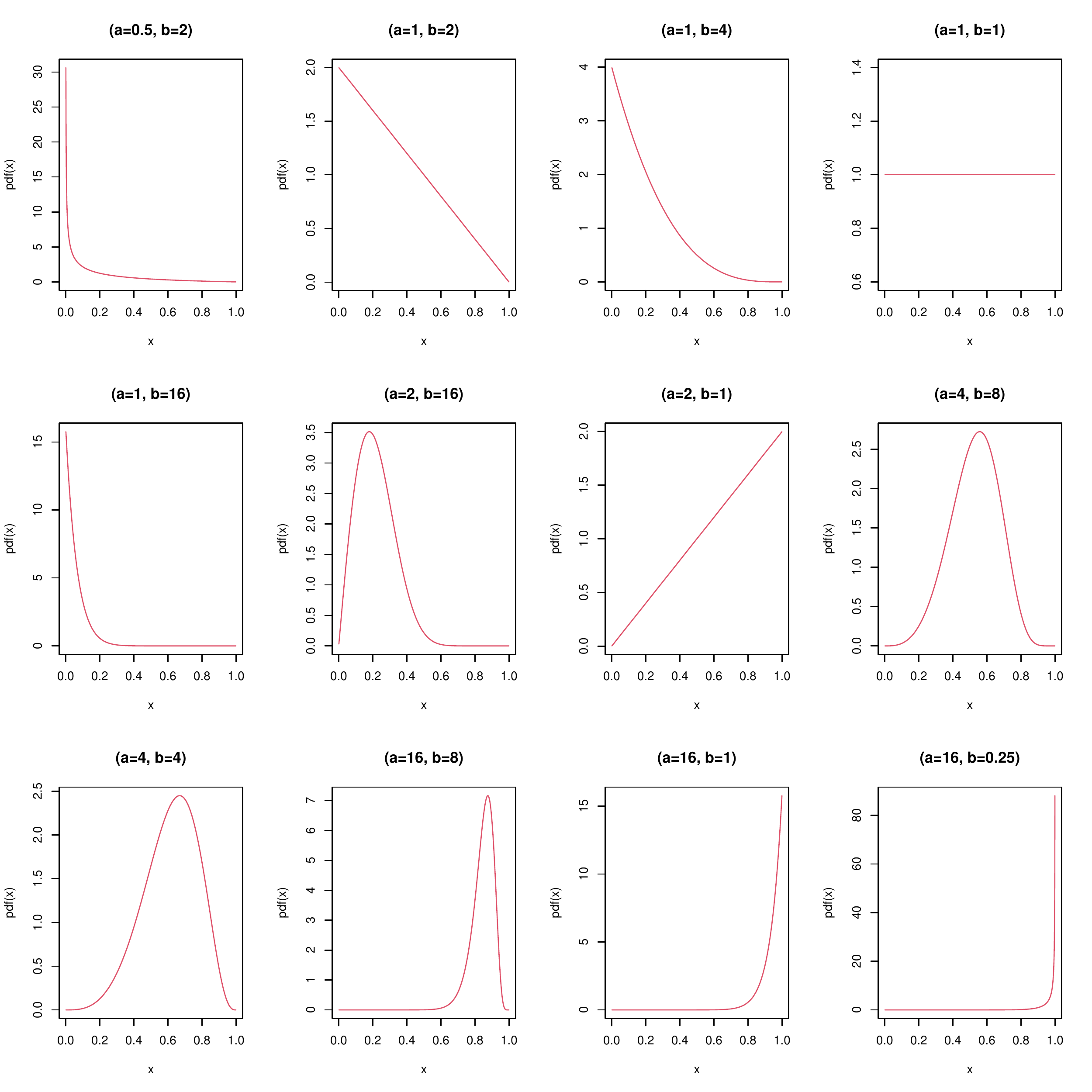}  
\caption{Probability density function of the Kumaraswamy distribution for different parameters.} \label{Fig.minimax.density.base}
\end{center}
\end{figure}

\begin{figure}[ht]
\begin{center}
\includegraphics[width=10cm]{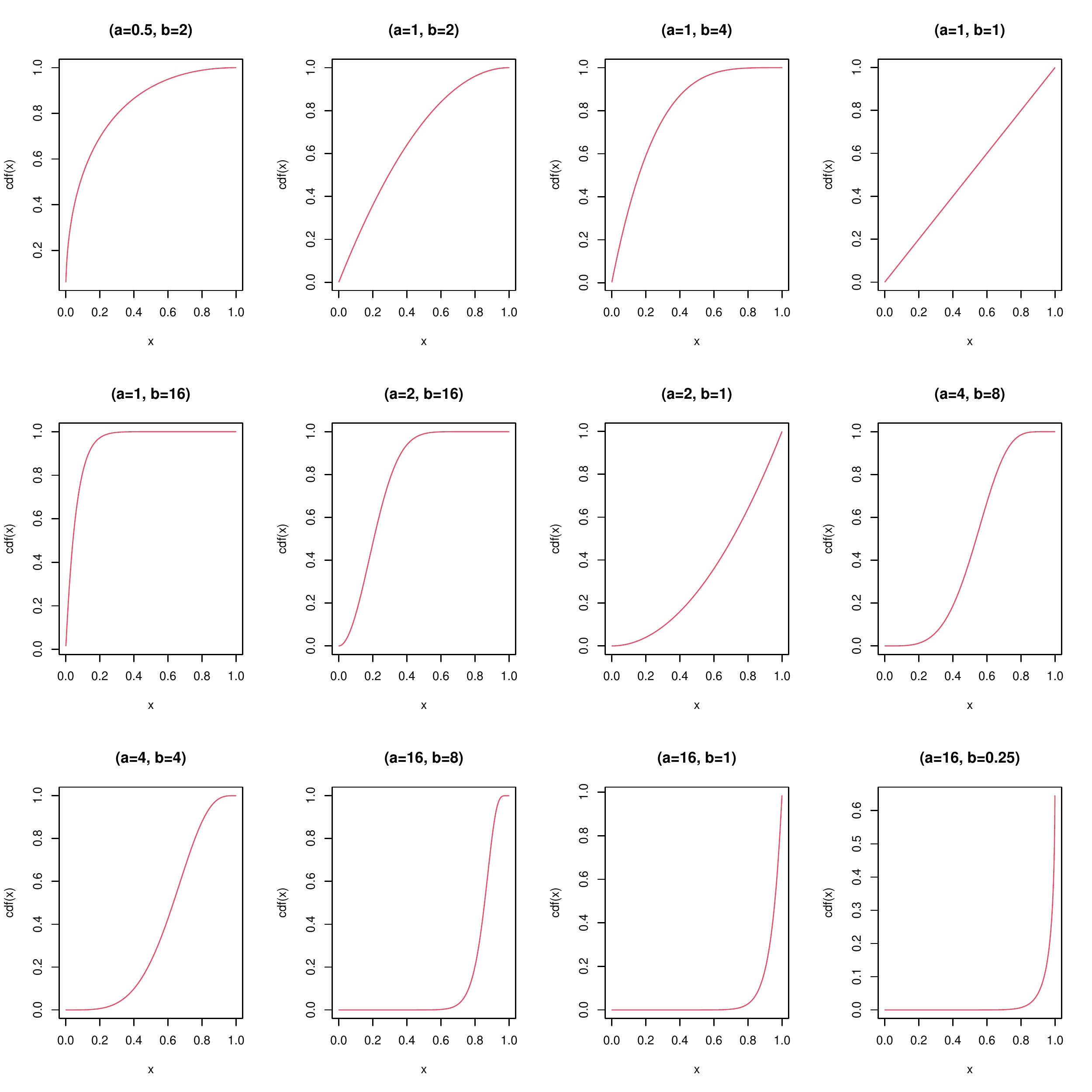}  
\caption{Cumulative distribution function of the Kumaraswamy distribution for different parameters.} \label{Fig.minimax.CDF.base}
\end{center}
\end{figure}

\clearpage

\section{Model adequacy and comparison} \label{sec:metrics}

Working within our parametric Bayesian framework, for any $\tau$, posterior samples of the model parameters, $\{\bOmega_{b}^{*} : b=1,\ldots,B\}$, produce posterior samples of the conditional quantile function for $Y_{t}$, $Q_{Y_{t}}(\tau \mid y_{t-1}; \bOmega_{b}^{*})$.  Essentially, for each $Y_{t}$ (with associated $y_{t-1}$) and any $\tau$, we obtain the posterior distribution of $Q_{Y_{t}}(\tau \mid y_{t-1}; \bOmega)$.  We use these posterior distributions along with the dataset, $\by$, to offer model assessment.  

We propose two novel approaches.  First, for any $y$, consider $\textbf{1}(y<Q_{Y_{t}}(\tau \mid y_{t-1}; \bOmega))$ where $\textbf{1}$ denotes the indicator function.  Then, let $p_{t}(\tau) \equiv E[\textbf{1}(y_{t} < Q_{Y_{t}}(\tau \mid y_{t-1}; \bOmega)) \mid \by]$, i.e., the posterior probability that $Q_{Y_{t}}(\tau \mid y_{t-1}; \bOmega)$ exceeds $y_{t}$.  Suppose we compute $p(\tau) \equiv \sum_{t=2}^{T} p_{t}(\tau) / (T-1)$.  We note that for any $\tau$ and $Y$ regardless of its distribution, $E[\bone(Y < Q_{Y}(\tau)] = \tau$ and $Var[\bone(Y < Q_{Y}(\tau)] = \tau(1 - \tau)$.  If we let $v \ge 1$ be a real number, then 
\begin{equation} \label{eq:p}
    \tilde{p}_v \equiv 
    \sqrt[v]{\int_{0}^{1} \left|\frac{p(\tau) - \tau}{\sqrt{\tau(1-\tau) / (T-1)}}\right|^v \,d\tau}
\end{equation}
provides a standardized deviation form as a dimensionless measure of how well the quantile function under the model is capturing conditional quantiles for the given time series. We propose this as a (\emph{global}) measure of model accuracy.  With a minimum value of zero, a smaller $\tilde{p}_v$ indicates better accuracy of the model.  We would approximate the integral by discretizing $\tau$, in particular, we consider $\tau \in \{0.01,0.02,\ldots,0.99\}$.

As a second measure, we turn to the check loss function, usually employed as an optimality function to obtain the $\tau$ empirical quantile \citep{koenker1978}.  Here, we adopt $\delta_{\tau}(u) = u(\tau - \textbf{1}(u<0))$, the check loss function associated with the AL distribution.  Again, from the posterior distribution of $Q_{Y_{t}}(\tau \mid y_{t-1}; \bOmega)$, for any $Y_{t}$ (with associated $y_{t-1}$) and $\tau$, we can obtain $\Delta_{t}(\tau) \equiv \delta_{\tau}(y_{t} - E[Q_{Y_{t}}(\tau \mid y_{t-1}; \bOmega) \mid \by])$.  As above suppose we compute $\Delta(\tau) \equiv \sum_{t=2}^{T} \Delta_{t}(\tau) / (T-1)$.  Then, for a given $\tau$, $\Delta(\tau)$ provides an average discrepancy for the $\tau$ quantile function.  The smaller the value, the better the model performance.  Then, we propose to weight $\Delta(\tau)$,
\begin{equation}
    \tilde{\Delta} \equiv \int_{0}^{1} \omega(\tau) \Delta(\tau) \,d\tau
    \label{eq:Delta}
\end{equation}
to provide a global measure of model performance.   We propose this as a \emph{relative} measure of model performance in making model comparison.  The weighting function, $\omega(\tau)$, compensates for the variation in mean of $\Delta(\tau)$ across $\tau$.  Again, we would approximate the integral by discretizing $\tau$.  

For the weight function, we consider 
\begin{equation} 
    \omega(\tau \mid \by) = \frac{1}{\sum_{t=2}^{T} \delta_{\tau}(y_{t} - Q^{emp}_{Y}(\tau)) / (T-1)}.
\end{equation}
This choice leads to a measure that is closely related to the $R^1(\tau)$ metric by \cite{koenker1999}. The $R^{1}(\tau)$ measure is essentially, $1- \omega(\tau \mid \by) \Delta(\tau)$. This measure is viewed as an analogue of $R^2$ for the classical residual sum of squares, i.e., the check loss function for quantiles replaces the least-squares loss function and the $\tau$ empirical marginal quantile $Q^{emp}_{Y}(\tau)$ replaces the sample mean.  With a maximum value of $1$, the best model performance is reached at this maximum. Then,
\begin{equation} \label{eq:R1}
    \bar{R^1} \equiv \int_{0}^{1} R^1(\tau) \,d\tau = 1 - \int_{0}^{1} \omega(\tau \mid \by) \Delta(\tau) \,d\tau = 1- \tilde{\Delta},
\end{equation}
provides a dimensionless global measure of model performance which can be used for model comparison.

\section{Simulation study} \label{sec:sims}

To illustrate parameter recovery under the modeling in Section~2 of the Main Manuscript, we perform a simulation study where we consider QAR(1) models with $K=1$ and $K=2$ (QAR1K1 and QAR1K2) Kumaraswamy cdf's in the probabilistic mixture from Section~2.2.1, and scenarios with different combinations of parameter values. For each scenario we simulate $B=100$ datasets of length $T=150$ and $T=500$ time points, using the model in (5) from the Main Manuscript with the $\eta$'s again specified as probabilistic mixtures.  For each simulated dataset, a QAR(1) model with $K$ equal to the value used in the data generating process is fitted. First, $90\%$ credible intervals are computed for $\theta_0(\tau)$ and $\theta_1(\tau)$ in a grid of values of $\tau$. Then, the coverage (CVG) of each value of each function is computed as the proportion of the $B$ computed credible intervals that contain the corresponding true values.

Table~\ref{tab:compK1} shows $\overline{CVG}(\theta_0)$ and $\overline{CVG}(\theta_1)$, the average of the CVG's for the values of the functions in $\tau \in \{0.01,0.02,\ldots,0.99\}$ based on the QAR1K1 model fitted to data generated with the same model, under four scenarios. The individual CVG's of $\theta_0(\tau)$ and $\theta_1(\tau)$  for $\tau=0.01, 0.50, 0.99$ are shown in Table~\ref{tab:app:compK1}. Scenario~SC1 considers independent series of data with $a_{1}=0.5$, $b_{1}=2$, $a_{2}=0.5$ and $b_{2}=2$ (here, the subscript indicates the function $\eta_j$ to which the parameter corresponds).  The other scenarios correspond to series with different correlation structures.  Parameters in Scenario~SC2 ($a_{1}=4$, $b_{1}=4$, $a_{2}=1$, $b_{2}=2$) give values of $\theta_1(\tau)$ moving from $-0.5$ to $0$, with the maximum negative correlation for the central values of $\tau$. Parameters in Scenario~SC3 ($a_{1}=0.5$, $b_{1}=2$, $a_{2}=2$, $b_{2}=1$) give $\theta_1(\tau)$ values from $0.2$ to $0.7$  for $\tau<0.75$.  Parameters in Scenario~SC4 ($a_{1}=0.3$, $b_{1}=6$, $a_{2}=12$, $b_{2}=8$) give values of $\theta_1(\tau)$ from $0.8$ to $1$ for $\tau<0.75$. The $\theta$ functions corresponding to the generated data are plotted in Figure~\ref{fig:app:scenariosK1}.  The same study is repeated  for series of data generated with a QAR1K2 model under three scenarios: SC5 gives, again, uncorrelated series; SC6 positive $\theta_1(\tau)$ between $0.5$ and $0.91$ for $\tau<0.8$; and SC7 gives $\theta_1(\tau)$ moving from $-0.5$ to $0$, with the maximum negative values for the central values of $\tau$ and decreasing slowly towards $0$ in the extremes.  Table~\ref{tab:app:param} contains the parameters used in each scenario.  The average of the CVG's, $\overline{CVG}(\theta_0)$ and $\overline{CVG}(\theta_1)$, when fitting the QAR1K2 model under these scenarios are also shown in Table~\ref{tab:compK1}, and plots of $\theta_0(\tau)$ and $\theta_1(\tau)$ are shown in Figure~\ref{fig:app:scenariosK1}.

\begin{table}[b]
\centering
\begin{tabular}{cc|ccccccc}
Function & $T$ & SC1 & SC2   & SC3  & SC4  & SC5  & SC6  & SC7 \\ 
  \hline
 \multirow{2}{*}{$\theta_0$}  & $150$ & 0.87 & 0.86 & 0.91 & 0.87 & 0.92 & 0.88 & 0.88  \\
 & $500$ & 0.89 & 0.89 & 0.88 & 0.89 & 0.91 & 0.89 & 0.91  \\
 \multirow{2}{*}{$\theta_1$}  & $150$ & 0.88 & 0.90 & 0.88 & 0.86 & 0.90 & 0.87 & 0.87  \\
 & $500$ & 0.89 & 0.89 & 0.90 & 0.87 & 0.91 & 0.90 & 0.90  \\ 
\end{tabular}
\caption{The $90\%$ $\overline{CVG}$ of $\theta_0(\tau)$ and $\theta_1(\tau)$ across $\tau \in \{0.01,0.02,\ldots,0.99\}$ obtained from fitting QAR1K1 models in Scenarios~SC1--SC4 and QAR1K2 models in Scenarios~SC5--SC7.}
\label{tab:compK1}
\end{table}

\begin{table}[tb]
\centering
\begin{tabular}{cc|cccccc}
Scenario & $T$ & $\theta_0(0.01) $ & $\theta_0(0.99) $ & $\theta_0(0.50) $  & $\theta_1(0.01) $ & $\theta_1(0.99) $ & $\theta_1(0.50) $ \\ 
  \hline
\multirow{2}{*}{SC1} & $150$ & 0.88 & 0.86 & 0.85  & 0.87 & 0.91 & 0.86  \\ 
 & $500$  & 0.91 & 0.89 & 0.88 & 0.89 & 0.90 & 0.90   \\
\multirow{2}{*}{SC2} & $150$ & 0.87 & 0.90 & 0.92  & 0.87 & 0.89 & 0.90  \\ 
 & $500$ & 0.87 & 0.89 & 0.88  & 0.87 & 0.88 & 0.91  \\ 
\multirow{2}{*}{SC3} & $150$ & 0.91 & 0.83 & 0.89  & 0.90 & 0.83 & 0.89  \\ 
 & $500$ & 0.90 & 0.87 & 0.88 & 0.83 & 0.89 & 0.95  \\ 
\multirow{2}{*}{SC4} & $150$ & 0.88 & 0.88 & 0.86 & 0.88 & 0.88 & 0.86  \\ 
 & $500$ & 0.86 & 0.86 & 0.80  & 0.93 & 0.86 & 0.81  \\ 
\end{tabular}
\caption{The $90\%$ CVG of $\theta_0(\tau)$ and $\theta_1(\tau)$ ($\tau = 0.01,0.50,0.99$) in QAR1K1 models fitted to Scenarios~SC1--SC4. }
\label{tab:app:compK1}
\end{table}

\begin{figure}[tb]
\centering
\includegraphics[width=5cm]{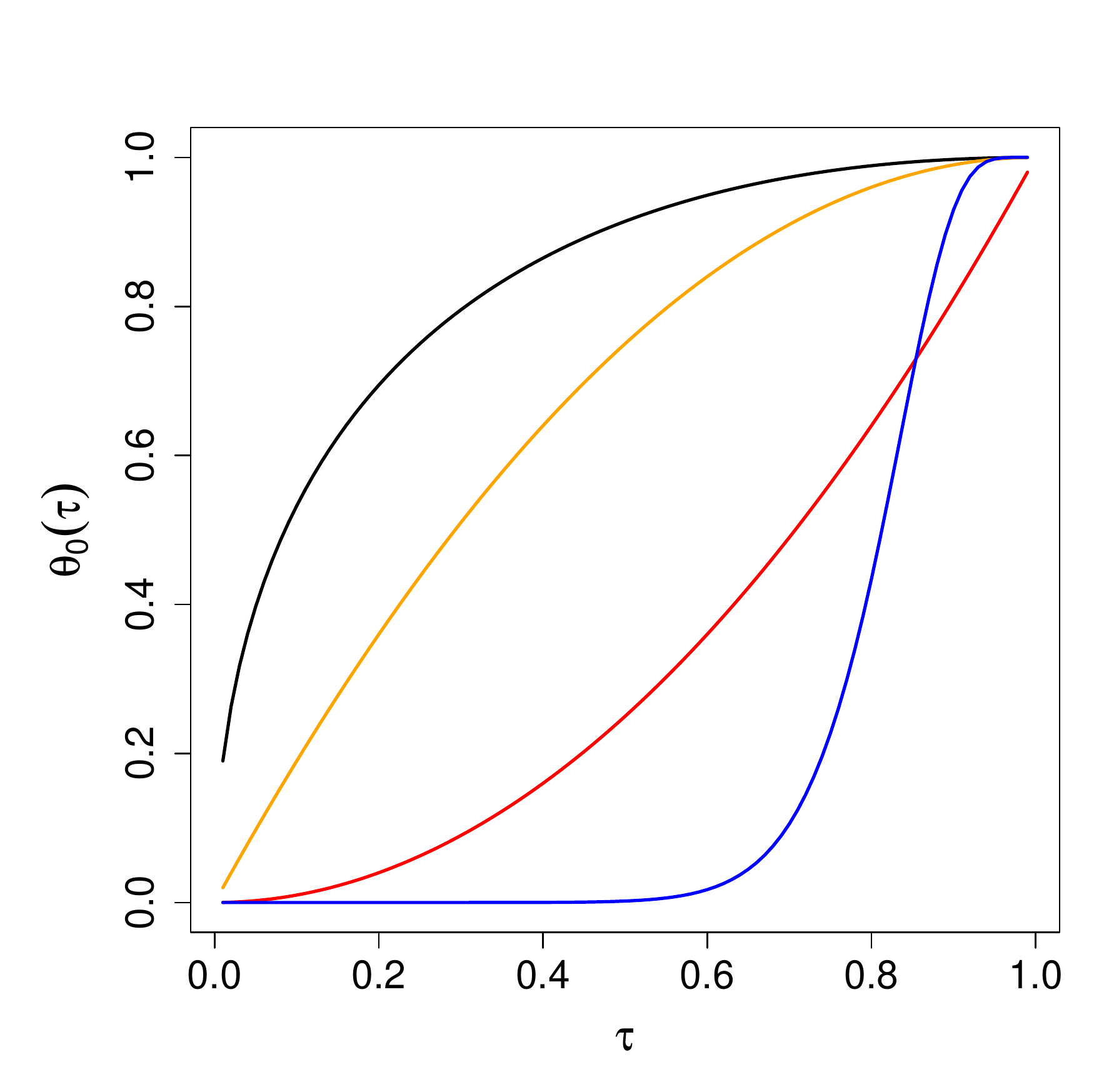}
\includegraphics[width=5cm]{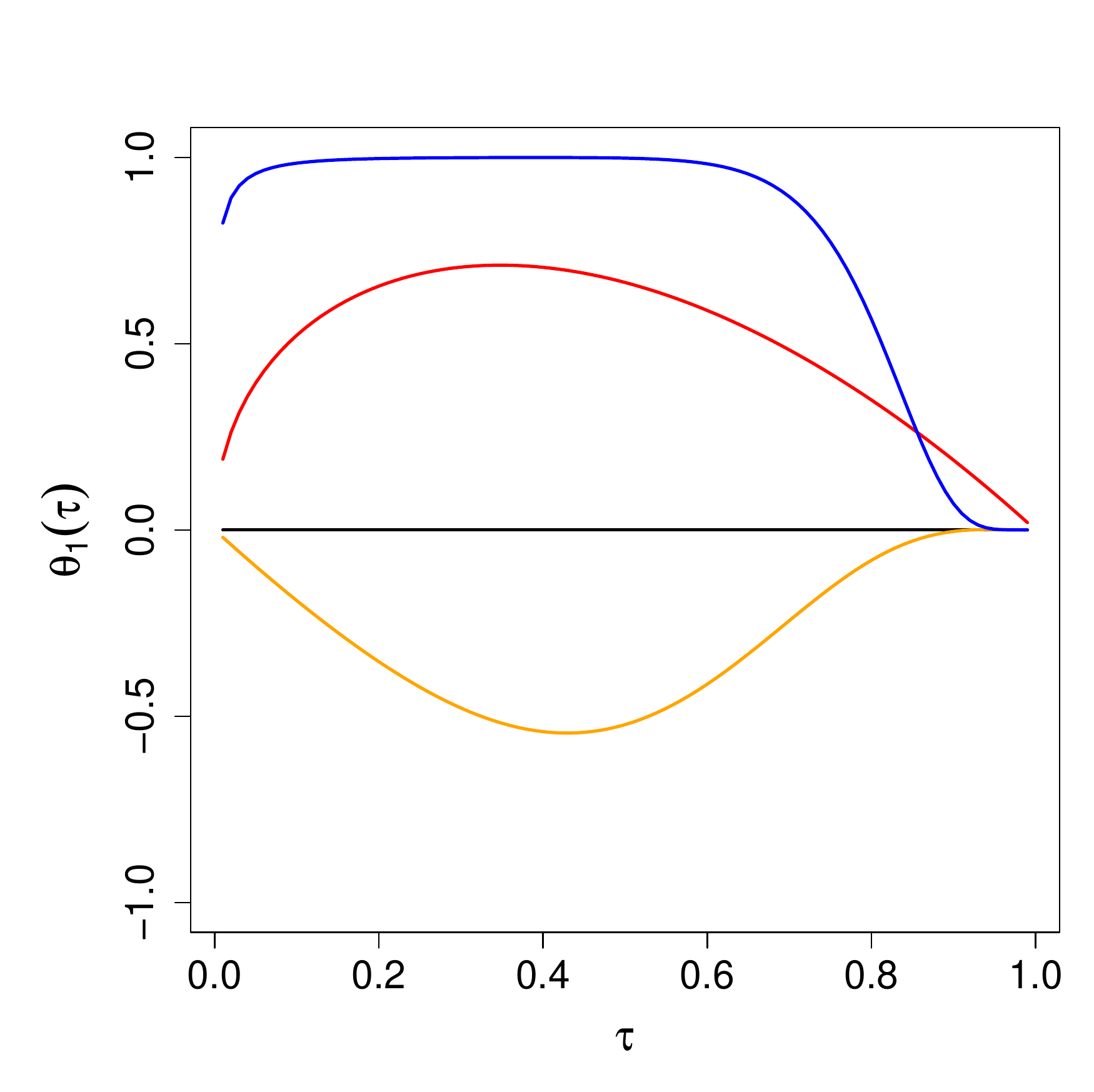}\\
\includegraphics[width=5cm]{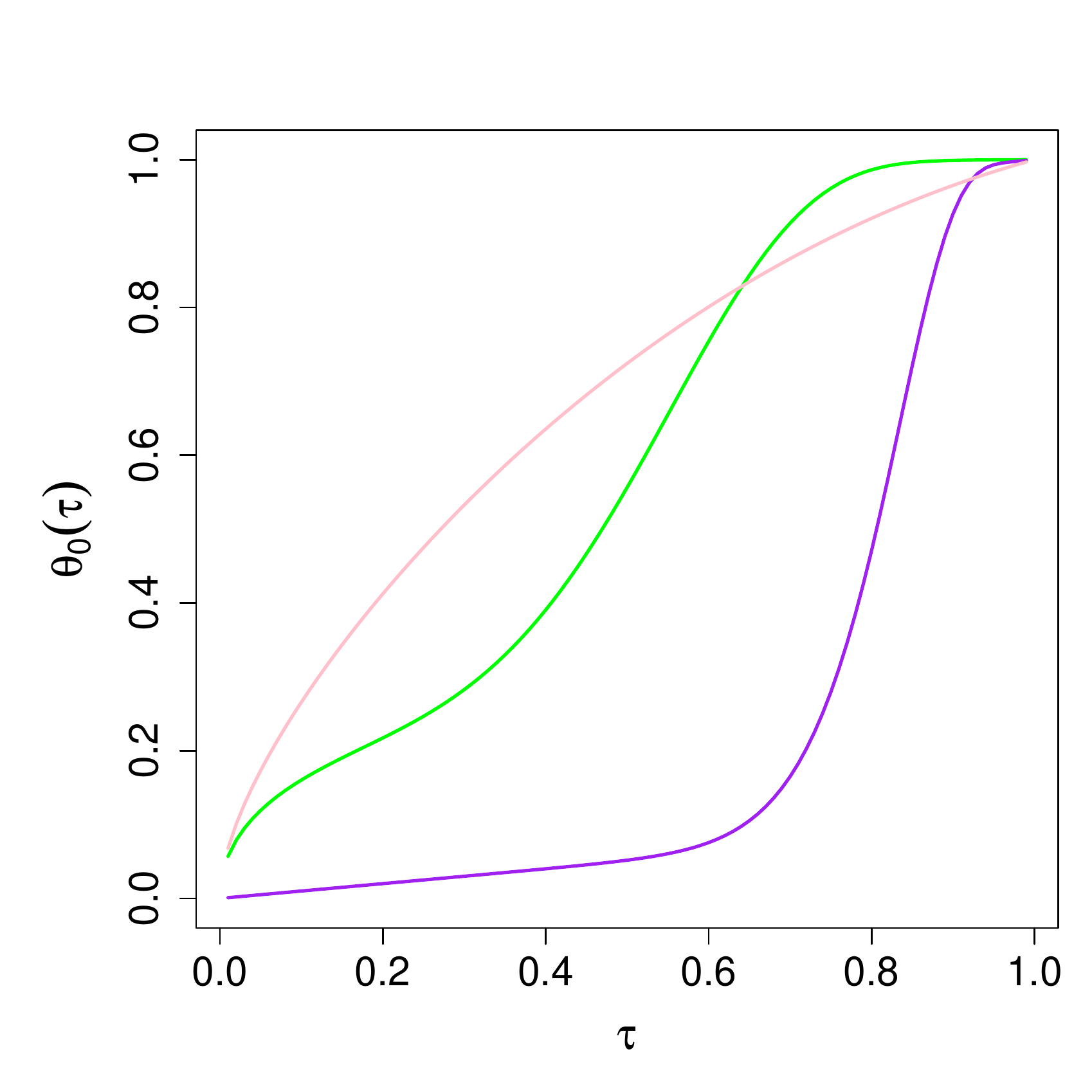}
\includegraphics[width=5cm]{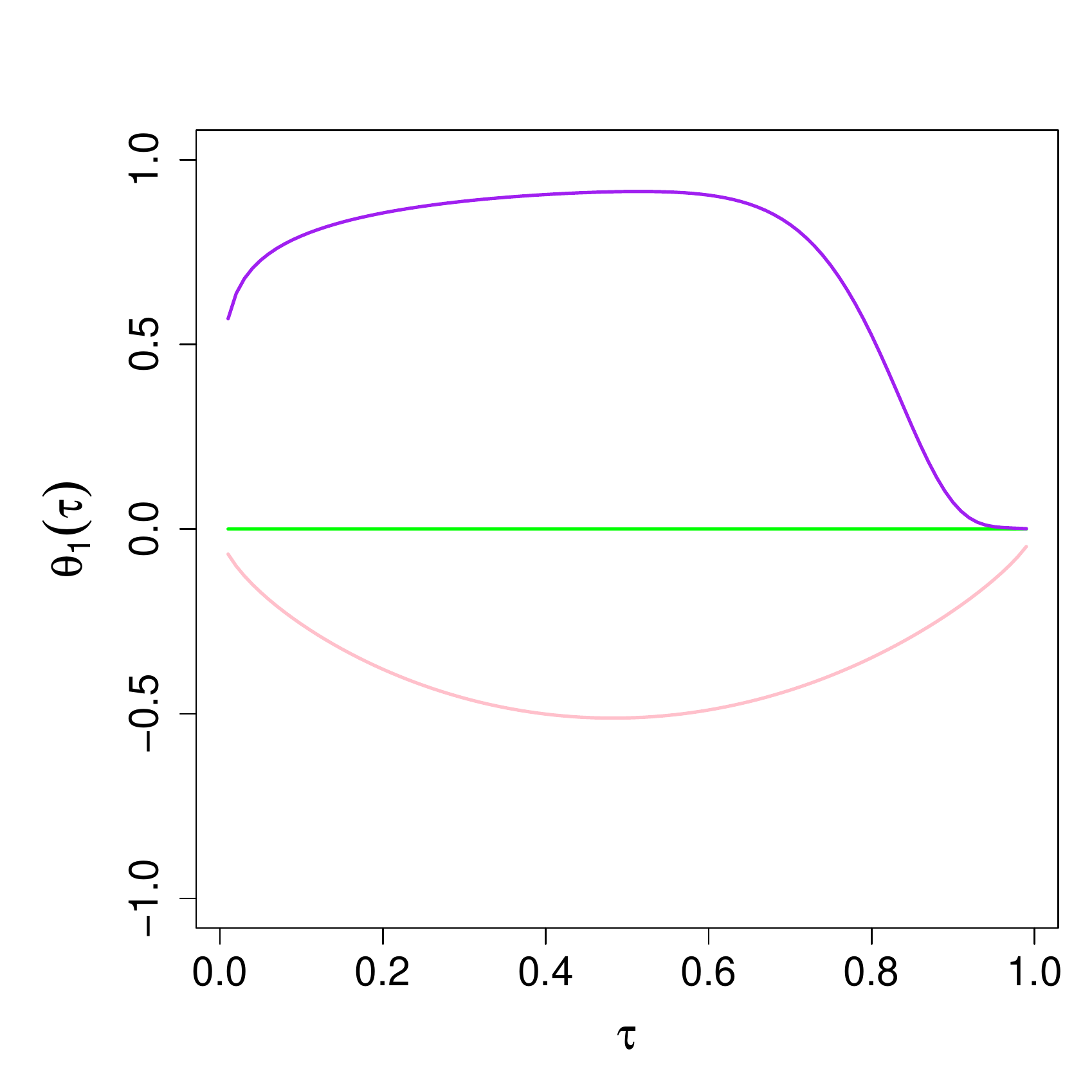}
\caption{Functions $\theta_0(\tau)$ (left) and $\theta_1(\tau)$ (right) vs. $\tau$. Under Scenarios~SC1, (black), SC2 (orange), SC3 (red), and SC4 (blue) used to generate QAR1K1 series (above); and under  Scenarios~SC5 (green), SC6 (purple), and SC7 (pink) used to generate QAR1K2 series (below). }
\label{fig:app:scenariosK1}
\end{figure}

\begin{table}[tb]
\centering
\begin{tabular}{c|ccccccccccc}
Scenario & $a_{1,1}$ &  $b_{1,1}$ & $a_{1,2}$ & $b_{1,2}$ & $a_{2,1}$ &  $b_{2,1}$ & $a_{2,2}$ &  $b_{2,2}$  & $\lambda_1$ & $\lambda_2$  \\ 
  \hline
SC5  & 0.5 & 2.0 & 4.0 & 8.0 & 0.5 & 2.0 & \phantom{1}4.0 & 8.0 & 0.3 & 0.3 \\ 
SC6  & 0.5 & 2.0 & 0.3 & 6.0 & 1.0 & 1.0 & 12.0 & 8.0 & 0.4 & 0.1 \\ 
SC7  & 3.0 & 0.5 & 2.0 & 1.0 & 1.0 & 2.0 & \phantom{1}0.5 & 1.0 & 0.2 & 0.4 \\ 
\end{tabular}
\caption{Values of the parameters of the QAR1K2 models used to generate data under Scenarios~SC5--SC7. (First subscript indicates the mixture function $\eta_j$ and second subscript indicates the $k$-th component of that mixture to which the parameter corresponds.)}
\label{tab:app:param}
\end{table}

The results are satisfactory, with the average of the CVG's close to the nominal level of $0.90$, varying between $0.86$ and $0.92$ in all the scenarios for both $T=150$ and $500$.  Not only the averages but also the individual CVG's for each $\tau$, even for extreme quantiles ($\tau=0.01, 0.99$), are close to $0.90$ (see Tables~\ref{tab:app:compK1} and \ref{tab:app:scenarioK2thetas}).  It is noteworthy that Scenarios~SC3, SC4 and SC6 could correspond to a usual correlation structure in climate and environmental data, with a strong positive dependence in the central values of $\tau$ that weakens at the extremes. 

\begin{table}[tb]
\centering
\begin{tabular}{cc|cccccc}
Scenario & $T$ & $\theta_0(0.01) $ & $\theta_0(0.99) $ & $\theta_0(0.50) $ & $\theta_1(0.01) $ & $\theta_1(0.99) $ & $\theta_1(0.50) $ \\ 
  \hline
 \multirow{2}{*}{SC5} & $150$ & 0.97 & 0.91 & 0.97  & 0.86 & 0.91 & 0.97  \\ 
  & $500$ & 0.90 & 0.80 & 0.93 & 0.86 & 0.80 & 0.93  \\ 
 \multirow{2}{*}{SC6} & $150$  & 0.95 & 0.85 & 0.90  & 0.87 & 0.83 & 0.91  \\ 
  & $500$  & 0.90 & 0.93 & 0.89 & 0.80 & 0.93 & 0.89  \\ 
 \multirow{2}{*}{SC7} & $150$  & 0.95 & 0.85 & 0.90  & 0.87 & 0.83 & 0.91 \\ 
  & $500$  & 0.94 & 0.88 & 0.89  & 0.94 & 0.86 & 0.89  \\
\end{tabular}
\caption{The $90\%$ CVG of $\theta_0(\tau)$ and $\theta_1(\tau)$ ($\tau = 0.01,0.50,0.99$) in QAR1K2 models fitted to Scenarios~SC5--SC7. }
\label{tab:app:scenarioK2thetas}
\end{table}

To study the flexibility of the QAR1K1 model, Table~\ref{tab:compK1K2} summarizes the metrics described in Section~\ref{sec:metrics}, obtained from  fitting the QAR1K1 and QAR1K2 models to the previous data series generated with QAR1K2 models.  The values of $\bar R^1$ and $\tilde p_2$ are the metrics averaged across the $B=100$ simulations.  QAR1K1 models are quite flexible and their metrics are only slightly poorer than QAR1K2 models. Note that series generated under Scenario~SC5 are independent and consequently $\bar R^1$ is expected to be zero. 

\begin{table}[tb]
\centering
\begin{tabular}{cc|cccc}
 &  &\multicolumn{2}{c}{QAR1K1} & \multicolumn{2}{c}{QAR1K2} \\
Scenario & $T$ & $\bar R^1$ & $\tilde p_2$ & $\bar R^1$ & $\tilde p_2$ \\
  \hline
\multirow{2}{*}{SC5} & 150 & $-0.0387$ & 0.0019 & 0.0021 & 0.0002 \\
 & 500 & $-0.0439$ & 0.0019 & 0.0002 & 0.0001 \\
\multirow{2}{*}{SC6} & 150 & \phantom{$-$}0.3082 & 0.0003 & 0.3091 & 0.0001 \\
 & 500 & \phantom{$-$}0.3155 & 0.0002 & 0.3171 & 0.0000 \\ 
\multirow{2}{*}{SC7} & 150 & \phantom{$-$}0.0781 & 0.0003 & 0.0776 & 0.0001 \\
 & 500 & \phantom{$-$}0.0737 & 0.0001 & 0.0737 & 0.0001 \\
\end{tabular}
\caption{Adequacy and comparison metrics for QAR1K1 and QAR1K2 models in Scenarios~SC5--SC7. (Values are averaged across the $B=100$ simulations.)}
\label{tab:compK1K2}
\end{table}

\clearpage

\section{Application to temperature data}

\subsection{Time series data}

\begin{figure}[!ht]
\centering
\includegraphics[width=3cm]{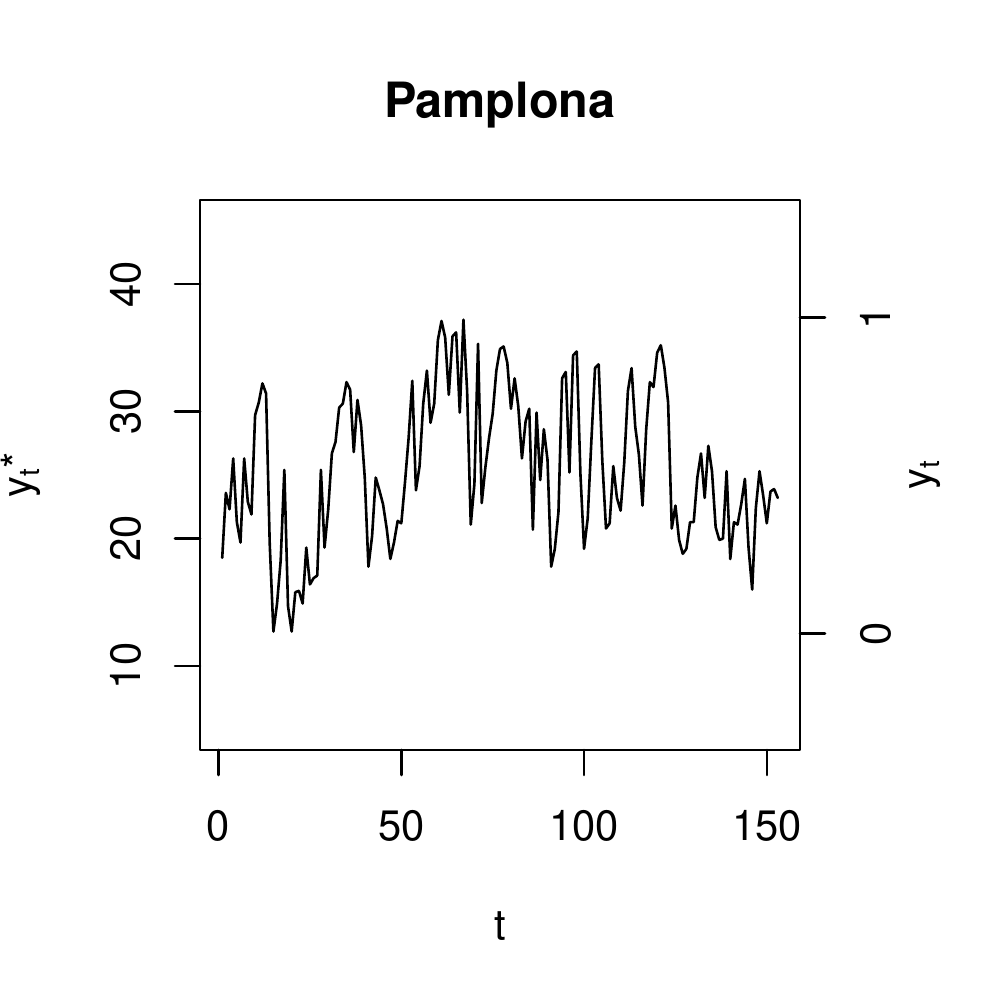}
\includegraphics[width=3cm]{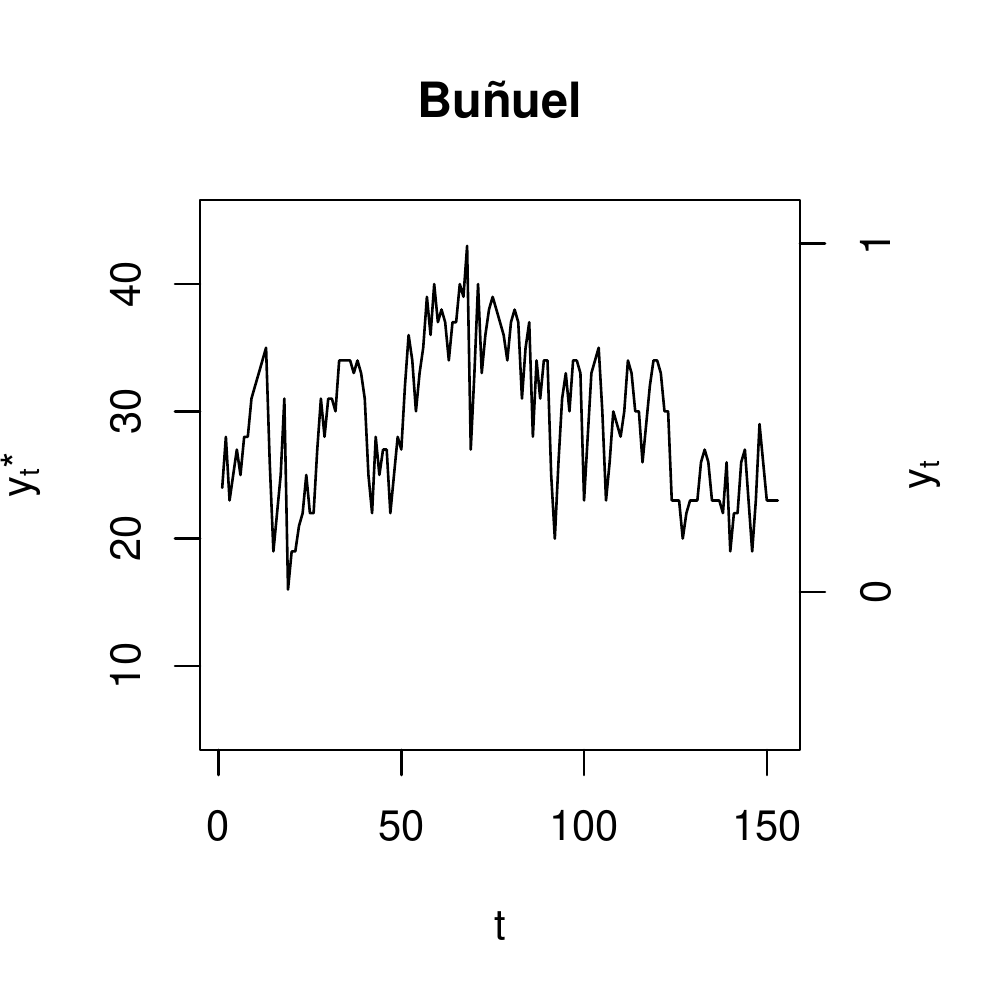}
\includegraphics[width=3cm]{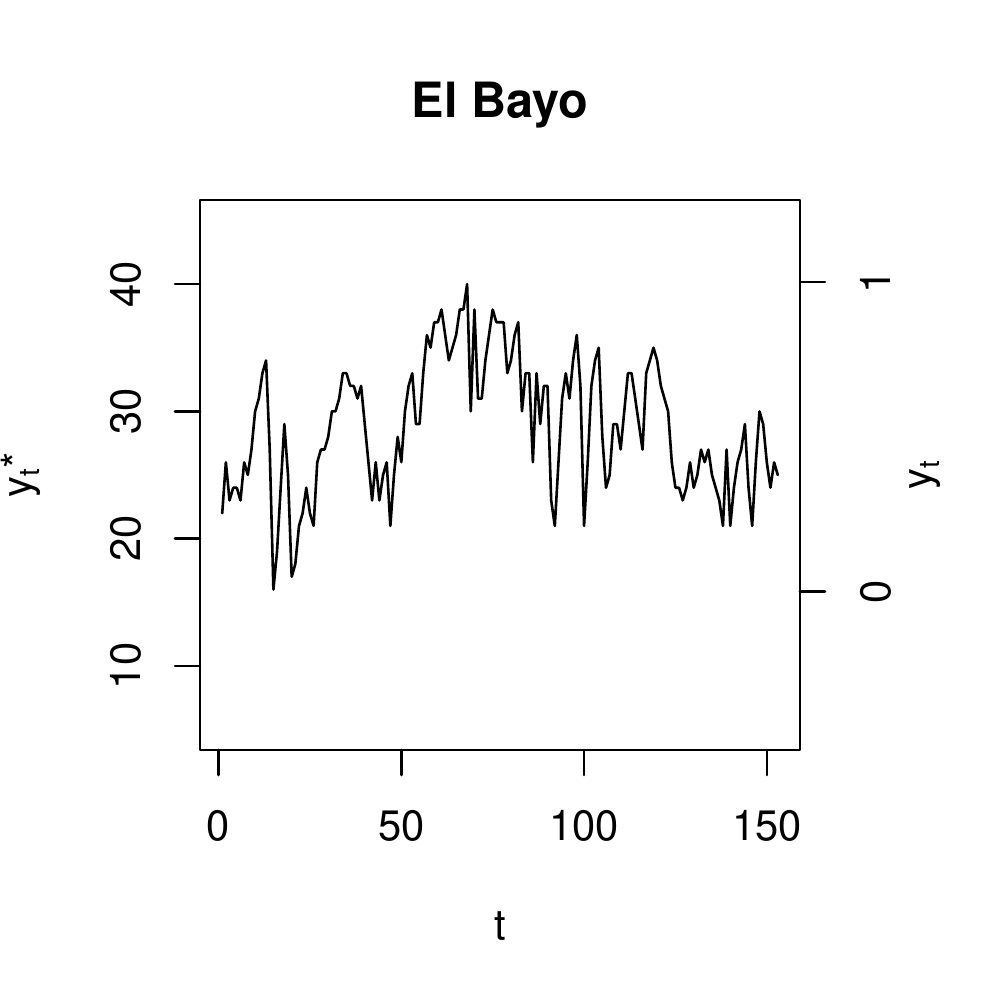}
\includegraphics[width=3cm]{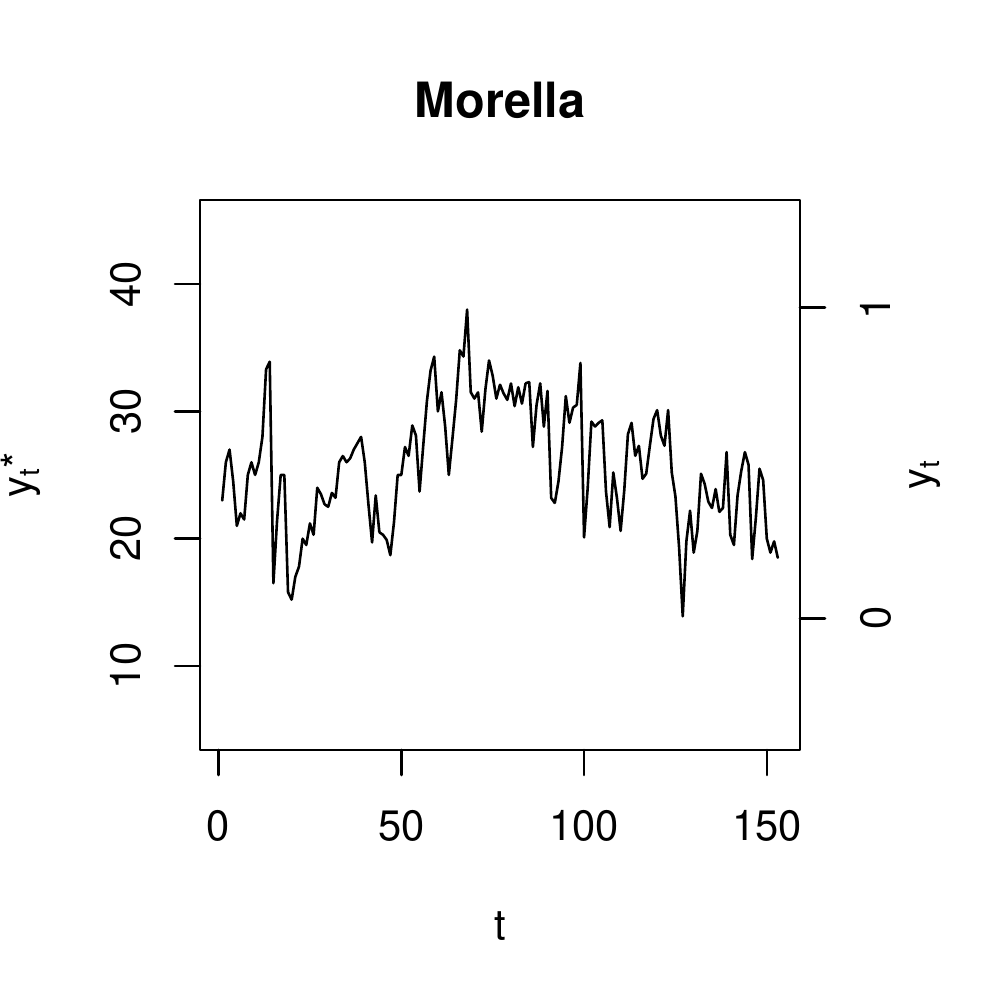} \\
\includegraphics[width=3cm]{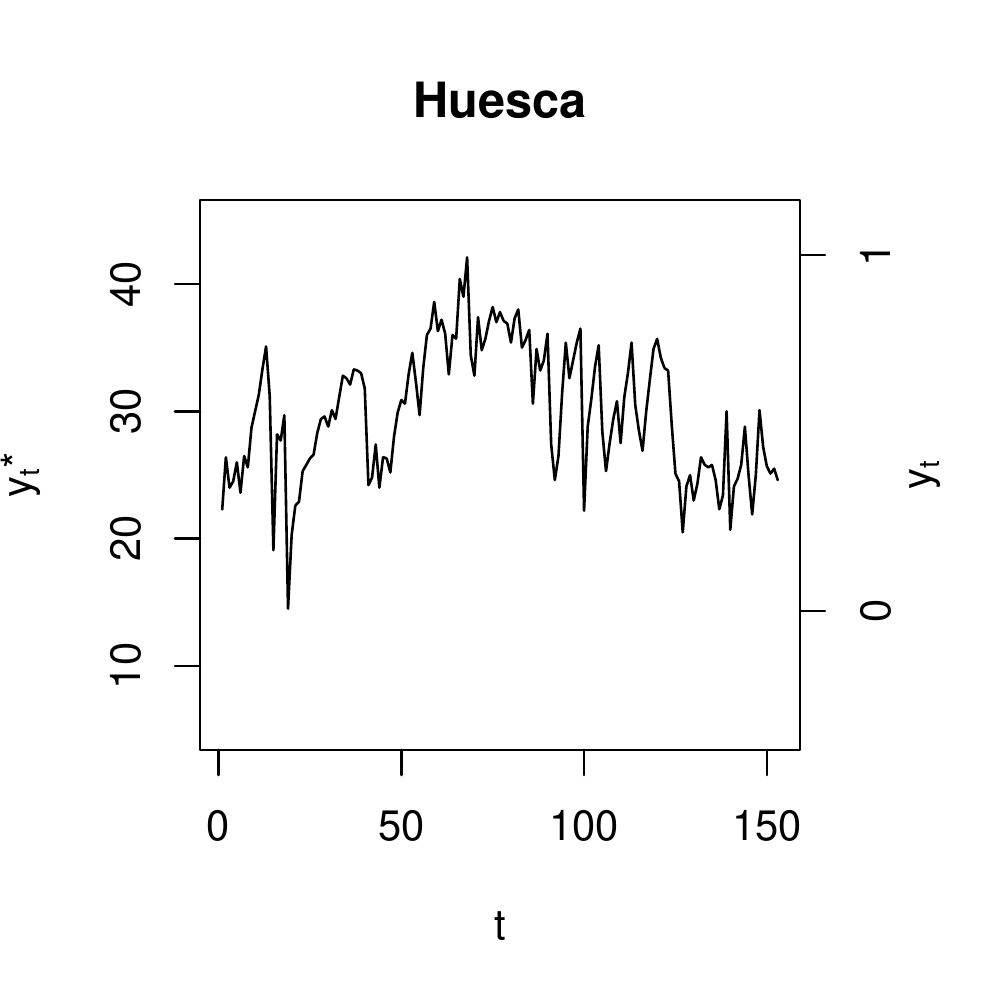}
\includegraphics[width=3cm]{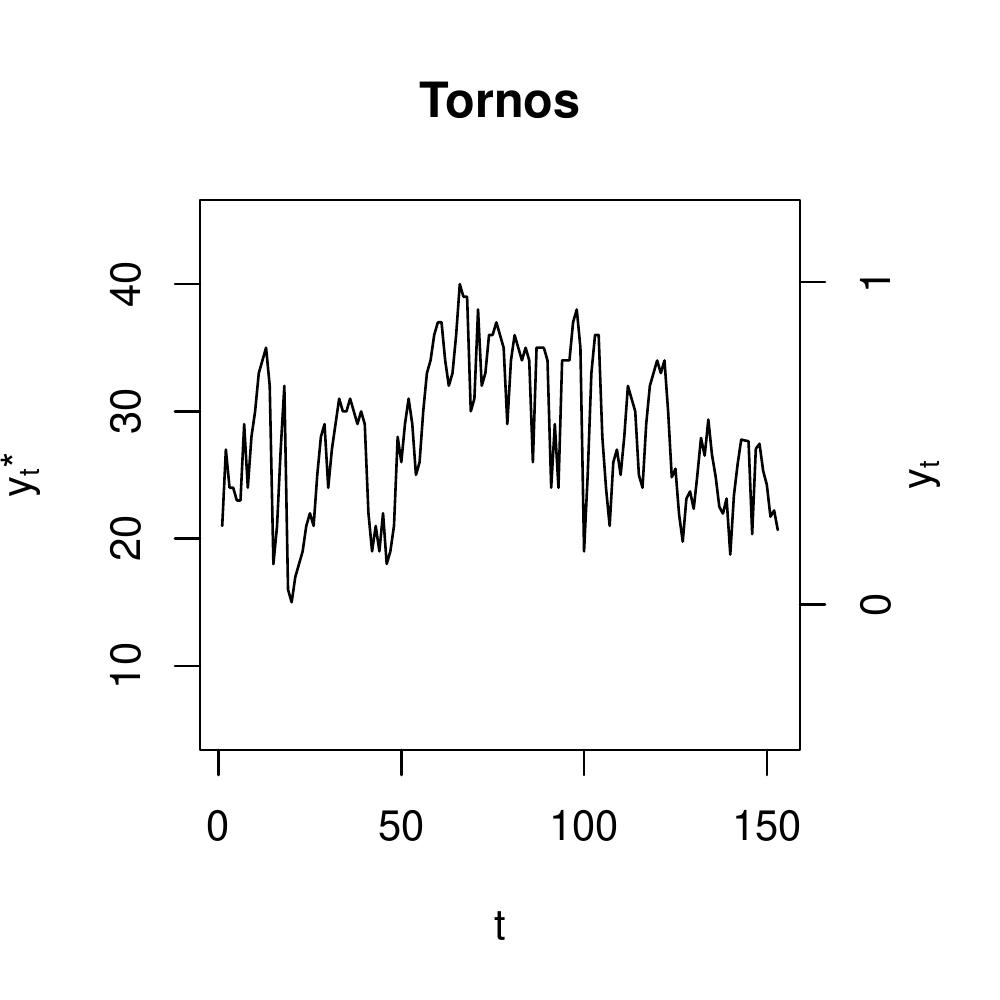}
\includegraphics[width=3cm]{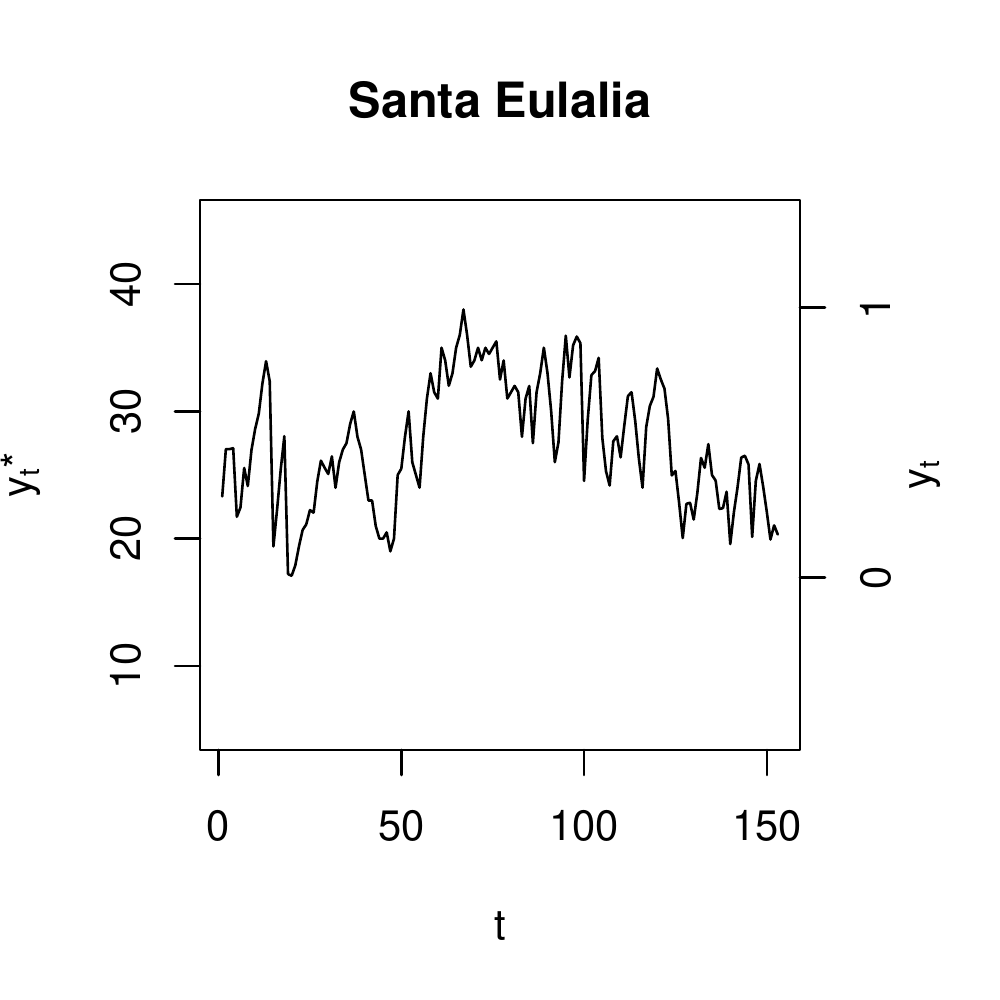}
\includegraphics[width=3cm]{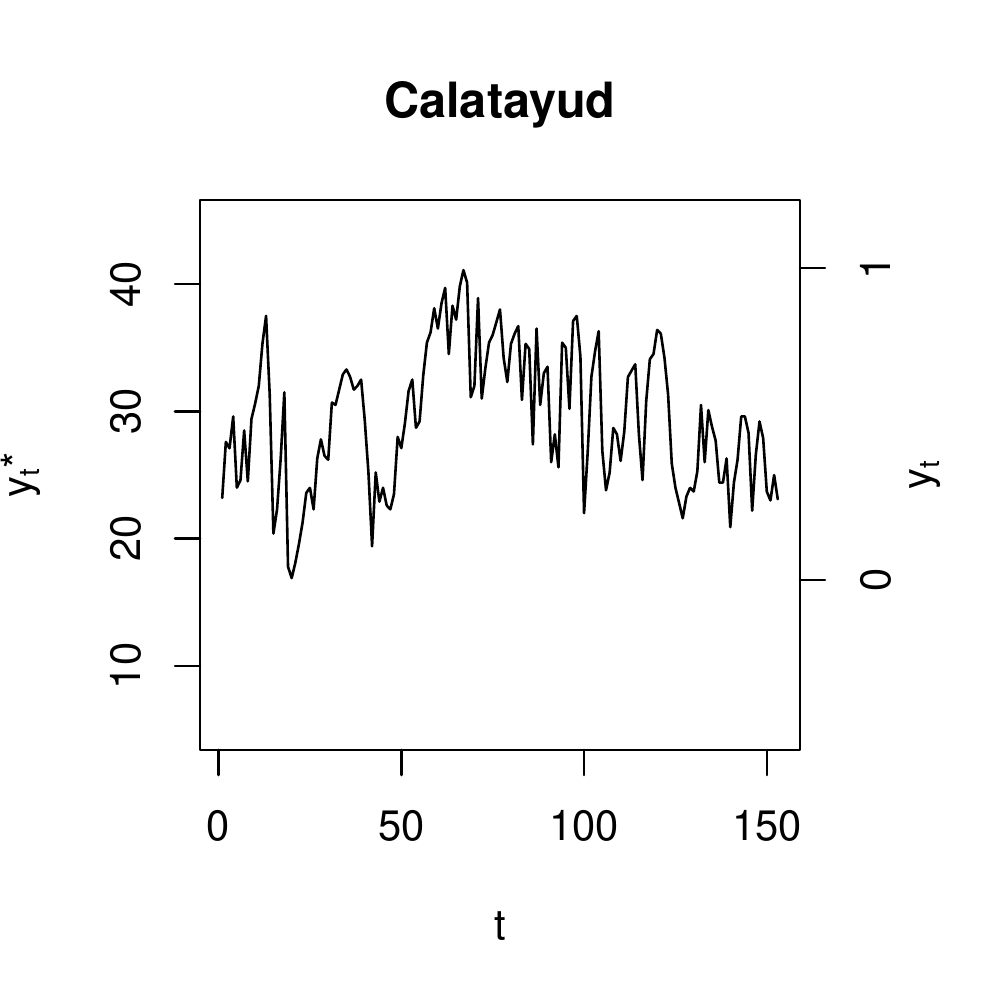} \\
\includegraphics[width=3cm]{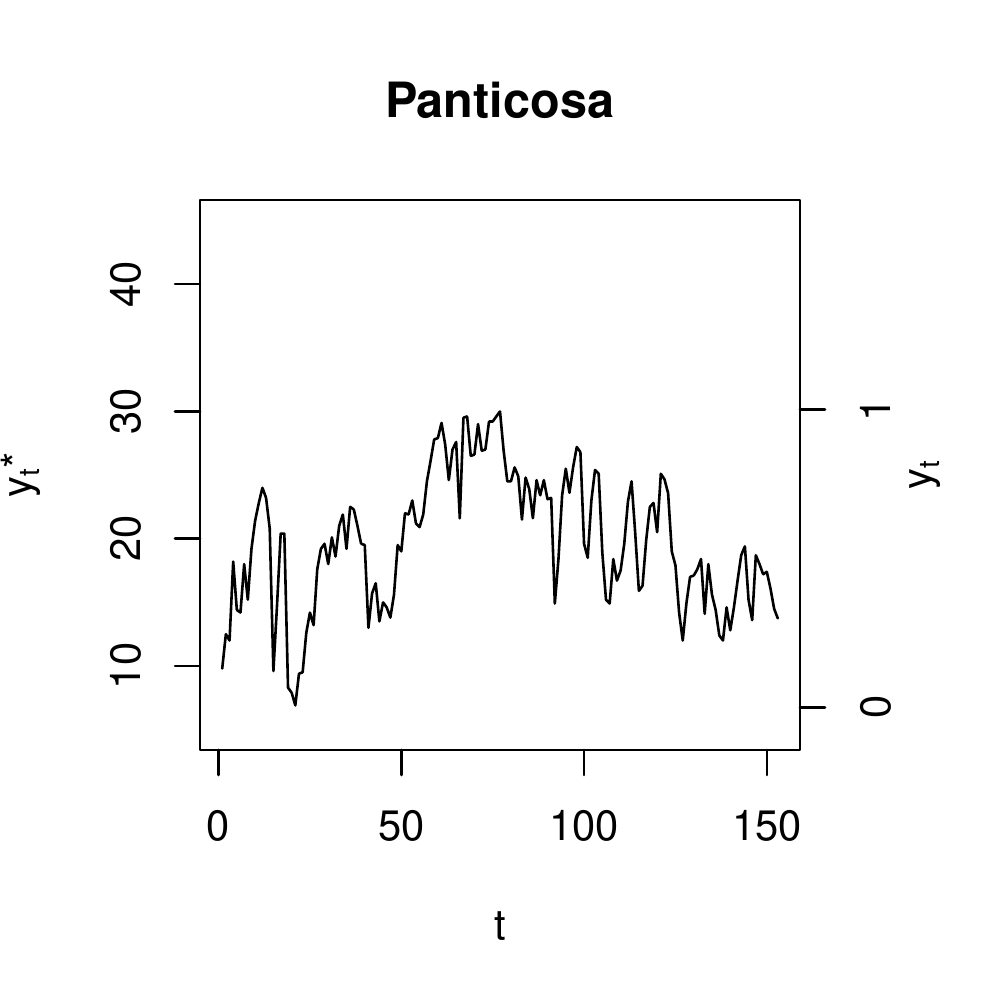}
\includegraphics[width=3cm]{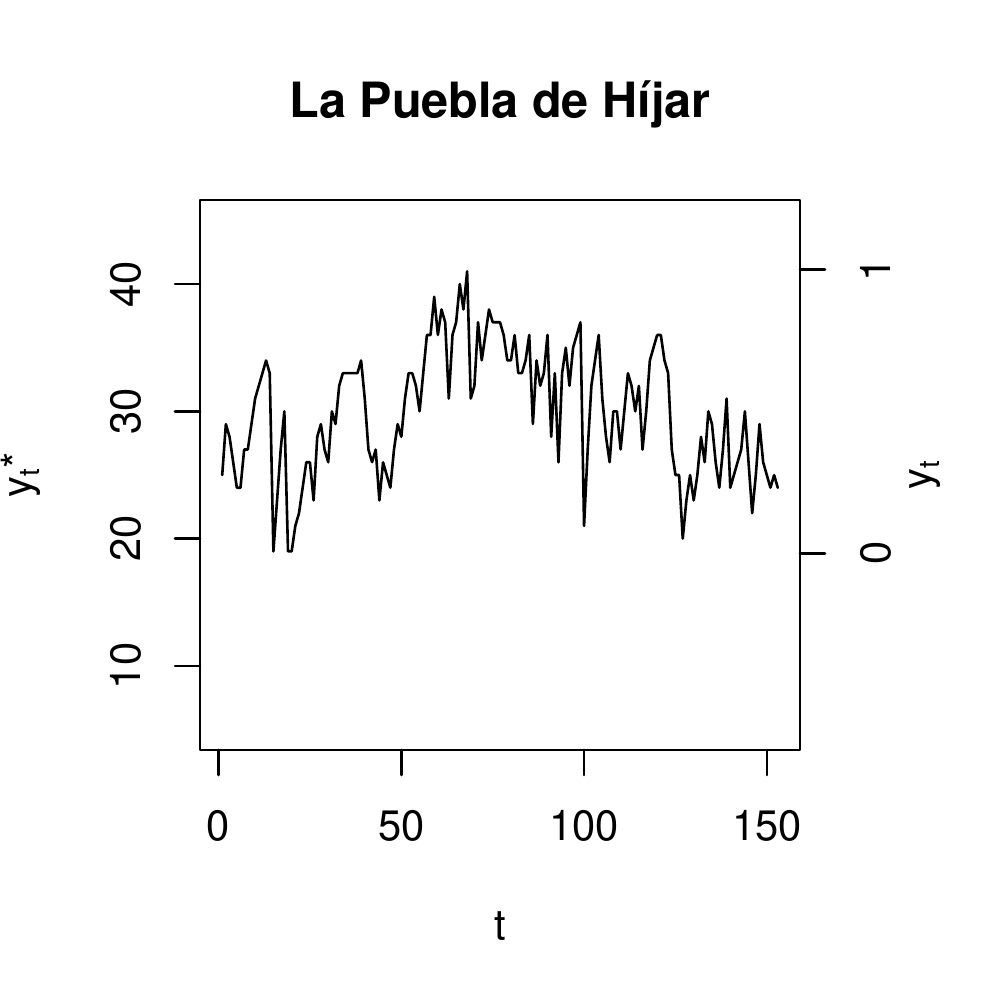}
\includegraphics[width=3cm]{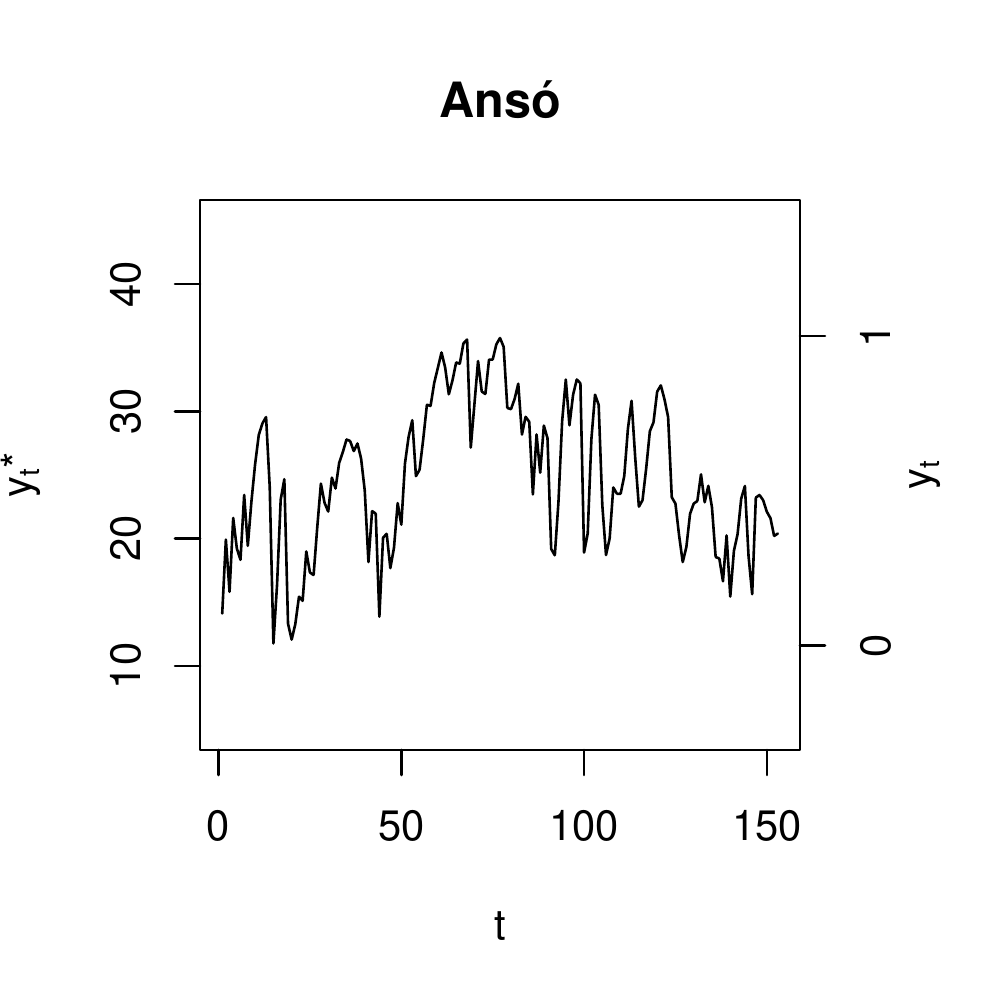}
\includegraphics[width=3cm]{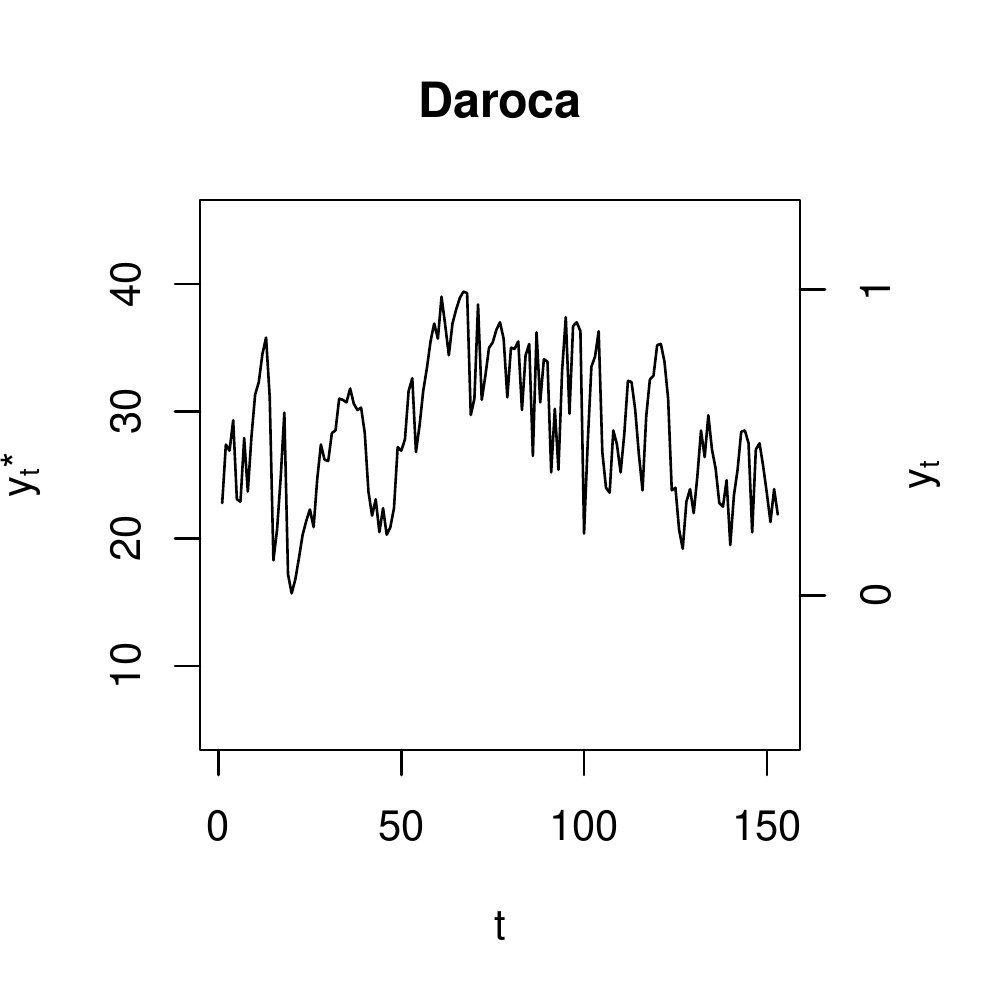} \\
\includegraphics[width=3cm]{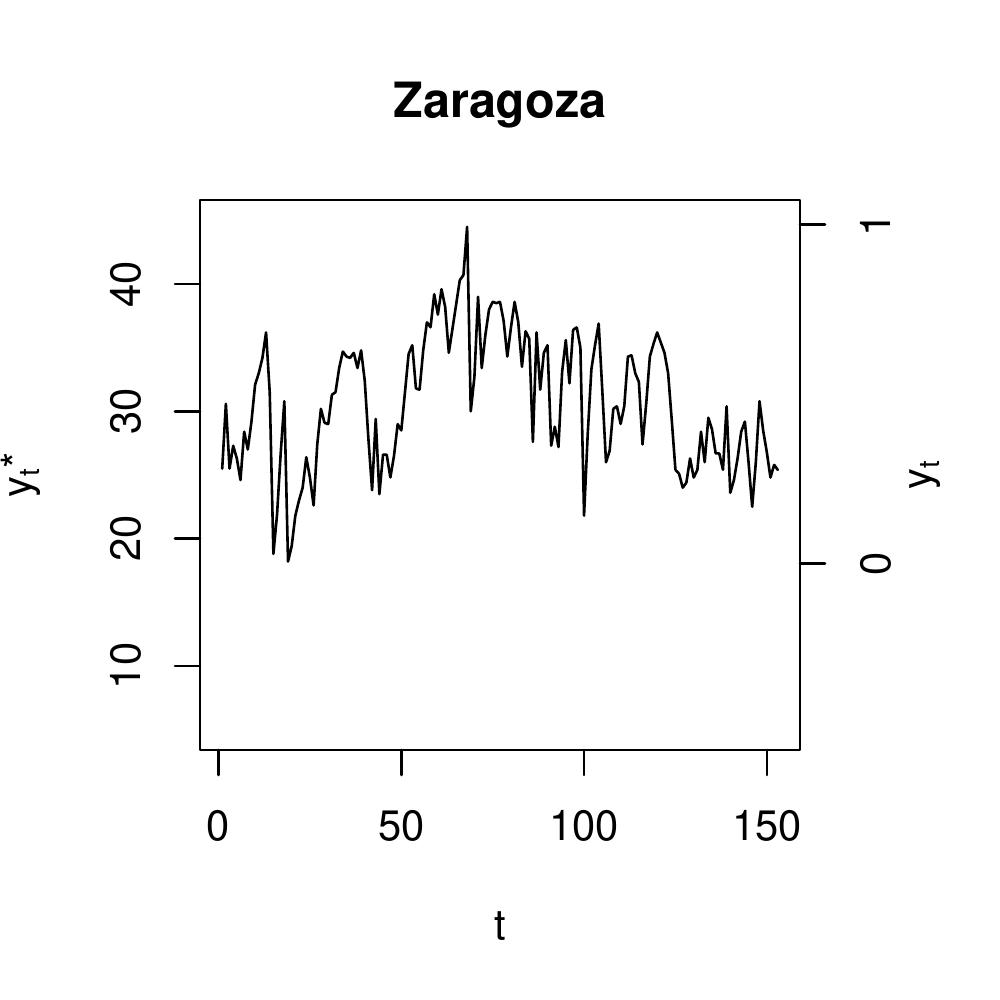}
\includegraphics[width=3cm]{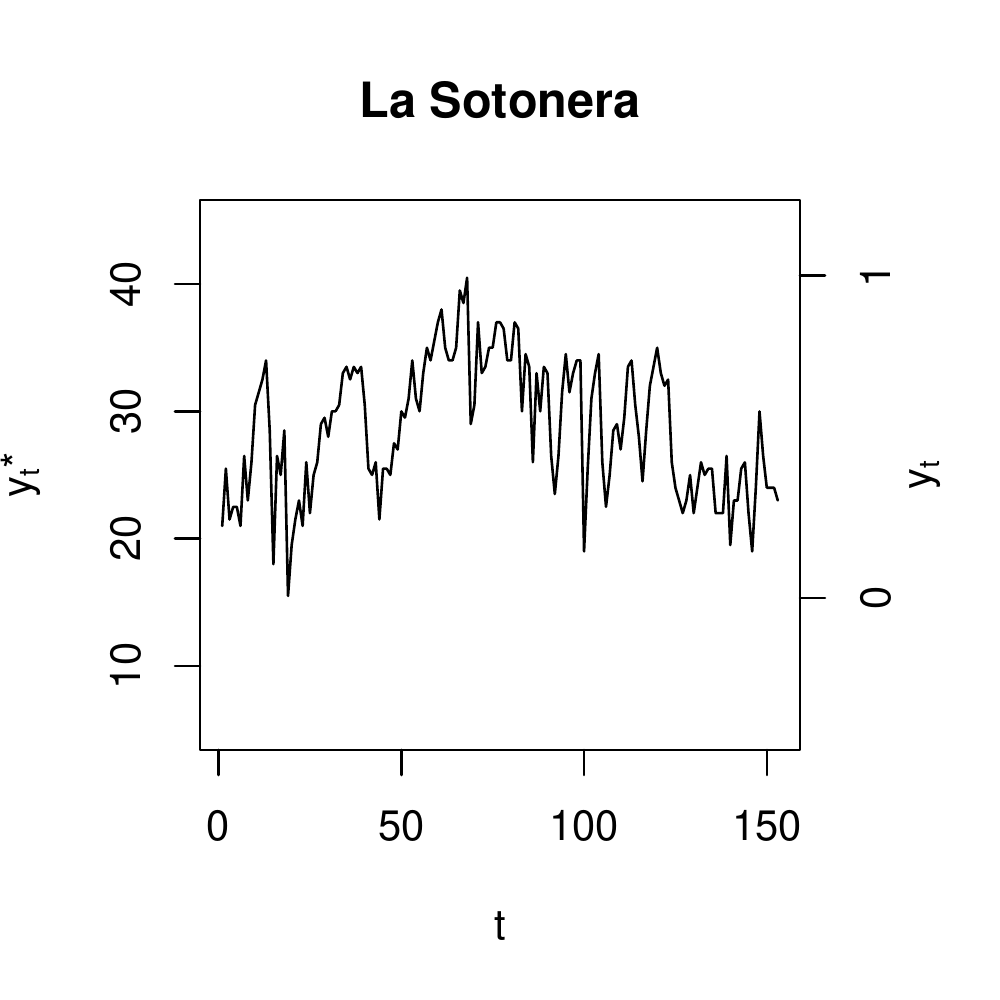}
\includegraphics[width=3cm]{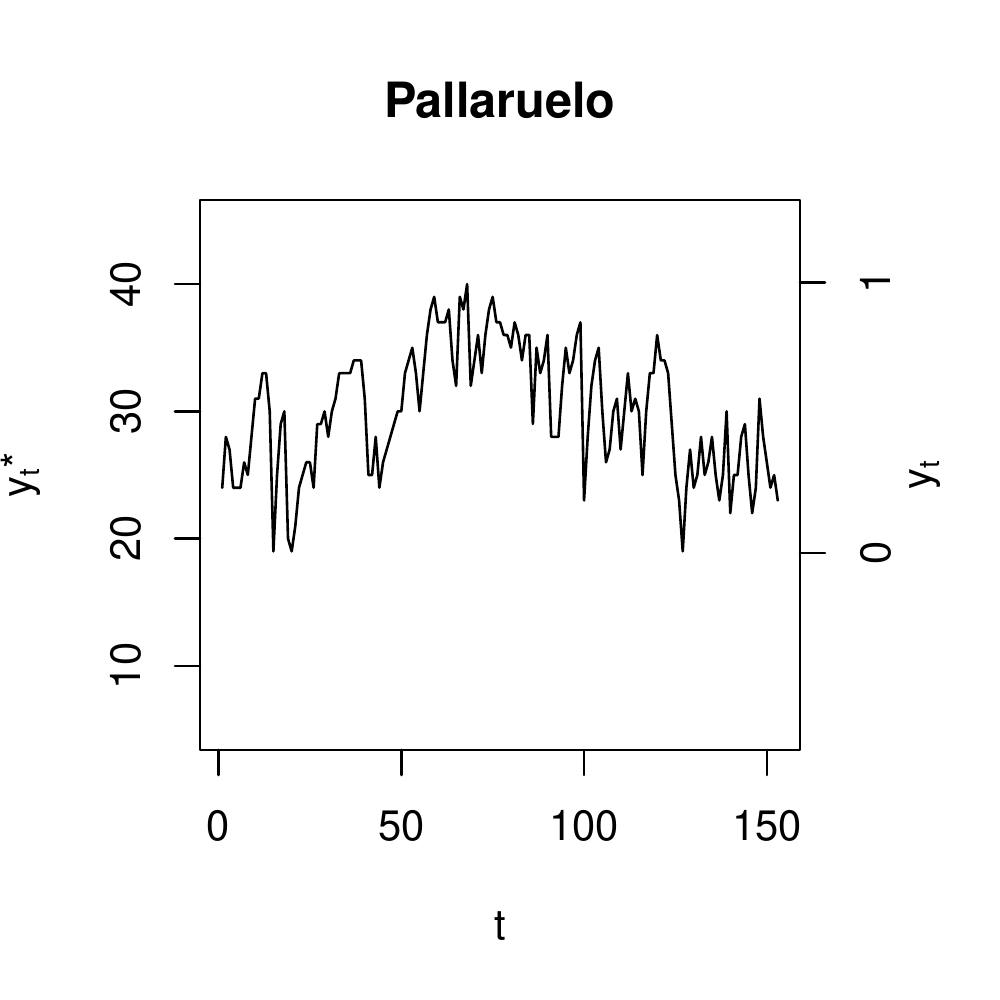}
\includegraphics[width=3cm]{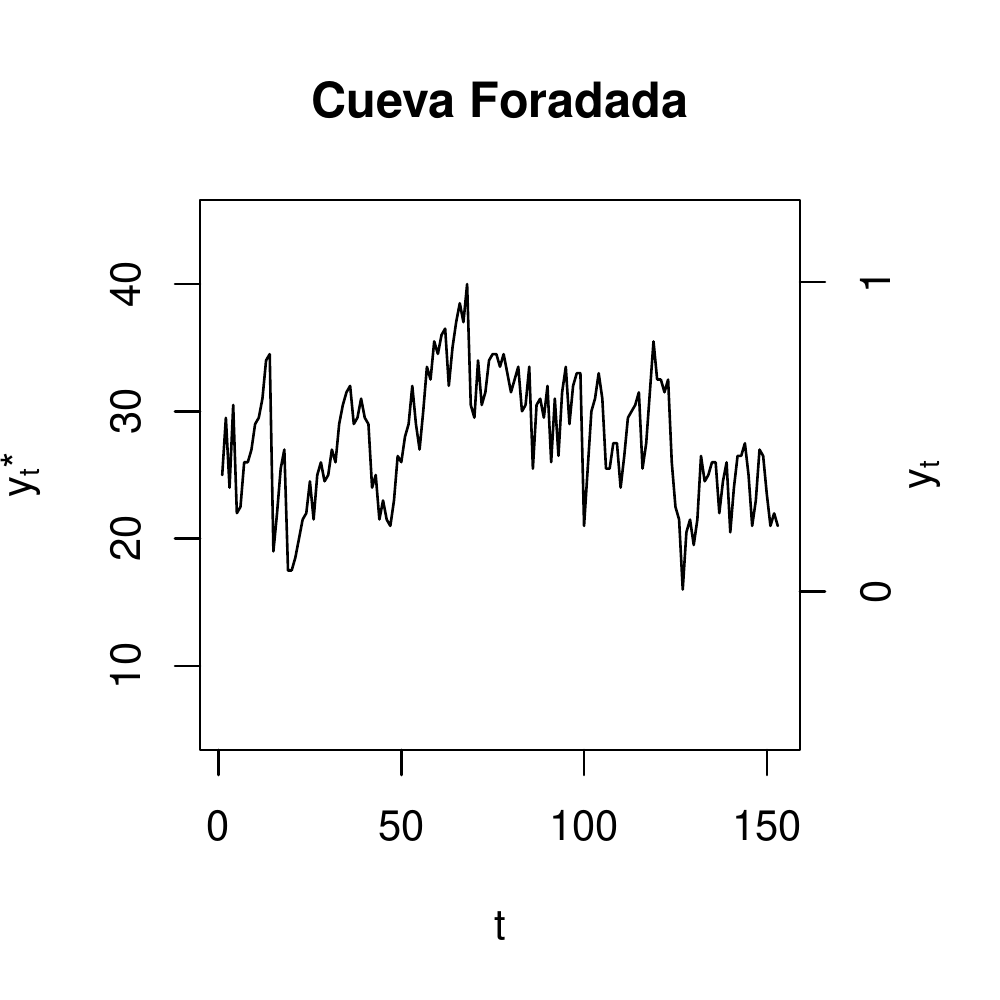} \\
\includegraphics[width=3cm]{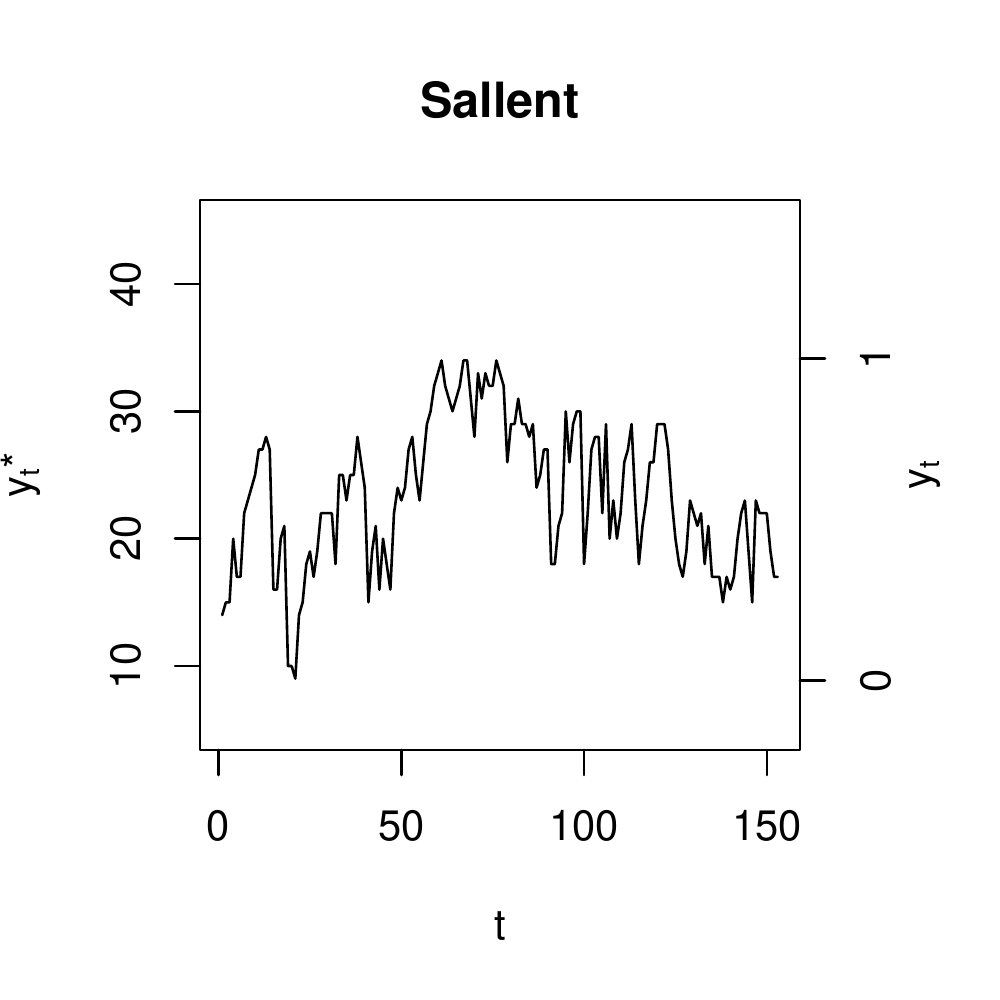}
\includegraphics[width=3cm]{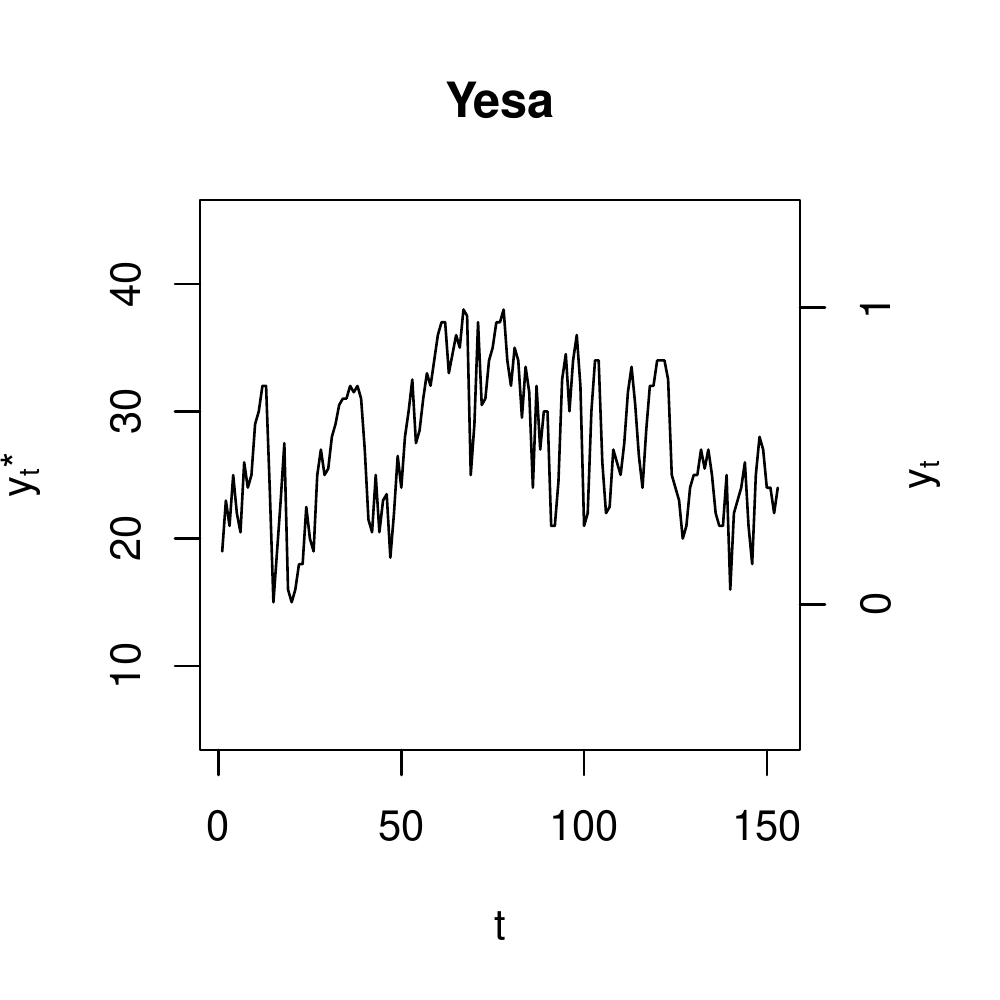}
\caption{Daily maximum temperature time series of the 18 locations in MJJAS, 2015.}
\label{fig:ts}
\end{figure}

\clearpage

\subsection{Model comparison}

\begin{table}[!ht] \footnotesize
\centering
\begin{tabular}{l|cc|cc|cc|cc}
 & \multicolumn{2}{c|}{QAR1K1} & \multicolumn{2}{c|}{QAR1K2} & \multicolumn{2}{c|}{QAR2K1} & \multicolumn{2}{c}{KX2006} \\
Location & $\tilde{p}_2$ & $\bar{R}^1$ & $\tilde{p}_2$ & $\bar{R}^1$ & $\tilde{p}_2$ & $\bar{R}^1$ & $\tilde{p}_2$ & $\bar{R}^1$ \\ 
  \hline
Pamplona & 0.75 & 0.267 & 0.46 & 0.270 & 0.65 & 0.265 & 0.65 & 0.274 \\ 
  Buñuel & 0.56 & 0.358 & 0.35 & 0.356 & 0.52 & 0.357 & 0.82 & 0.337 \\ 
  El Bayo & 0.67 & 0.362 & 0.39 & 0.364 & 0.61 & 0.361 & 0.44 & 0.331 \\ 
  Morella & 0.59 & 0.352 & 0.35 & 0.351 & 0.45 & 0.351 & 0.73 & 0.318 \\ 
  Huesca & 0.69 & 0.381 & 0.50 & 0.382 & 0.39 & 0.390 & 0.83 & 0.342 \\ 
  Tornos & 0.61 & 0.352 & 0.39 & 0.352 & 0.55 & 0.352 & 0.54 & 0.306 \\ 
  Santa Eulalia & 0.60 & 0.454 & 0.31 & 0.455 & 0.52 & 0.453 & 0.66 & 0.445 \\ 
  Calatayud & 0.57 & 0.335 & 0.44 & 0.335 & 0.50 & 0.336 & 1.01 & 0.328 \\ 
  Panticosa & 0.60 & 0.447 & 0.36 & 0.448 & 0.51 & 0.443 & 0.40 & 0.431 \\ 
  La Puebla de Híjar & 0.80 & 0.333 & 0.37 & 0.336 & 0.64 & 0.339 & 0.69 & 0.301 \\ 
  Ansó & 0.42 & 0.411 & 0.34 & 0.410 & 0.37 & 0.410 & 0.51 & 0.384 \\ 
  Daroca & 0.58 & 0.334 & 0.44 & 0.335 & 0.56 & 0.337 & 0.37 & 0.297 \\ 
  Zaragoza & 0.62 & 0.351 & 0.45 & 0.351 & 0.50 & 0.356 & 0.74 & 0.304 \\ 
  La Sotonera & 0.67 & 0.341 & 0.40 & 0.344 & 0.58 & 0.345 & 0.51 & 0.300 \\ 
  Pallaruelo & 0.80 & 0.362 & 0.50 & 0.362 & 0.63 & 0.361 & 0.80 & 0.335 \\ 
  Cueva Foradada & 0.46 & 0.359 & 0.47 & 0.353 & 0.40 & 0.365 & 0.76 & 0.334 \\ 
  Sallent & 0.71 & 0.418 & 0.30 & 0.419 & 0.73 & 0.413 & 0.53 & 0.401 \\ 
  Yesa & 0.72 & 0.343 & 0.42 & 0.344 & 0.64 & 0.340 & 1.31 & 0.344 \\  \hline
  $\boldsymbol{\sum / 18}$ & \textbf{0.633}&\textbf{0.365}&\textbf{0.402}&\textbf{0.365}&\textbf{0.542}&\textbf{0.365}&\textbf{0.683}&\textbf{0.339}
\end{tabular}
\caption{Adequacy and comparison metrics for QAR1K1, QAR1K2, QAR2K1, and KX2006 models for the 18 locations and averaged across locations.} \label{tab:measurements}
\end{table}

\clearpage

\subsection{The QAR(1) case}

\begin{figure}[!ht]
\centering
\includegraphics[width=3cm]{theta0QAR1s1.pdf}
\includegraphics[width=3cm]{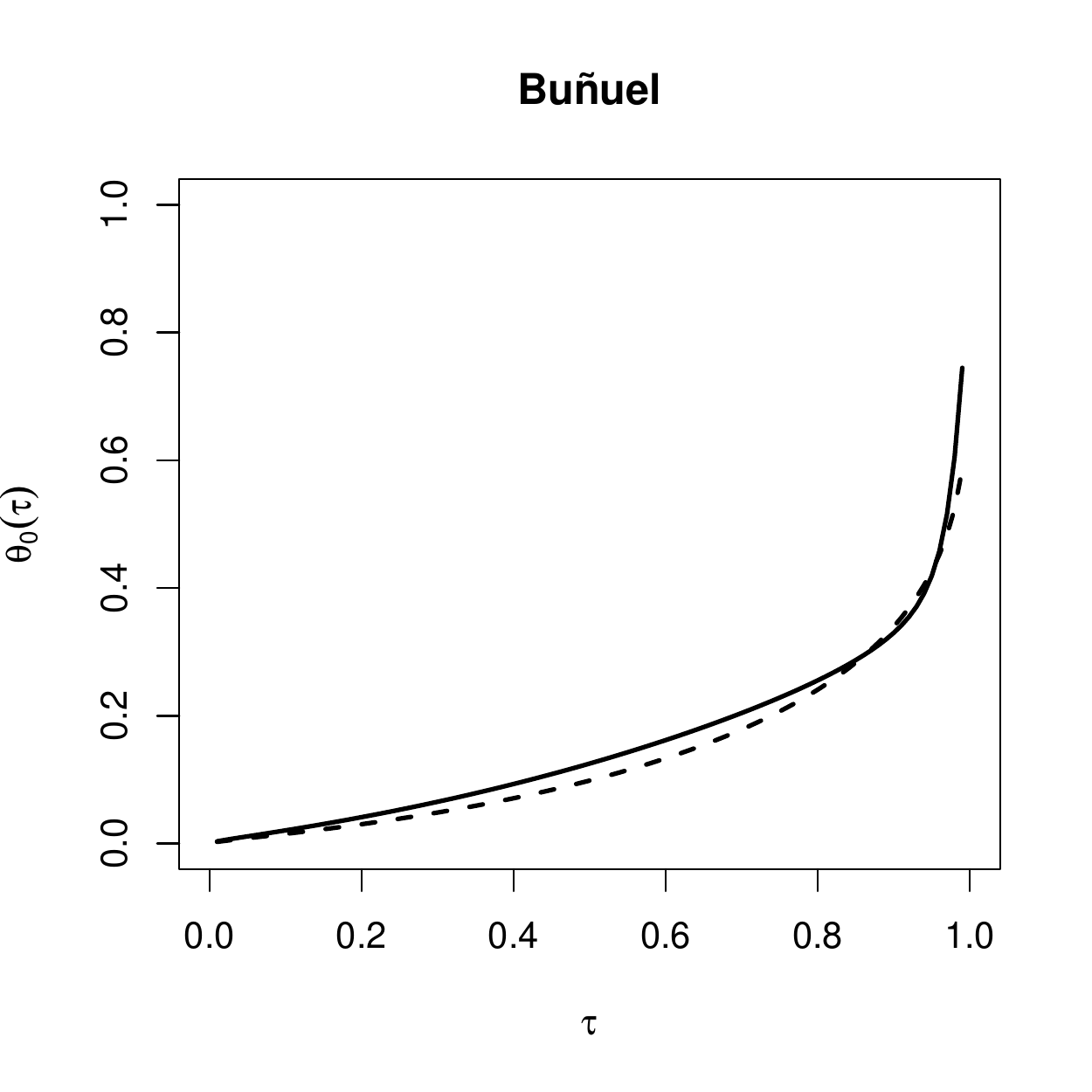}
\includegraphics[width=3cm]{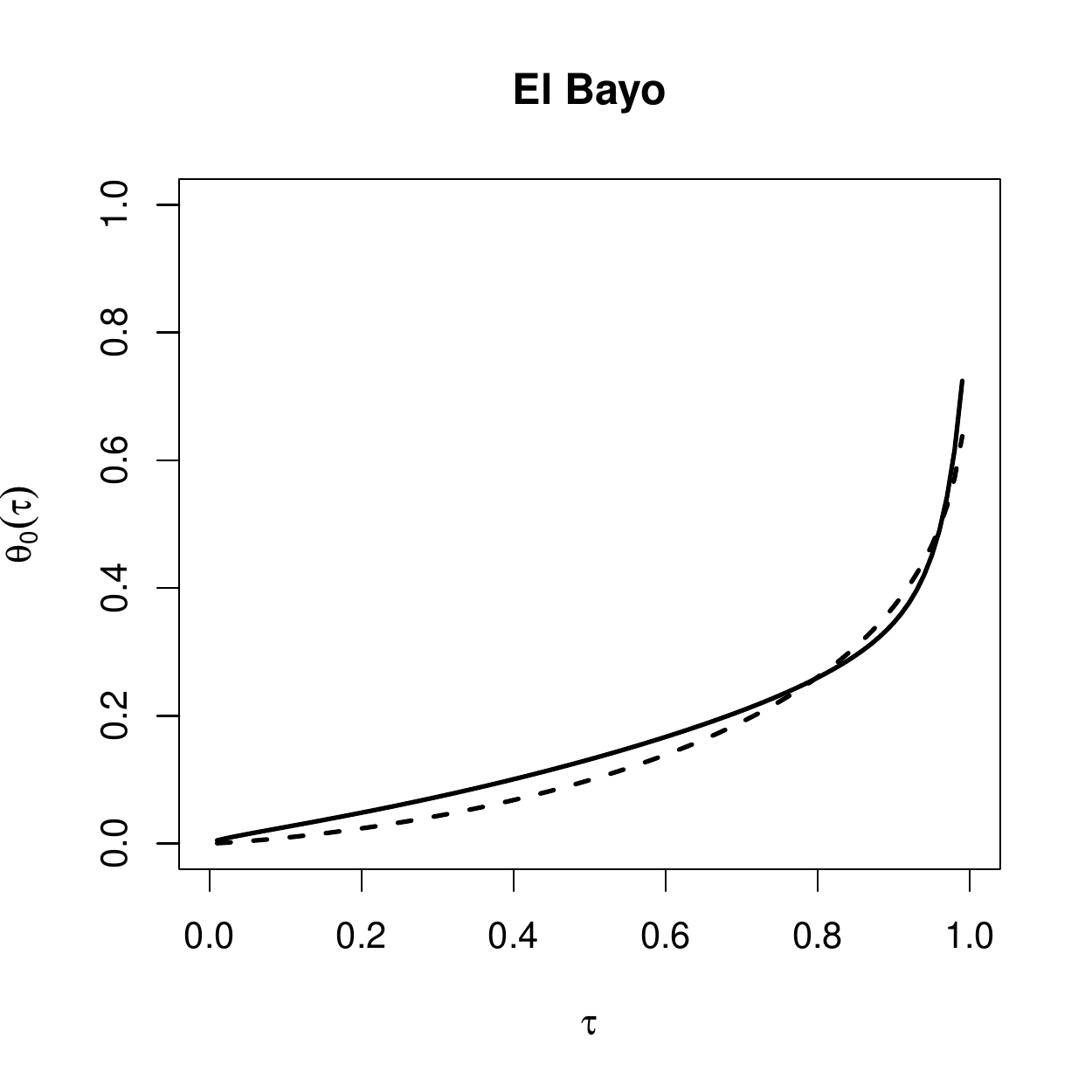}
\includegraphics[width=3cm]{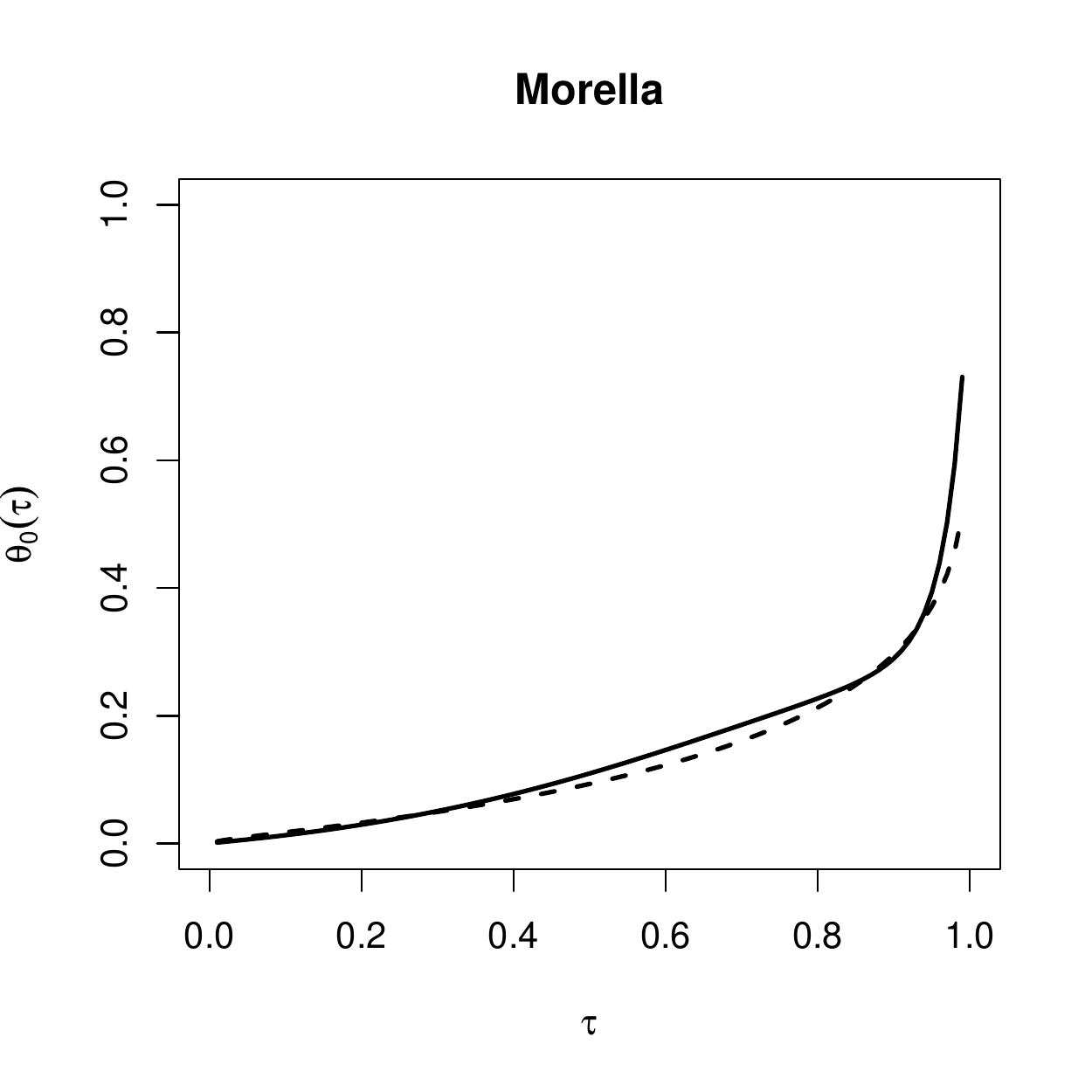} \\
\includegraphics[width=3cm]{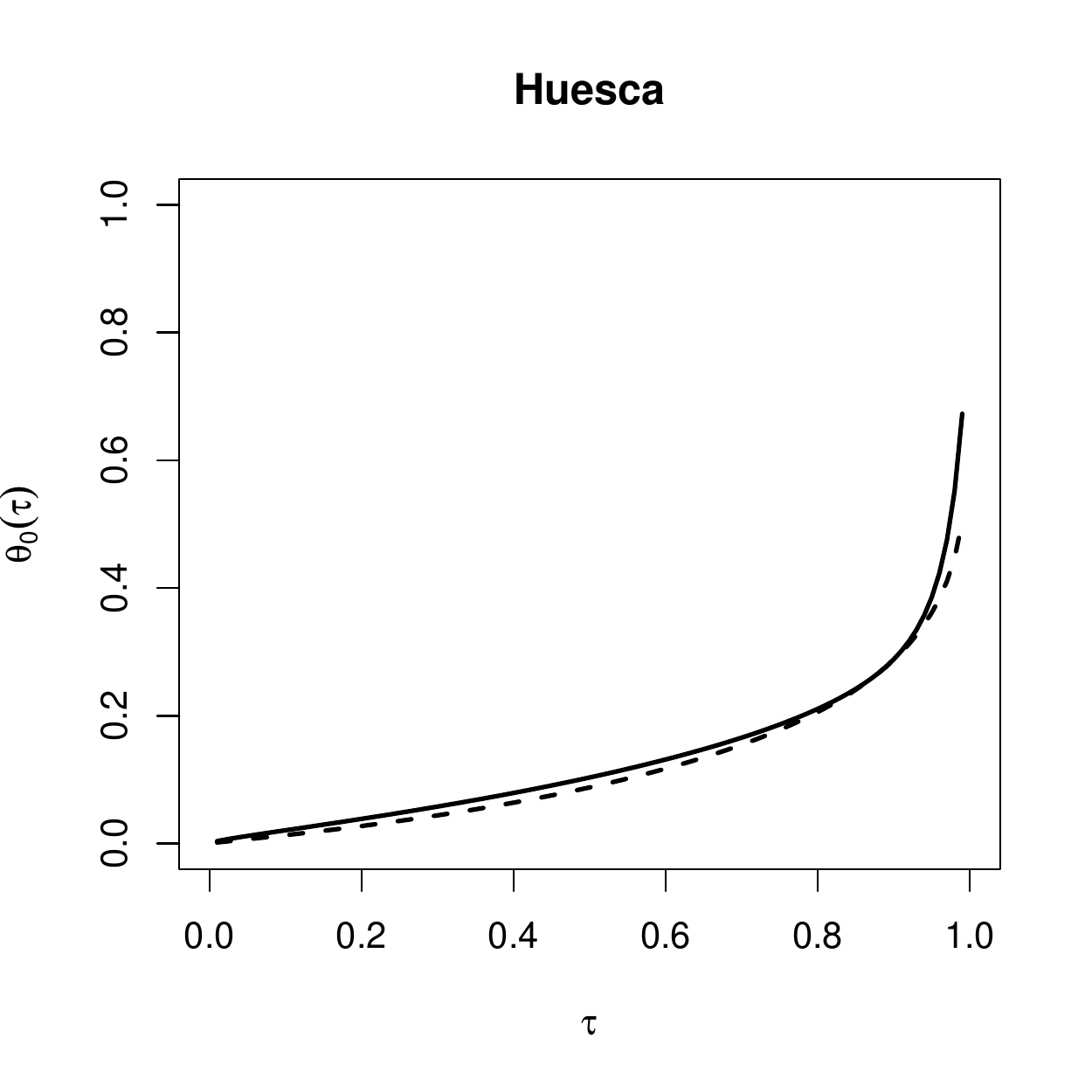}
\includegraphics[width=3cm]{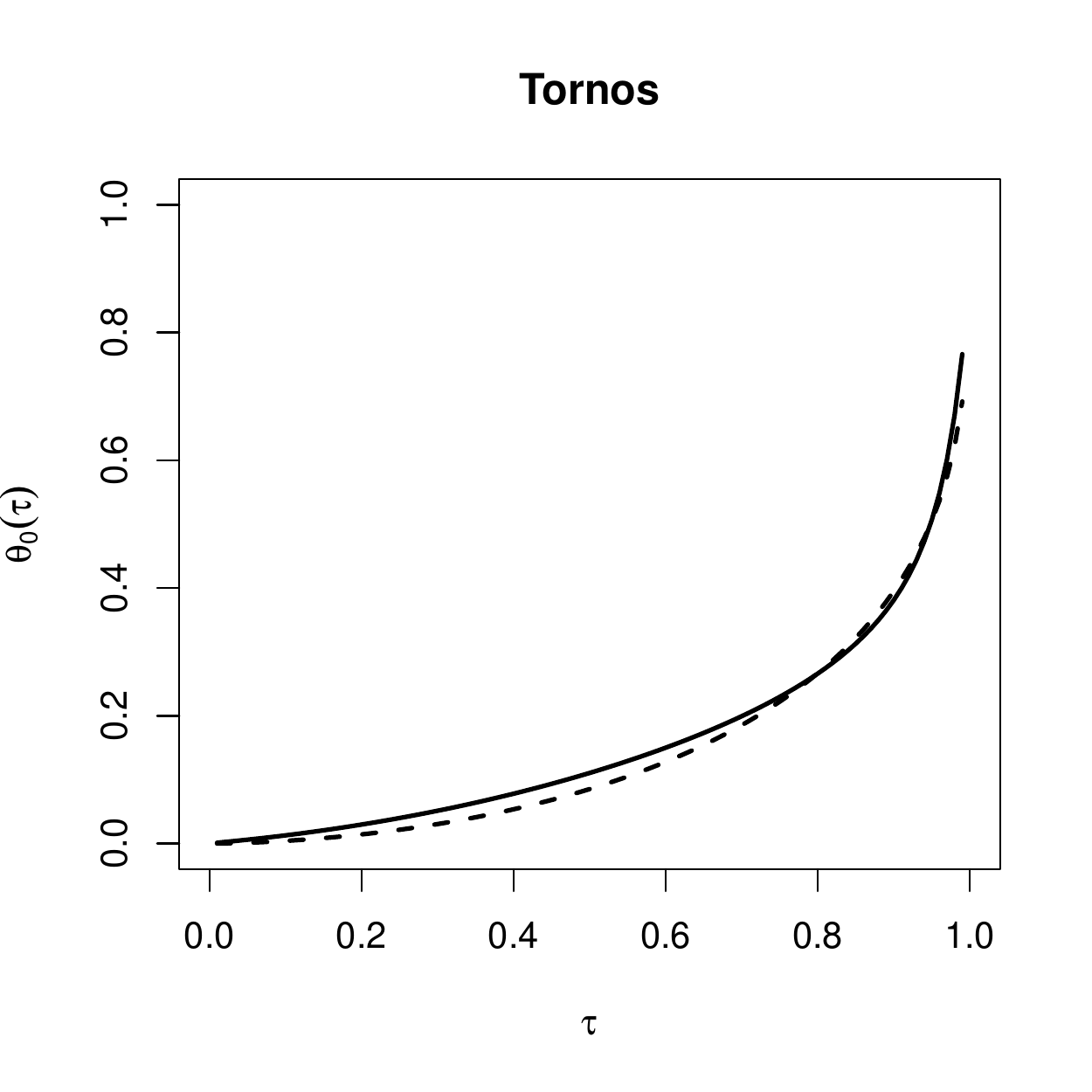}
\includegraphics[width=3cm]{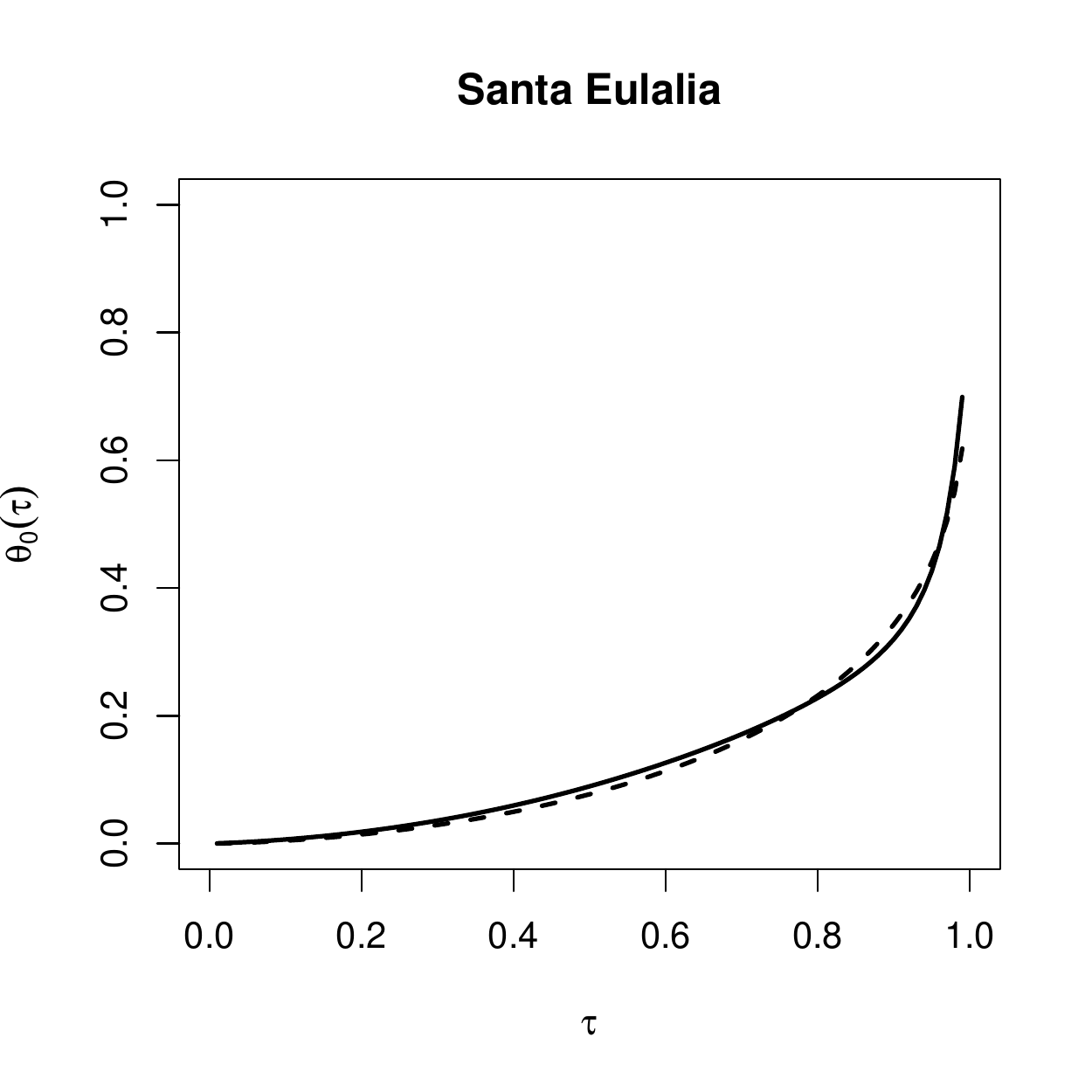}
\includegraphics[width=3cm]{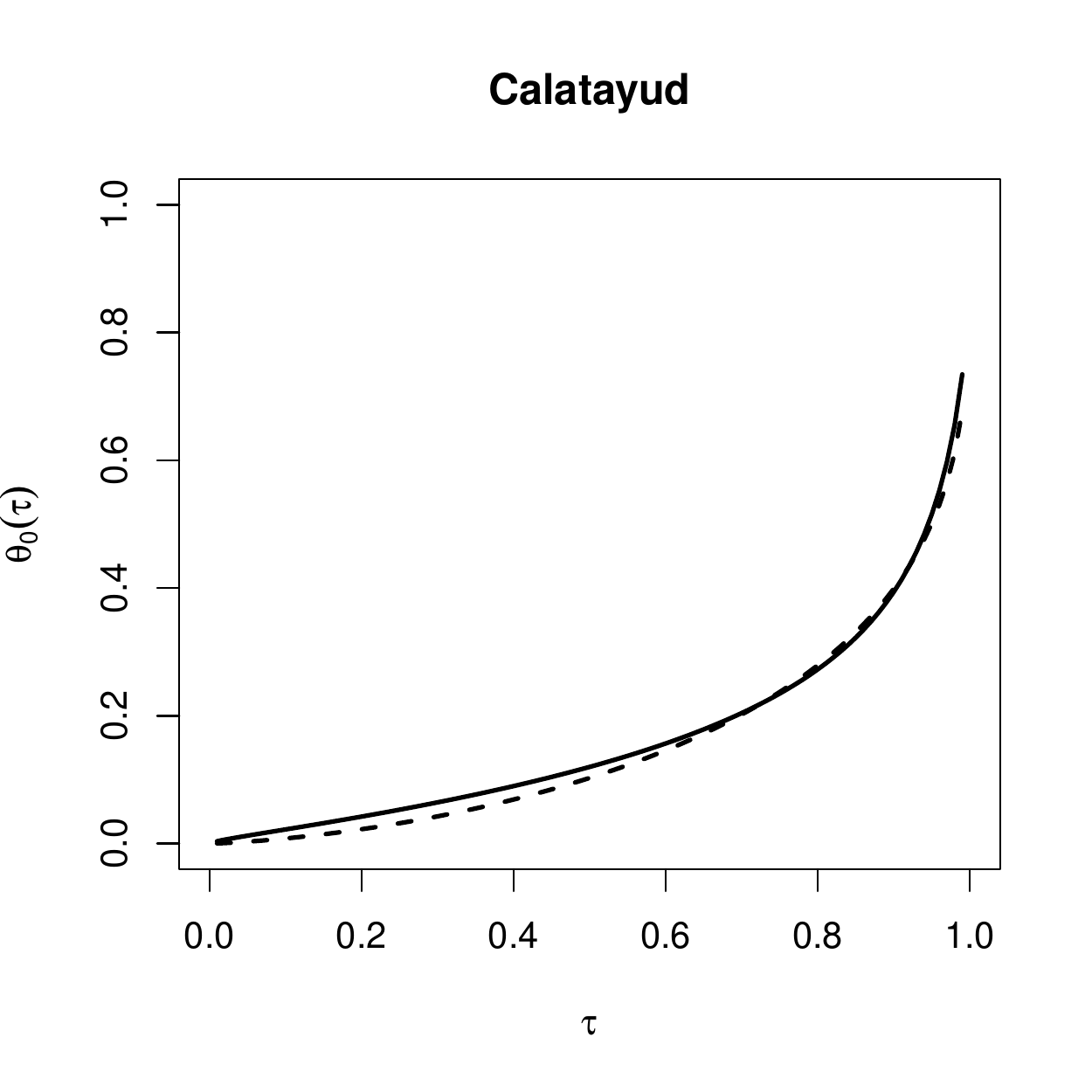} \\
\includegraphics[width=3cm]{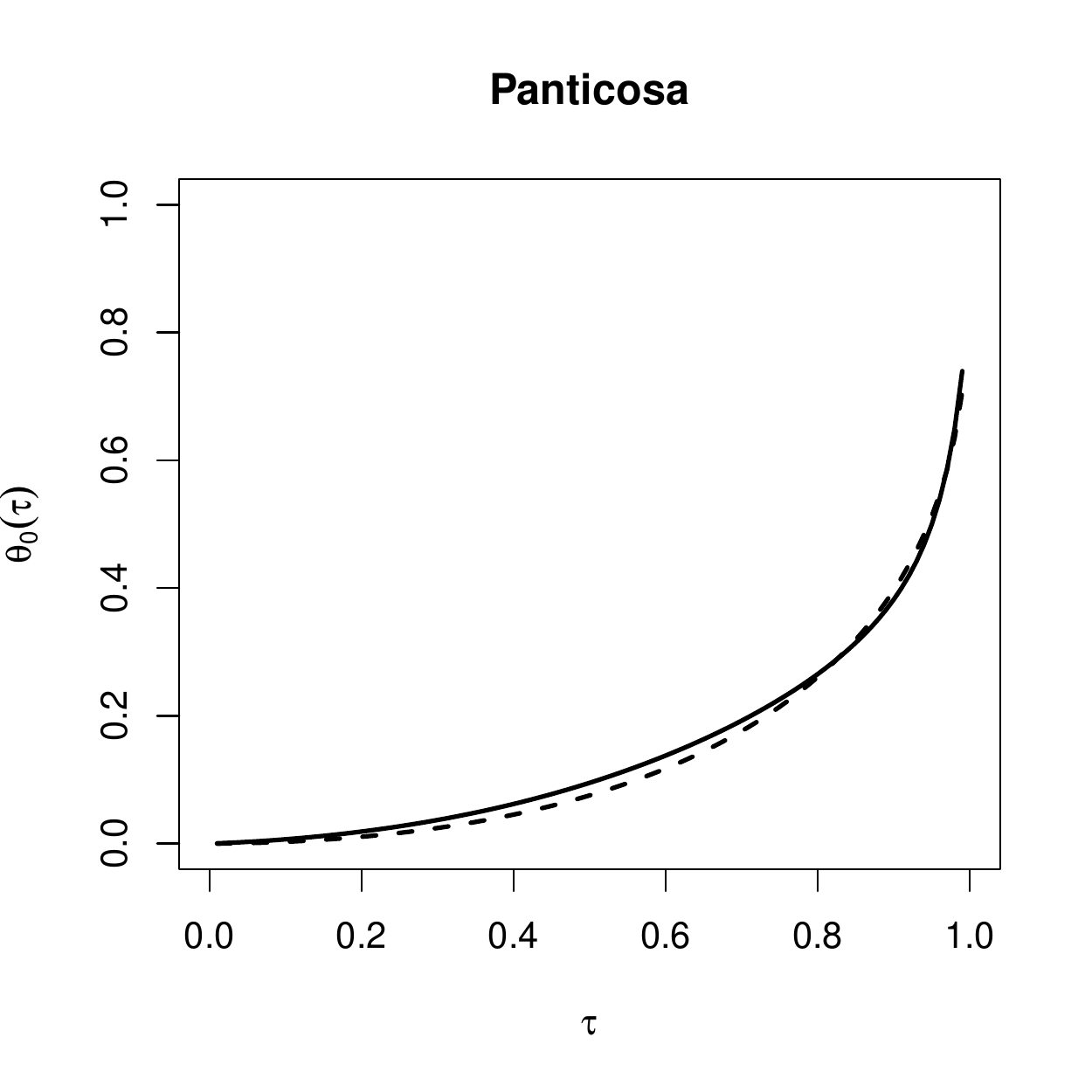}
\includegraphics[width=3cm]{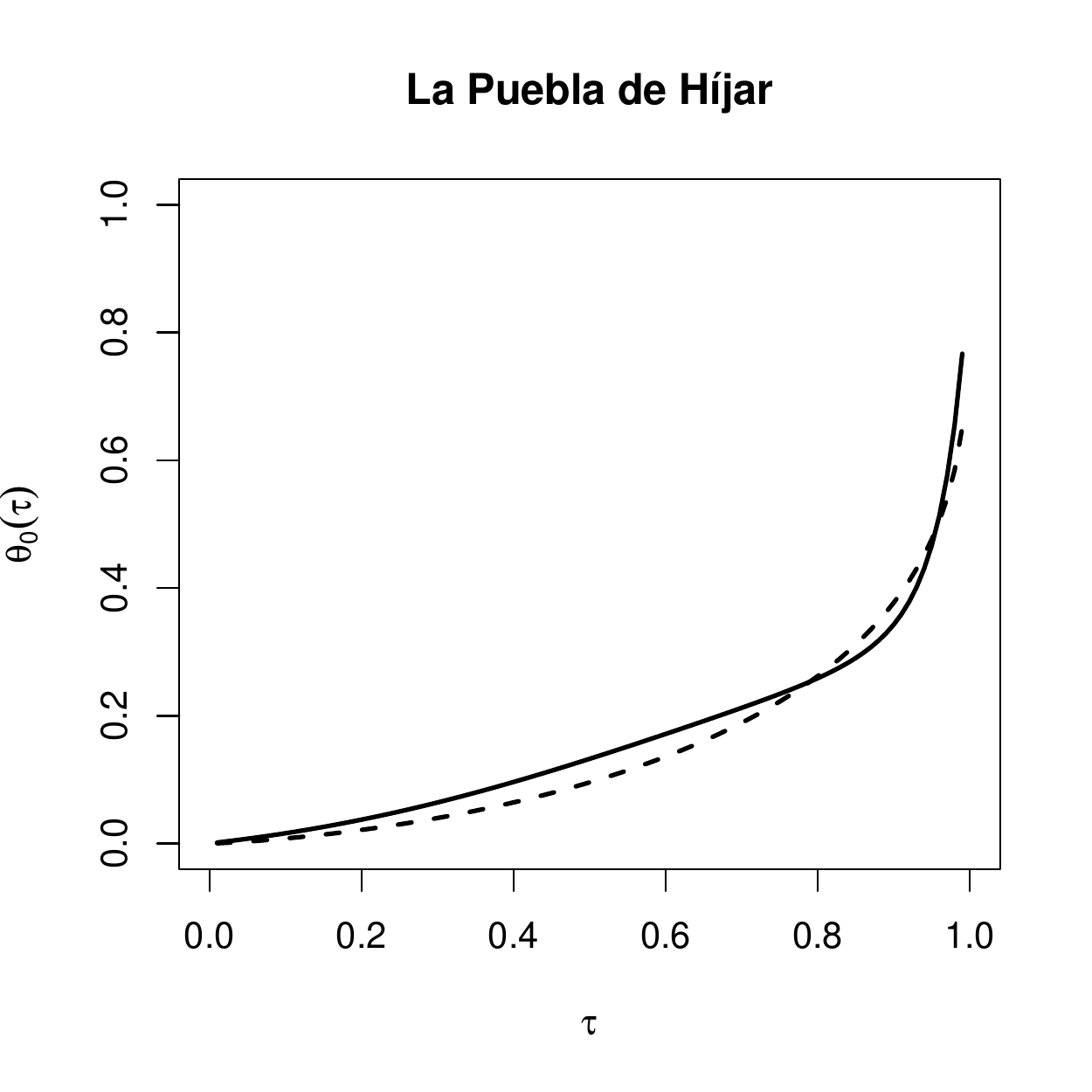}
\includegraphics[width=3cm]{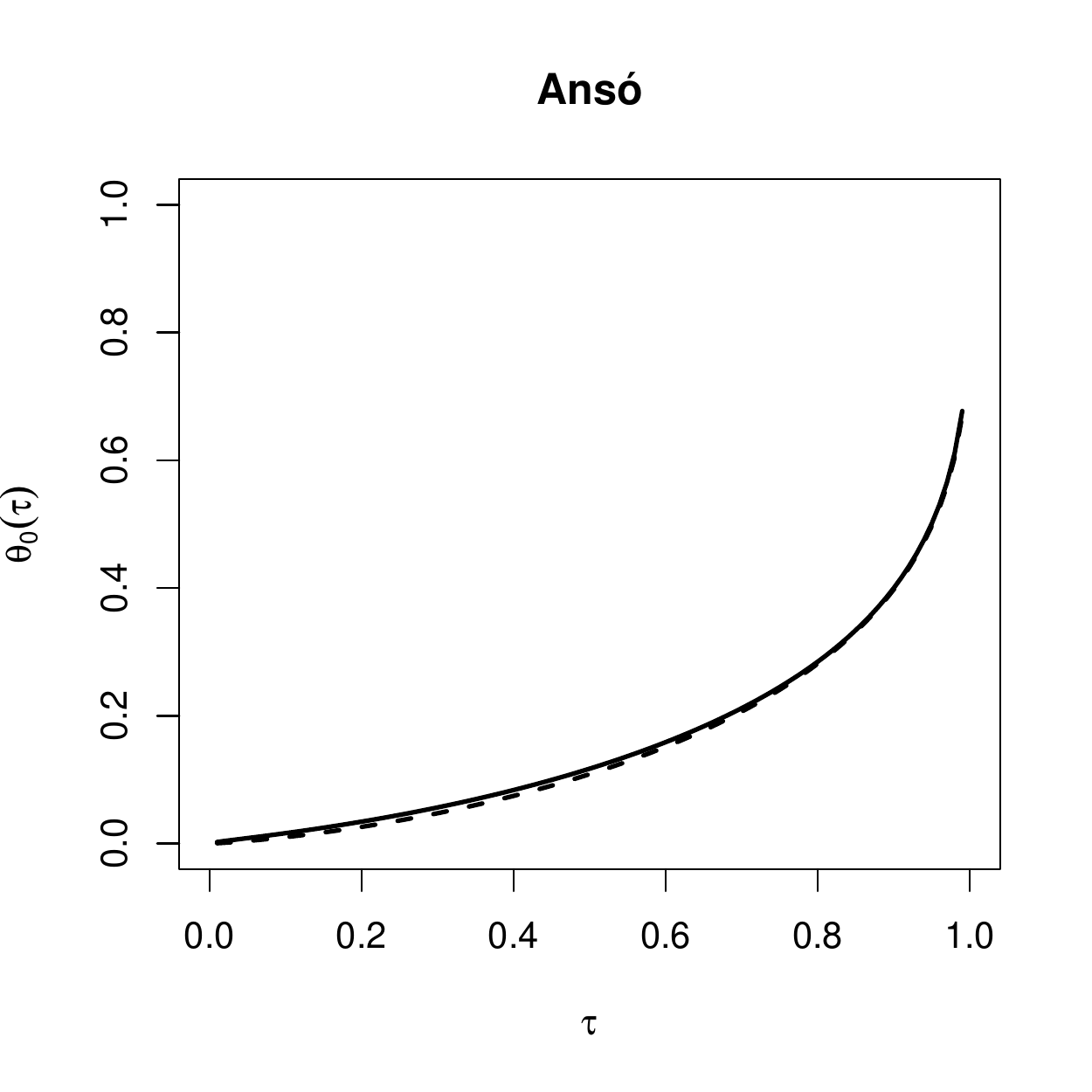}
\includegraphics[width=3cm]{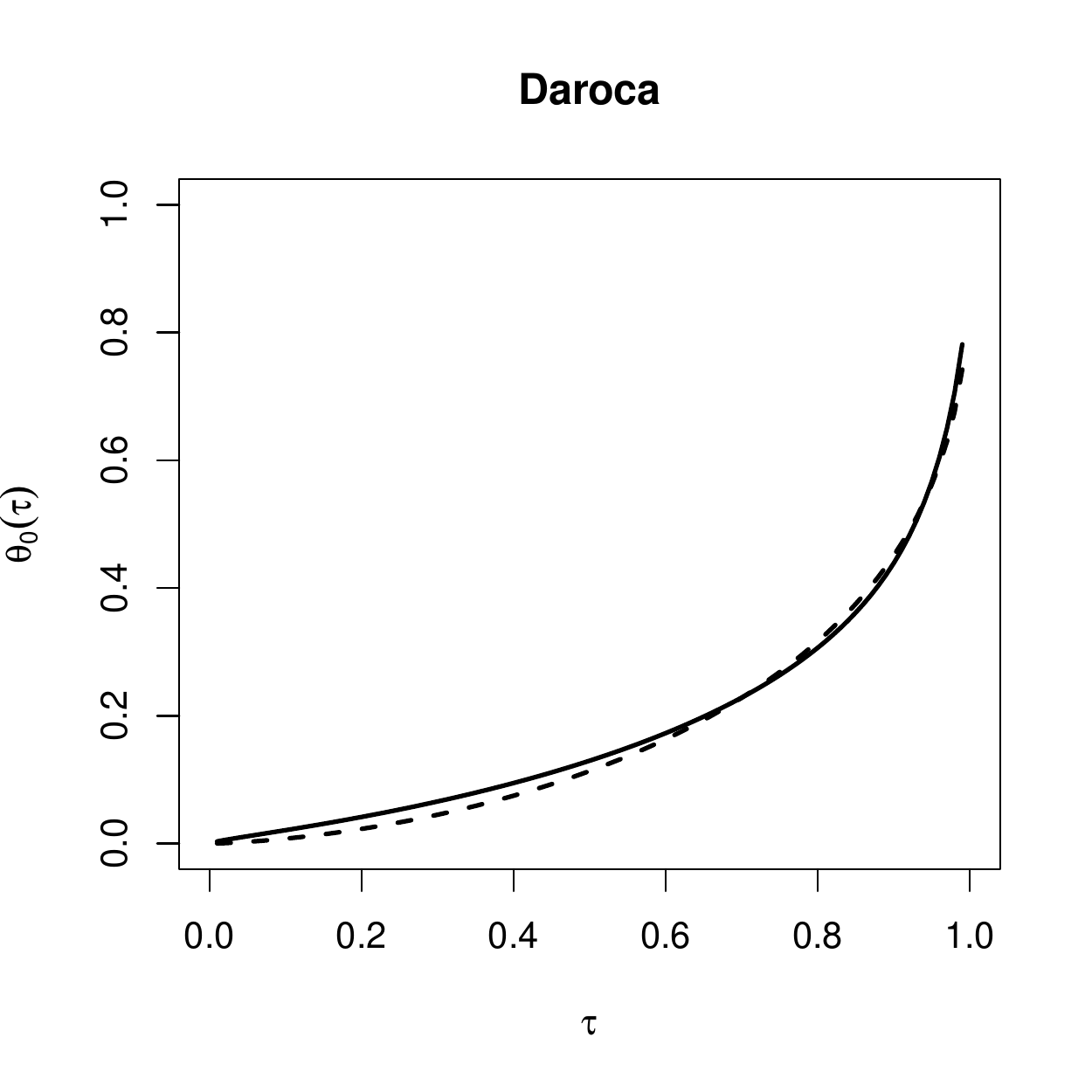} \\
\includegraphics[width=3cm]{theta0QAR1s13.pdf}
\includegraphics[width=3cm]{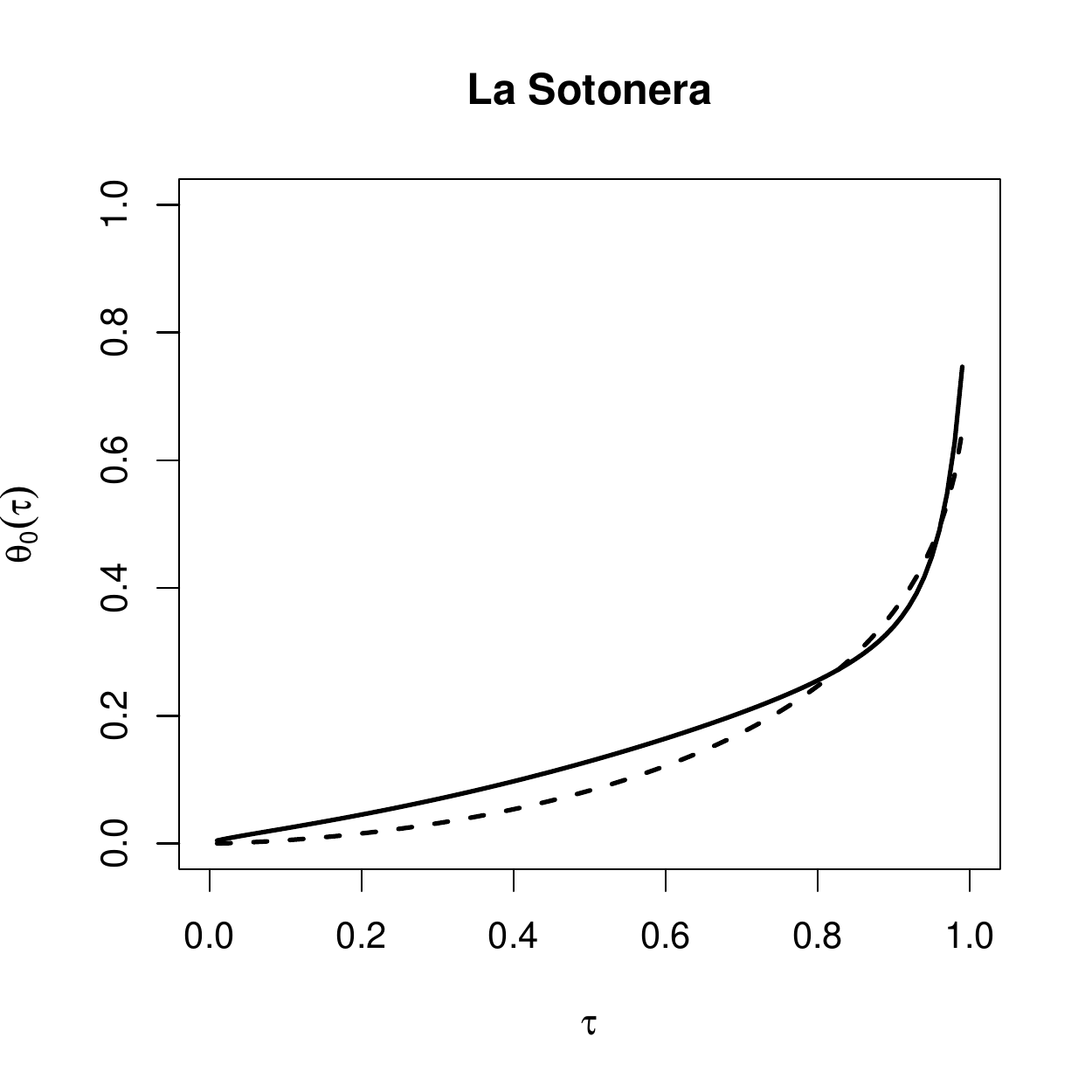}
\includegraphics[width=3cm]{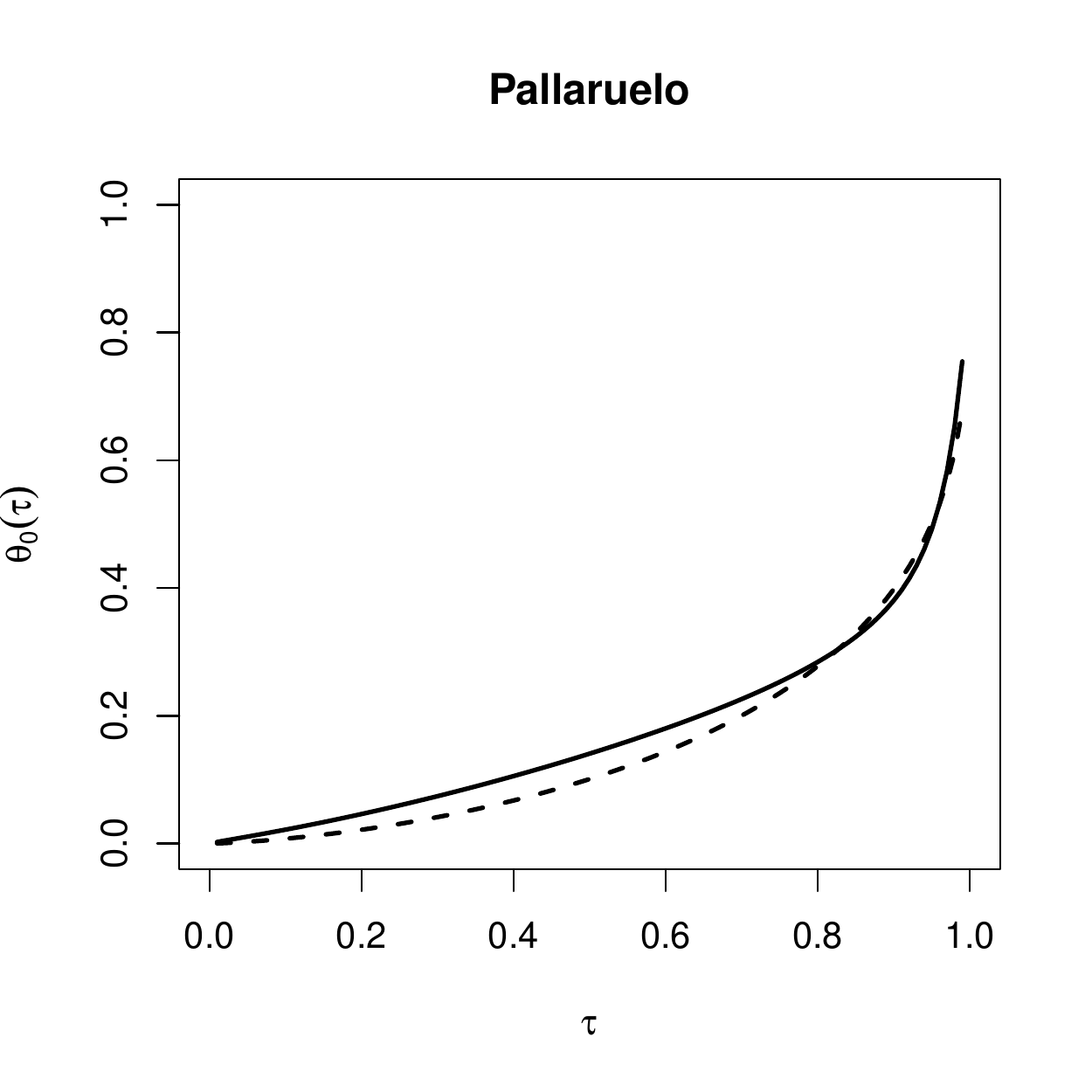}
\includegraphics[width=3cm]{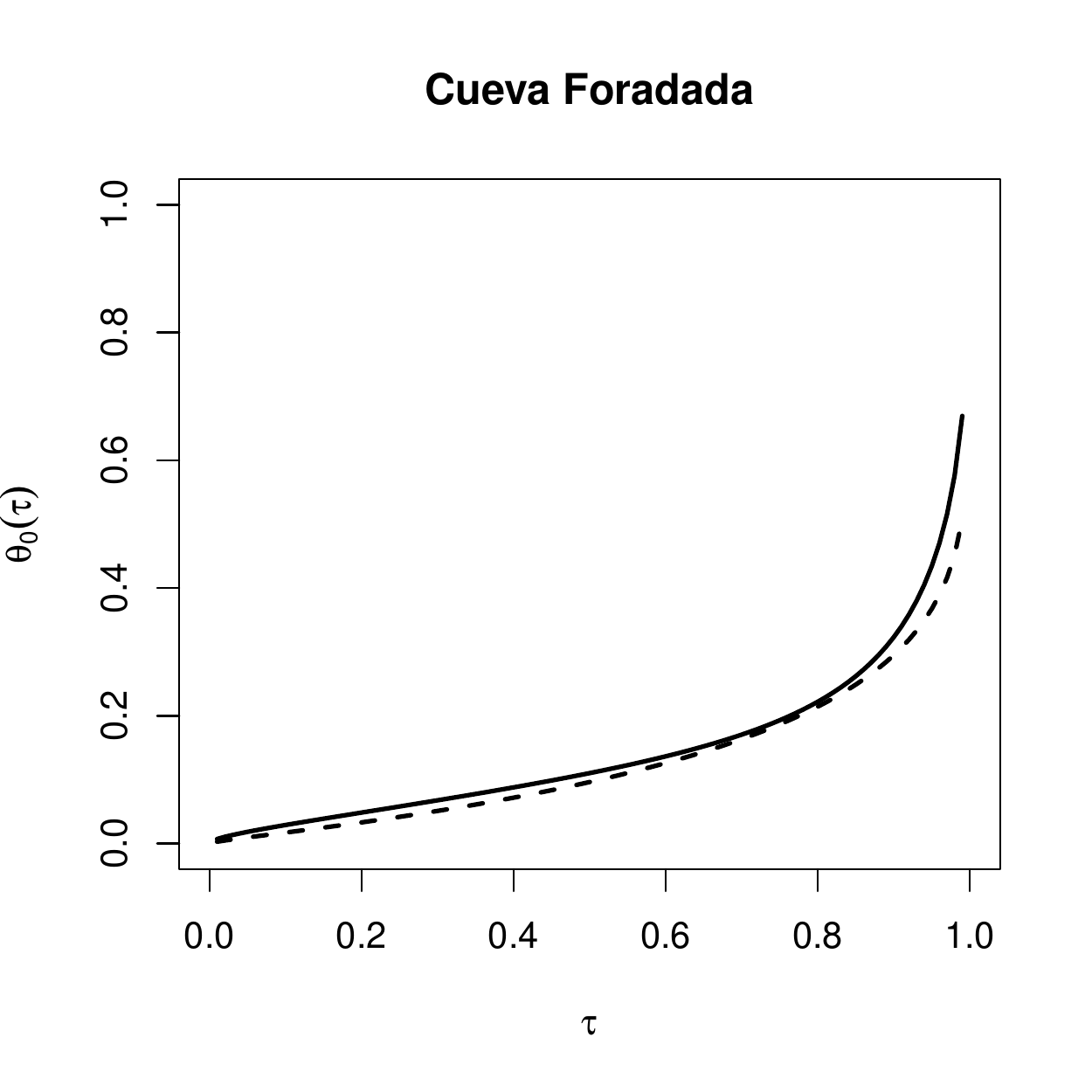} \\
\includegraphics[width=3cm]{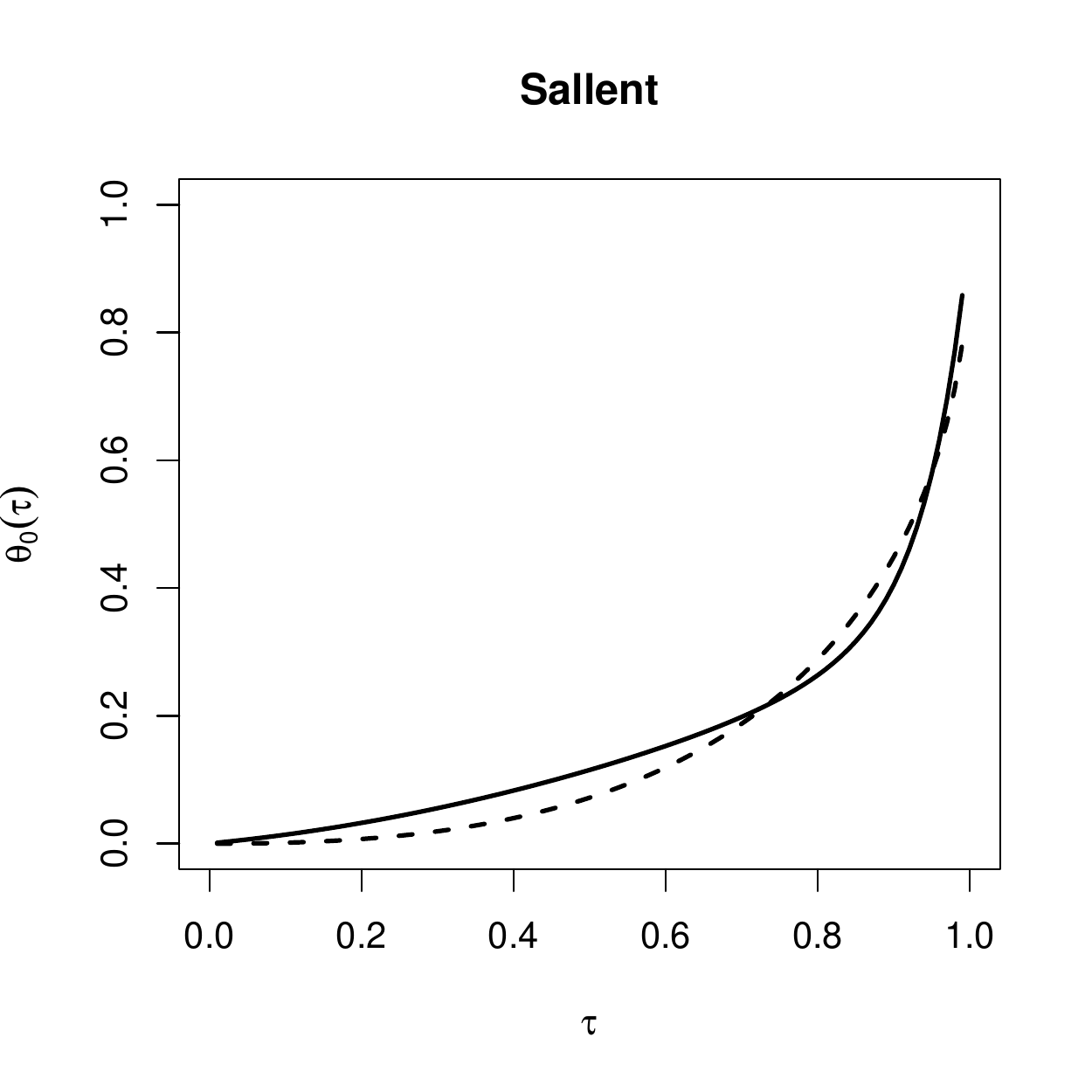}
\includegraphics[width=3cm]{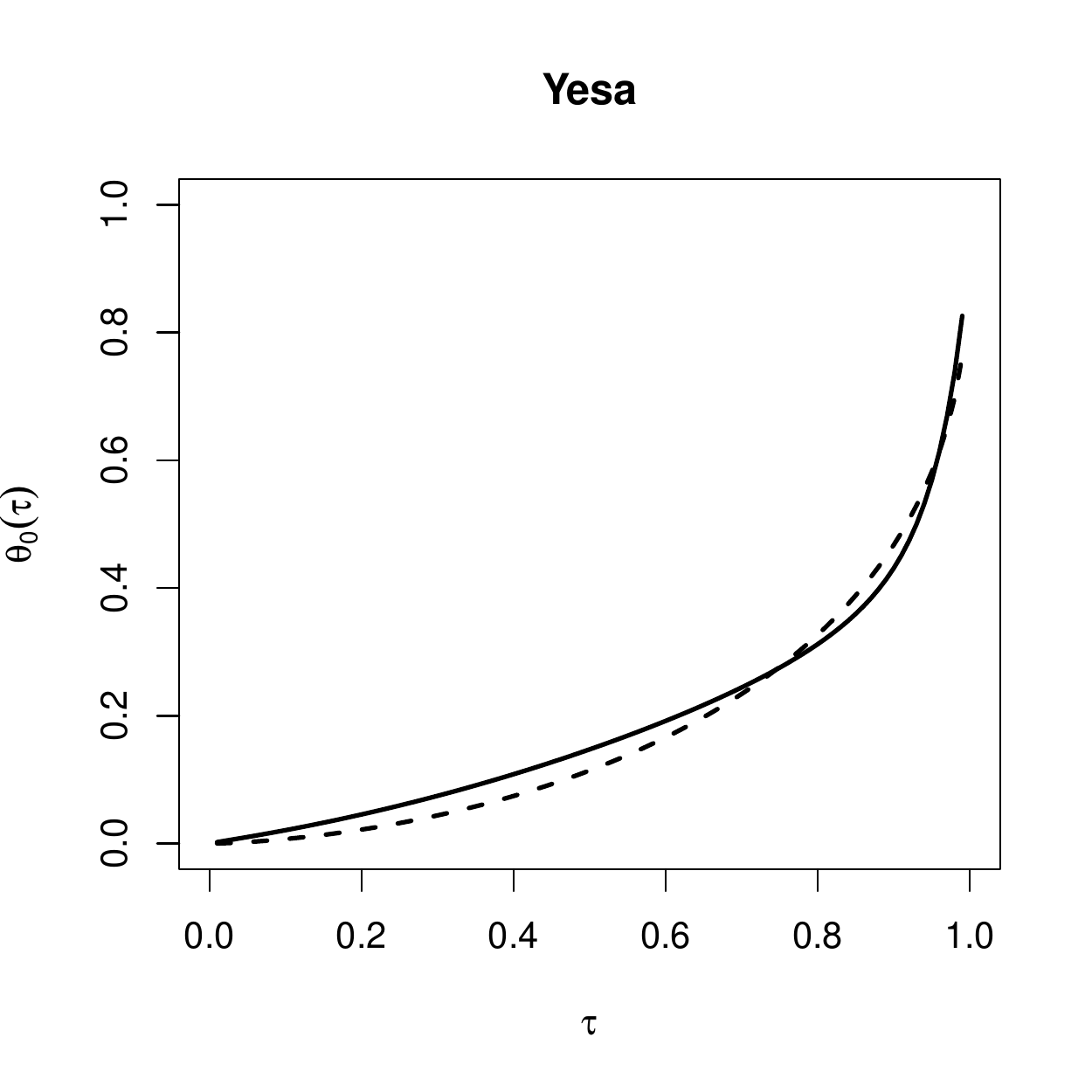}
\caption{Posterior mean of $\theta_0(\tau)$ vs. $\tau$ for QAR1K1 (dashed) and QAR1K2 (solid). All locations, MJJAS, 2015. }
\end{figure}

\begin{figure}[!ht]
\centering
\includegraphics[width=3cm]{theta1QAR1s1.pdf}
\includegraphics[width=3cm]{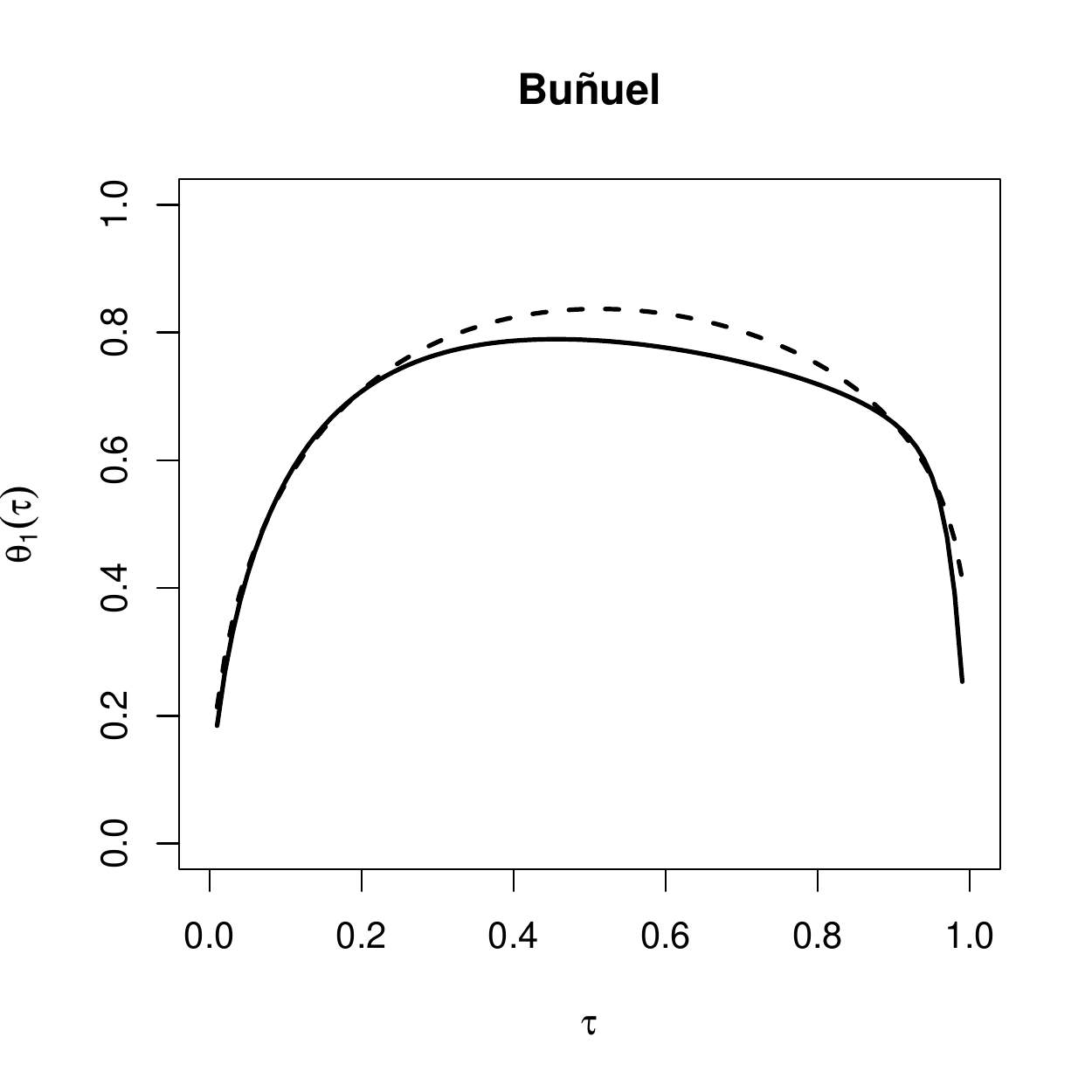}
\includegraphics[width=3cm]{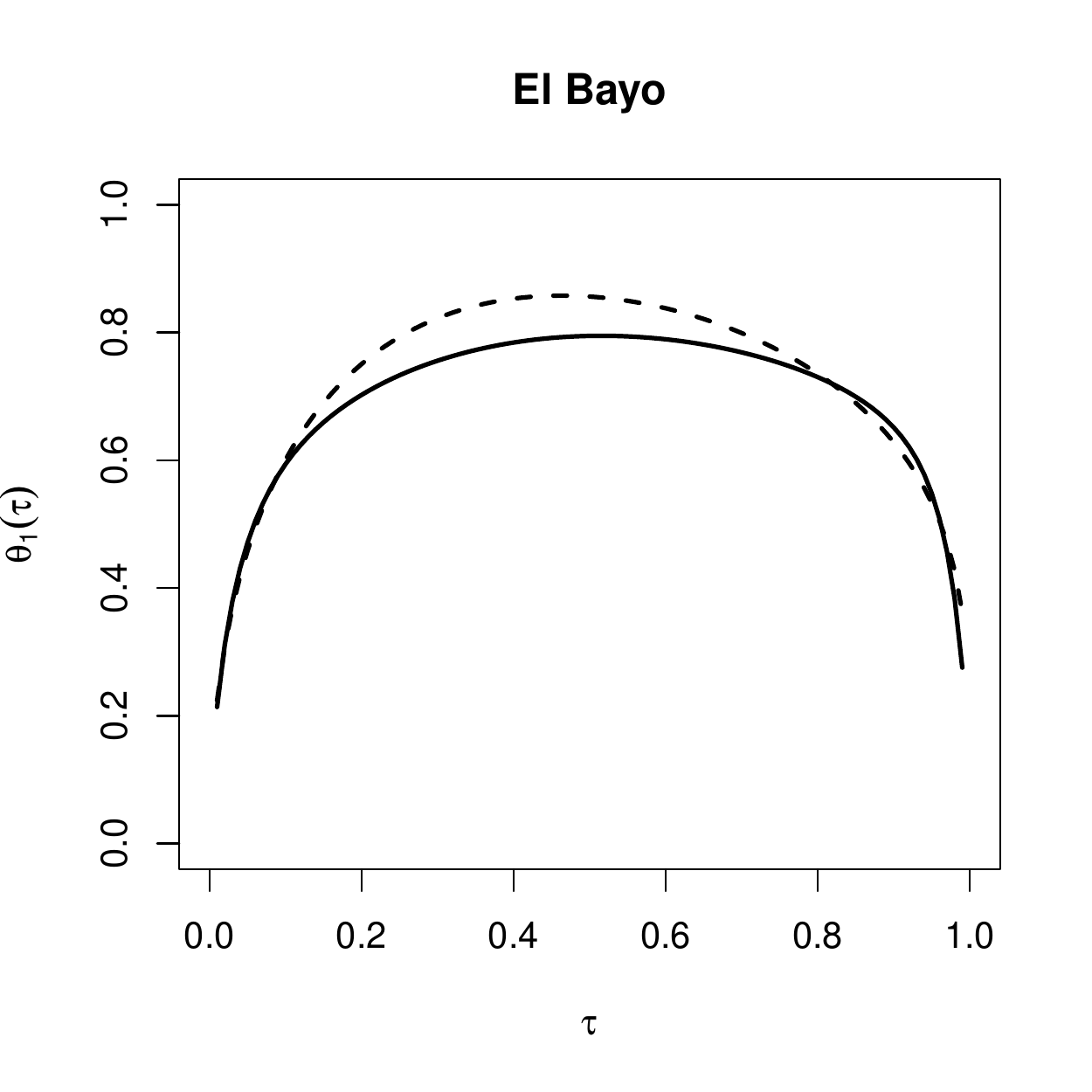}
\includegraphics[width=3cm]{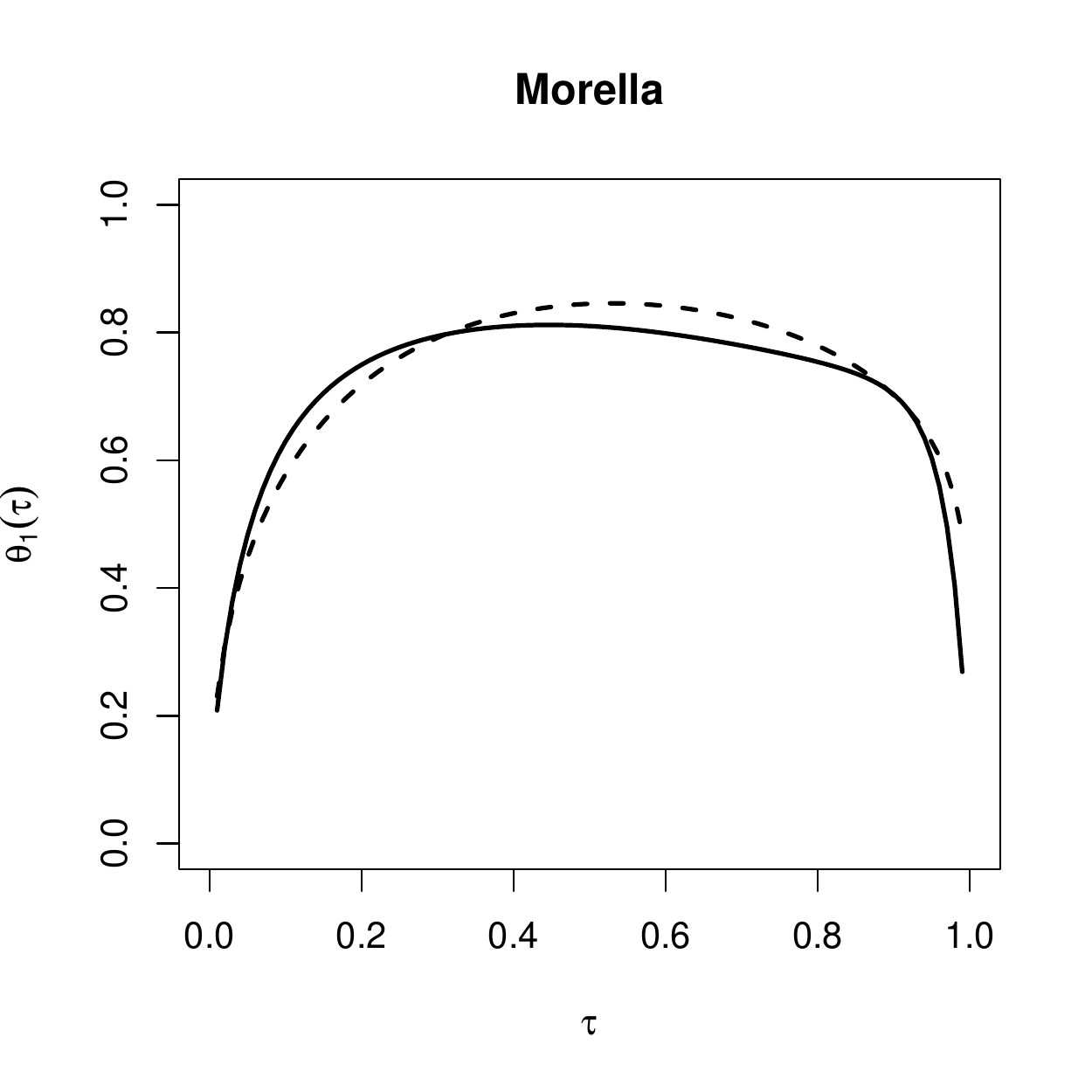} \\
\includegraphics[width=3cm]{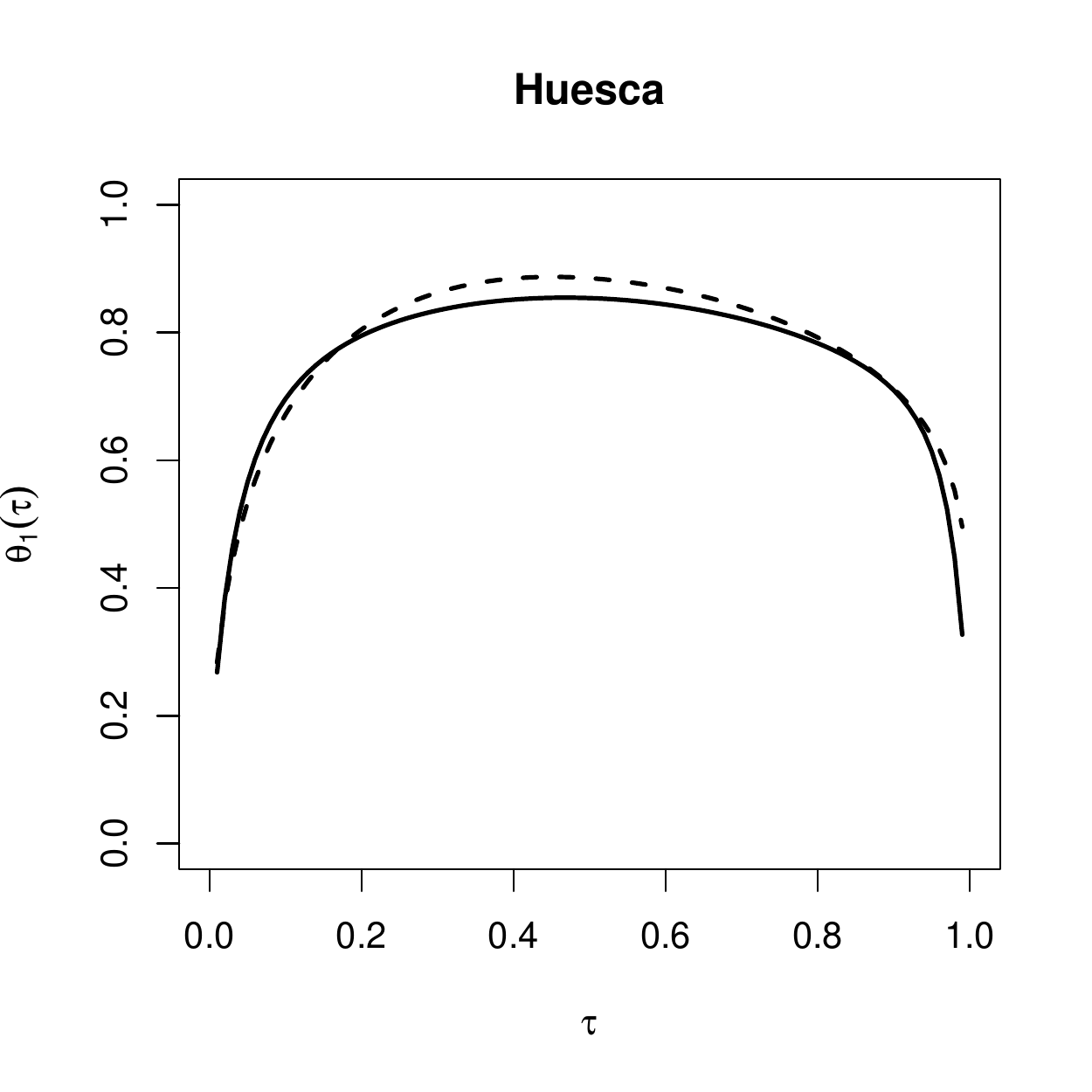}
\includegraphics[width=3cm]{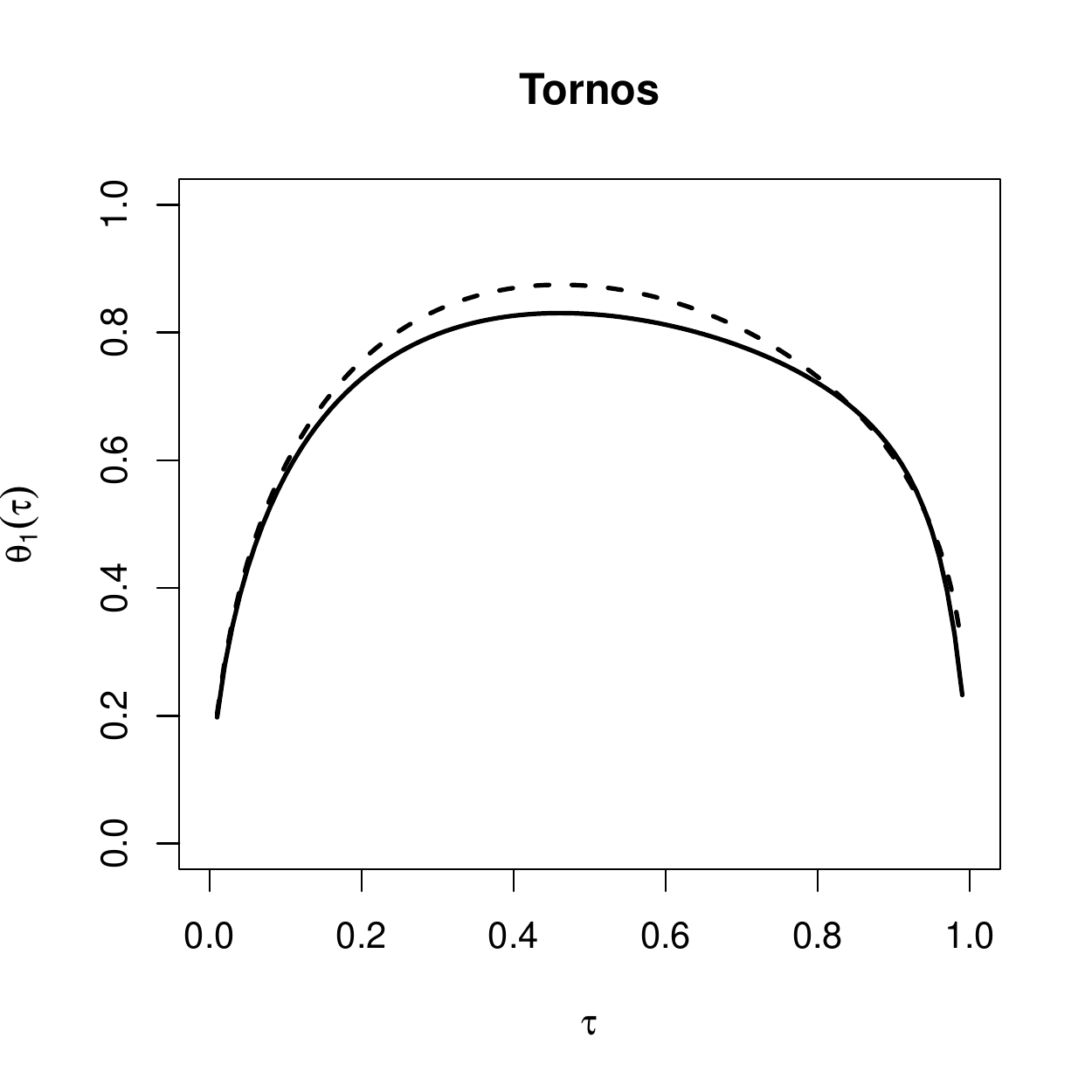}
\includegraphics[width=3cm]{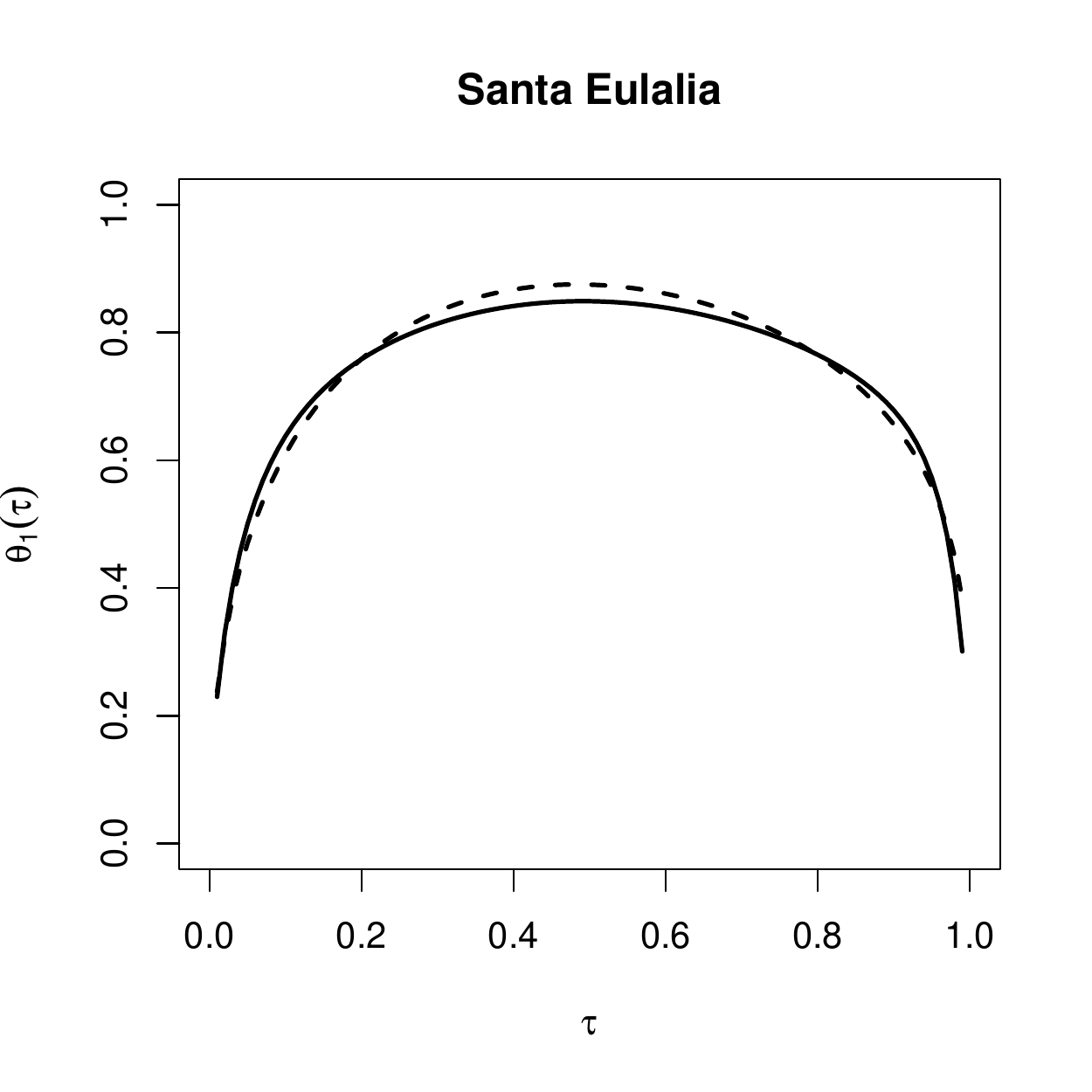}
\includegraphics[width=3cm]{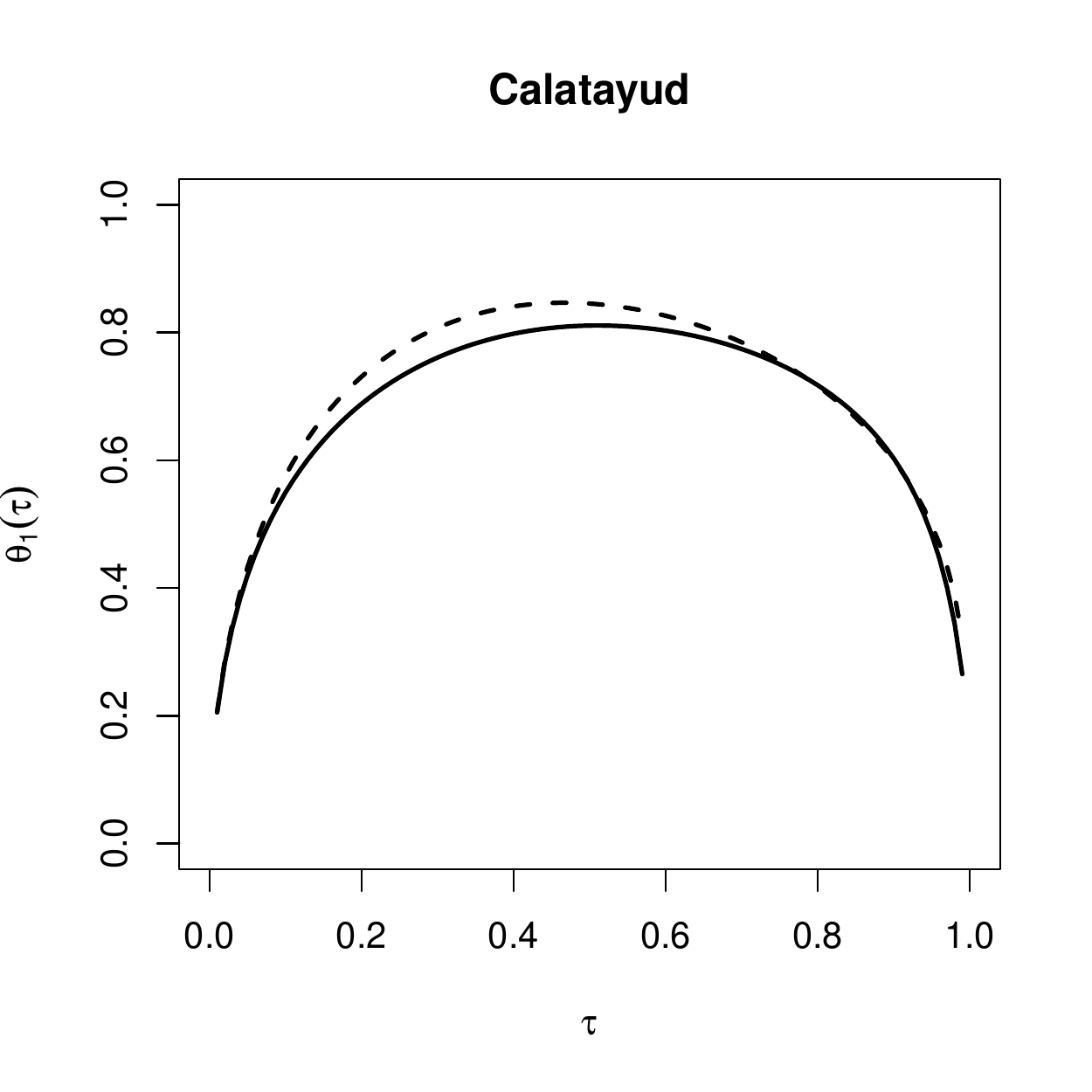} \\
\includegraphics[width=3cm]{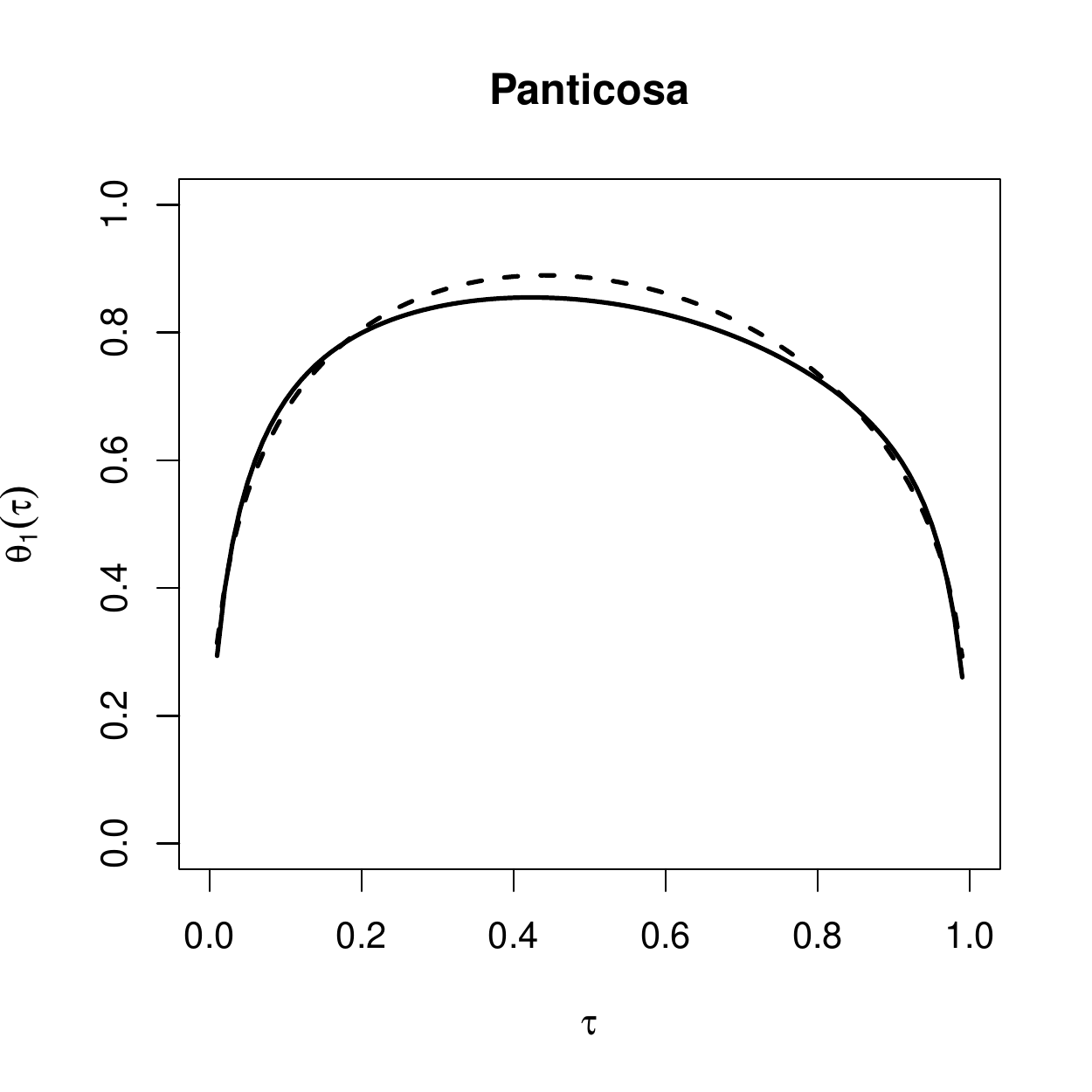}
\includegraphics[width=3cm]{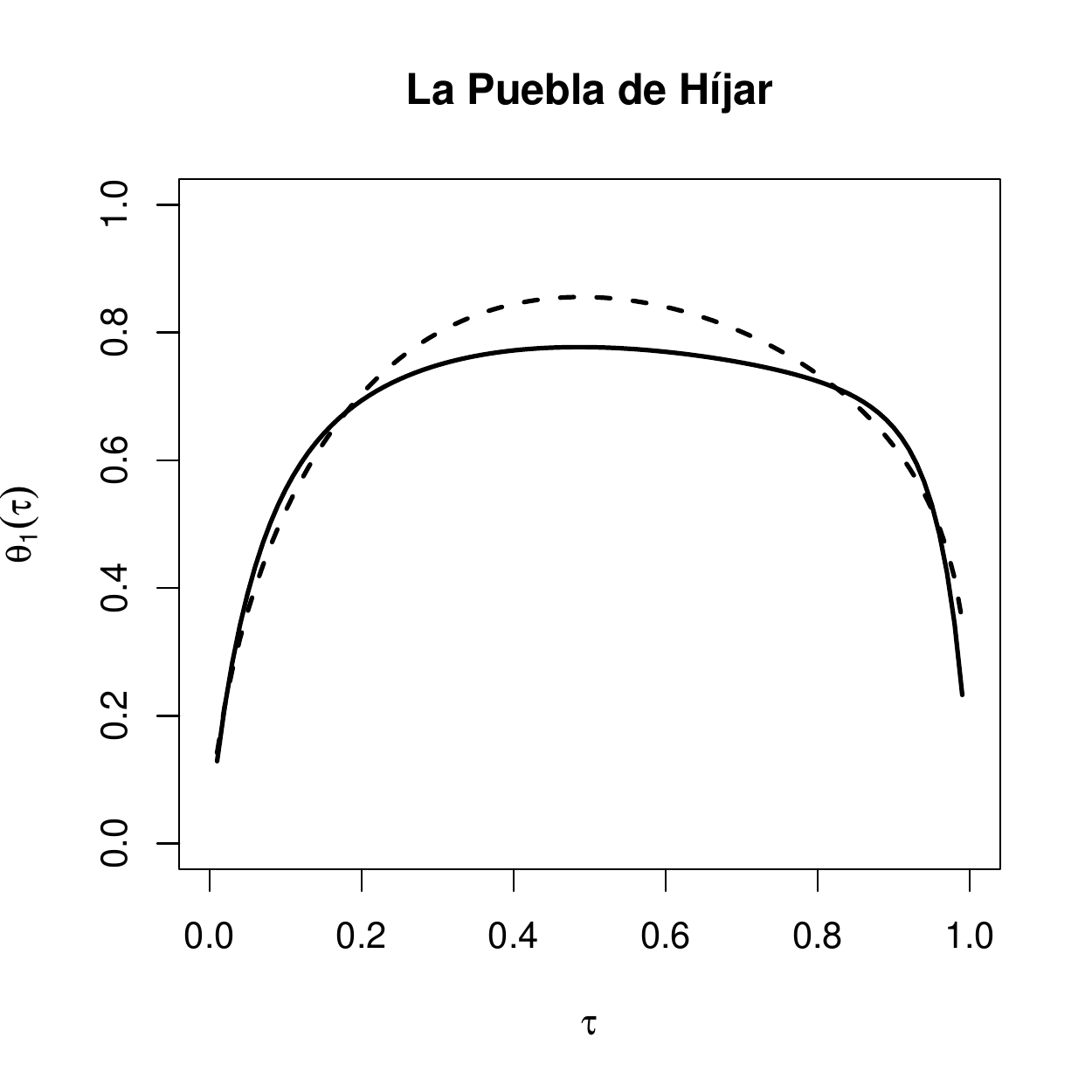}
\includegraphics[width=3cm]{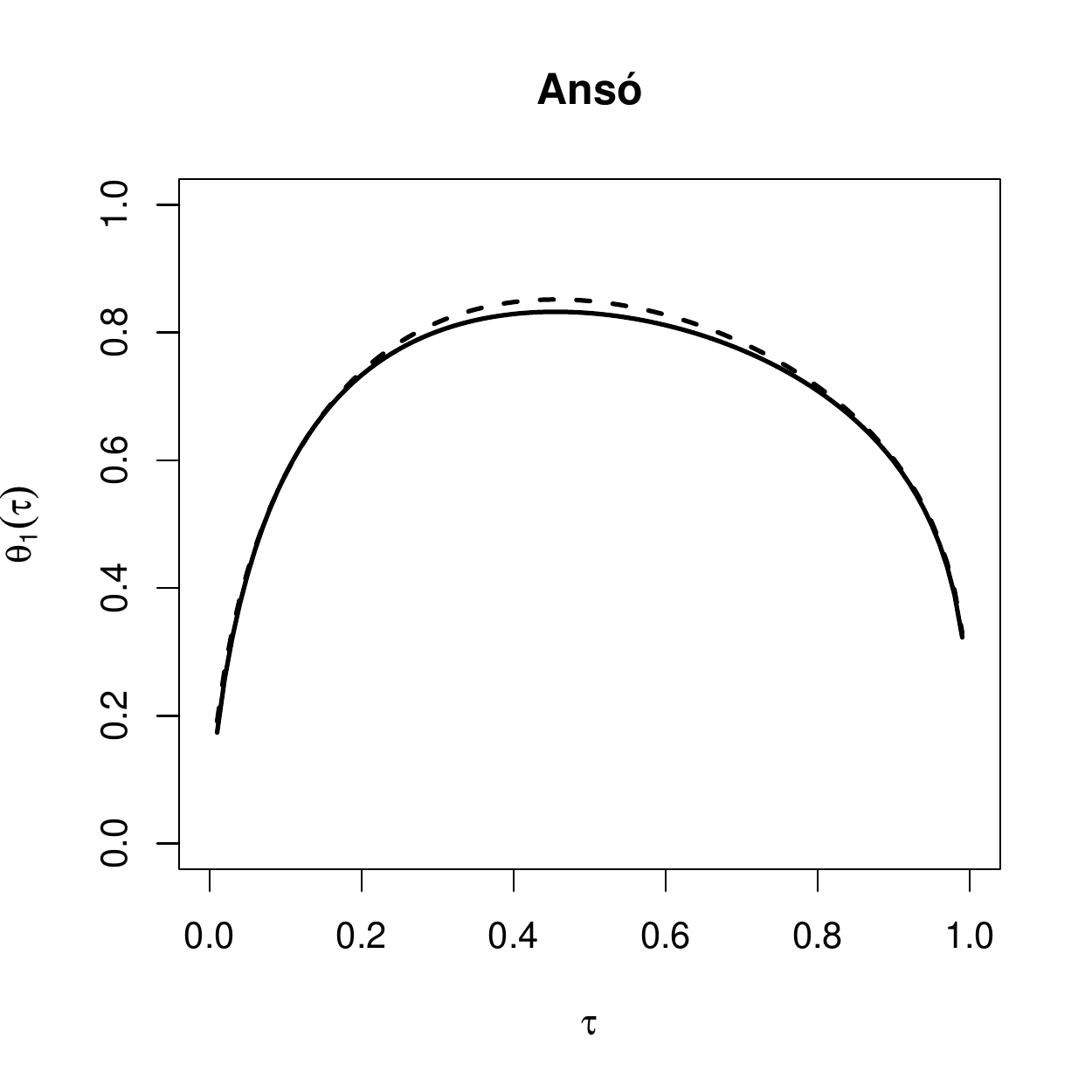}
\includegraphics[width=3cm]{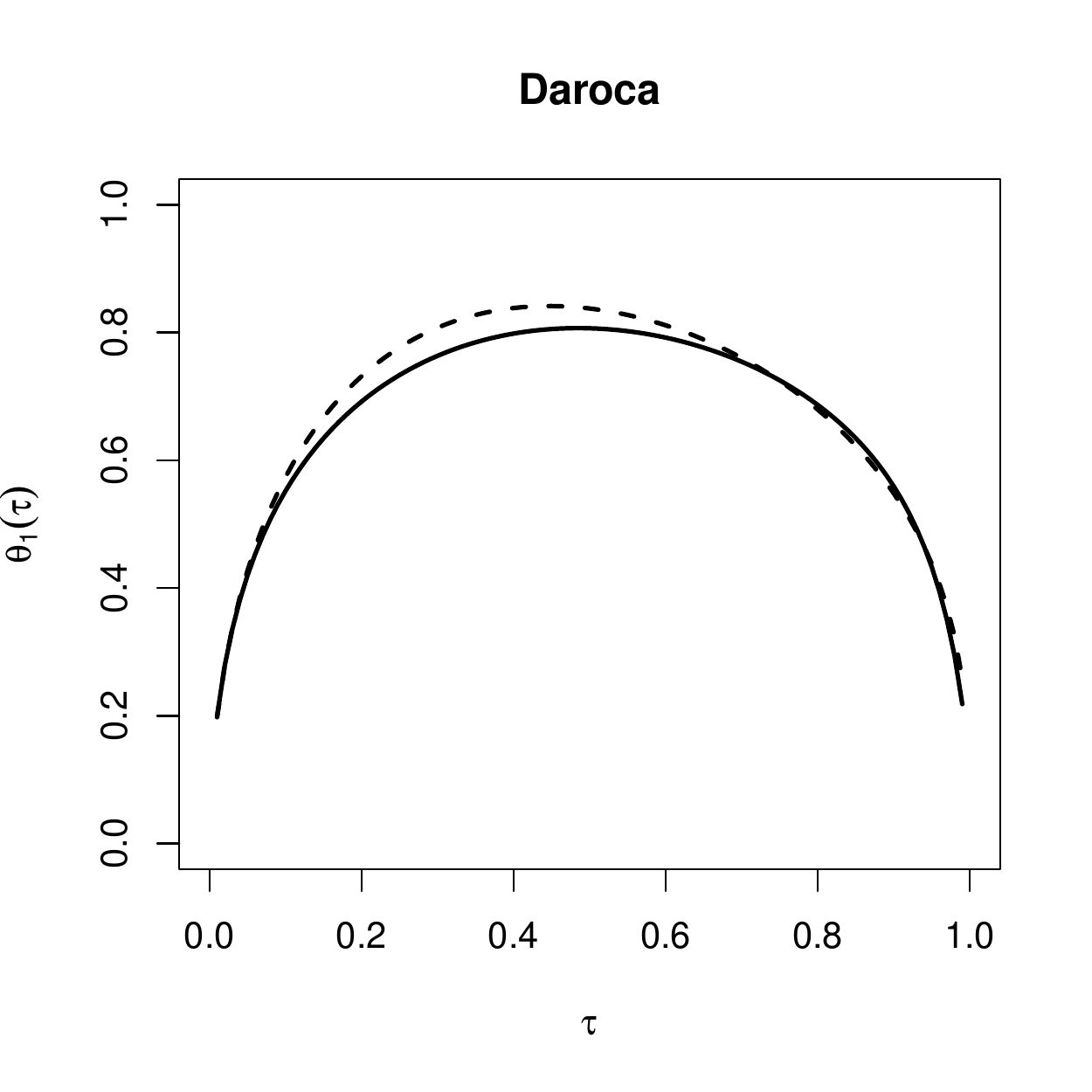} \\
\includegraphics[width=3cm]{theta1QAR1s13.pdf}
\includegraphics[width=3cm]{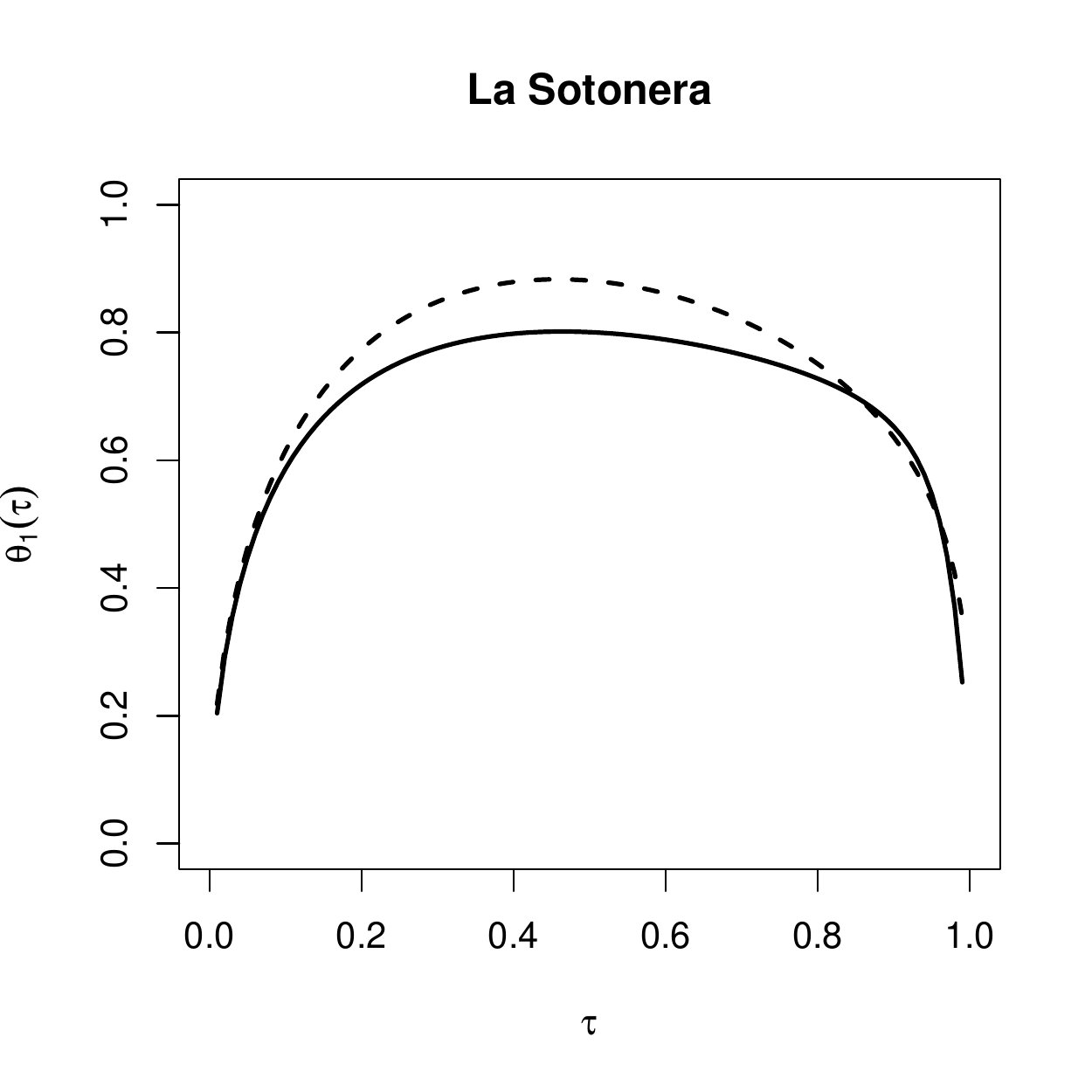}
\includegraphics[width=3cm]{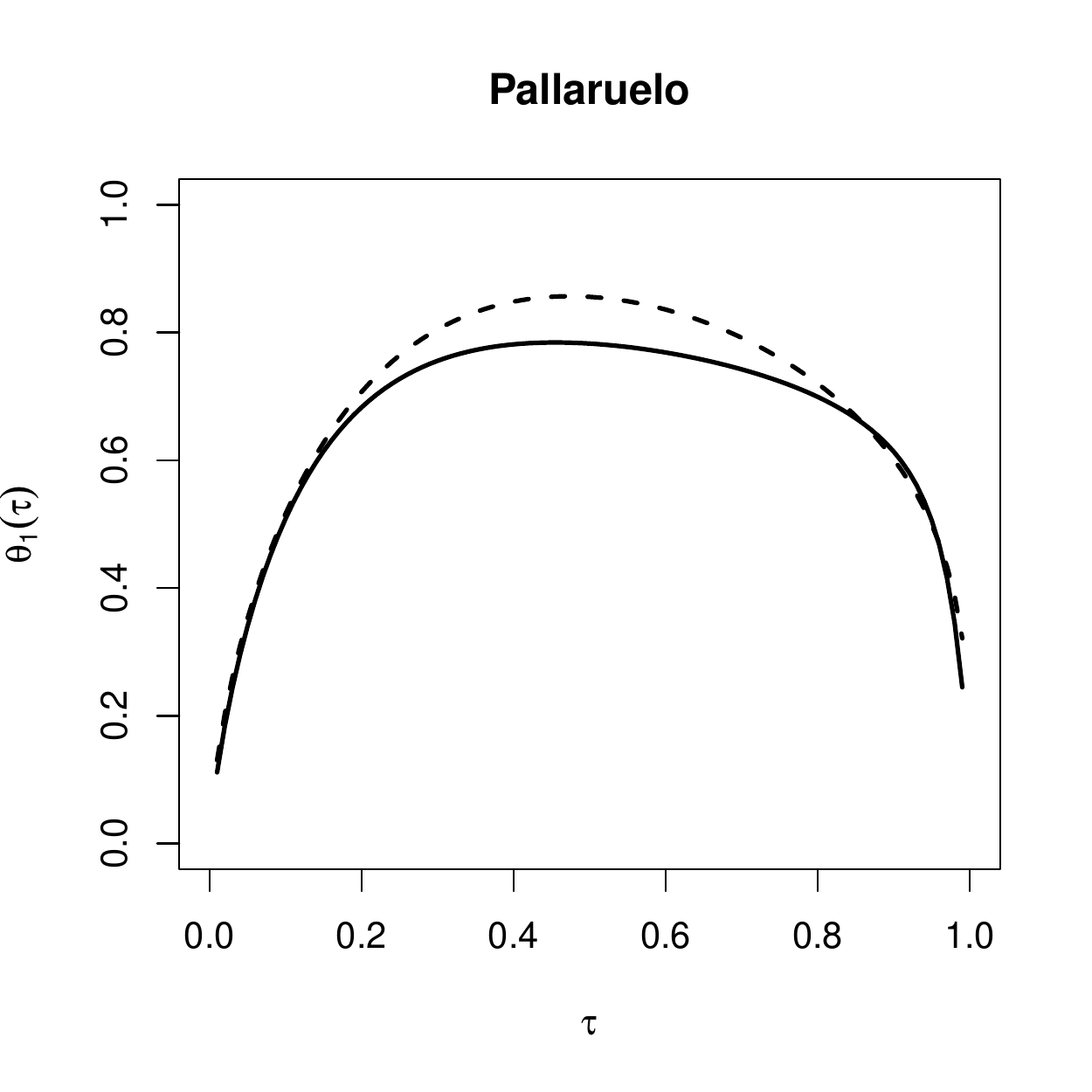}
\includegraphics[width=3cm]{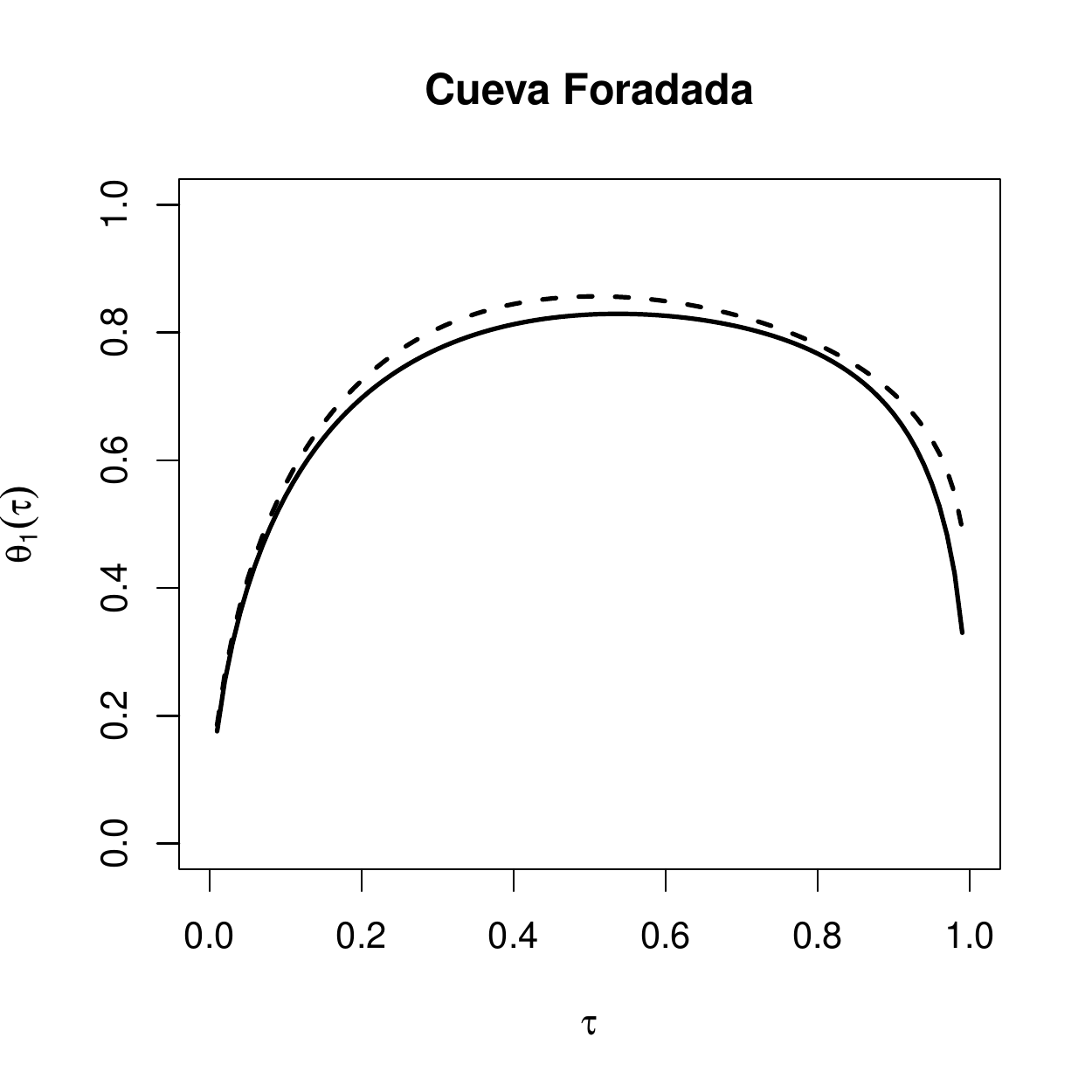} \\
\includegraphics[width=3cm]{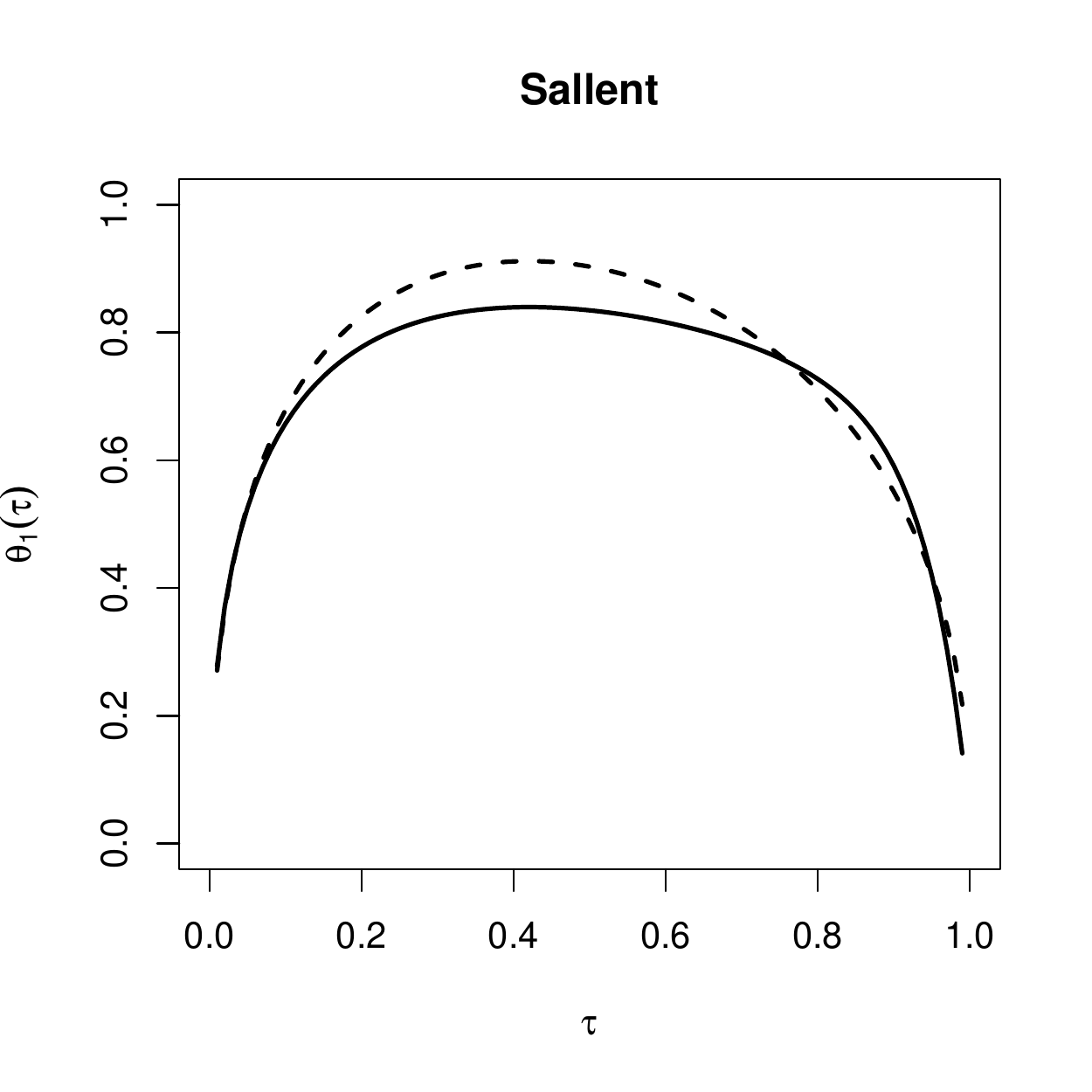}
\includegraphics[width=3cm]{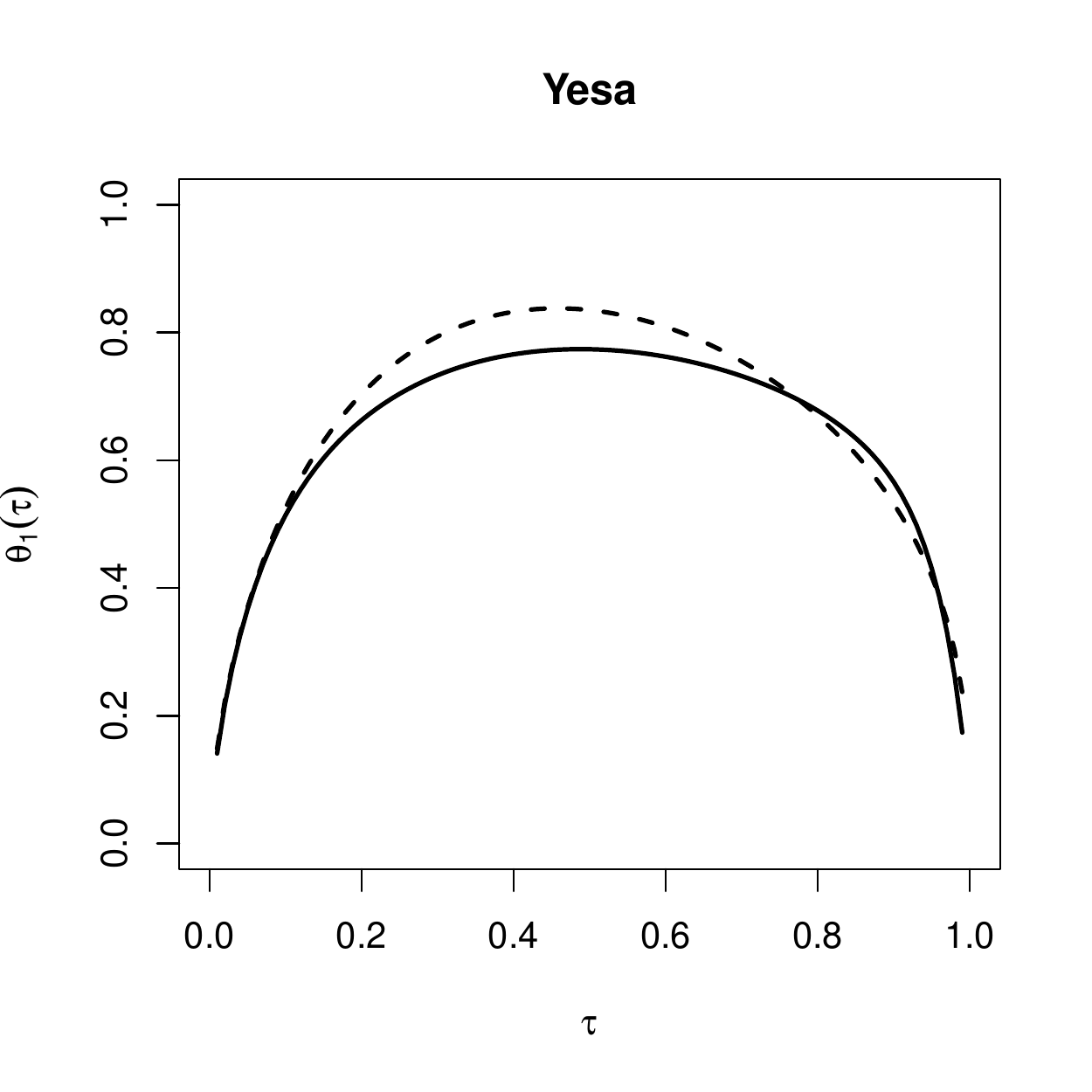}
\caption{Posterior mean of $\theta_1(\tau)$ vs. $\tau$ for QAR1K1 (dashed) and QAR1K2 (solid). All locations, MJJAS, 2015. }
\end{figure}

\begin{figure}[!ht]
\centering
\includegraphics[width=3cm]{quantileQAR1s1.pdf}
\includegraphics[width=3cm]{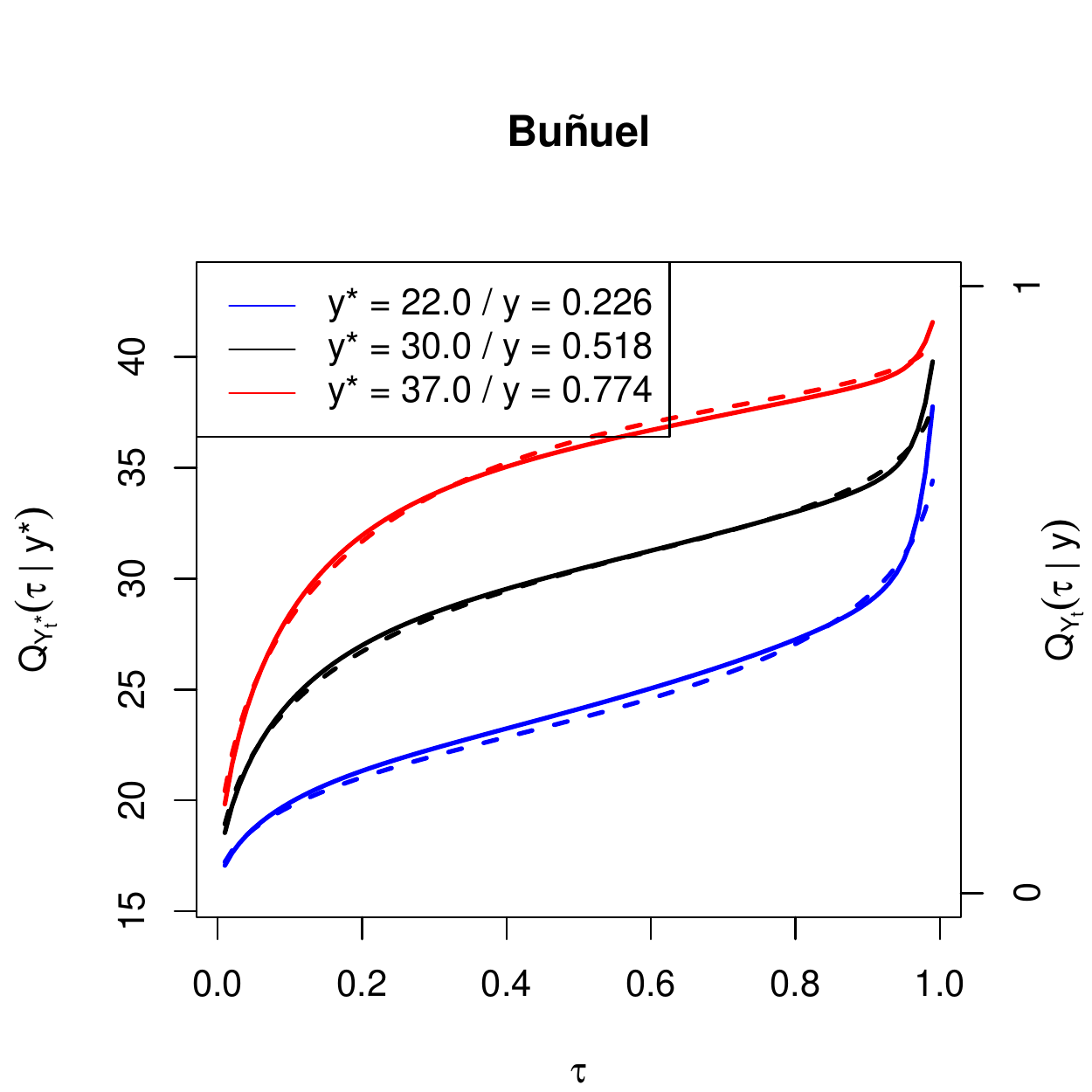}
\includegraphics[width=3cm]{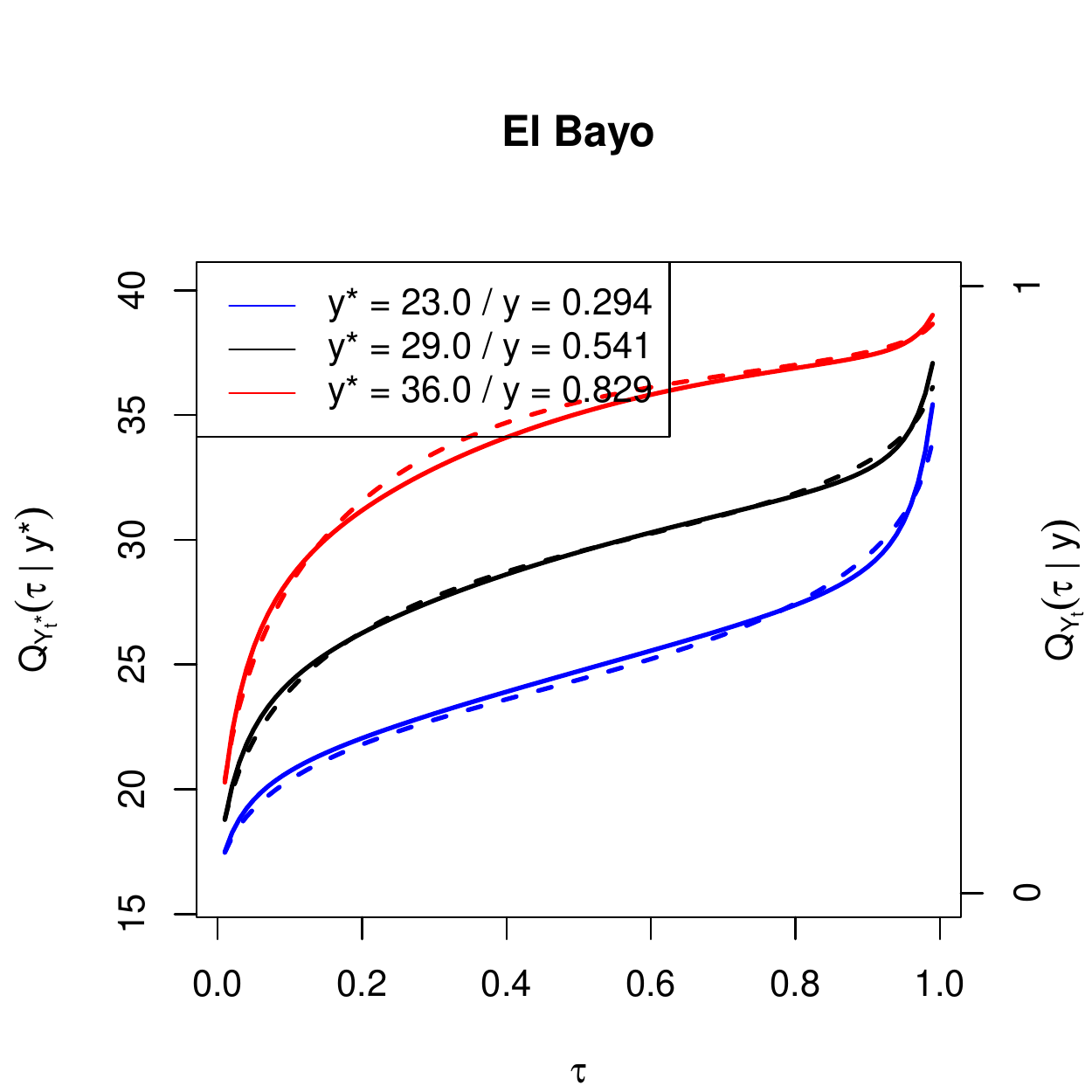}
\includegraphics[width=3cm]{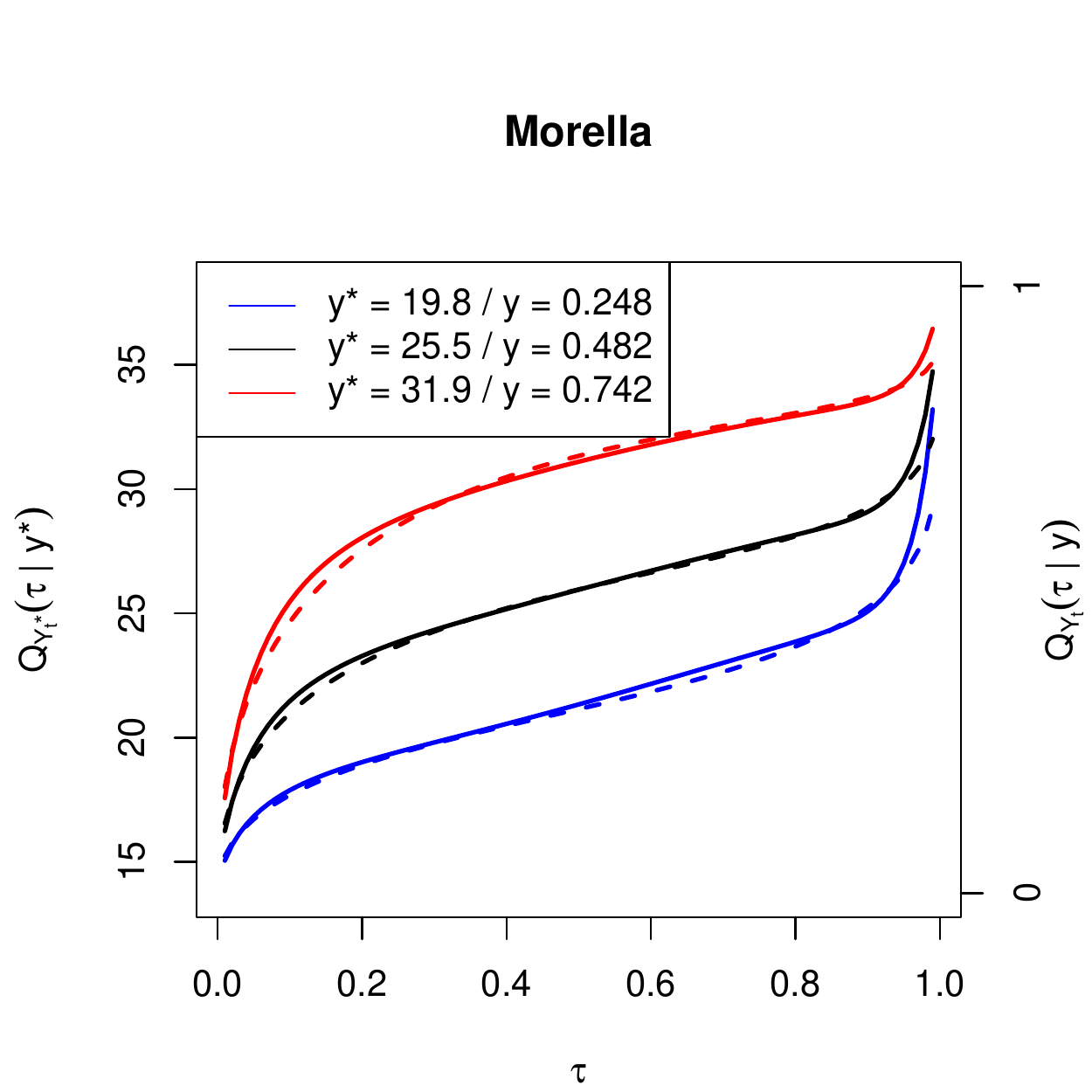} \\
\includegraphics[width=3cm]{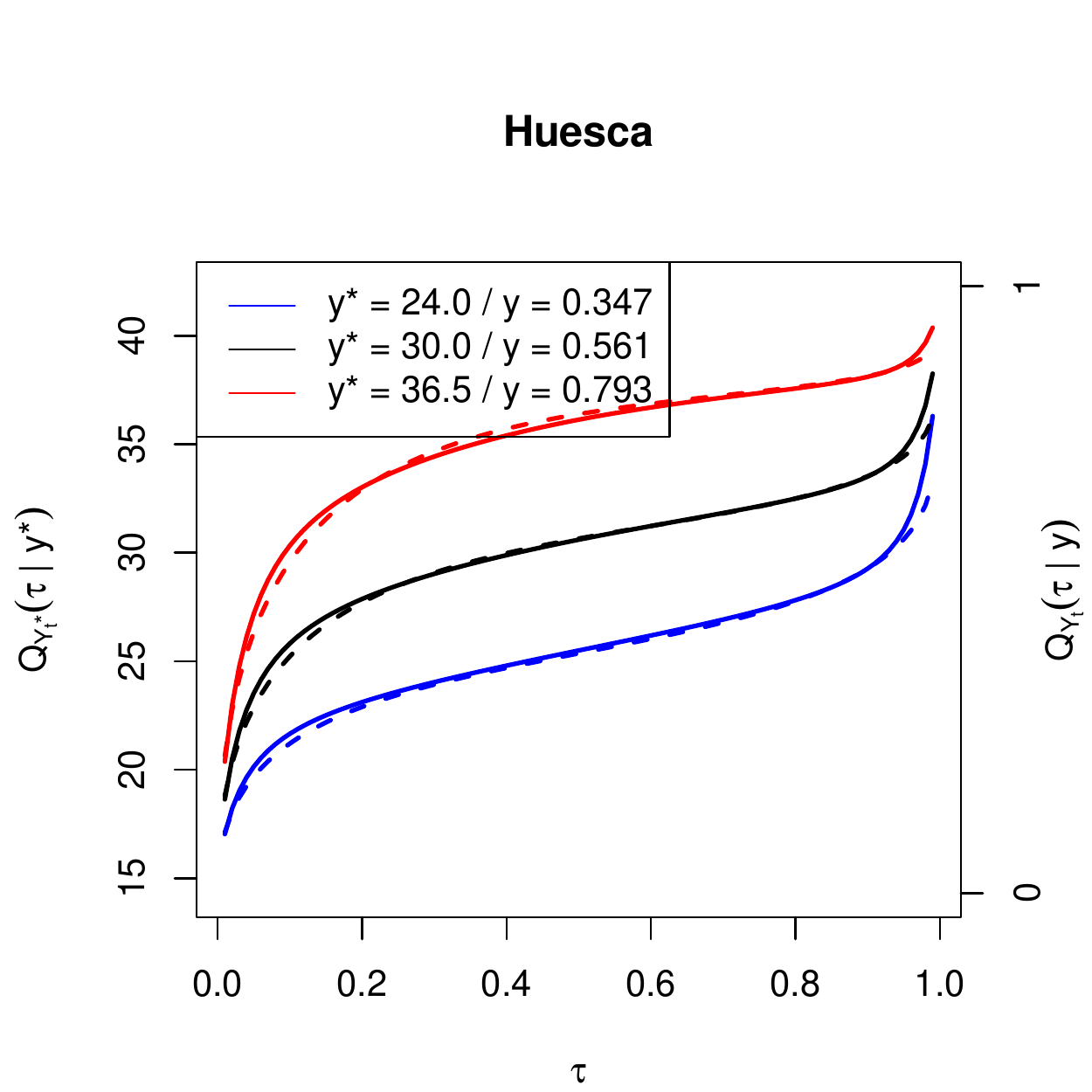}
\includegraphics[width=3cm]{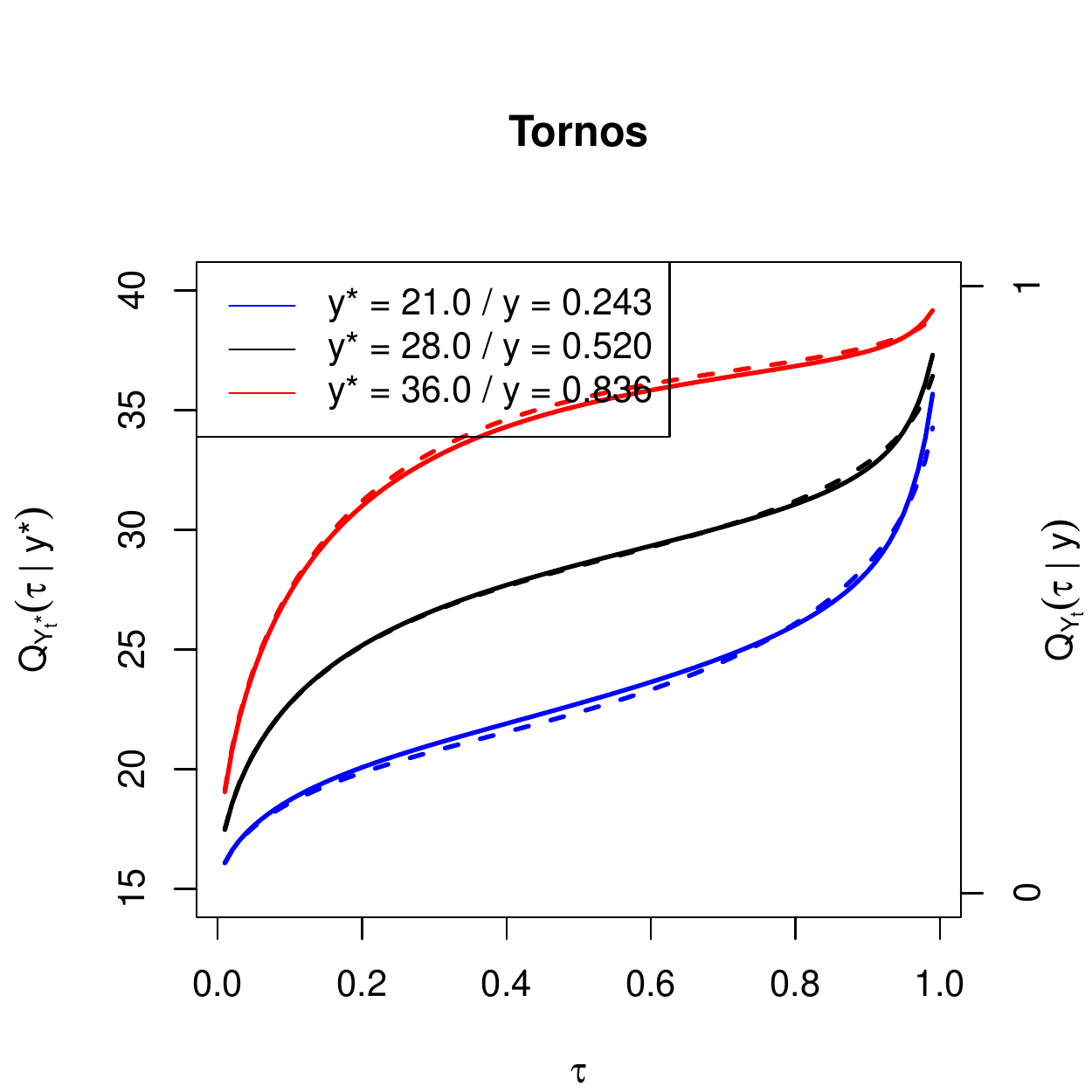}
\includegraphics[width=3cm]{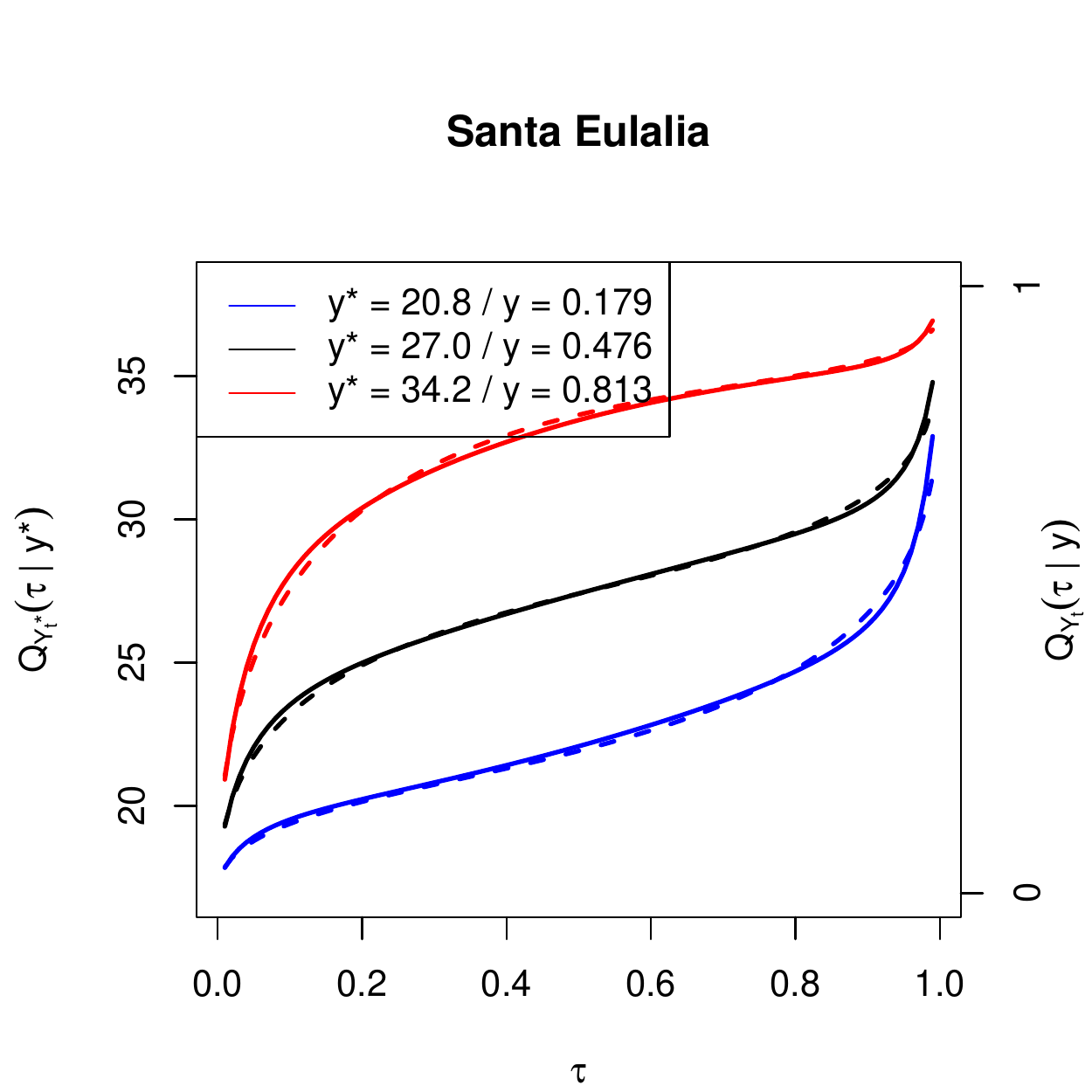}
\includegraphics[width=3cm]{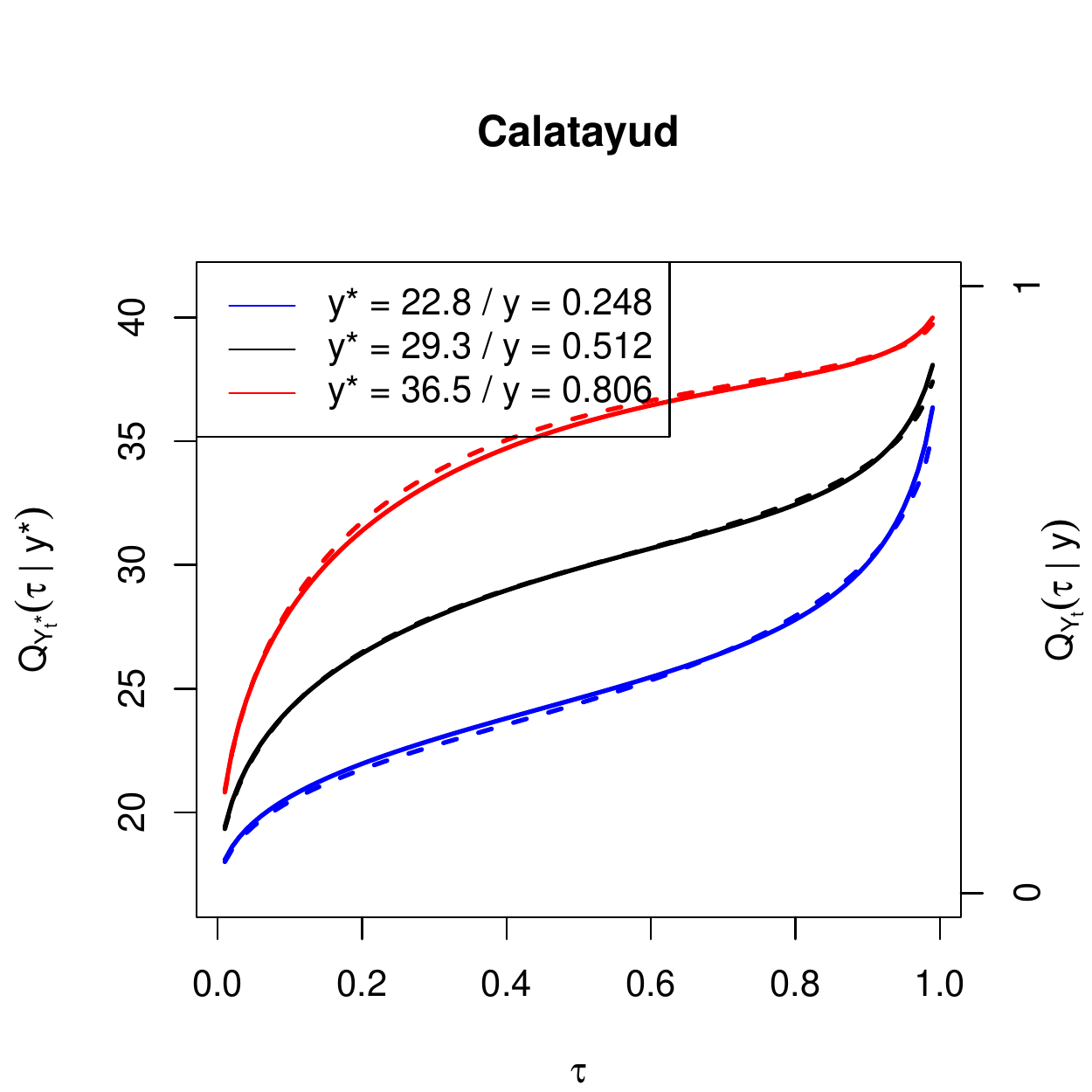} \\
\includegraphics[width=3cm]{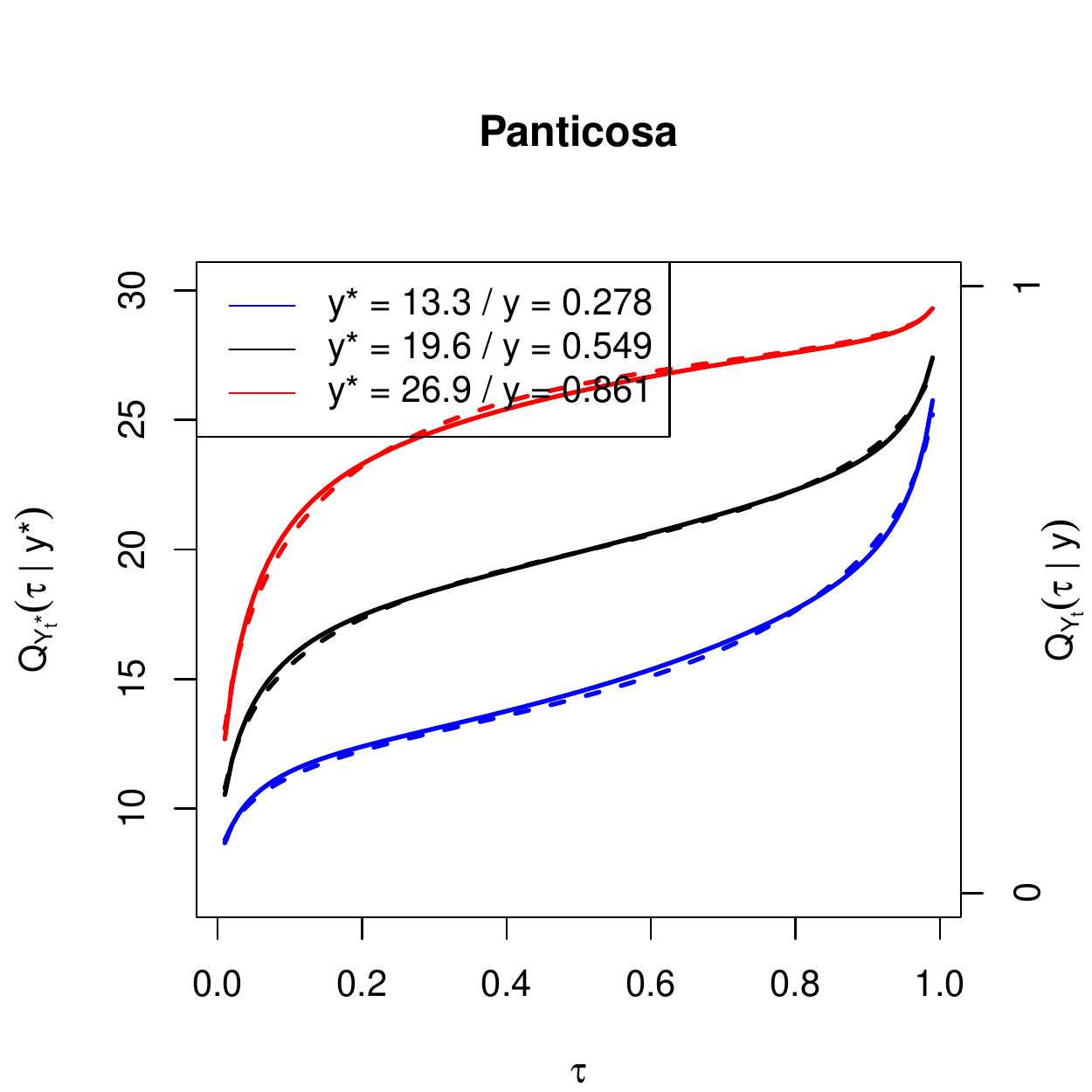}
\includegraphics[width=3cm]{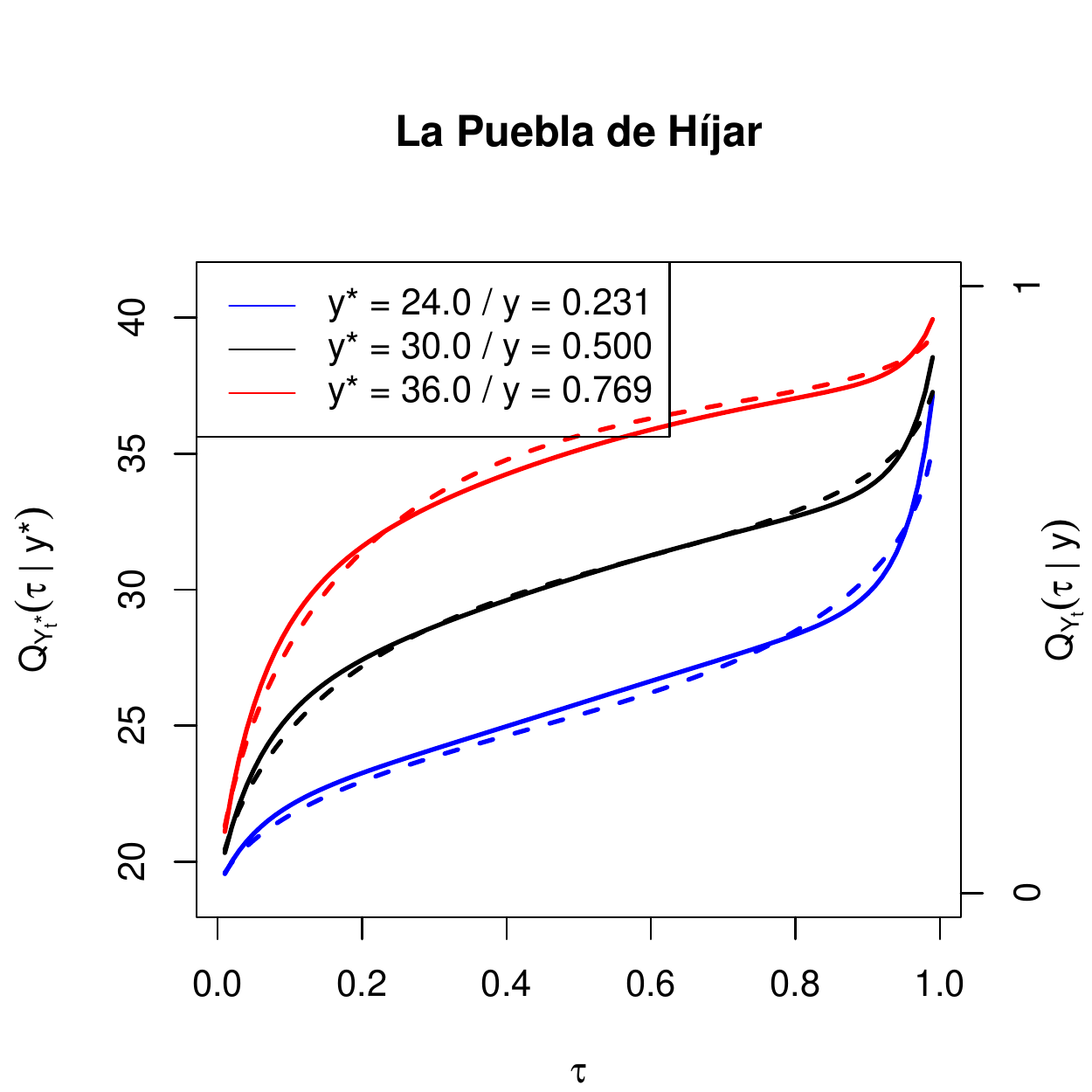}
\includegraphics[width=3cm]{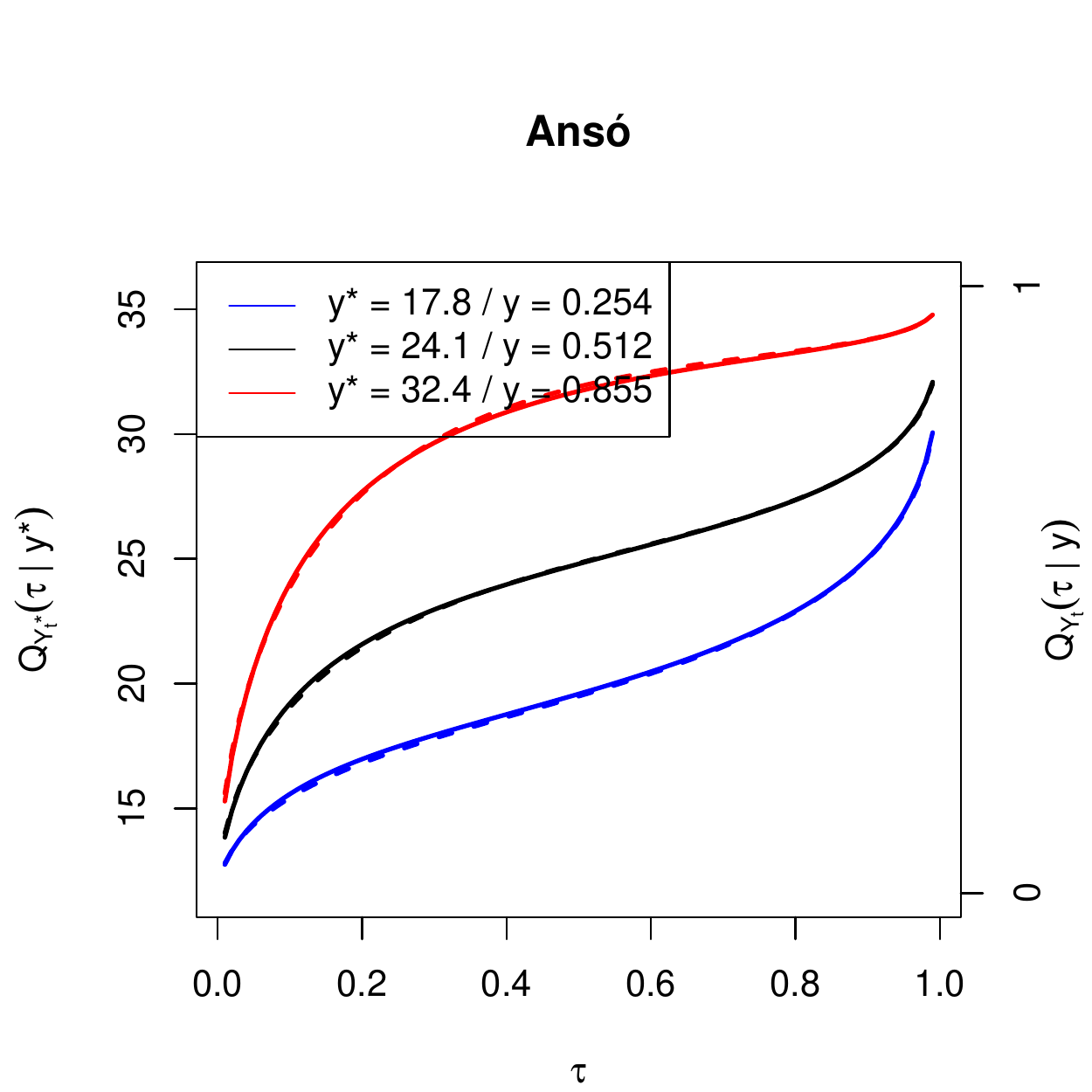}
\includegraphics[width=3cm]{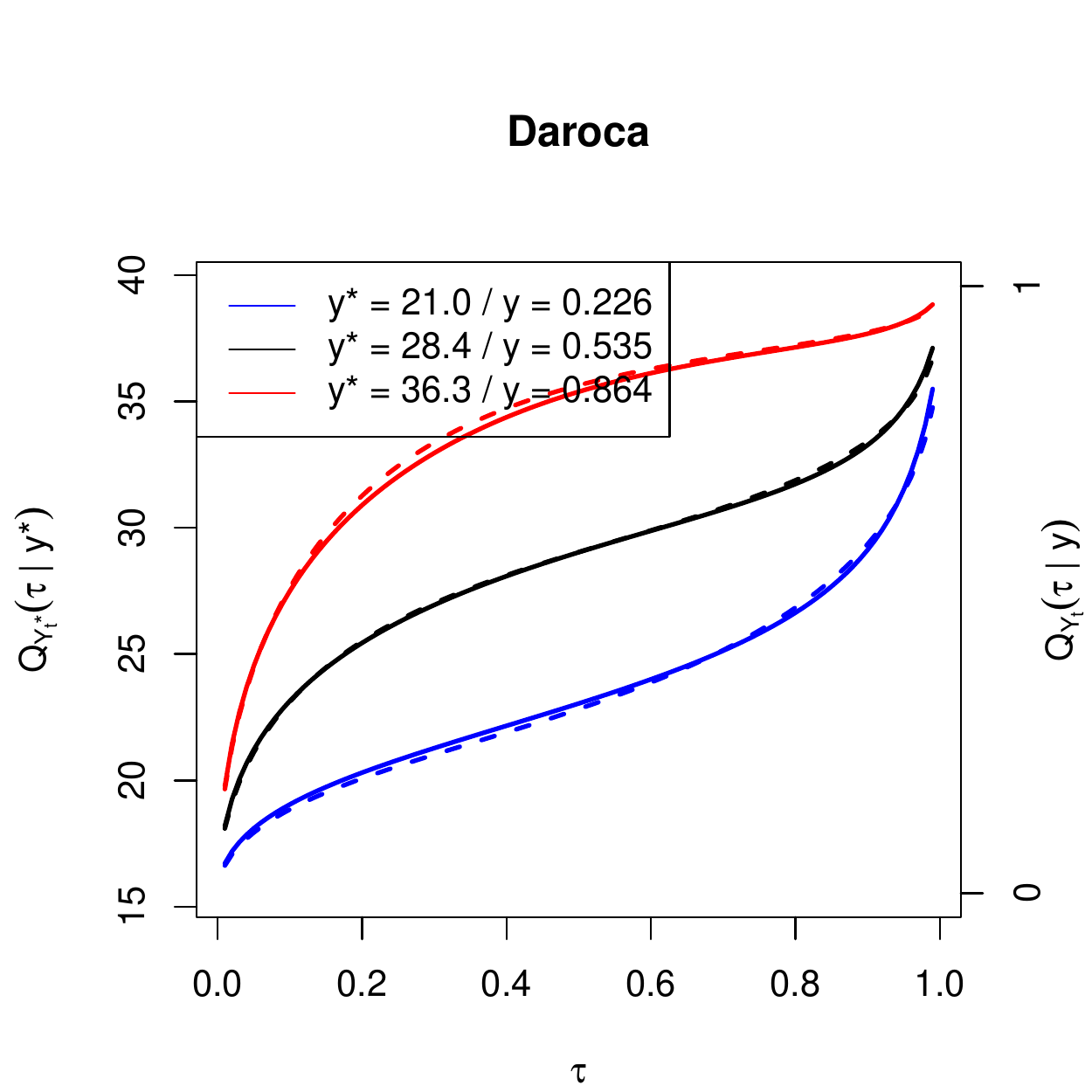} \\
\includegraphics[width=3cm]{quantileQAR1s13.pdf}
\includegraphics[width=3cm]{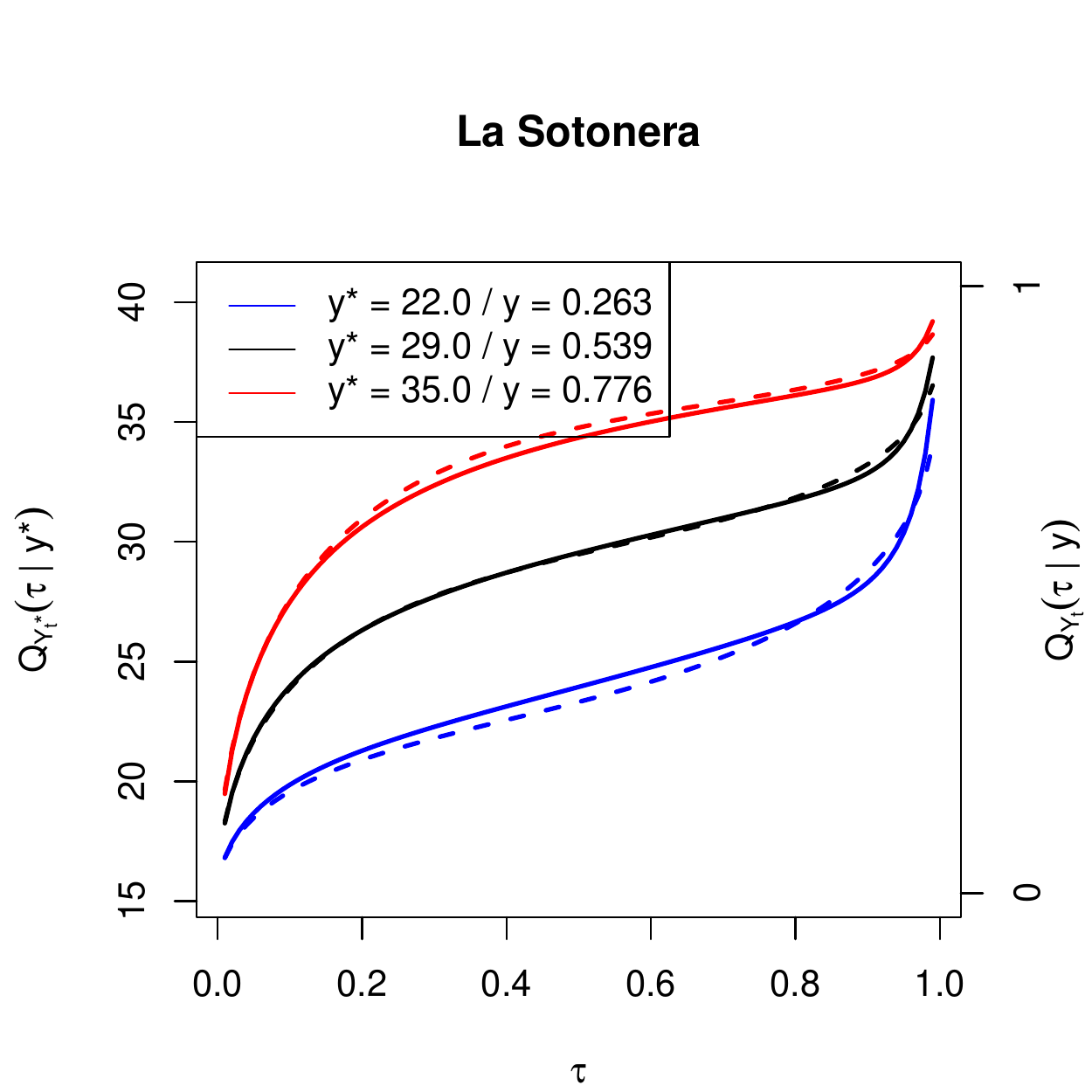}
\includegraphics[width=3cm]{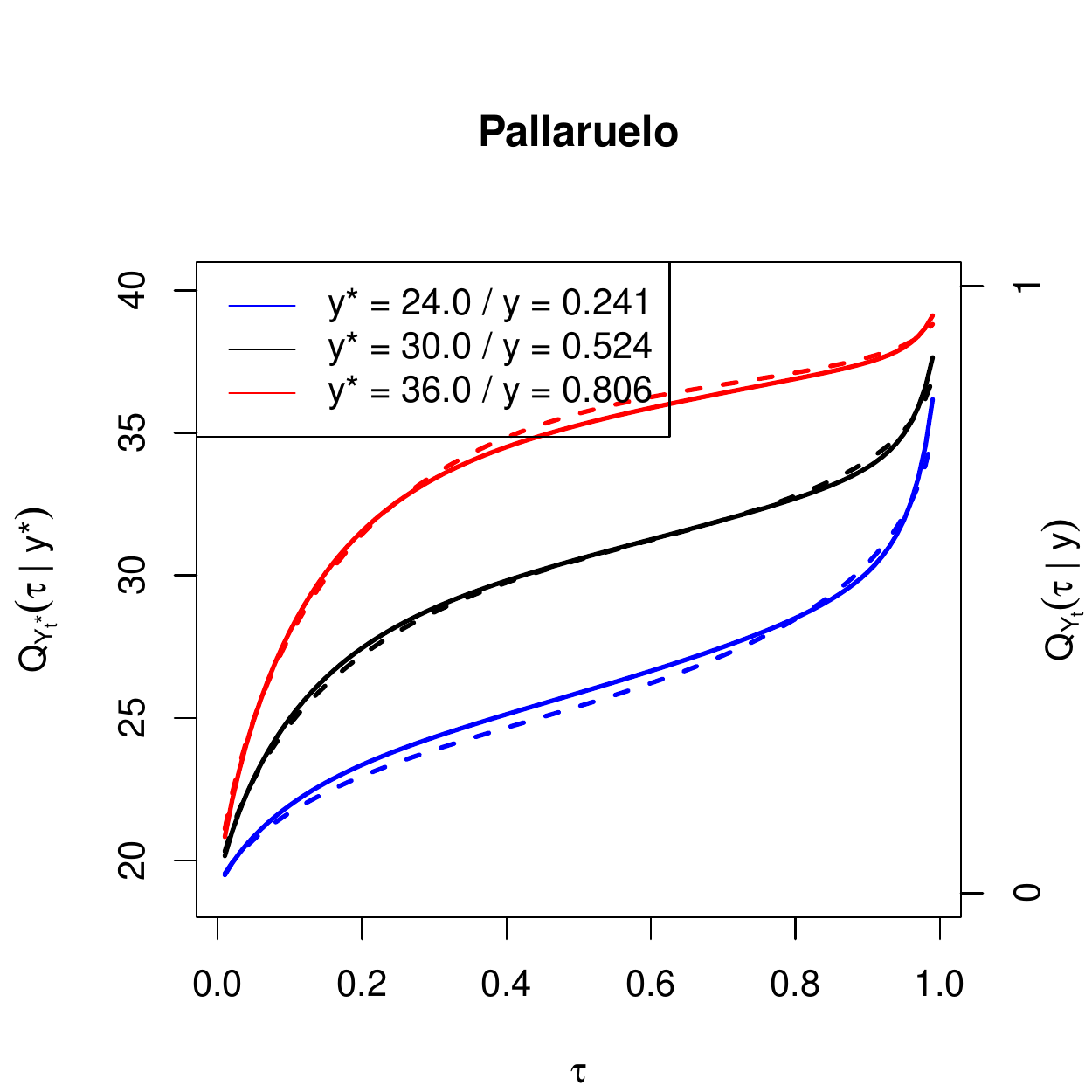}
\includegraphics[width=3cm]{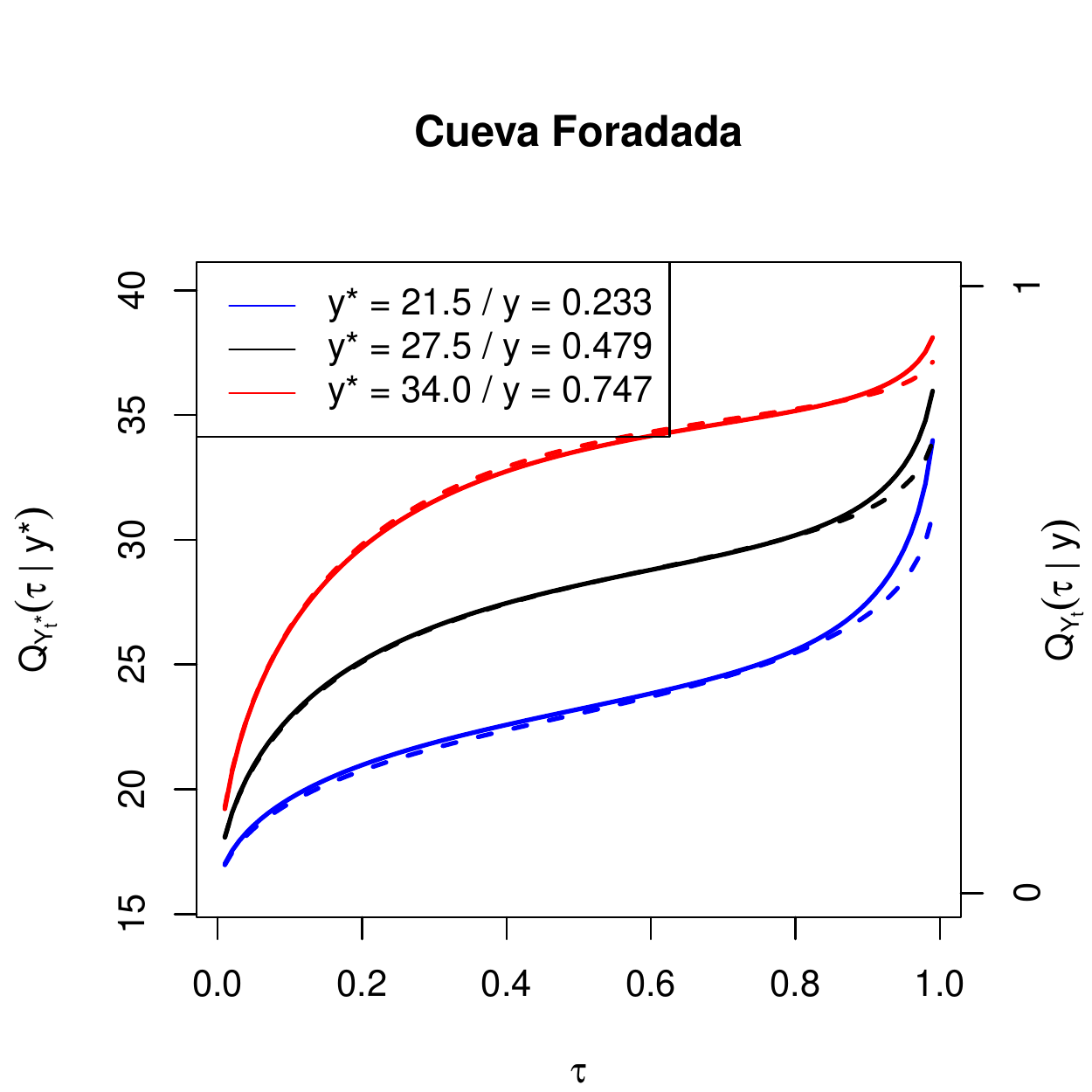} \\
\includegraphics[width=3cm]{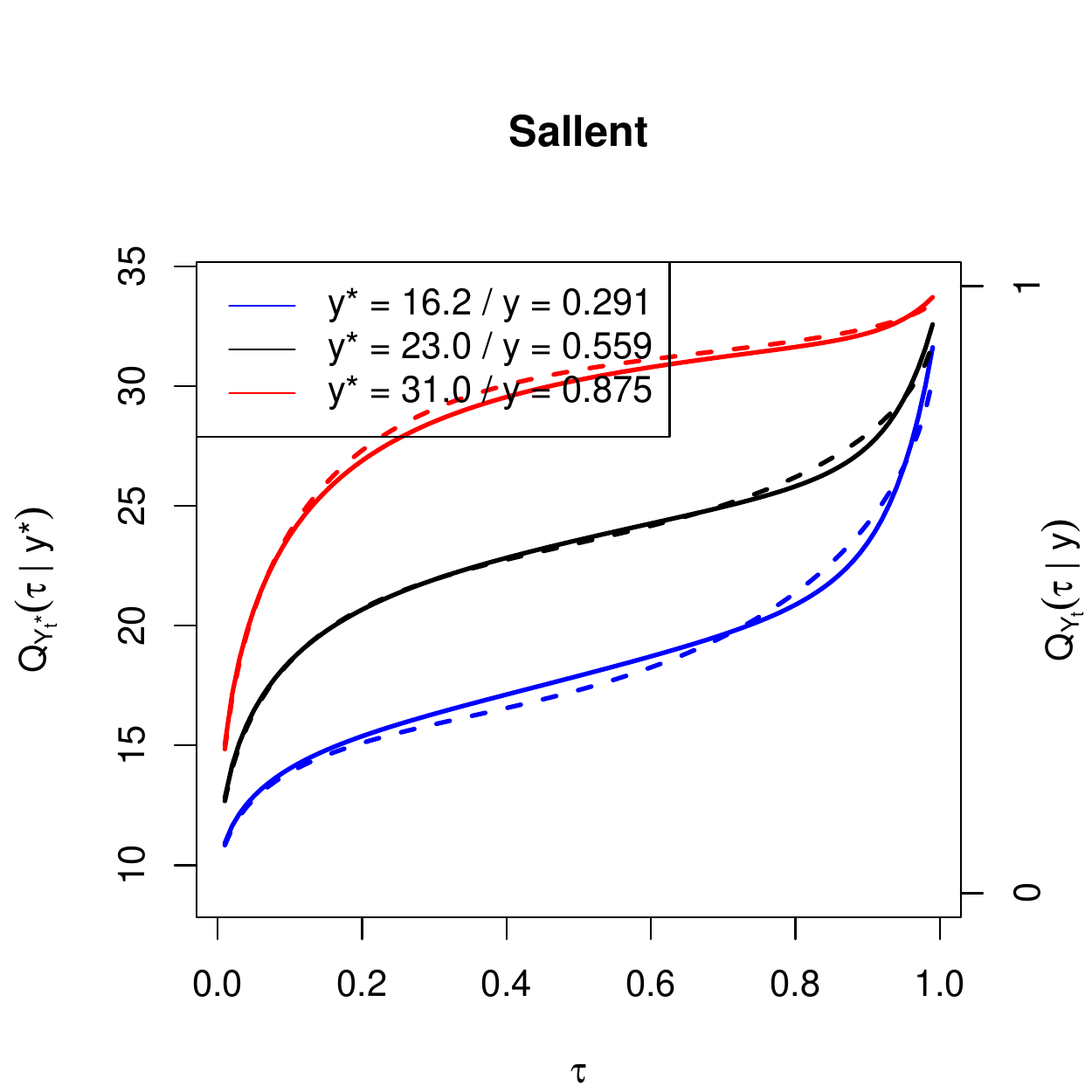}
\includegraphics[width=3cm]{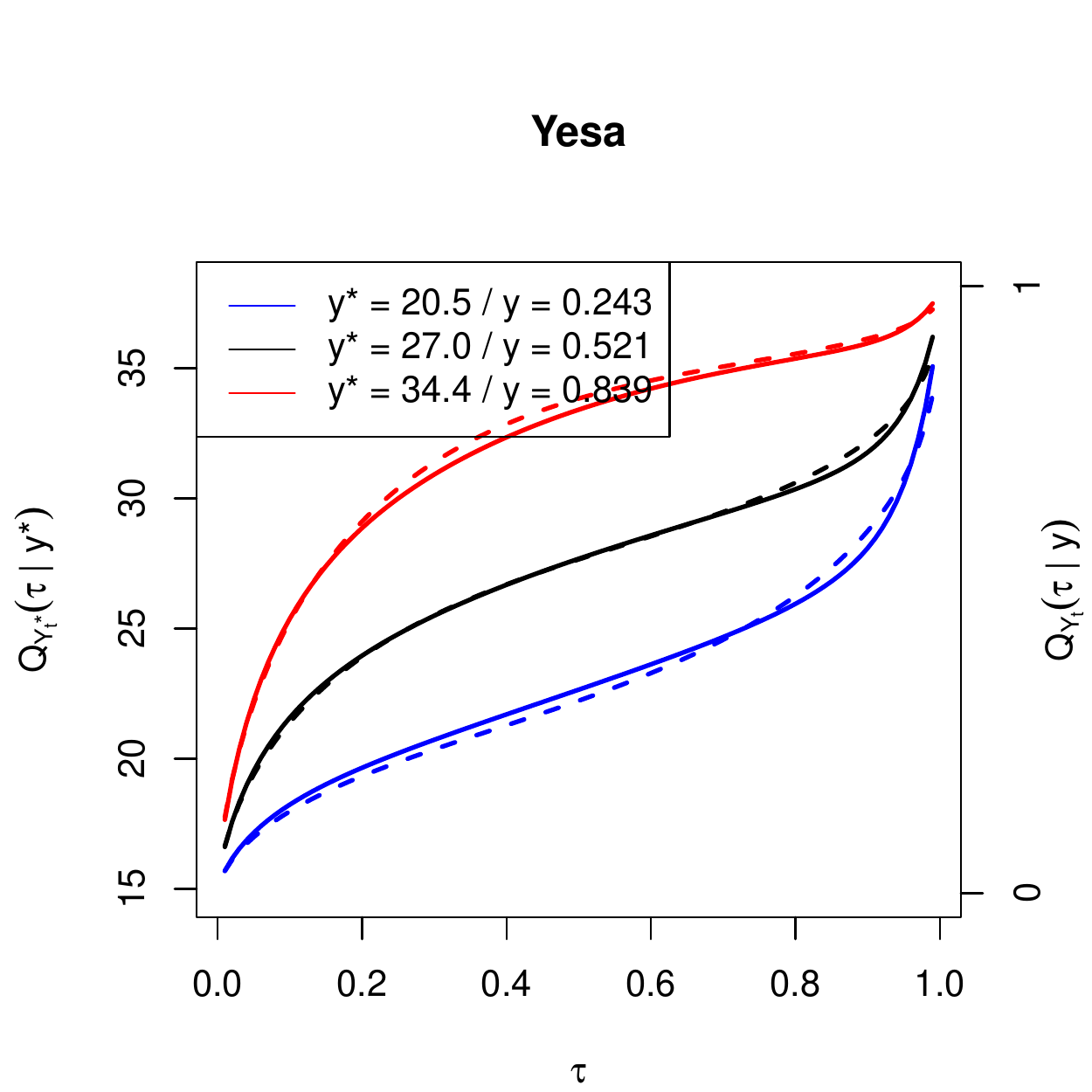}
\caption{Posterior mean of the quantile function $Q_{Y_{t}}(\tau \mid y)$ vs. $\tau$ for QAR1K1 (dashed) and QAR1K2 (solid).  Here, $y$ is the empirical marginal quantile for $\tau=0.1$ (blue), $0.5$ (black), $0.9$ (red). All locations, MJJAS, 2015. }
\label{fig:quantile}
\end{figure}

\begin{figure}[!ht]
\centering
\includegraphics[width=3cm]{densityQAR1s1.pdf}
\includegraphics[width=3cm]{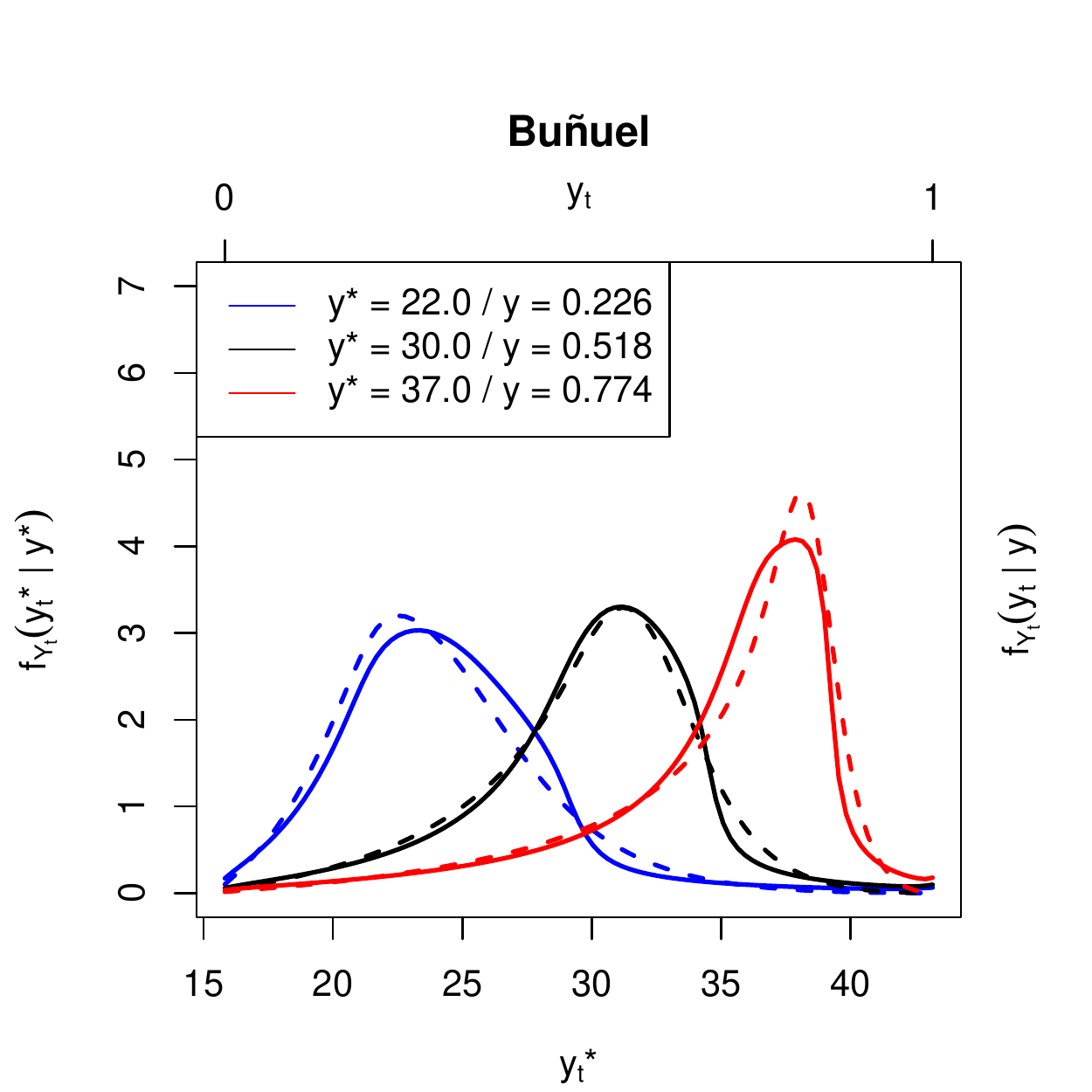}
\includegraphics[width=3cm]{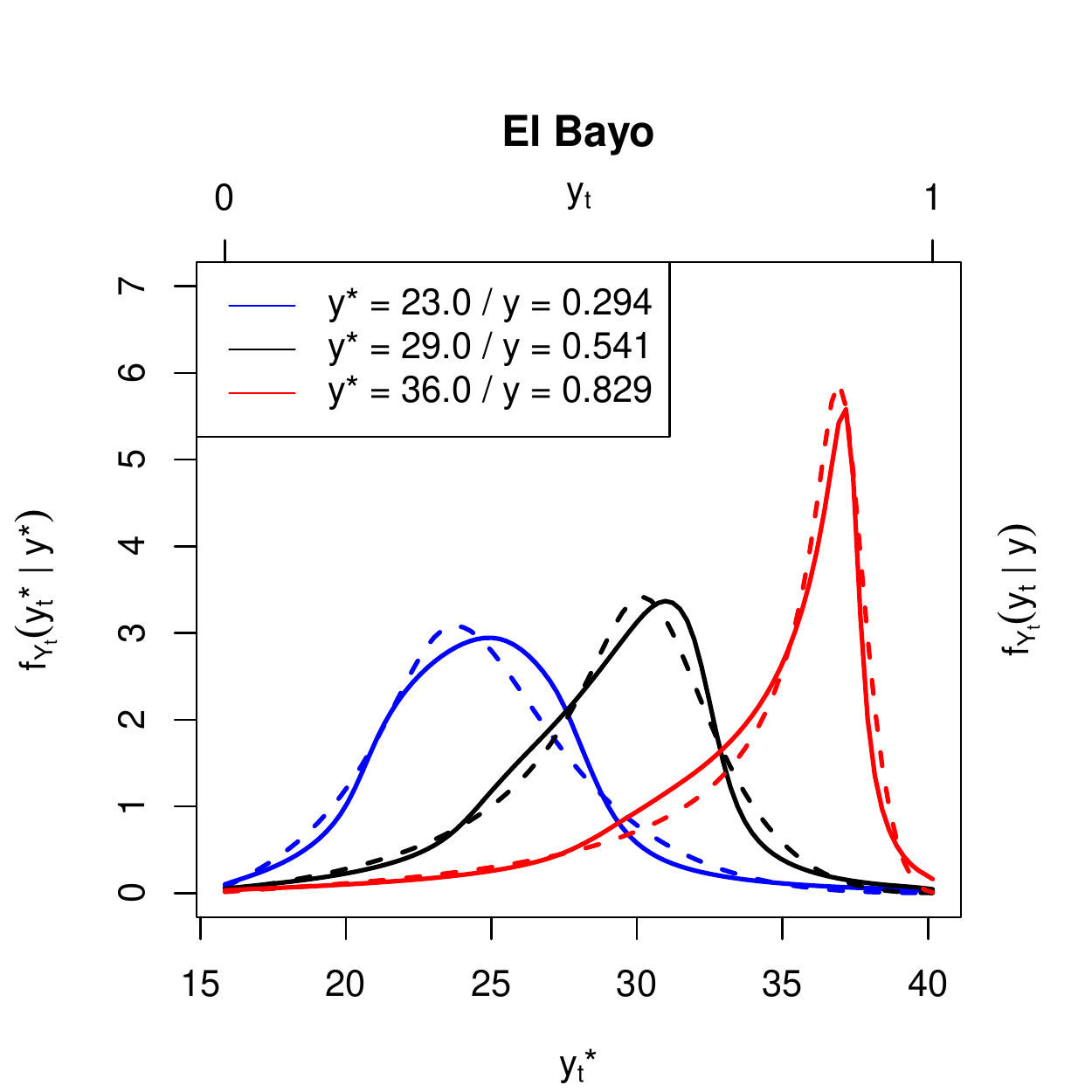}
\includegraphics[width=3cm]{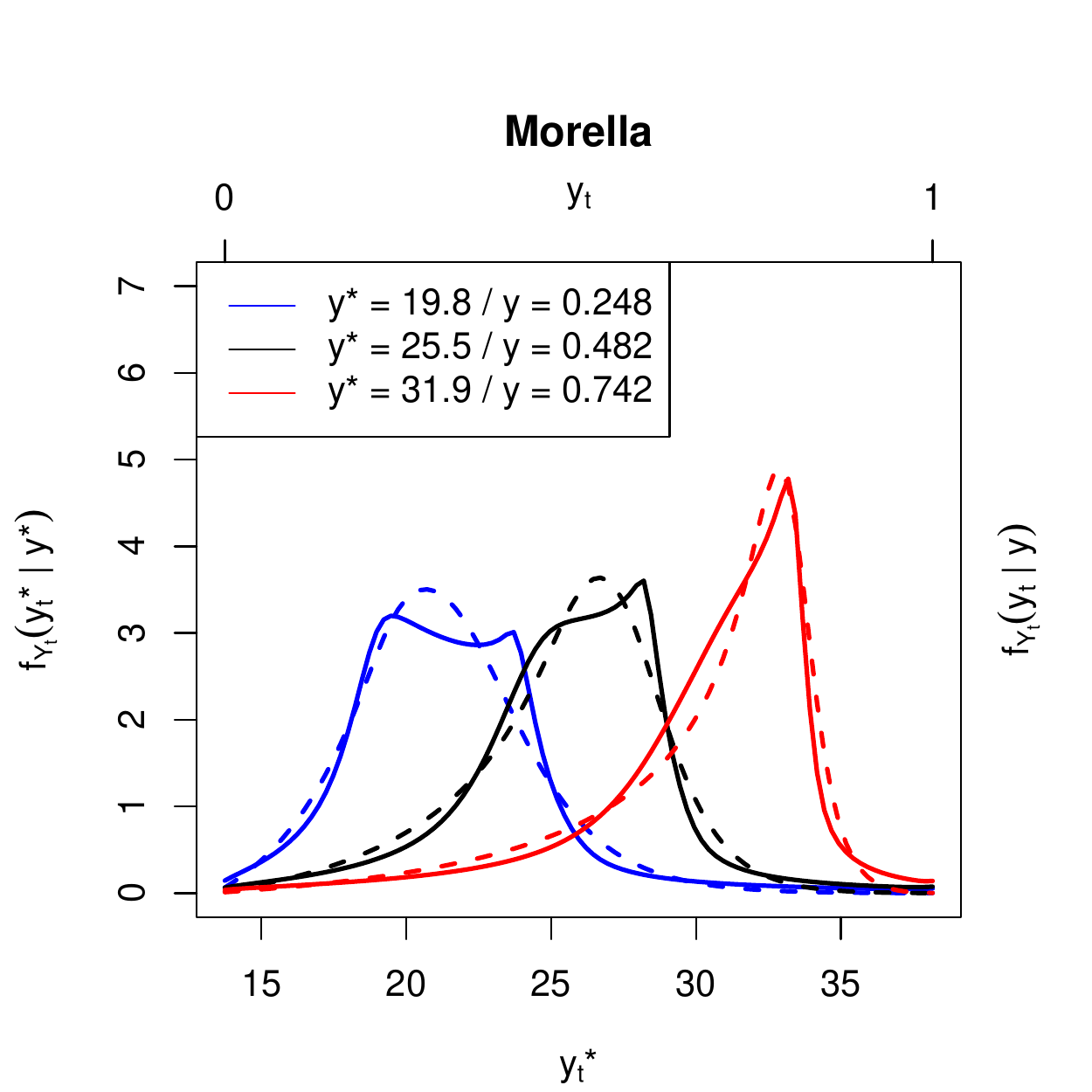} \\
\includegraphics[width=3cm]{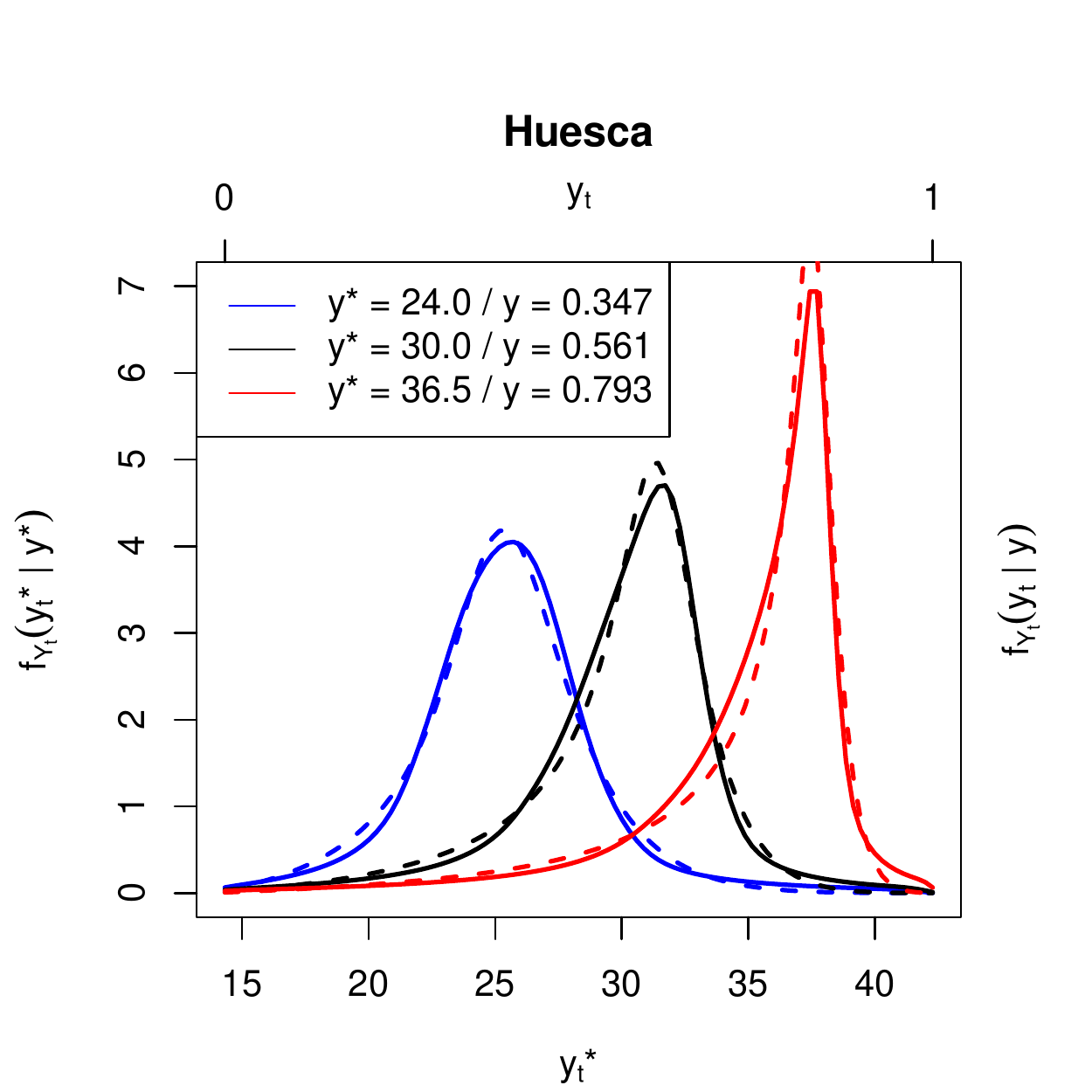}
\includegraphics[width=3cm]{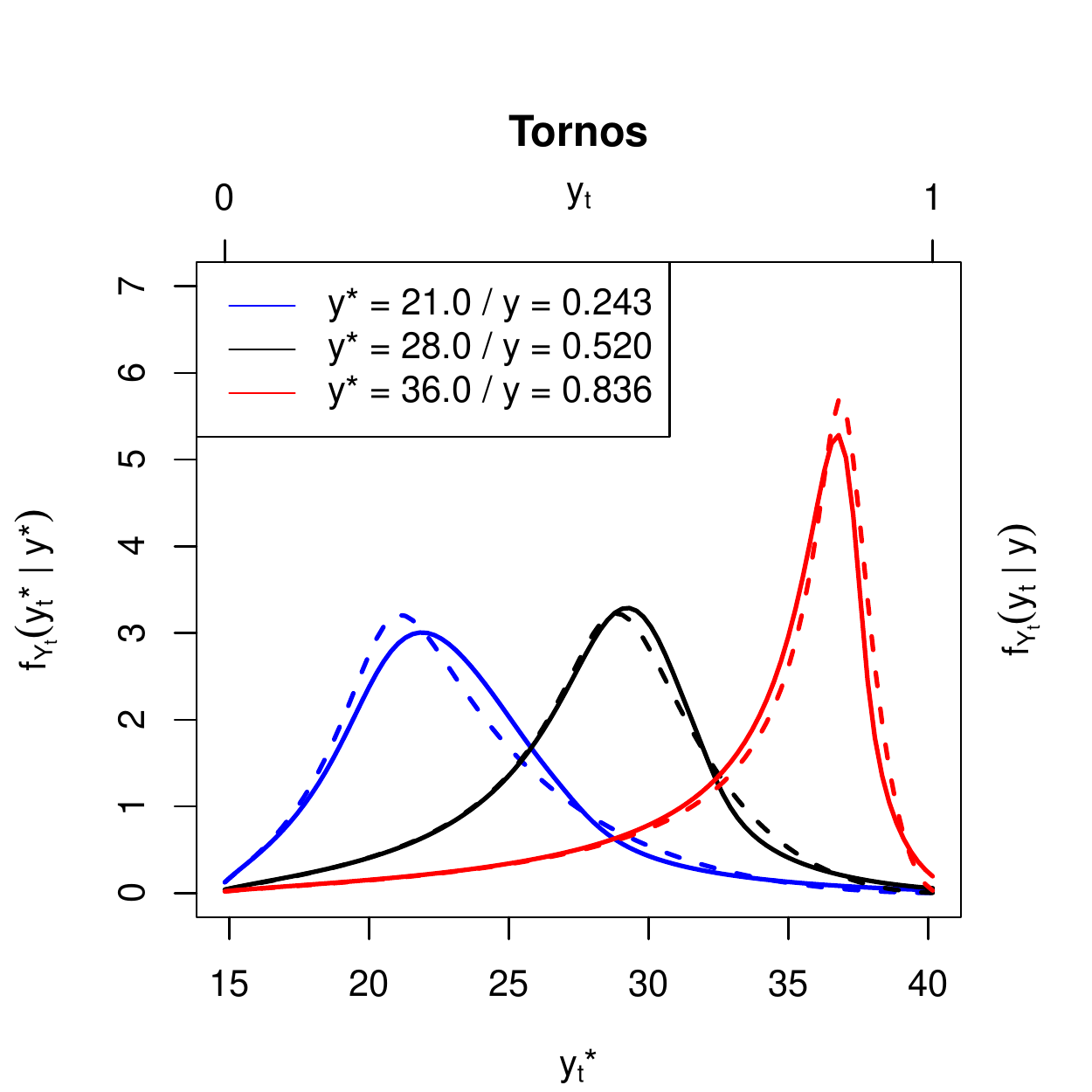}
\includegraphics[width=3cm]{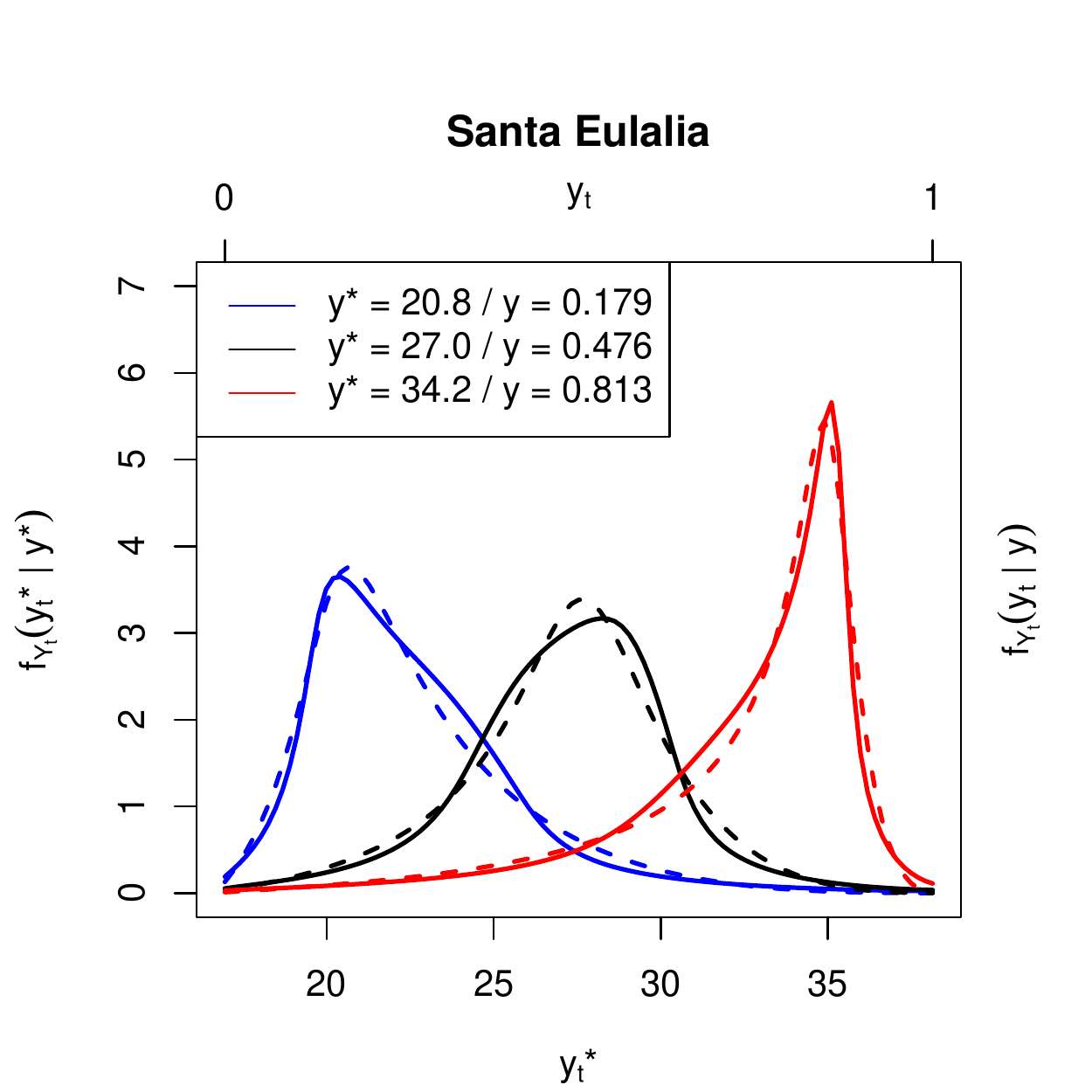}
\includegraphics[width=3cm]{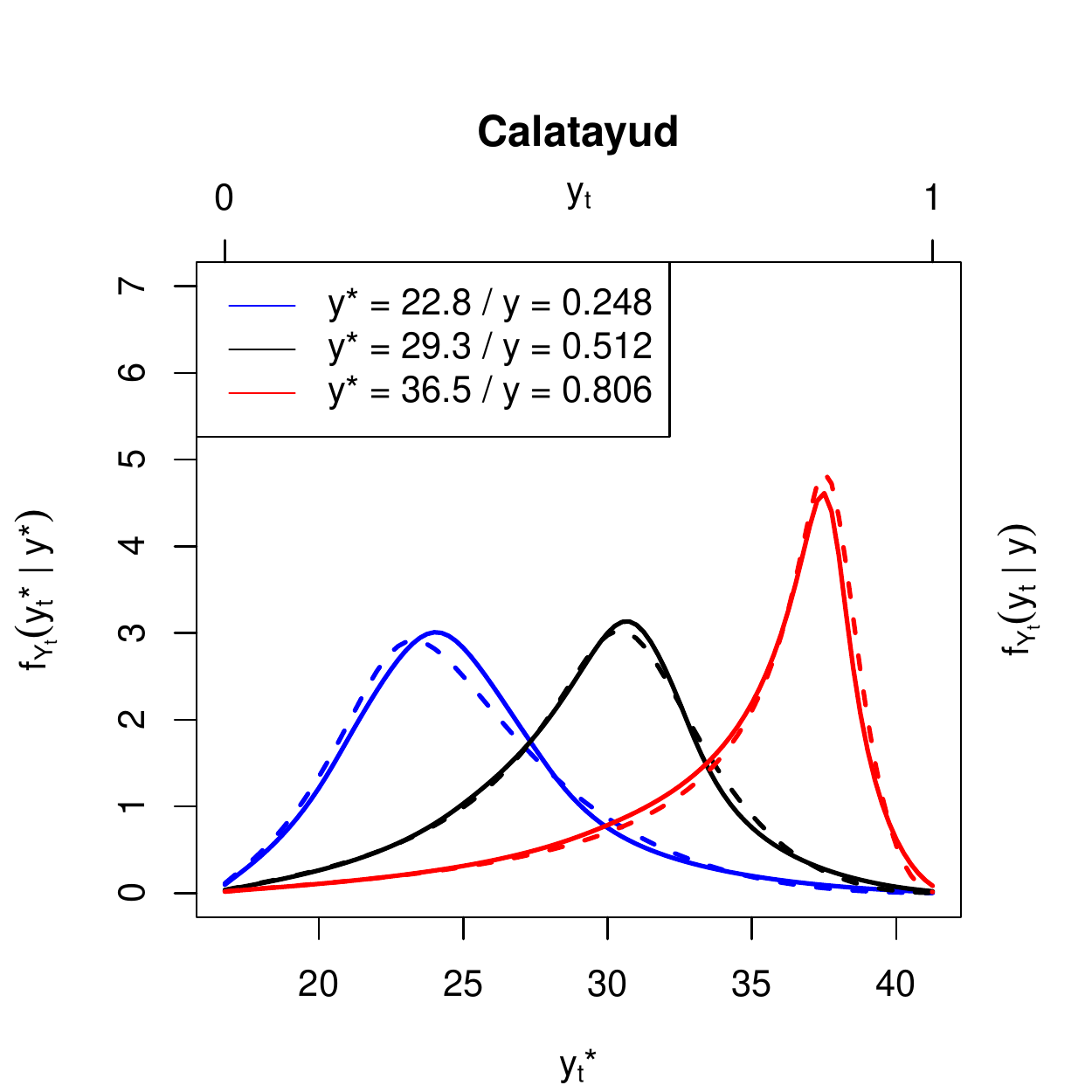} \\
\includegraphics[width=3cm]{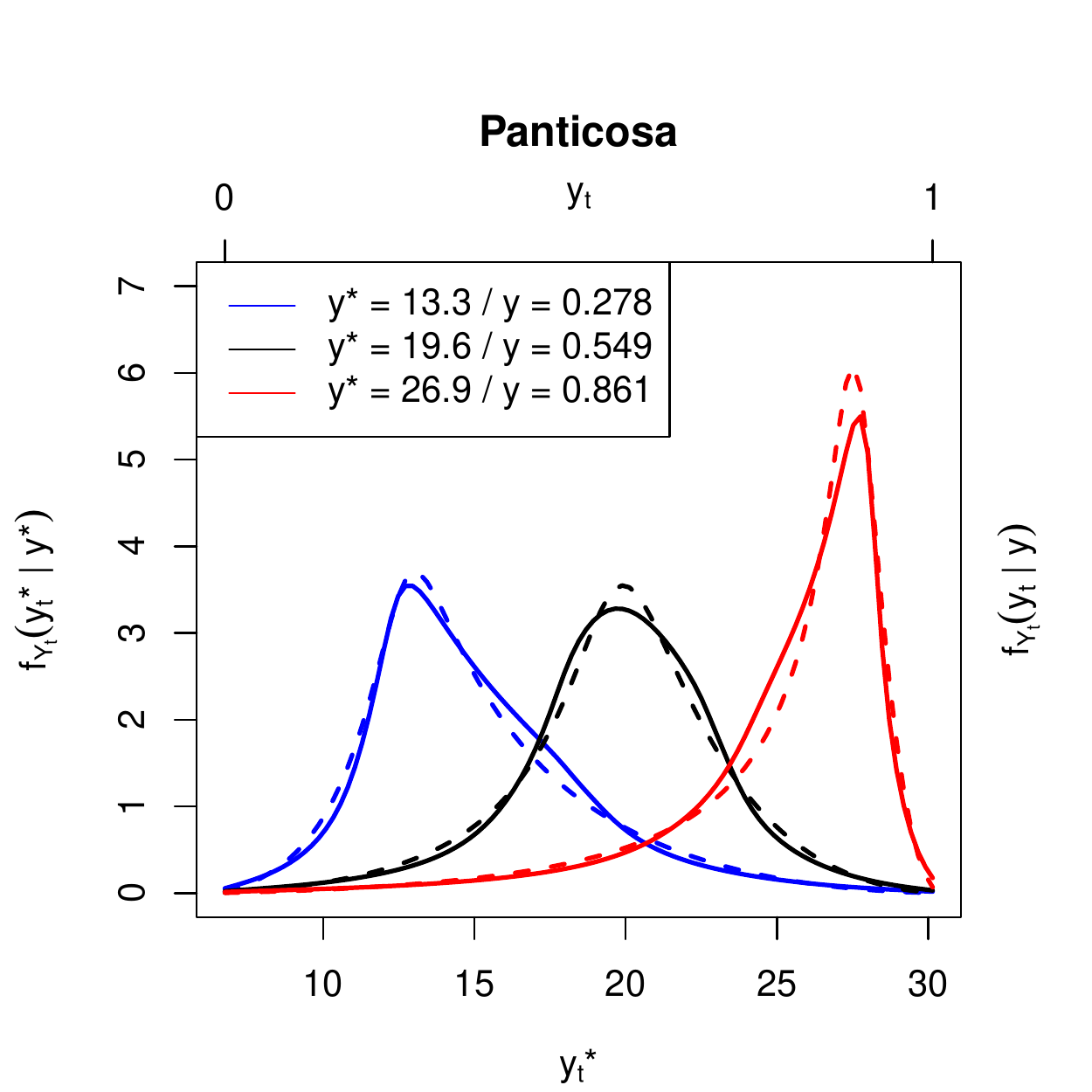}
\includegraphics[width=3cm]{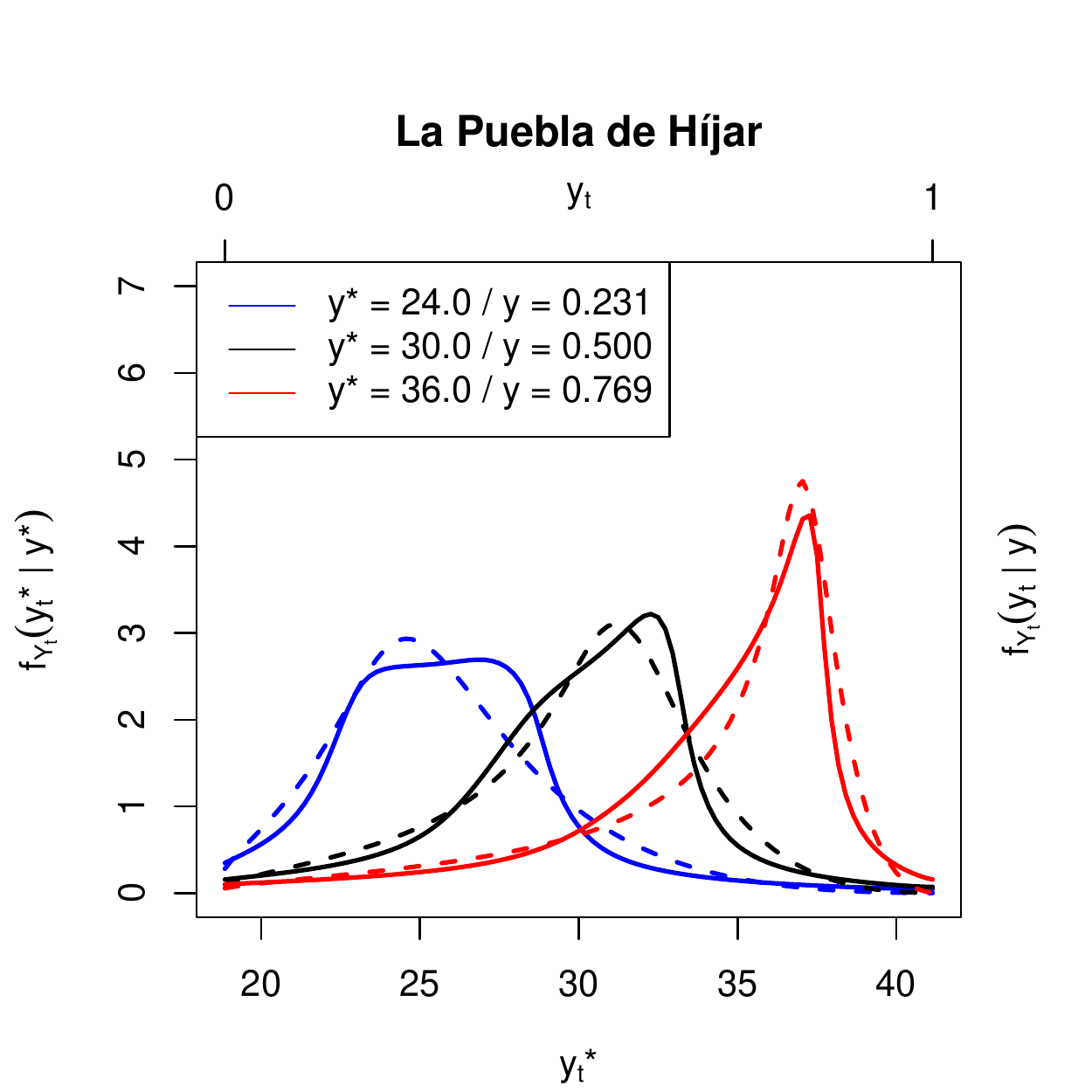}
\includegraphics[width=3cm]{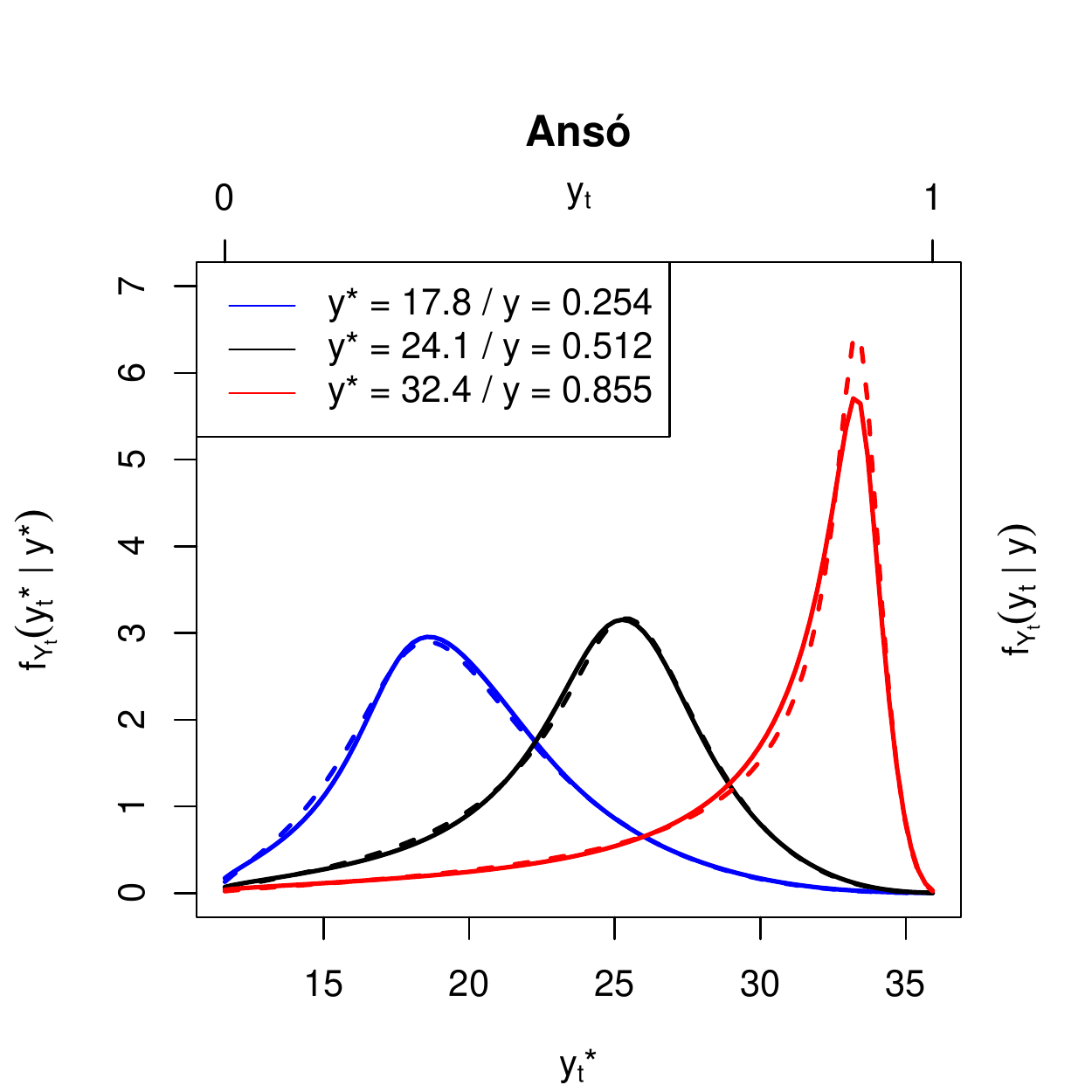}
\includegraphics[width=3cm]{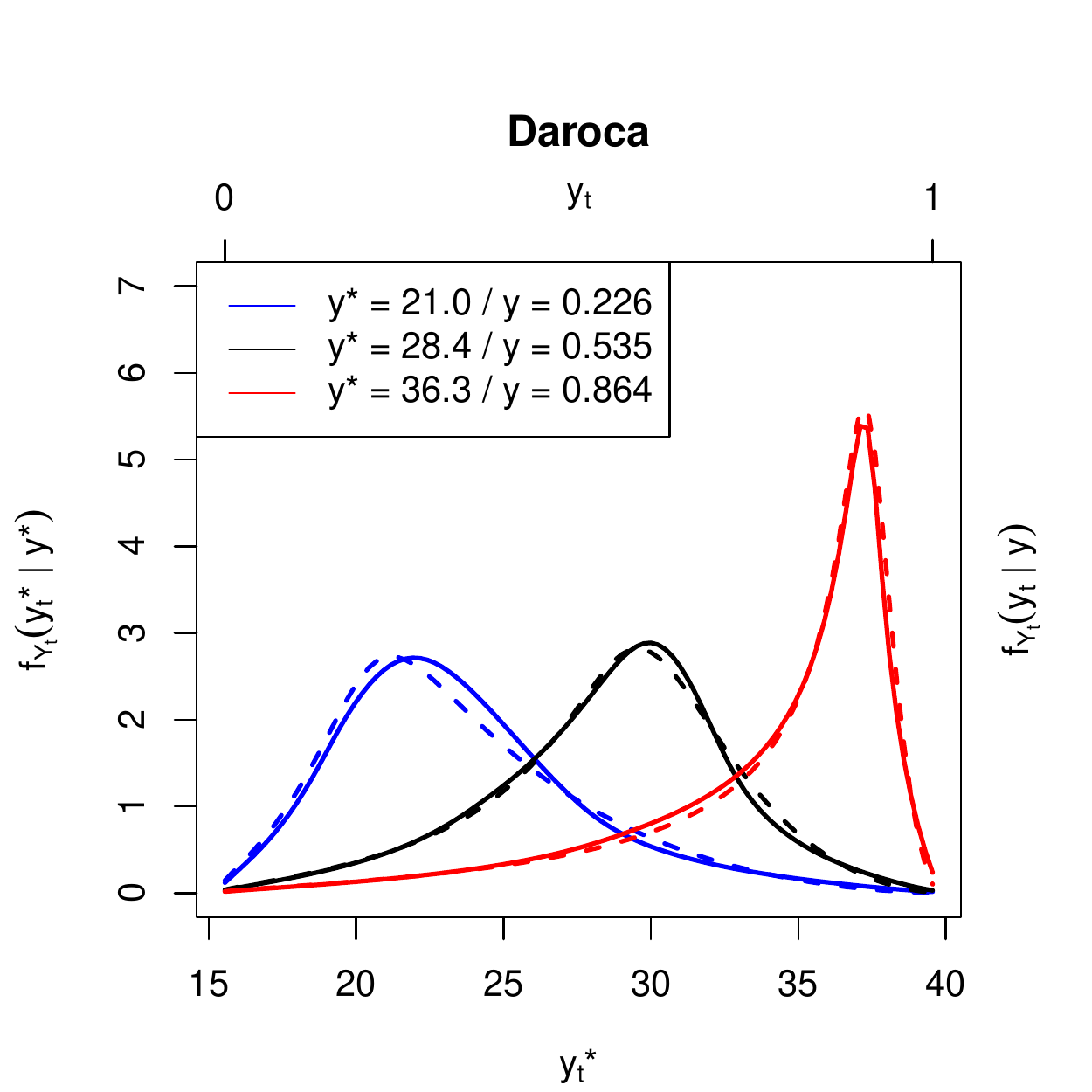} \\
\includegraphics[width=3cm]{densityQAR1s13.pdf}
\includegraphics[width=3cm]{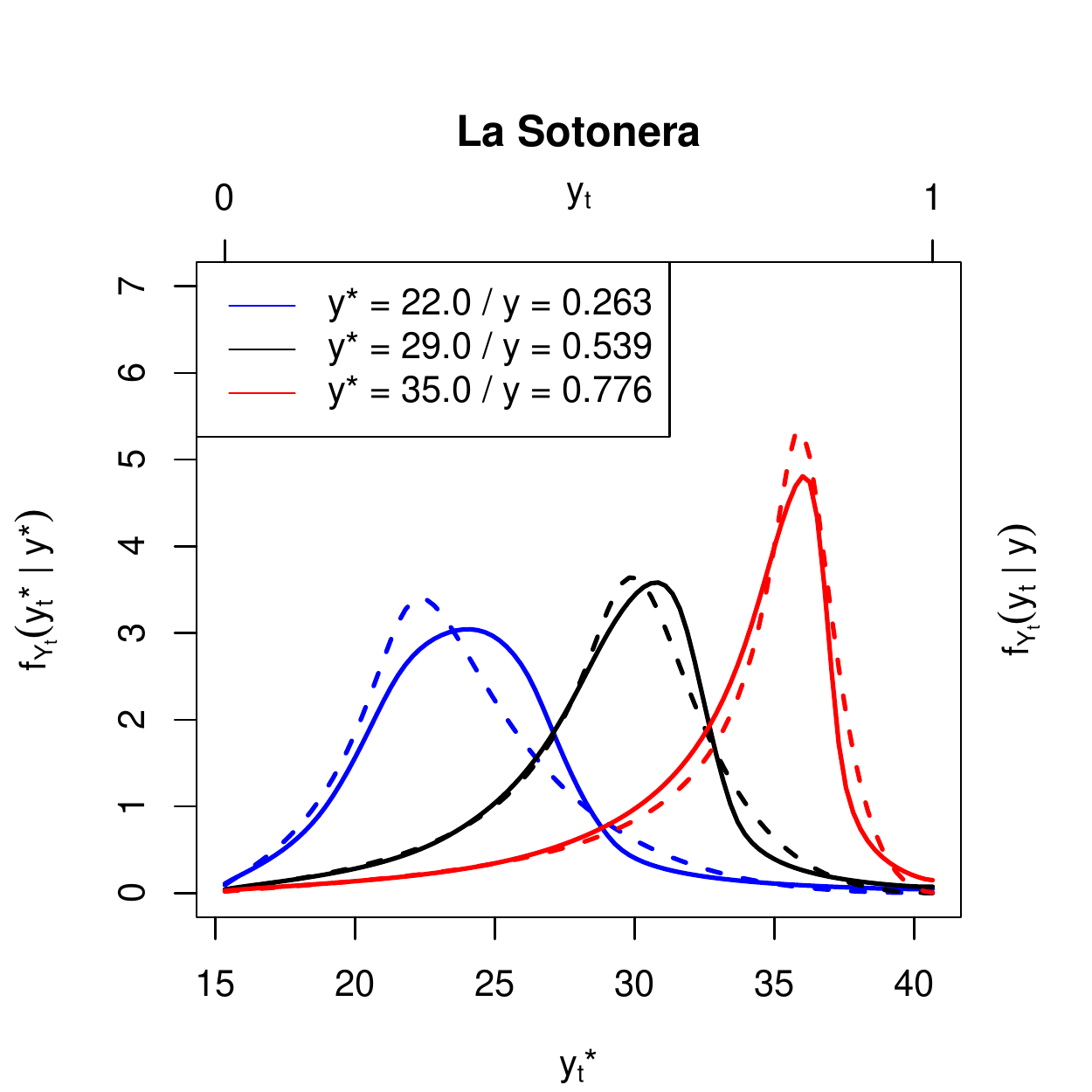}
\includegraphics[width=3cm]{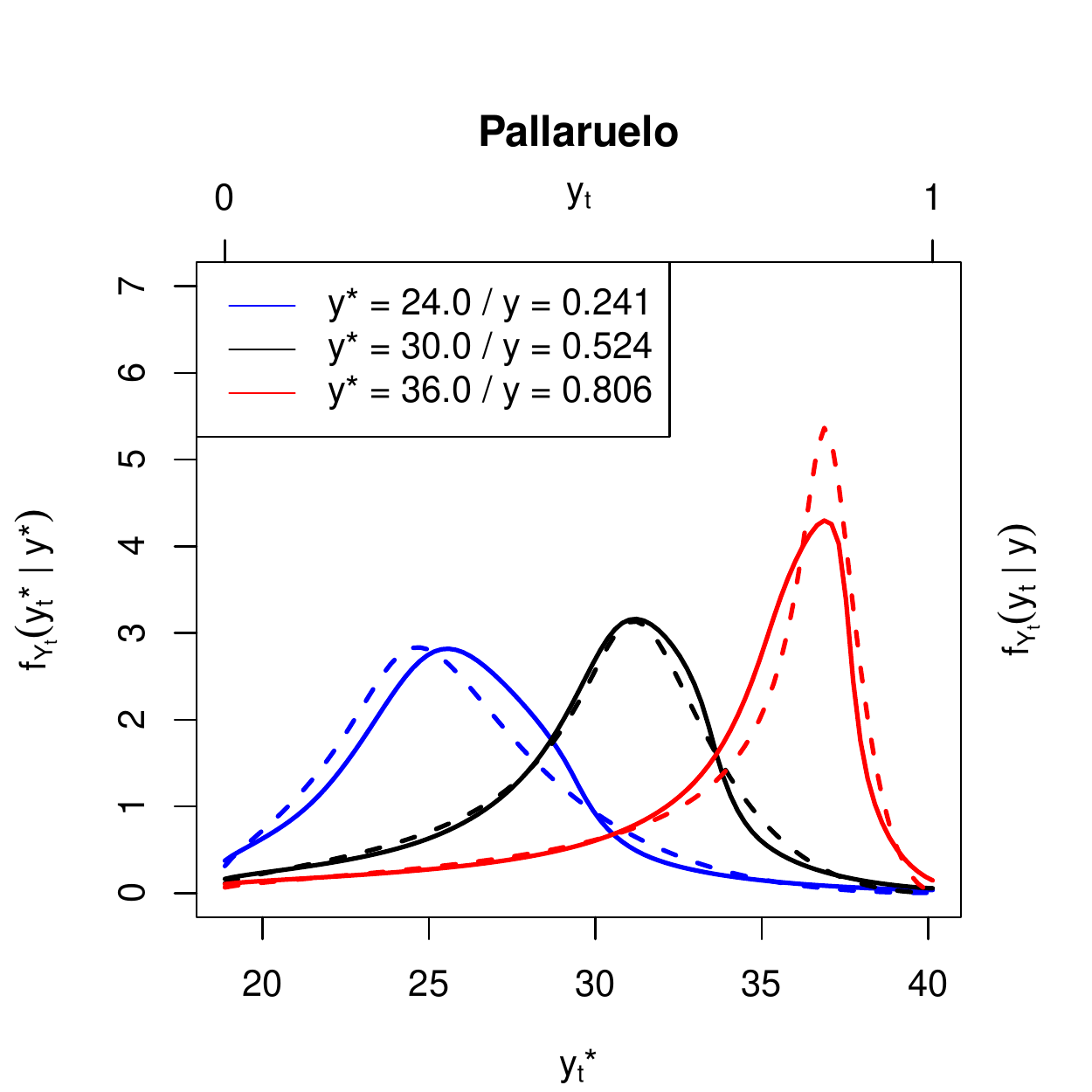}
\includegraphics[width=3cm]{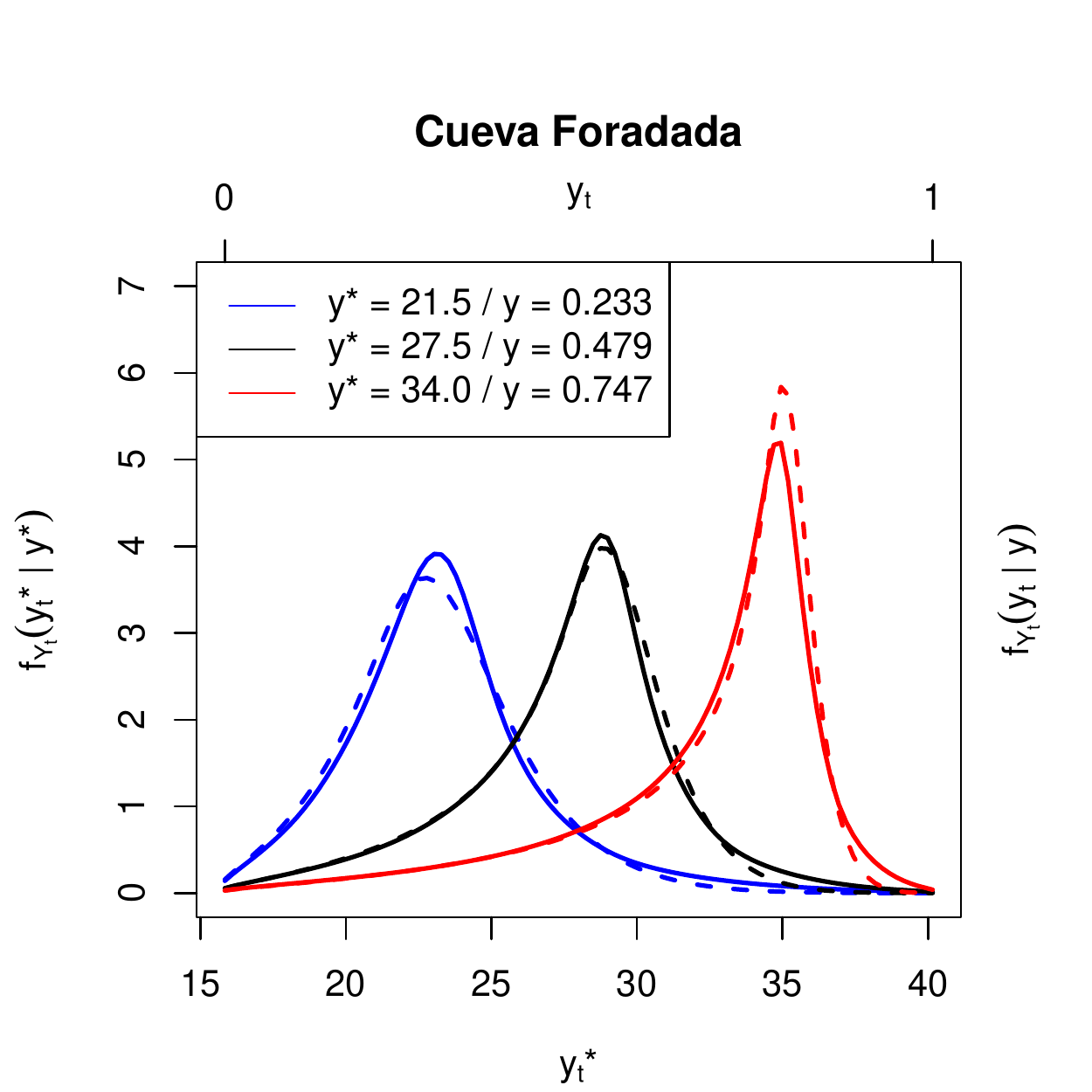} \\
\includegraphics[width=3cm]{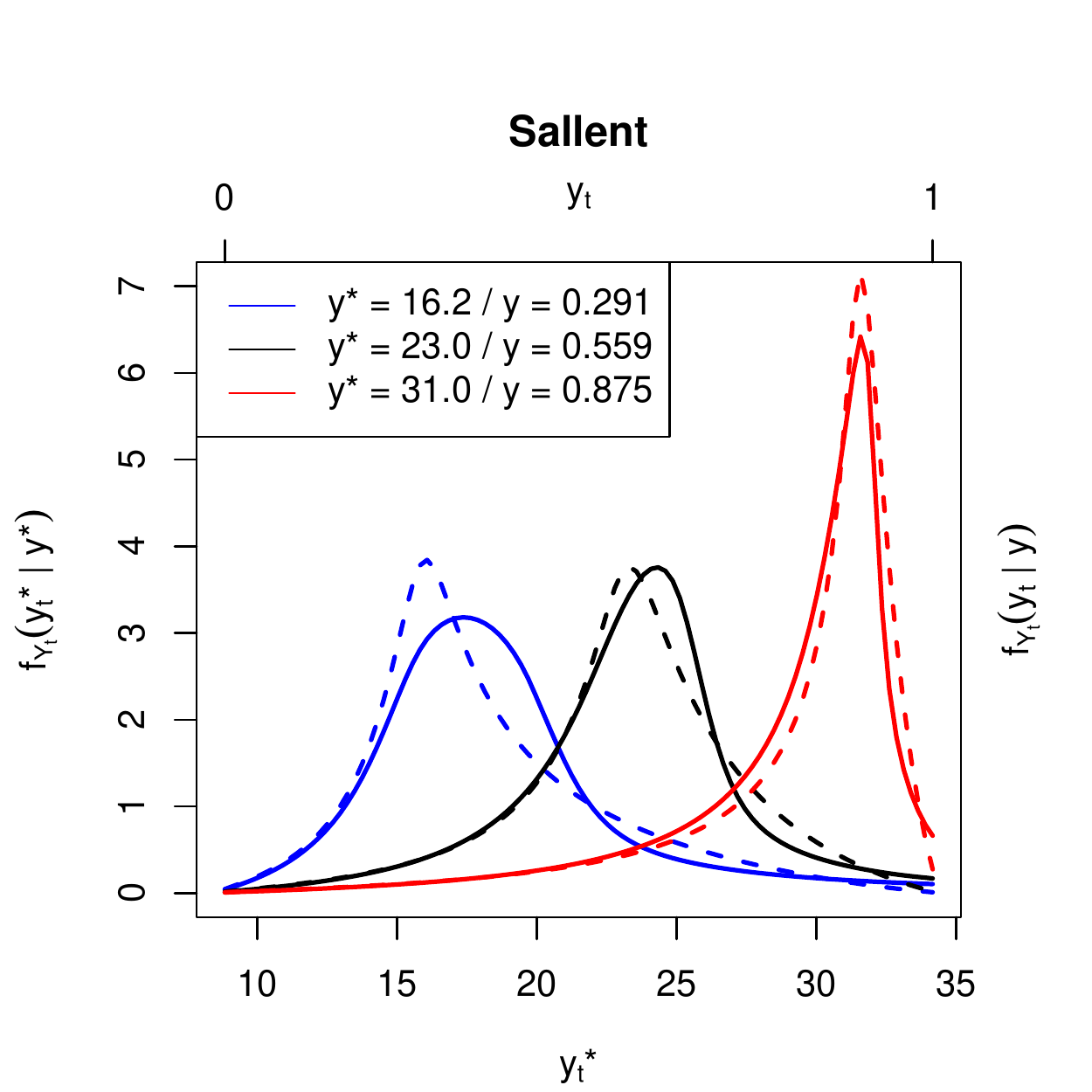}
\includegraphics[width=3cm]{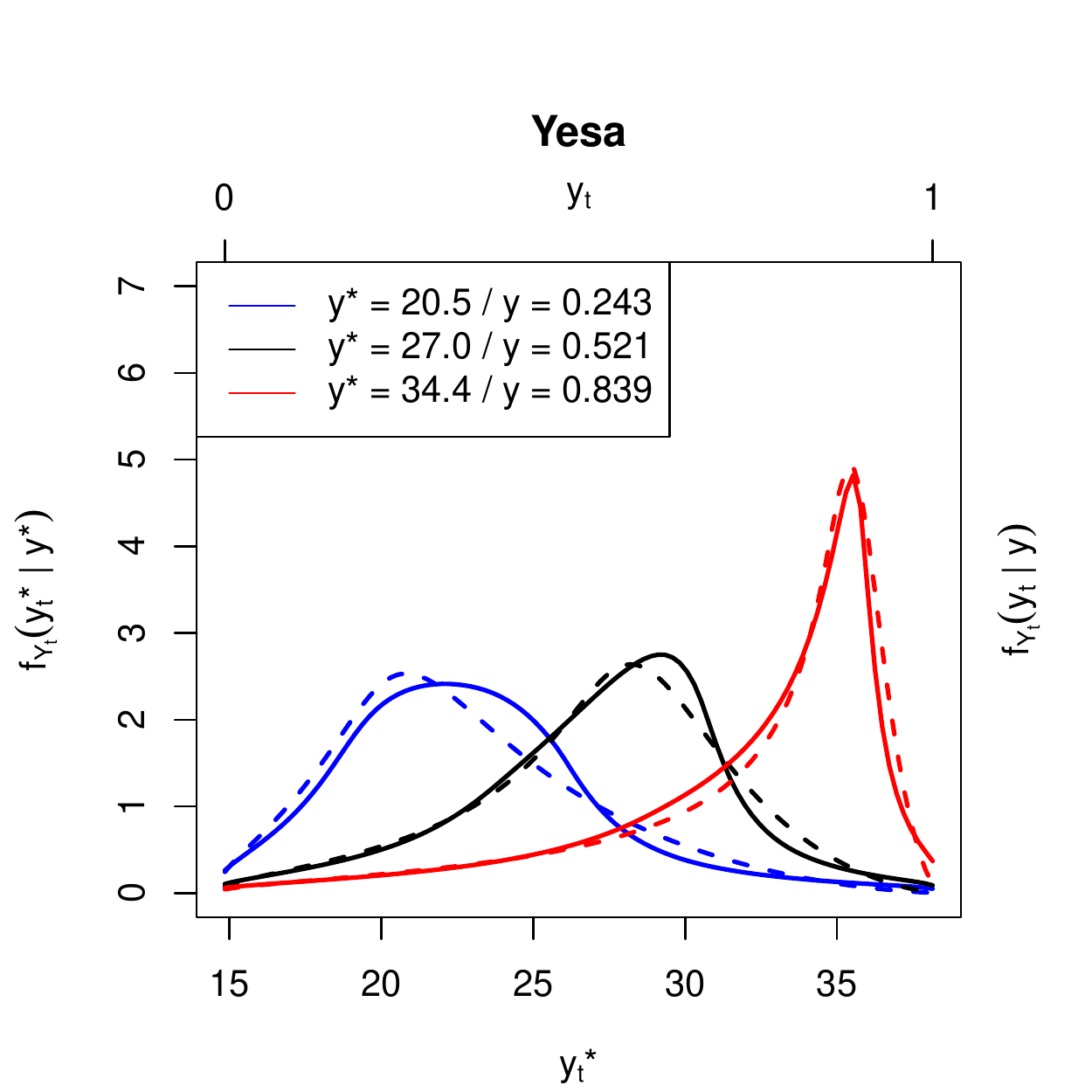}
\caption{Posterior mean of the density function $f_{Y_{t}}(y_t \mid y)$ for QAR1K1 (dashed) and QAR1K2 (solid).  Here, $y$ is the empirical marginal quantile for $\tau=0.1$ (blue), $0.5$ (black), $0.9$ (red). All locations, MJJAS, 2015. }
\label{fig:density}
\end{figure}

\clearpage

\subsection{The QAR(2) Case}

\begin{figure}[!ht]
\centering
\includegraphics[width=3cm]{thetaQAR2K1s1.pdf}
\includegraphics[width=3cm]{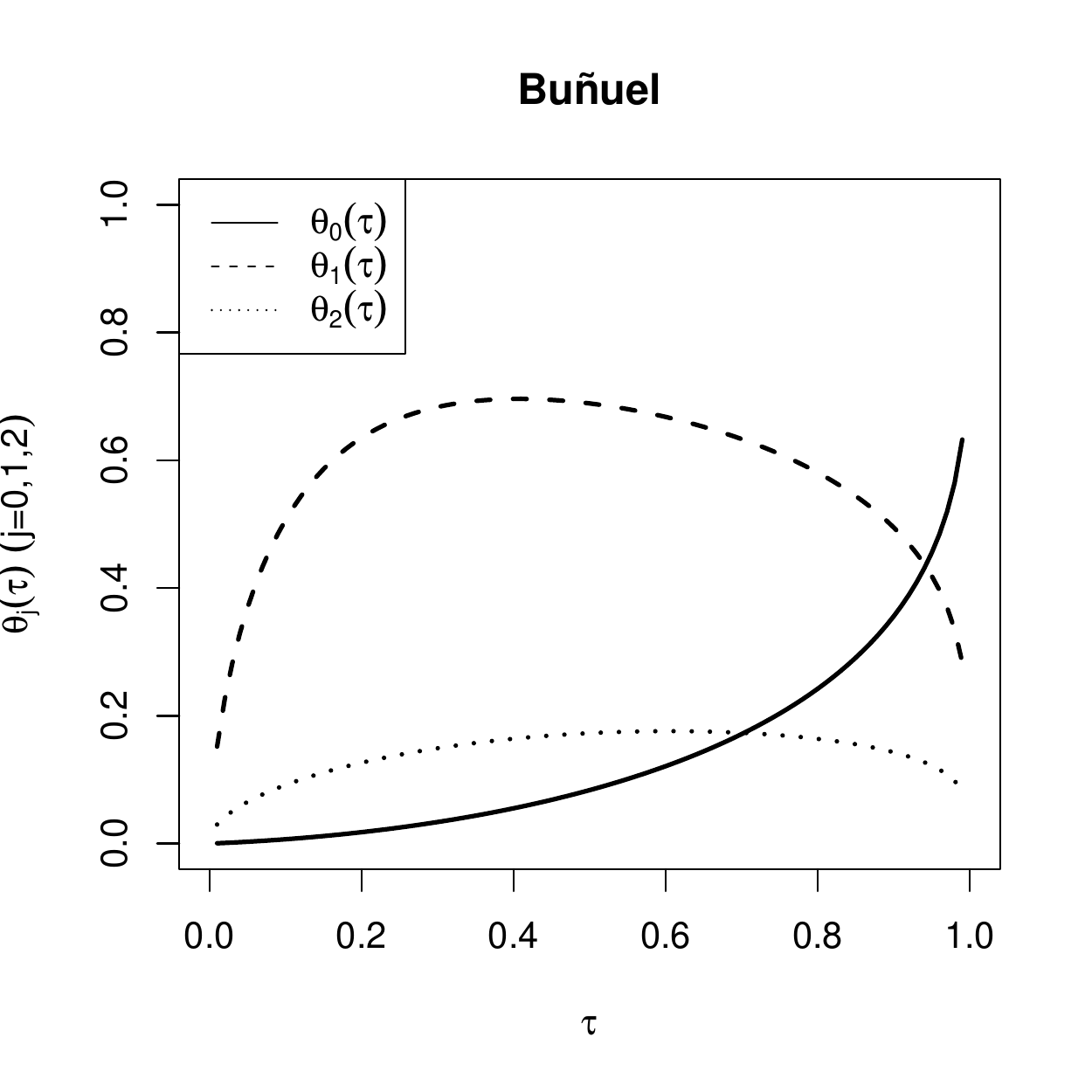}
\includegraphics[width=3cm]{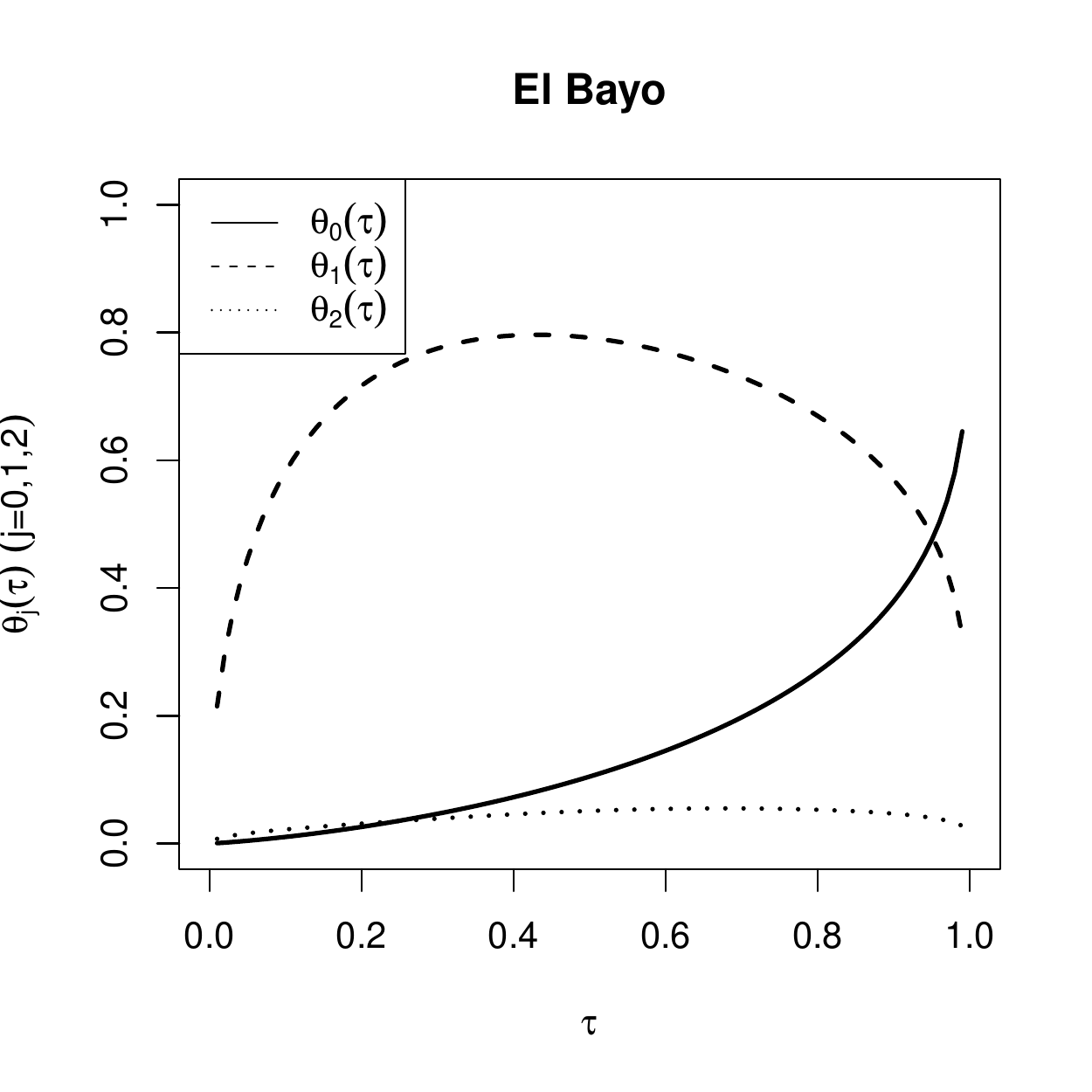}
\includegraphics[width=3cm]{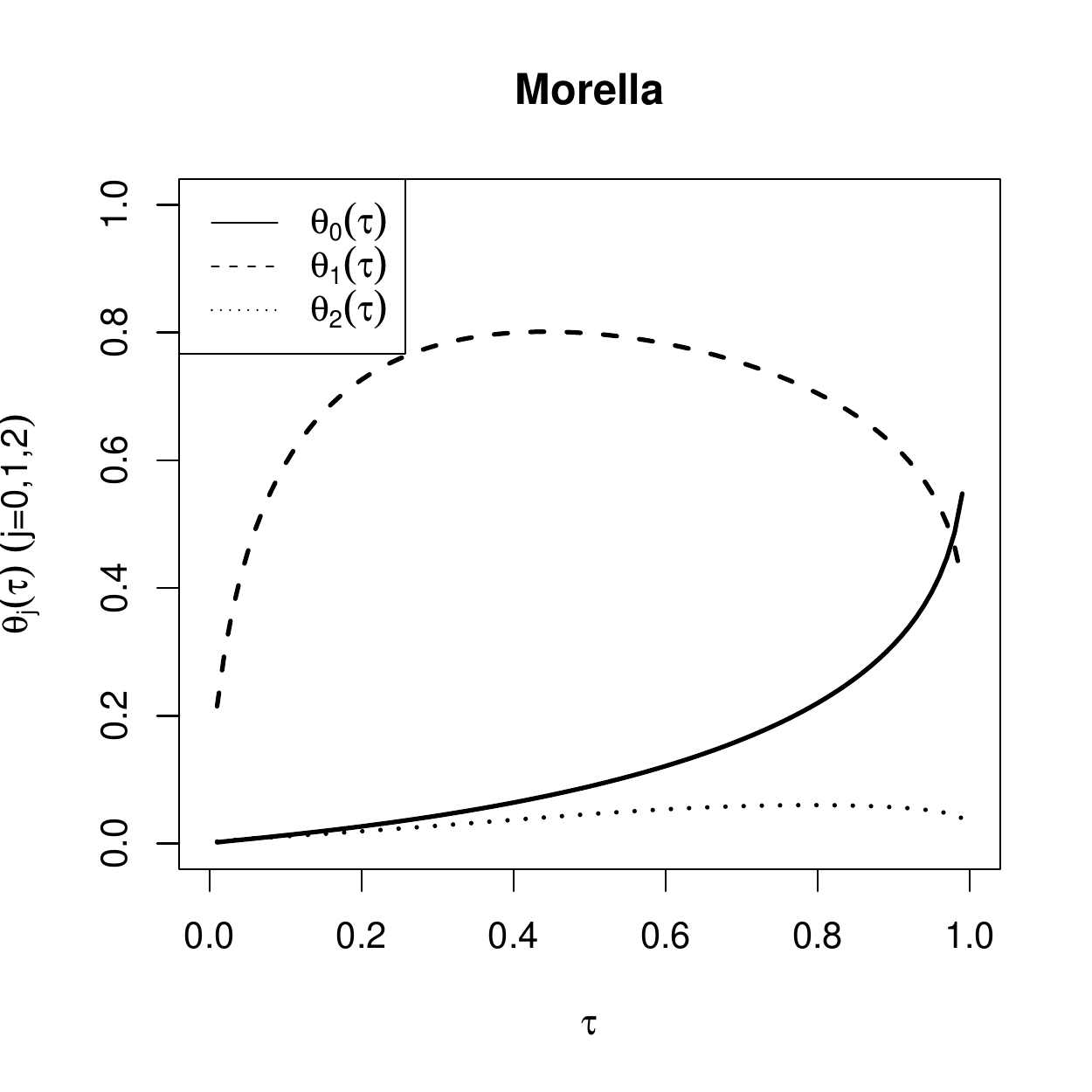} \\
\includegraphics[width=3cm]{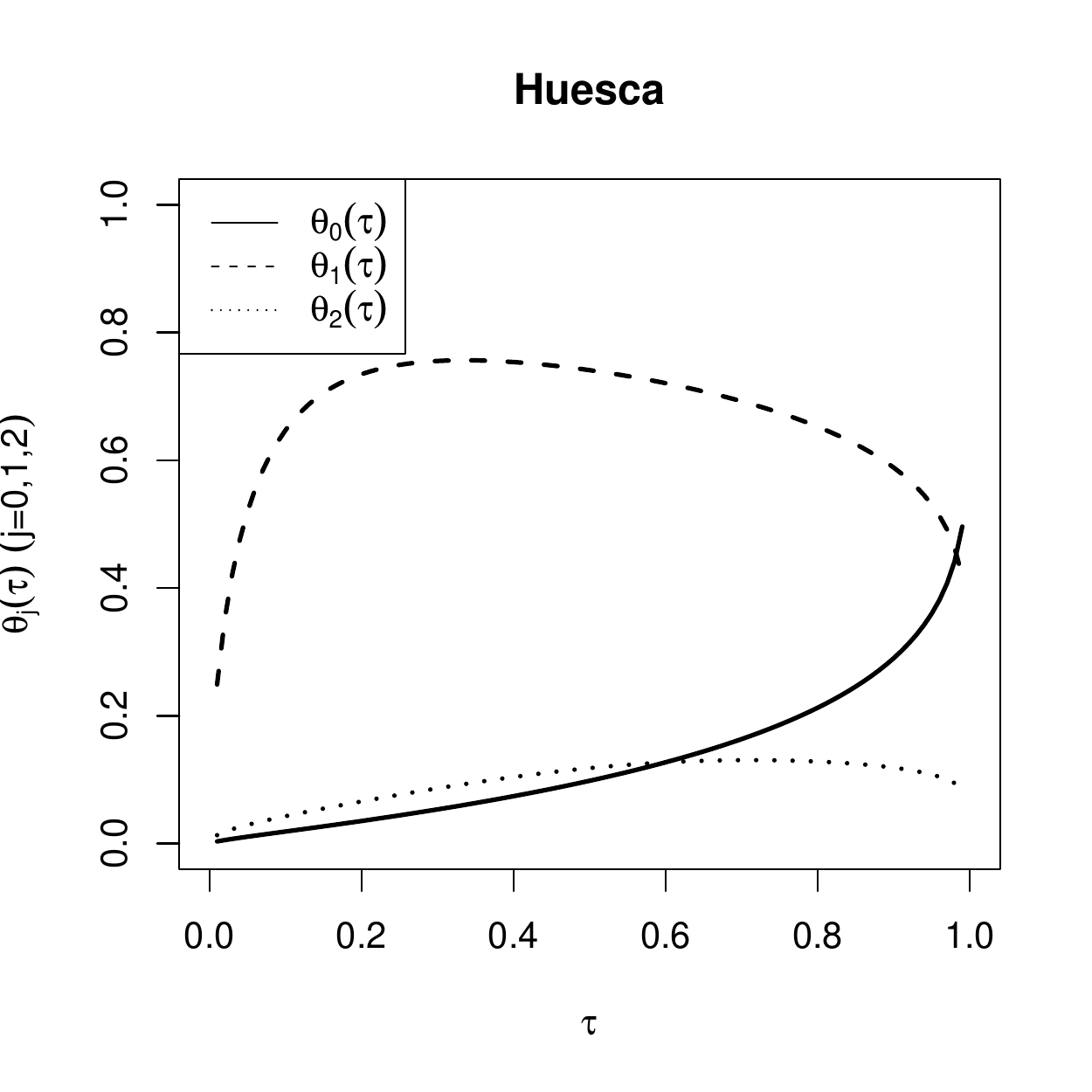}
\includegraphics[width=3cm]{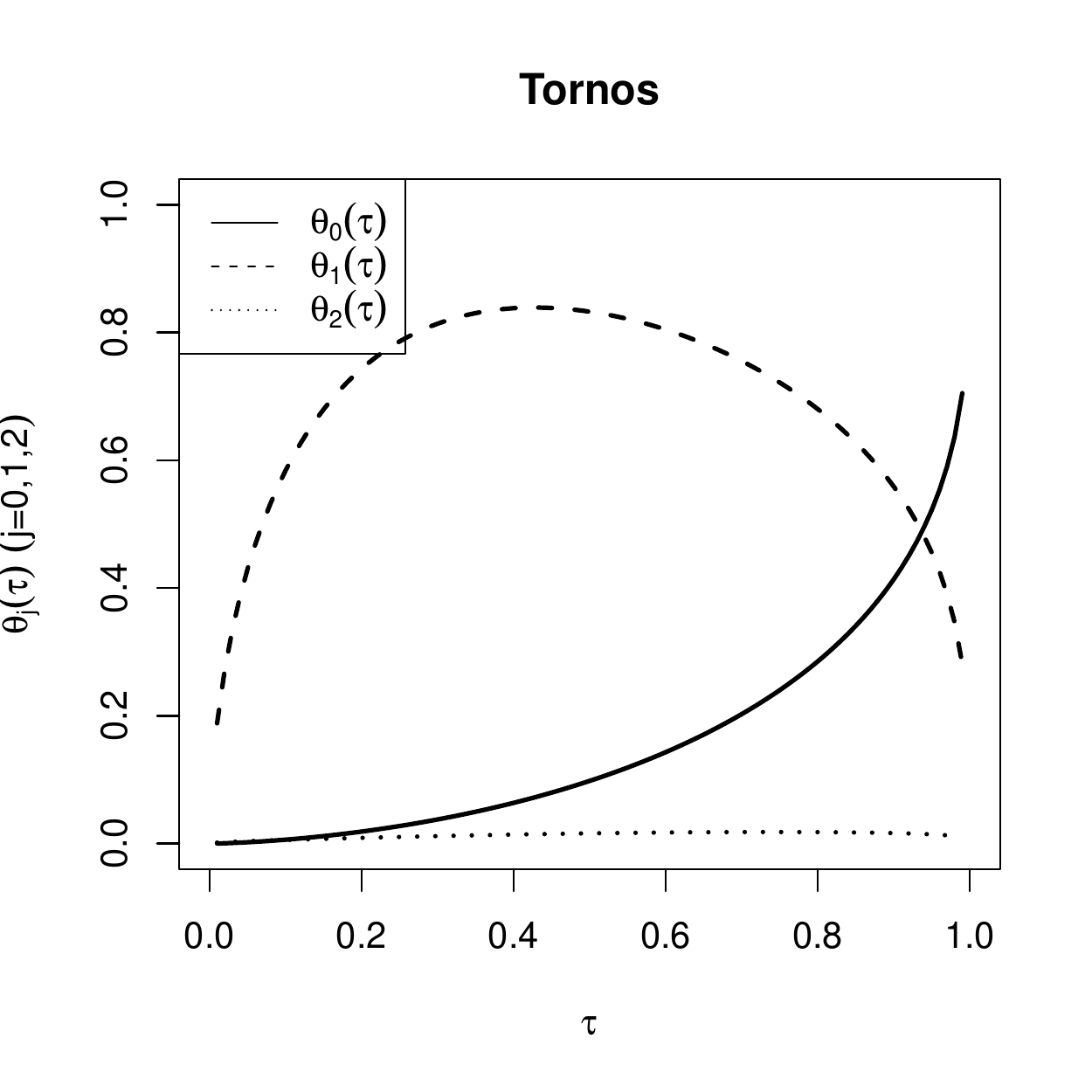}
\includegraphics[width=3cm]{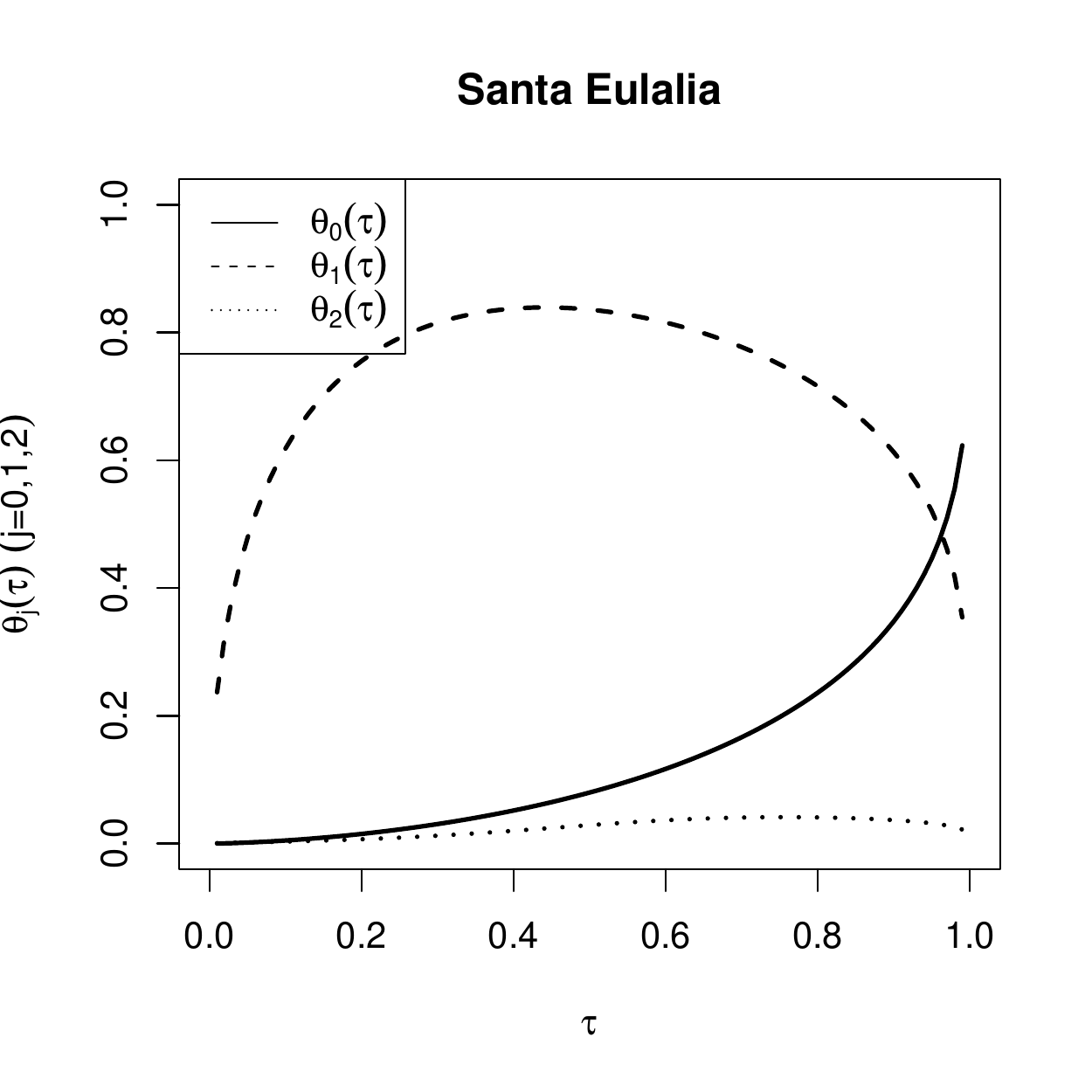}
\includegraphics[width=3cm]{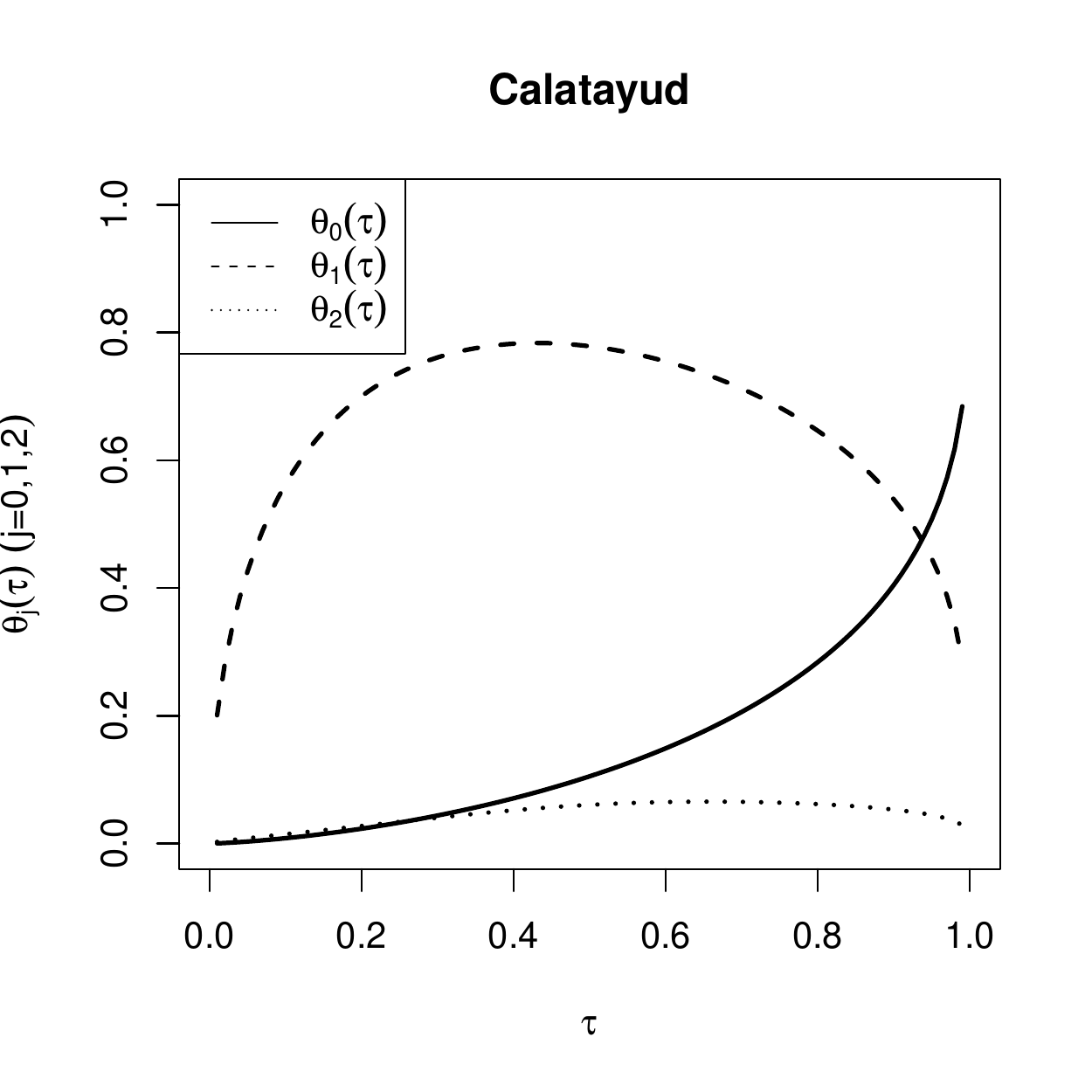} \\
\includegraphics[width=3cm]{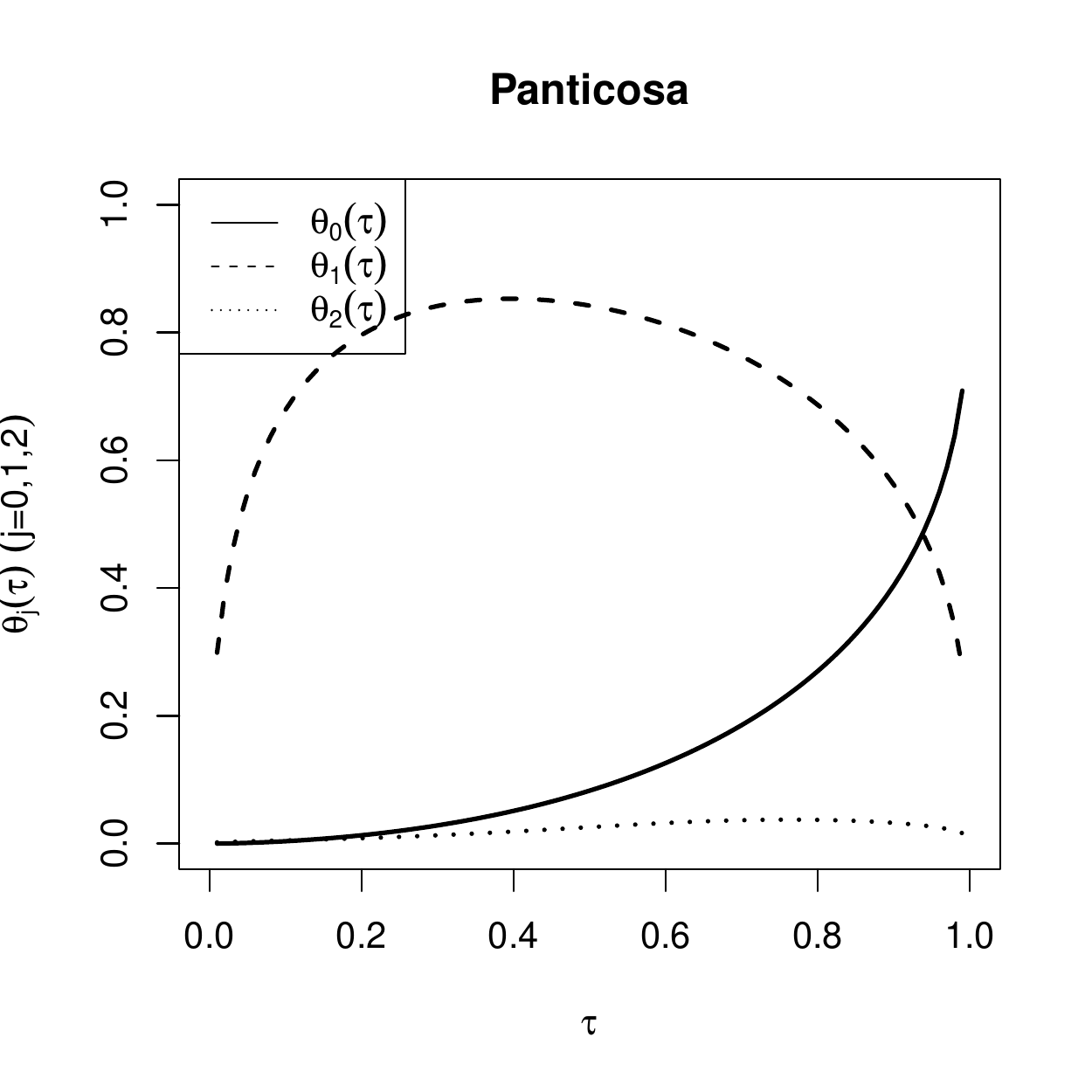}
\includegraphics[width=3cm]{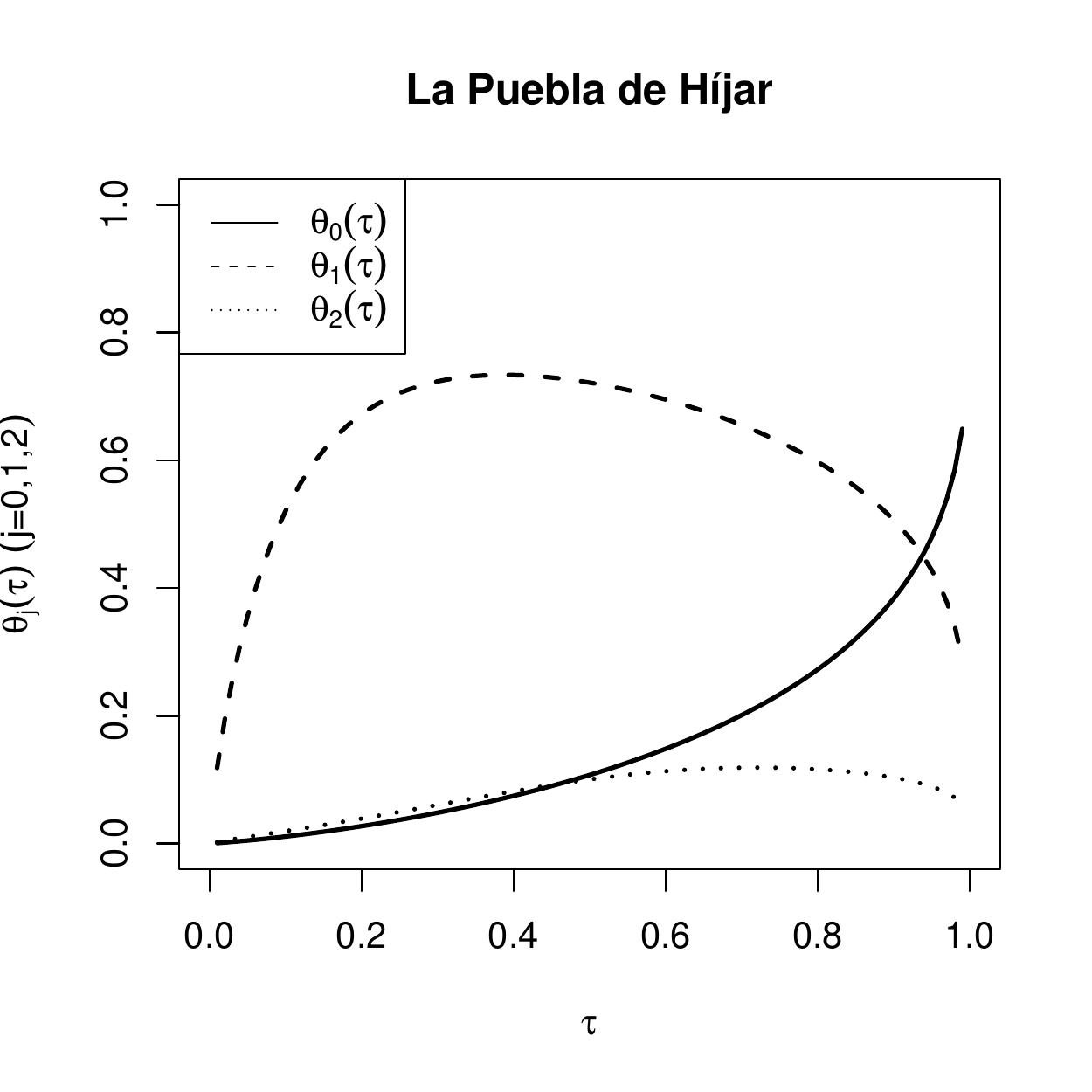}
\includegraphics[width=3cm]{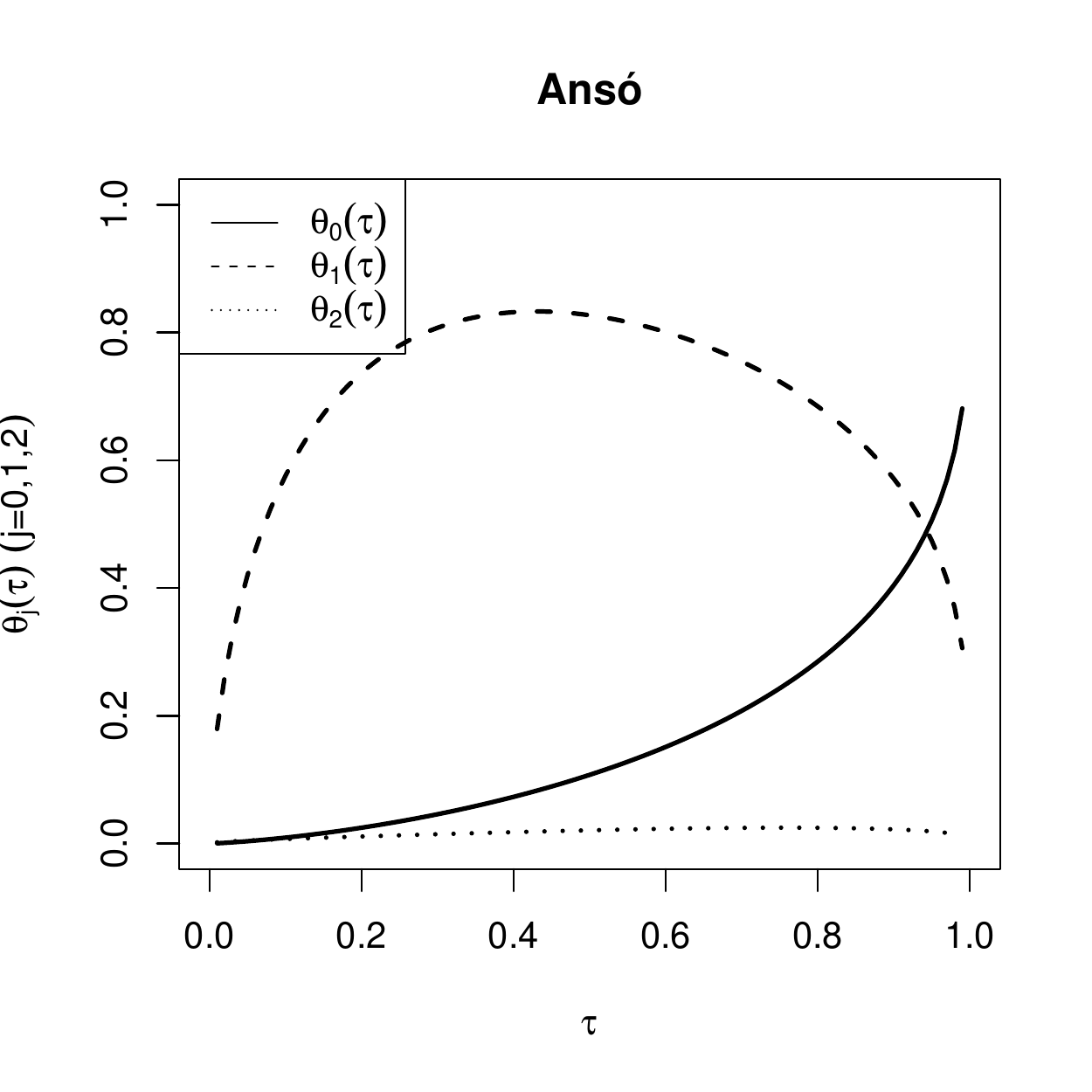}
\includegraphics[width=3cm]{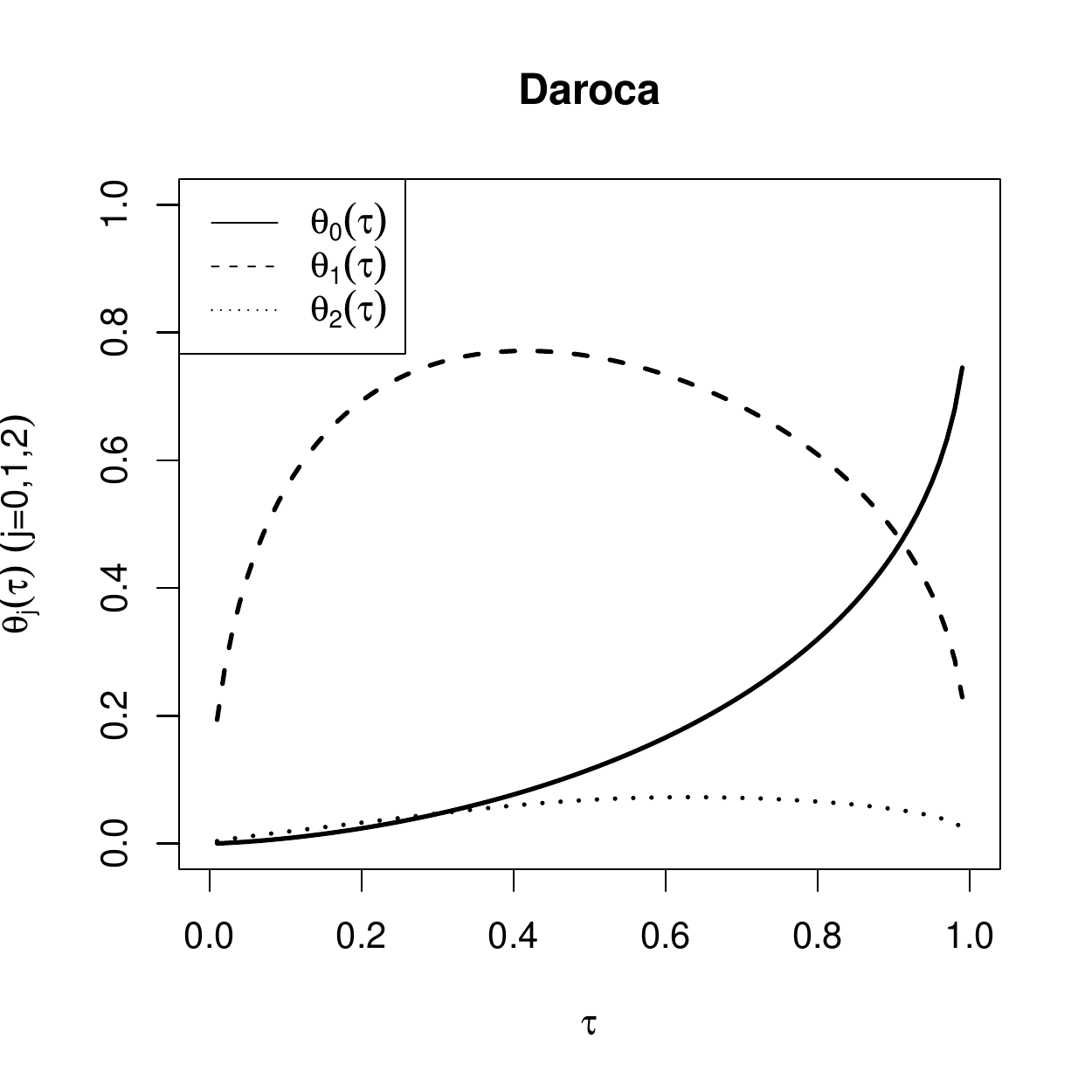} \\
\includegraphics[width=3cm]{thetaQAR2K1s13.pdf}
\includegraphics[width=3cm]{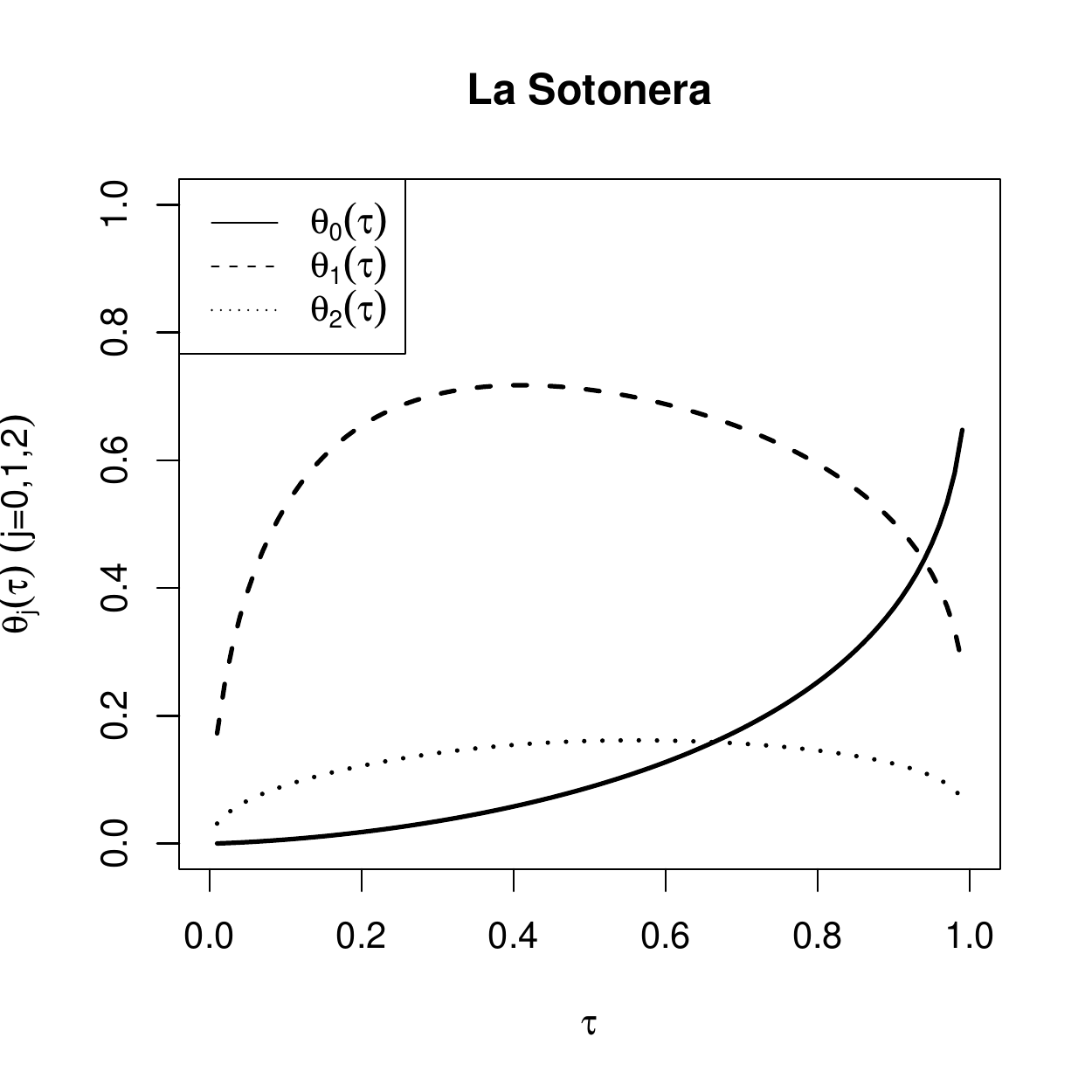}
\includegraphics[width=3cm]{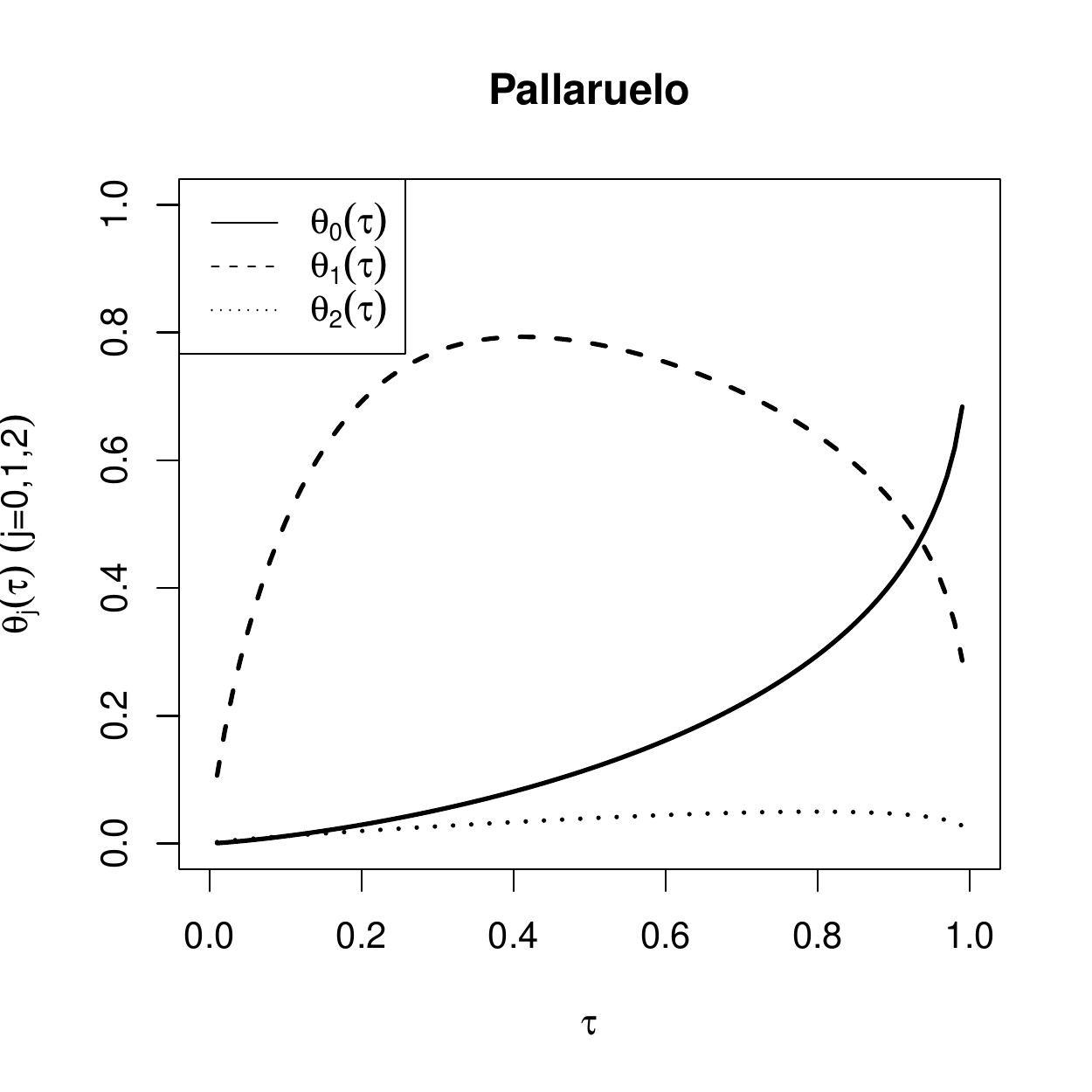}
\includegraphics[width=3cm]{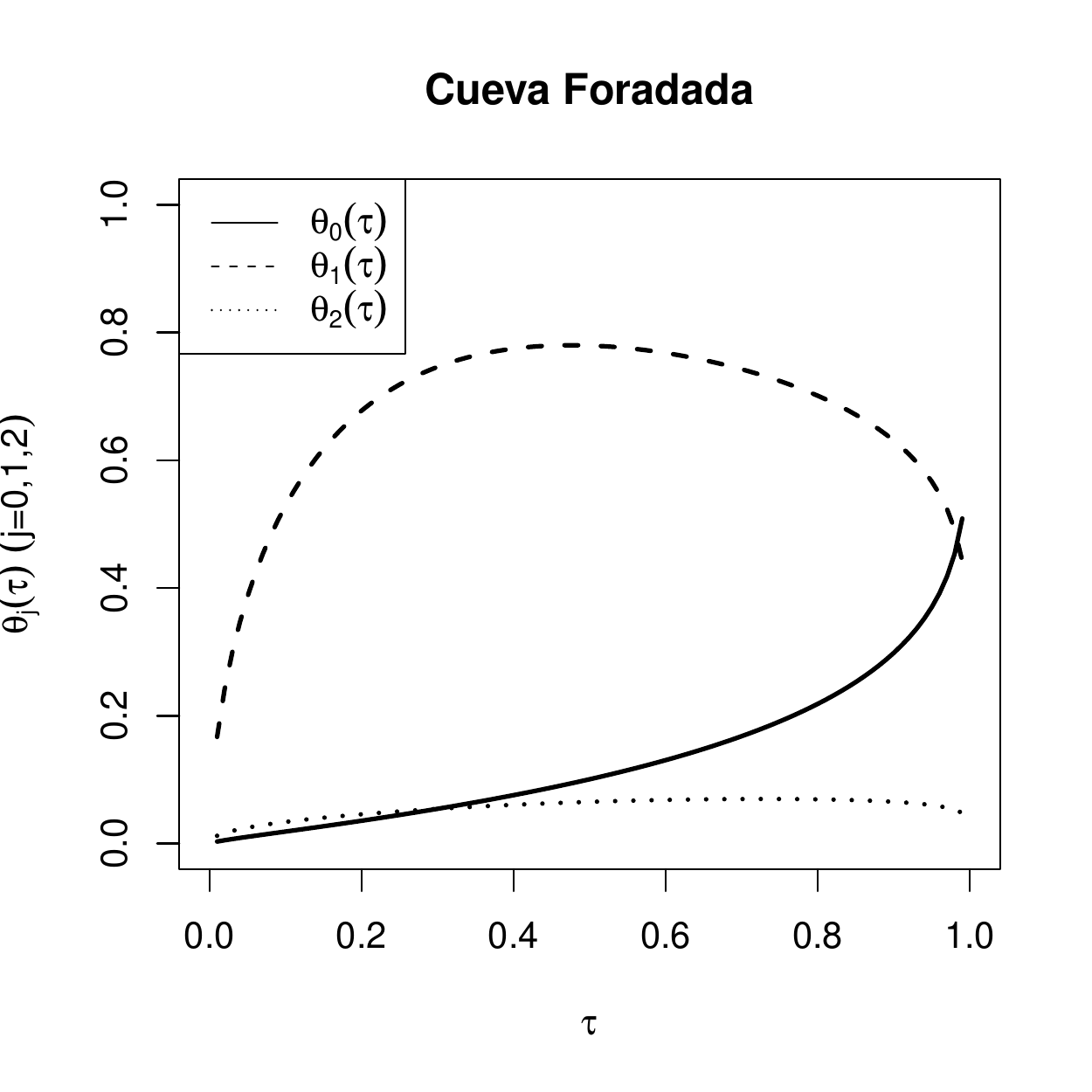} \\
\includegraphics[width=3cm]{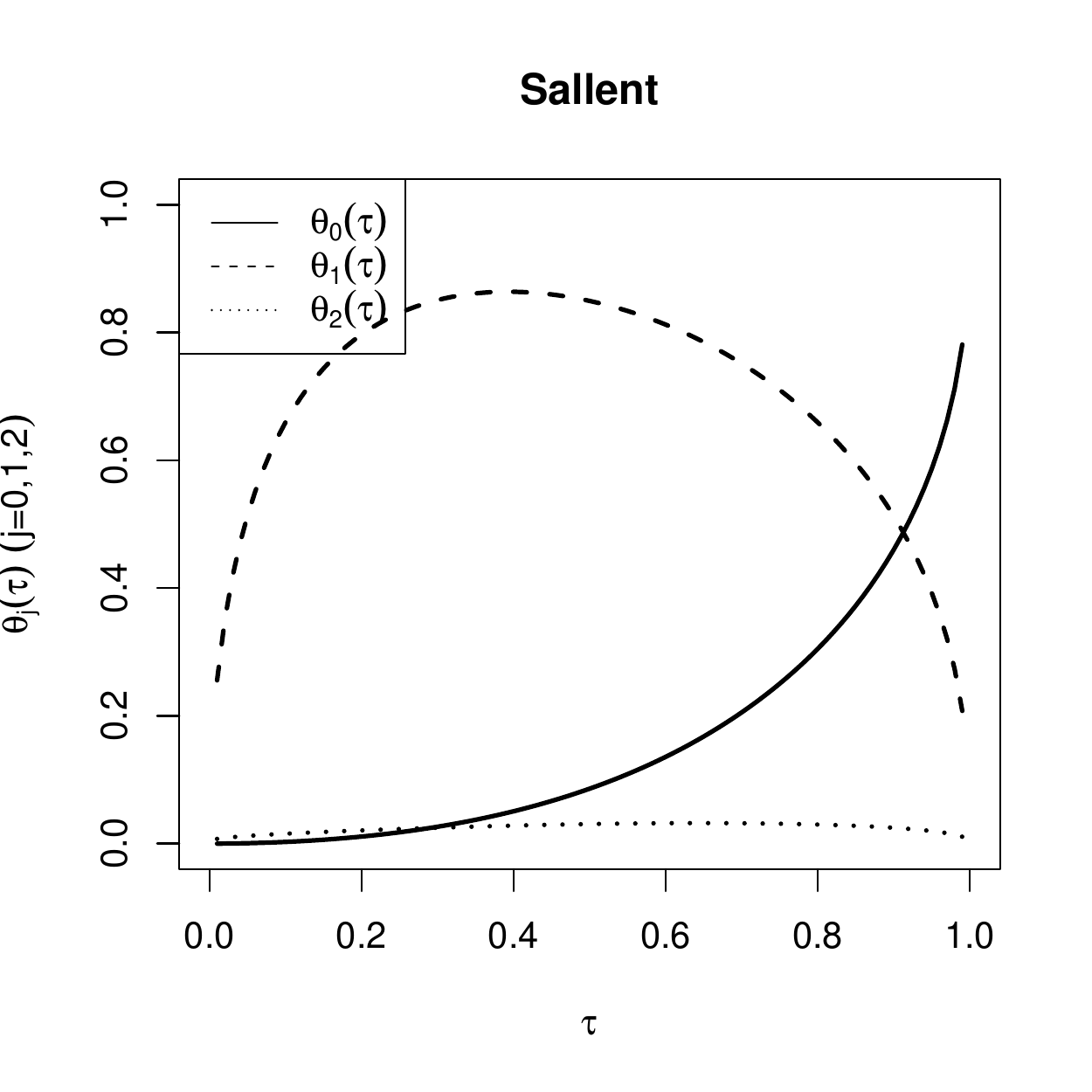}
\includegraphics[width=3cm]{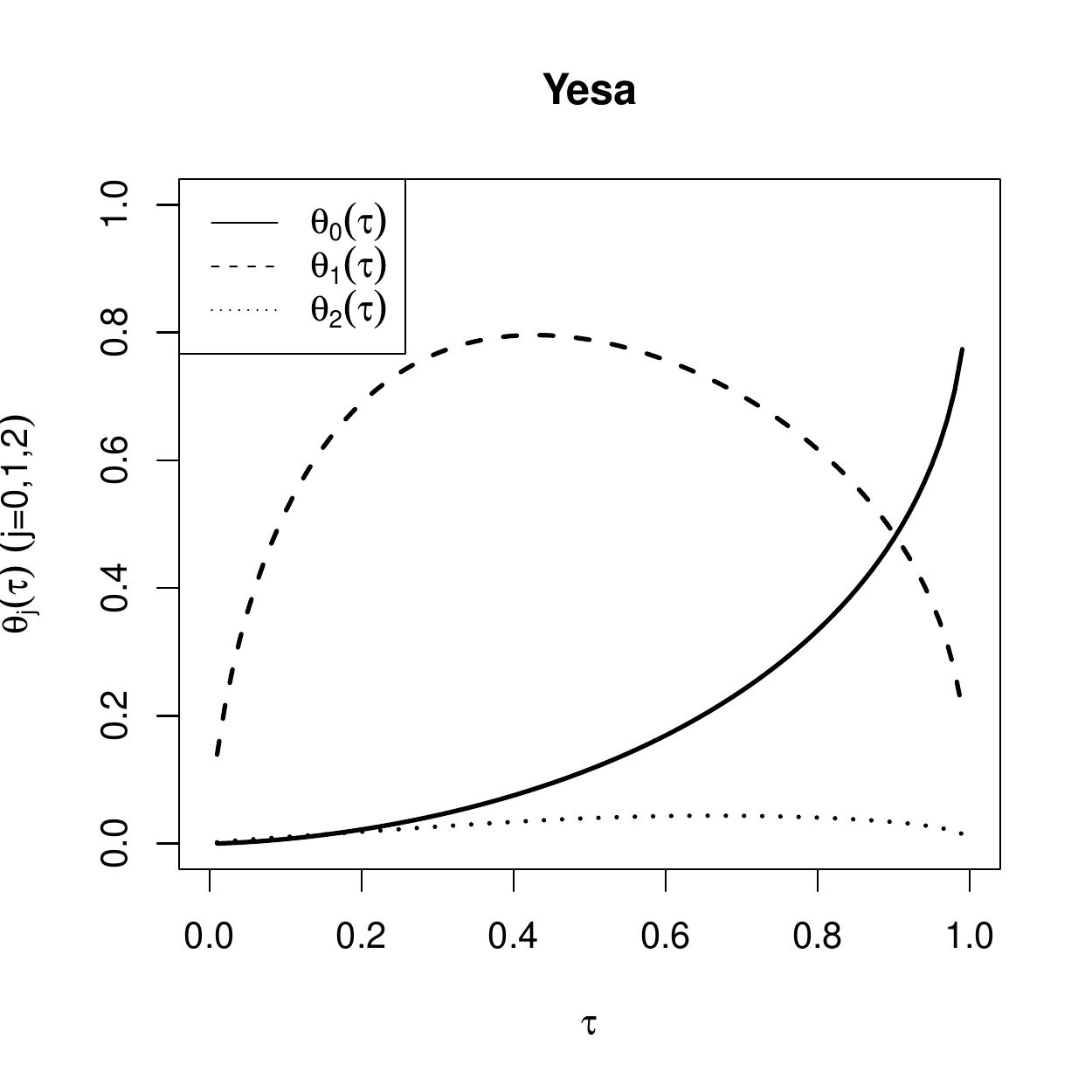}
\caption{Posterior mean of $\theta_0(\tau)$ (solid), $\theta_1(\tau)$ (dashed) and $\theta_2(\tau)$ (dotted) vs. $\tau$ for QAR2K1. All locations, MJJAS, 2015}
\end{figure}

\clearpage

\subsection{Multivariate QAR(1)}

\begin{figure}[!ht]
\centering
\includegraphics[width=5cm]{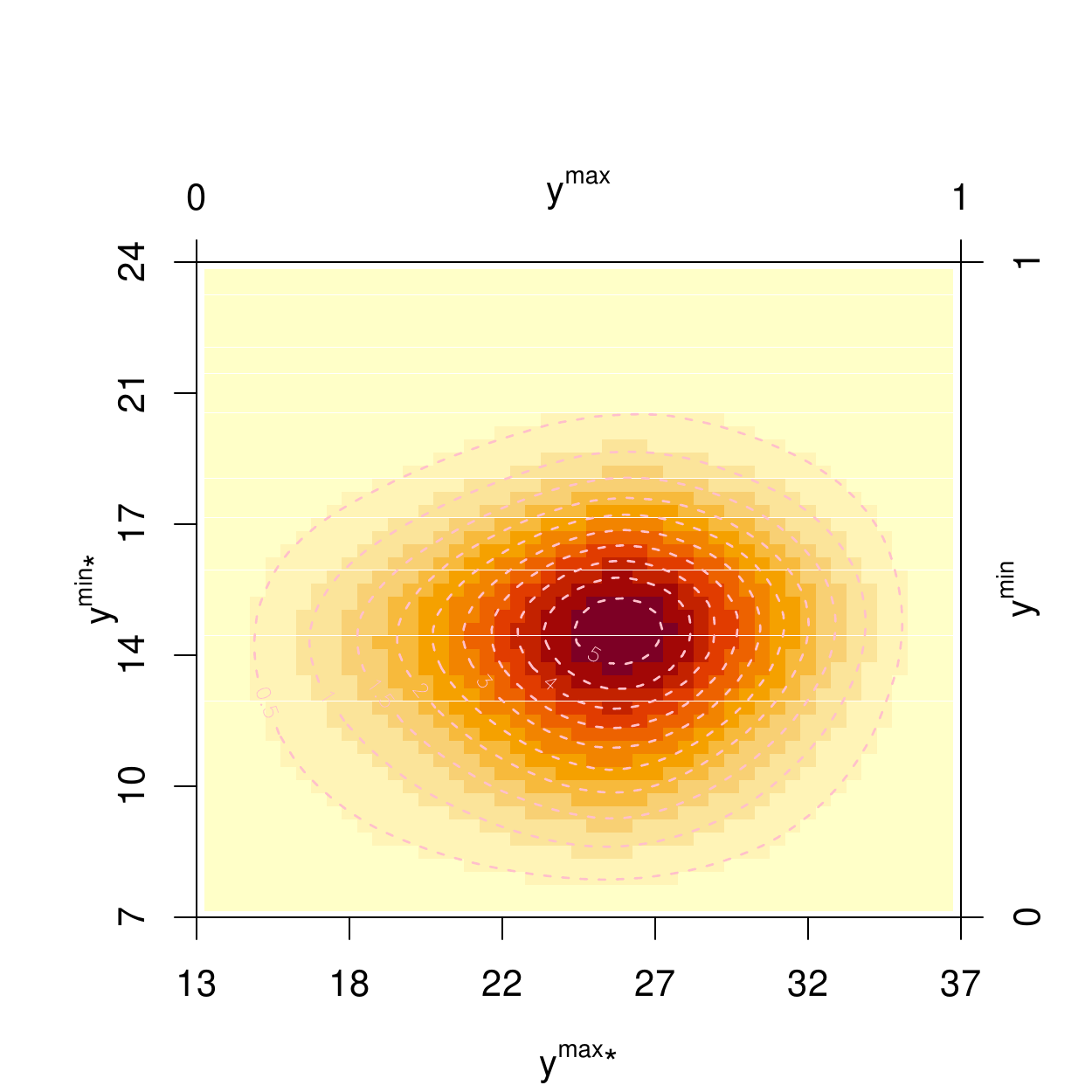}
\includegraphics[width=5cm]{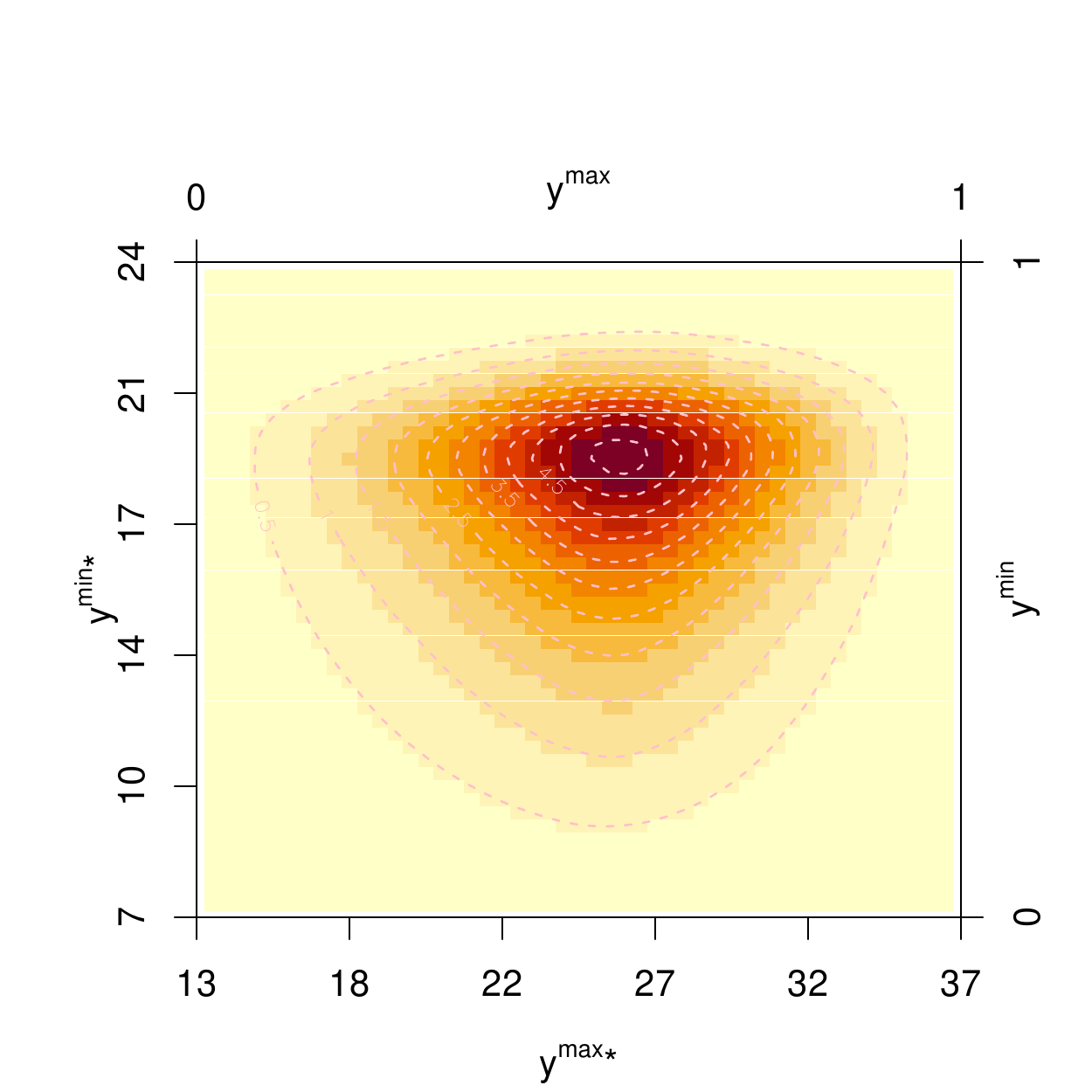} \\
\includegraphics[width=5cm]{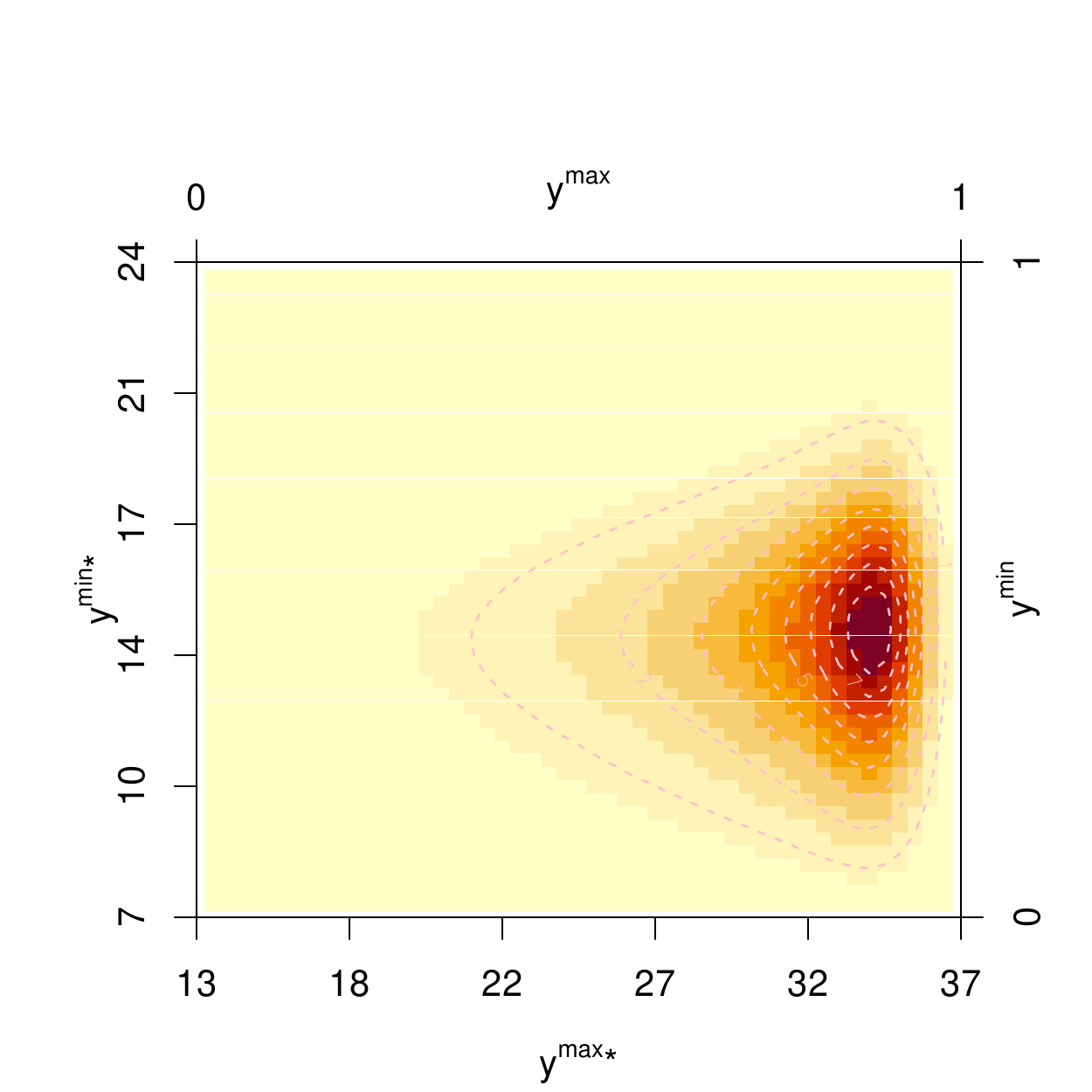}
\includegraphics[width=5cm]{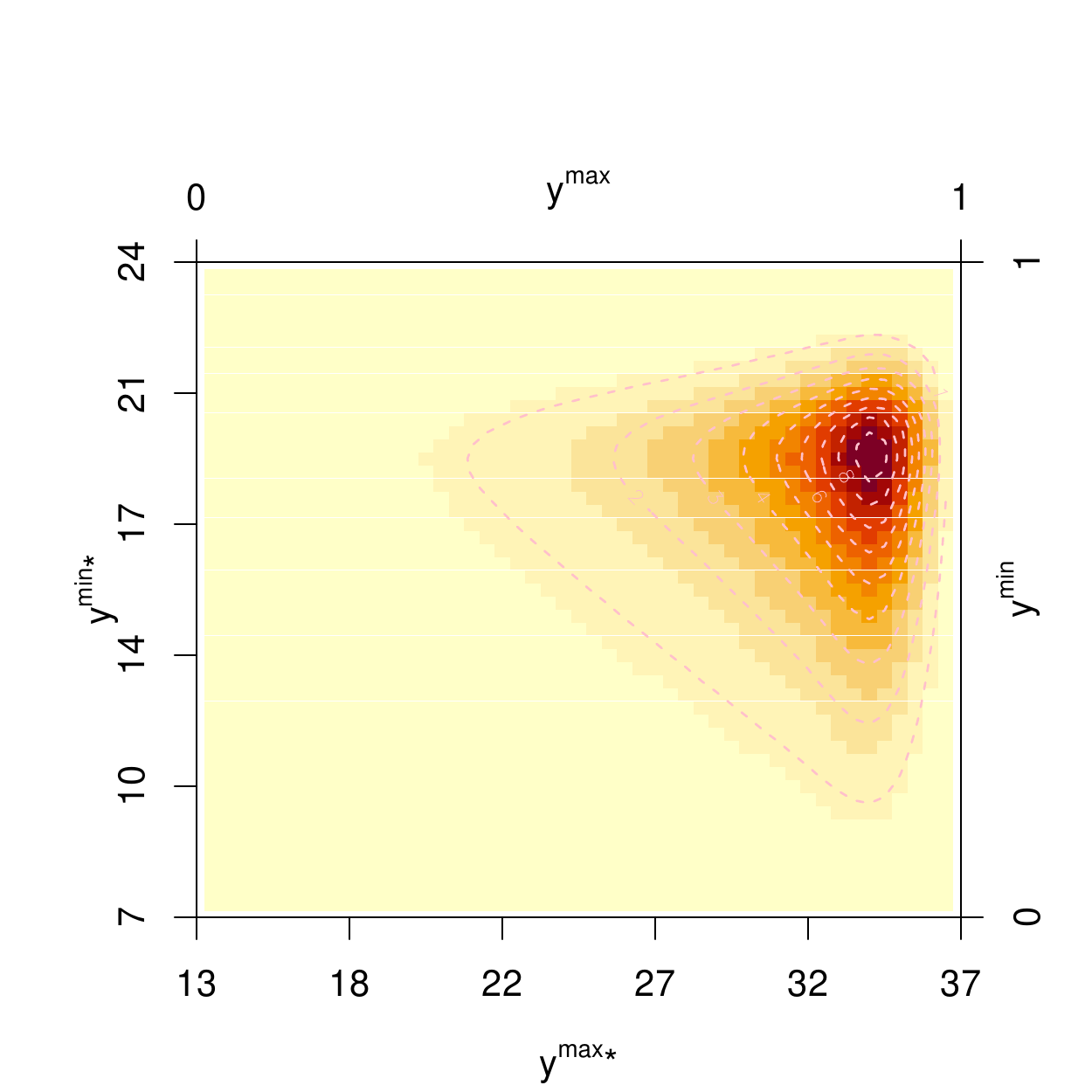}
\caption{Posterior mean of the density function of $(Y_{t}^{\text{max}}, Y_{t}^{\text{min}})$ conditioned on $(y^{\text{max}}, y^{\text{min}})$ for MQAR1K1. Here, $(y^{\text{max}}, y^{\text{min}})$ is equal to the respective empirical marginal quantiles for $\tau = 0.5$ (above), $0.9$ (below) for the maximum; and $\tau = 0.5$ (left), $0.9$ (right) for the minimum. Pamplona, MJJAS, 2015. }
\label{fig:fitted:density:Tx.Tn:Pamplona:MJJAS:2015}
\end{figure}

\begin{figure}[!ht]
\centering
\includegraphics[width=5cm]{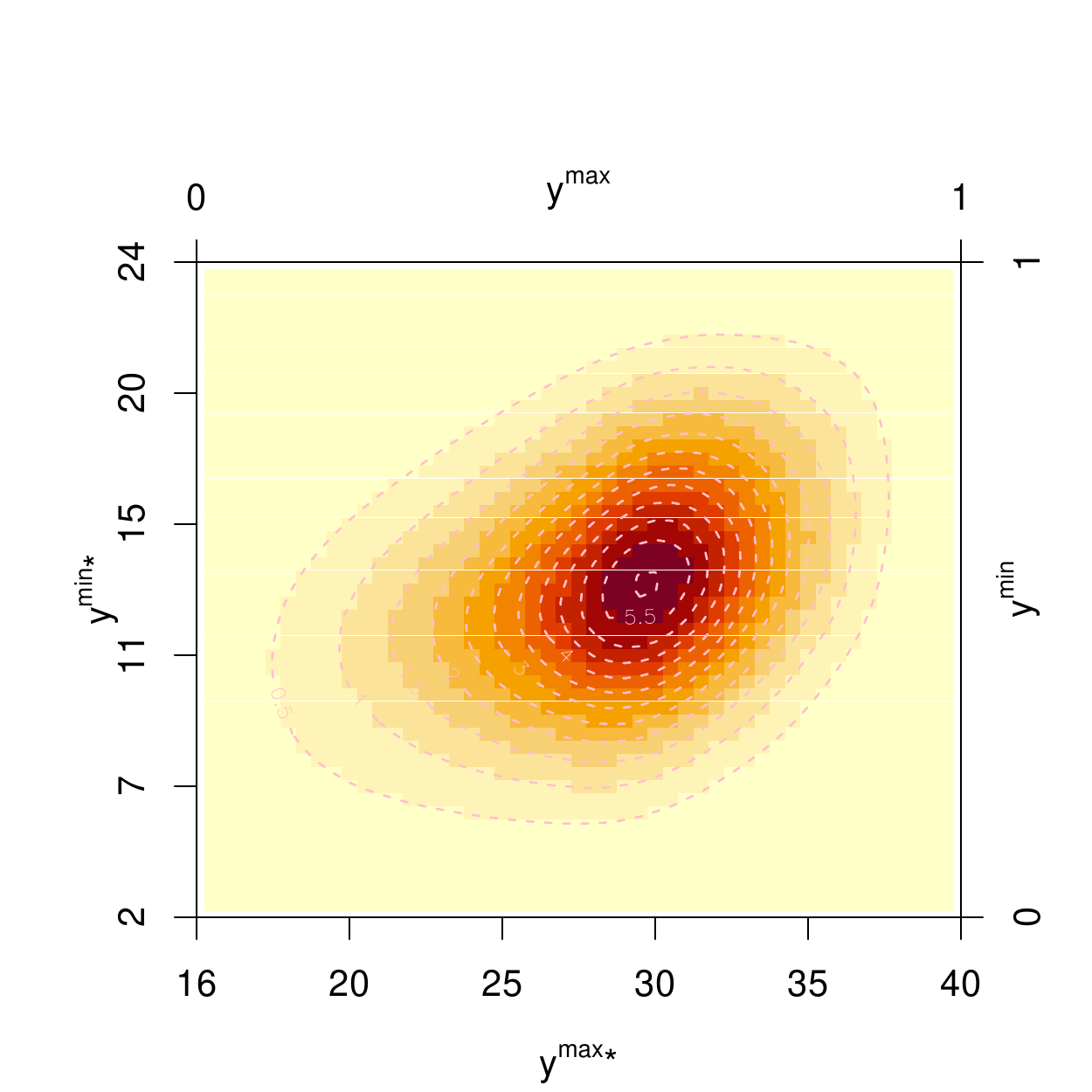}
\includegraphics[width=5cm]{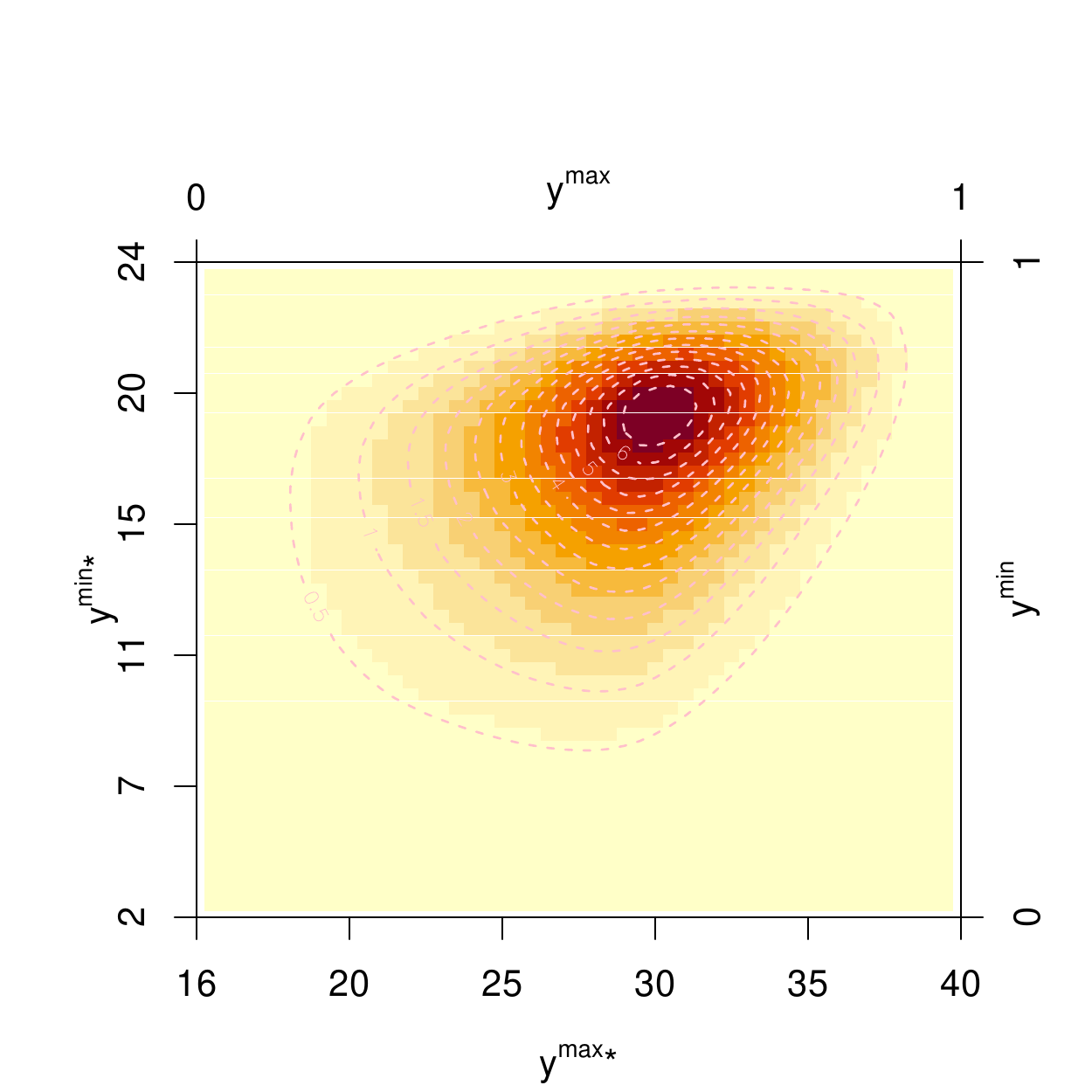} \\
\includegraphics[width=5cm]{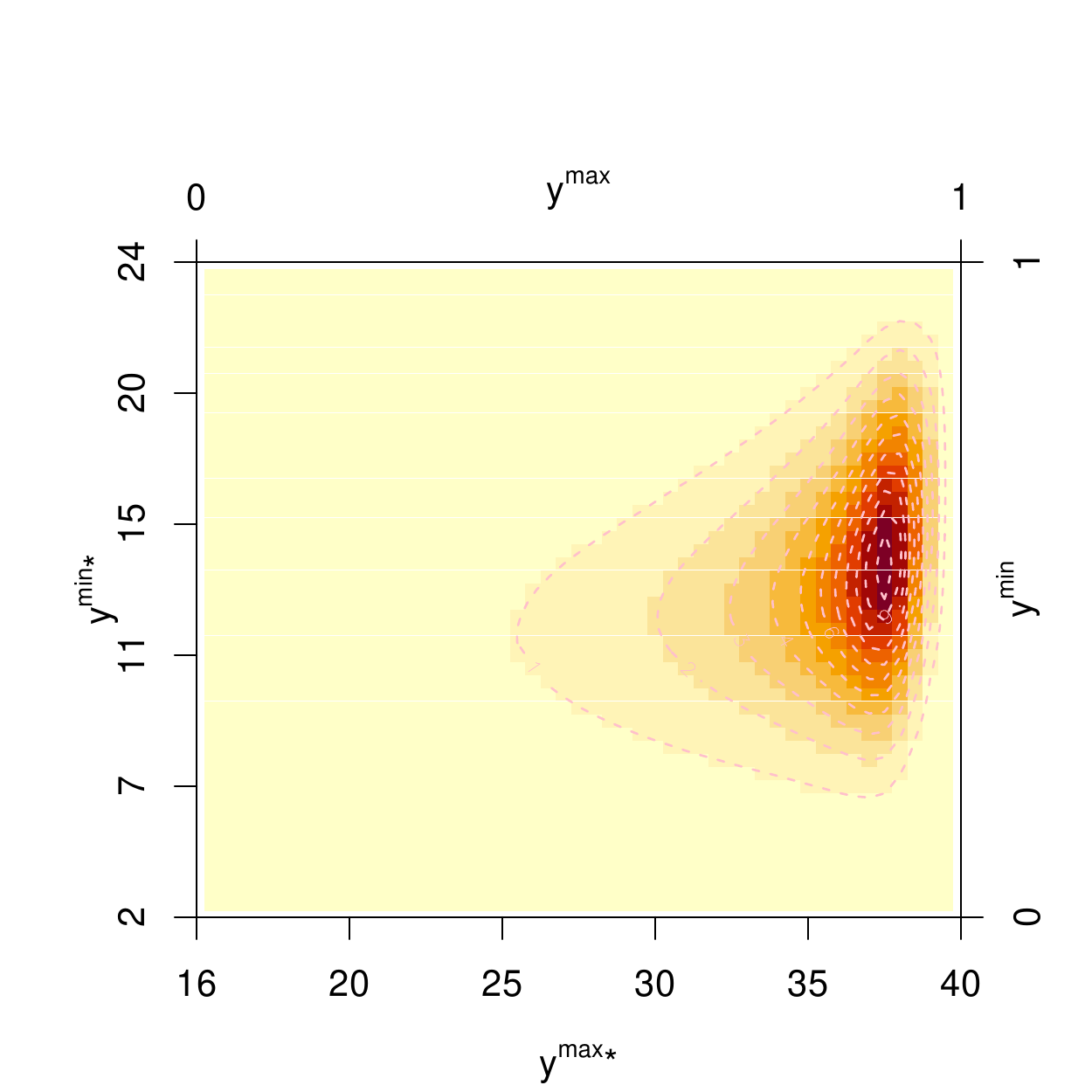}
\includegraphics[width=5cm]{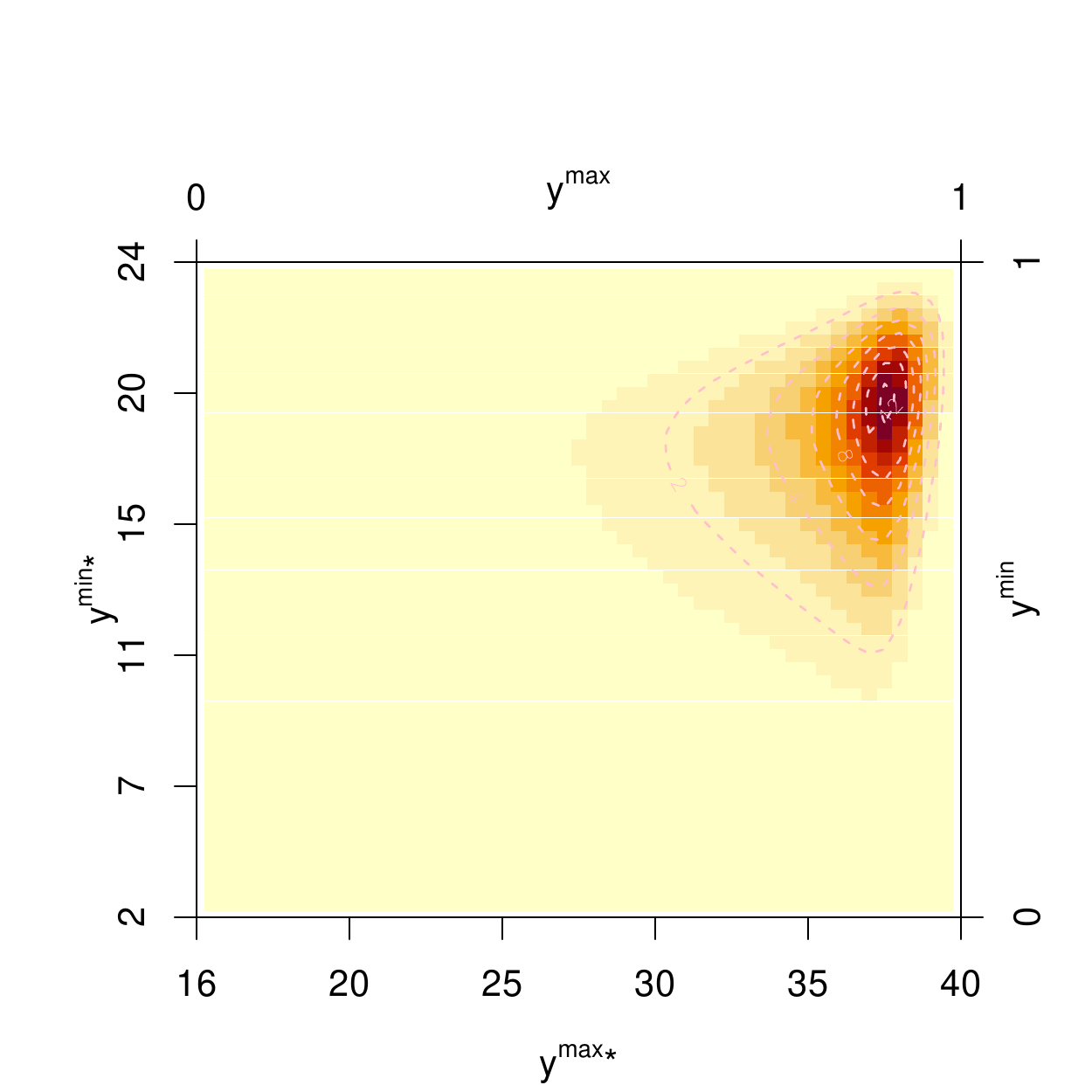}
\caption{Posterior mean of the density function of $(Y_{t}^{\text{max}}, Y_{t}^{\text{min}})$ conditioned on $(y^{\text{max}}, y^{\text{min}})$ for MQAR1K1. Here, $(y^{\text{max}}, y^{\text{min}})$ is equal to the respective empirical marginal quantiles for $\tau = 0.5$ (above), $0.9$ (below) for the maximum; and $\tau = 0.5$ (left), $0.9$ (right) for the minimum. Daroca, MJJAS, 2015. }
\label{fig:fitted:density:Tx.Tn:Daroca:MJJAS:2015}
\end{figure}

\clearpage

\subsection{Spatial QAR(1)}

\begin{figure}[!ht]
\centering
\includegraphics[width=5cm]{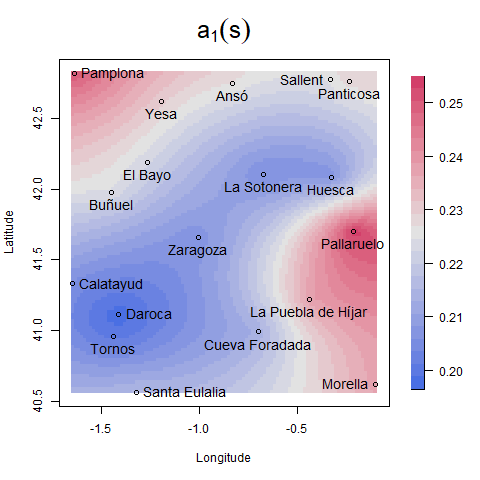}
\includegraphics[width=5cm]{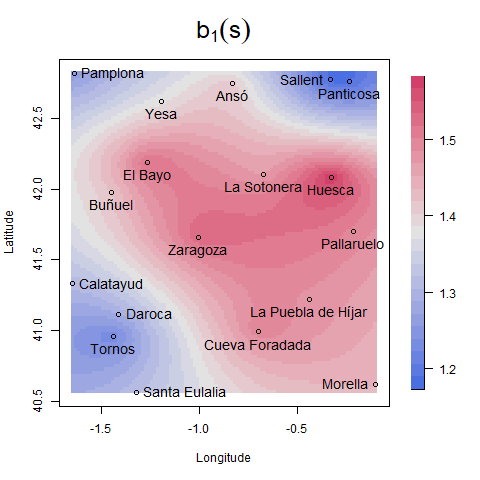} \\
\includegraphics[width=5cm]{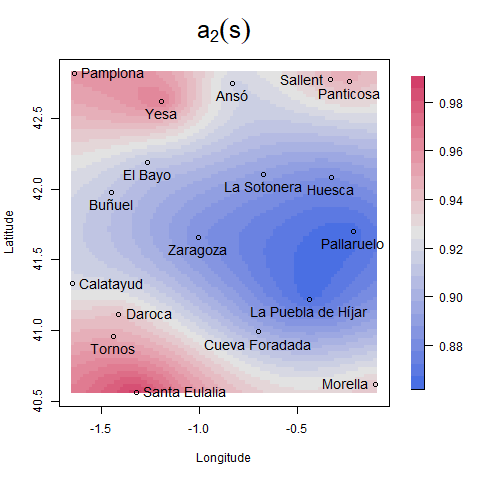}
\includegraphics[width=5cm]{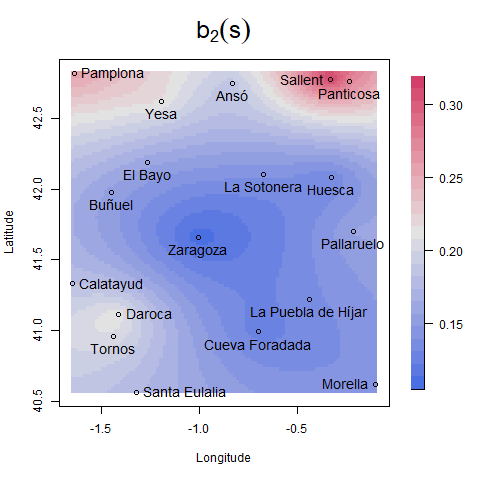}
\caption{Maps of the posterior mean of $a_1(\bs)$, $b_1(\bs)$, $a_2(\bs)$, and $b_2(\bs)$. } \label{fig:meanGP}
\end{figure}

\begin{figure}[!ht]
\centering
\includegraphics[width=5cm]{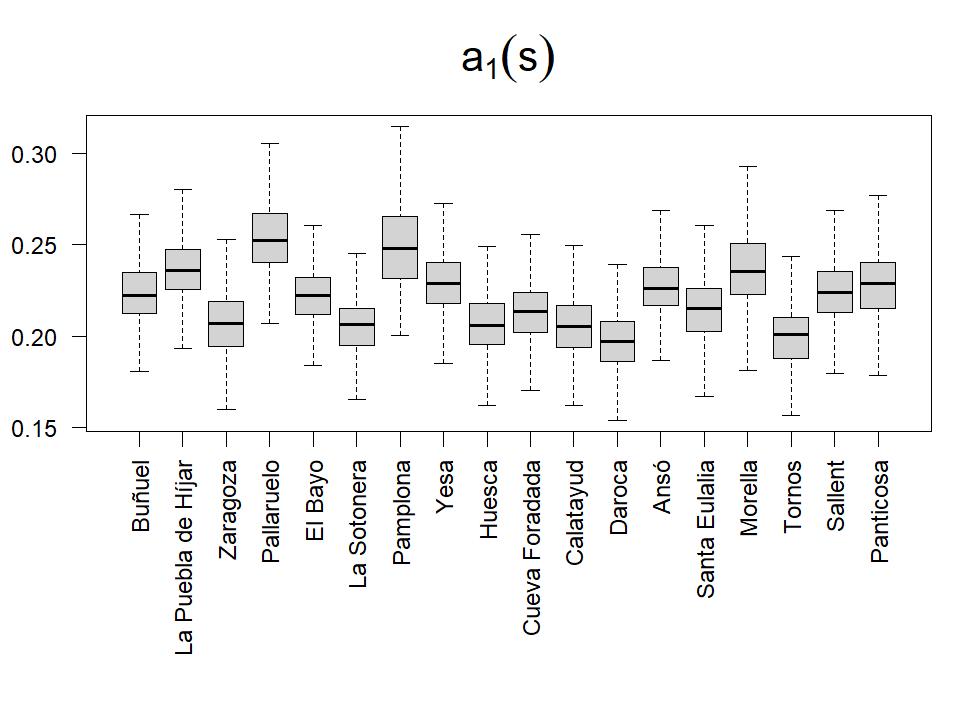}
\includegraphics[width=5cm]{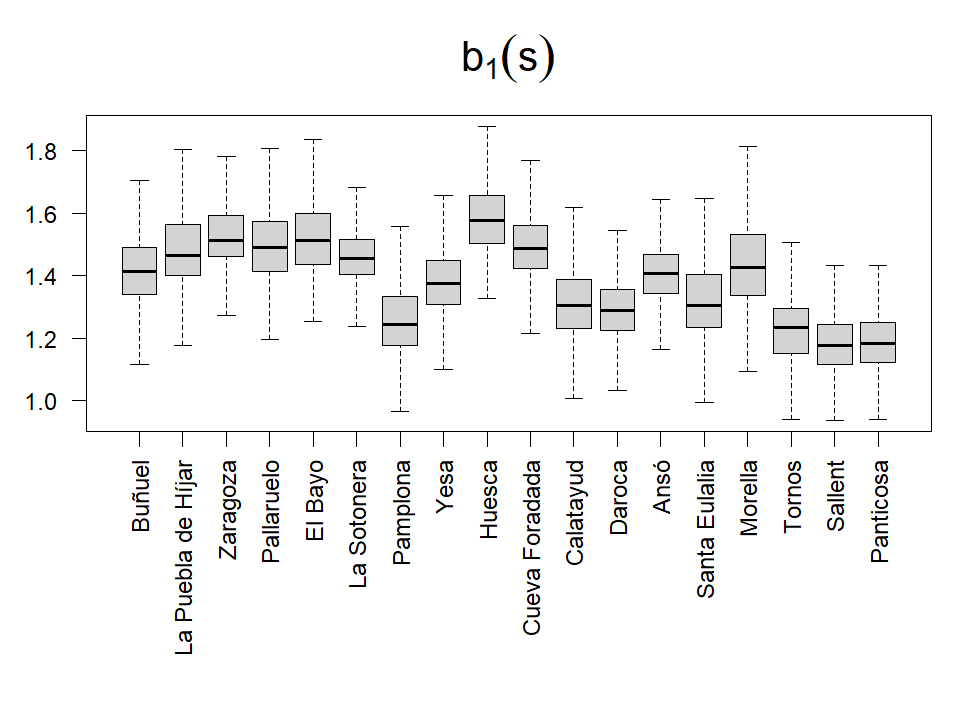} \\
\includegraphics[width=5cm]{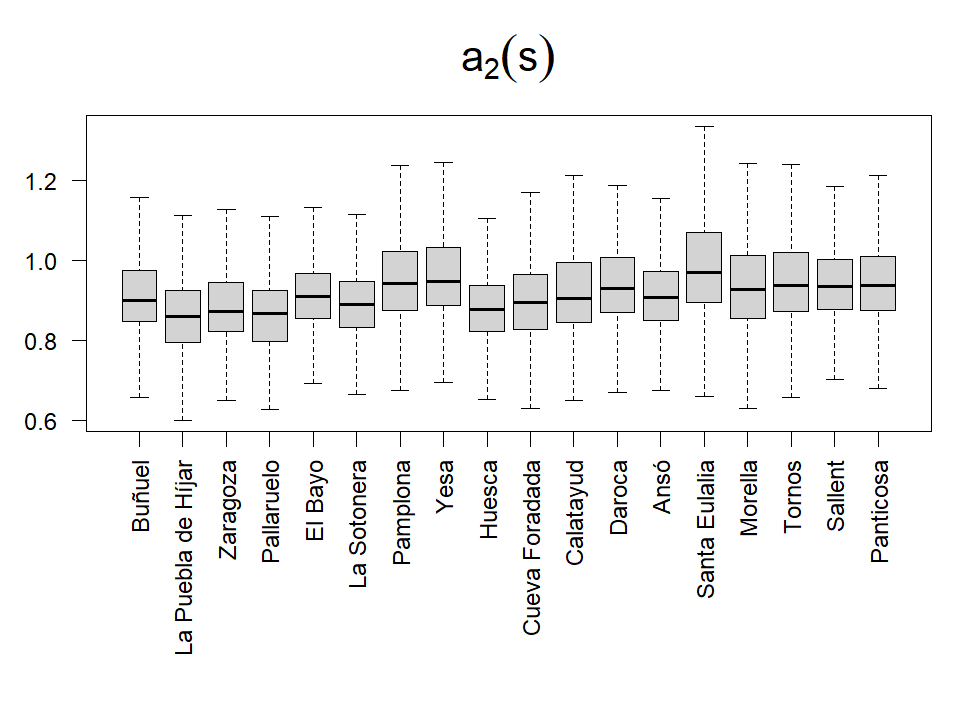}
\includegraphics[width=5cm]{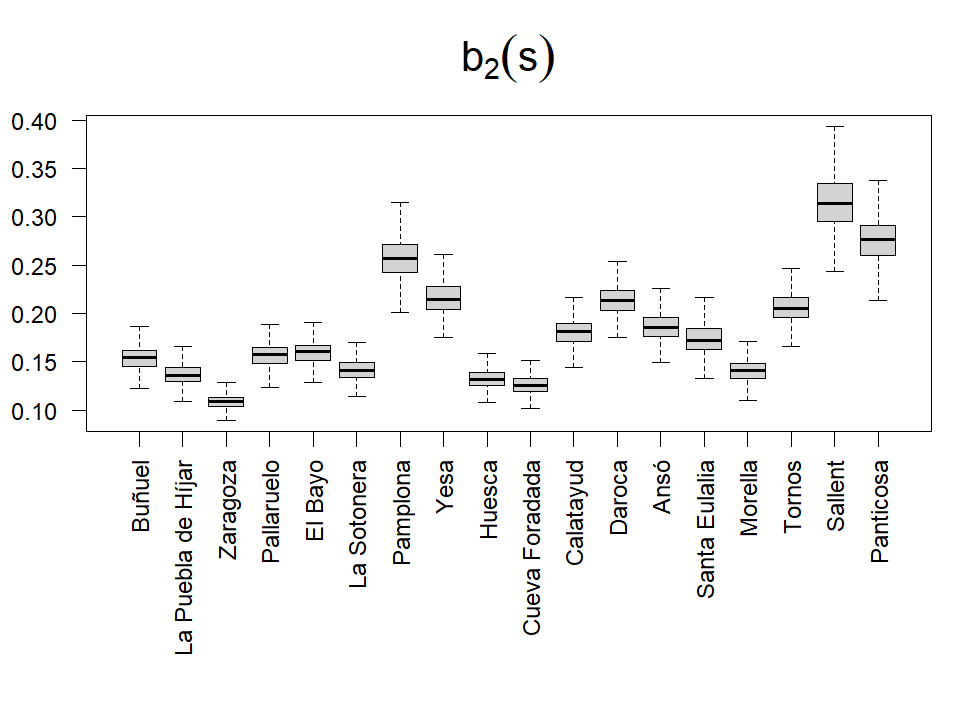}
\caption{Boxplots summarizing the posterior distribution of $a_1(\bs)$, $b_1(\bs)$, $a_2(\bs)$, and $b_2(\bs)$ at the observed locations sorted by elevation. } \label{fig:boxplotGP}
\end{figure}

\begin{figure}[!ht]
\centering
\includegraphics[width=5cm]{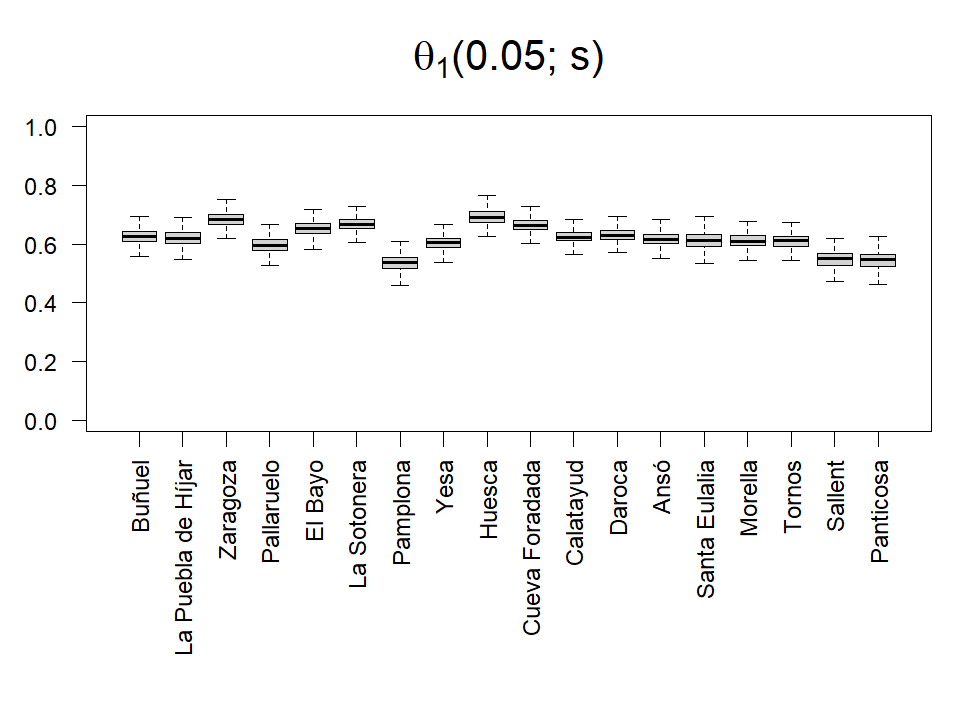}
\includegraphics[width=5cm]{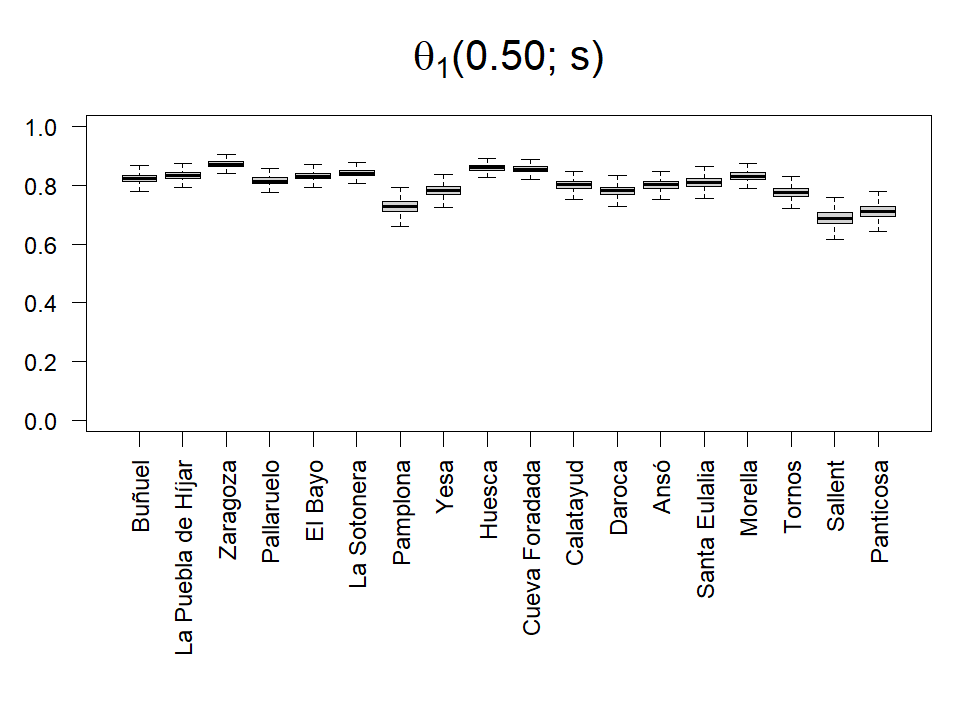}
\includegraphics[width=5cm]{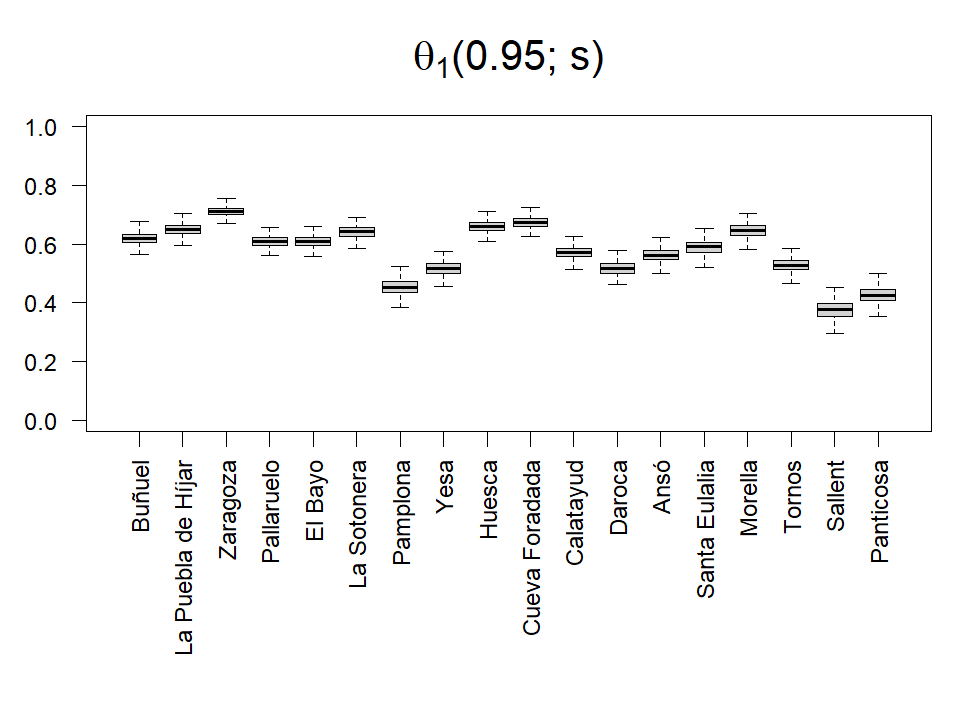}
\caption{Boxplots summarizing the posterior distribution of $\theta_1(\tau;\bs)$ for $\tau = 0.05,0.50,0.95$ at the observed locations sorted by elevation. } \label{fig:boxplottheta1}
\end{figure}

\begin{figure}[!ht]
\centering
\includegraphics[width=4cm]{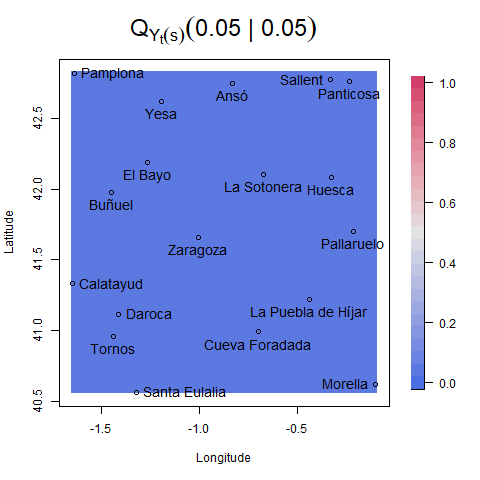}
\includegraphics[width=4cm]{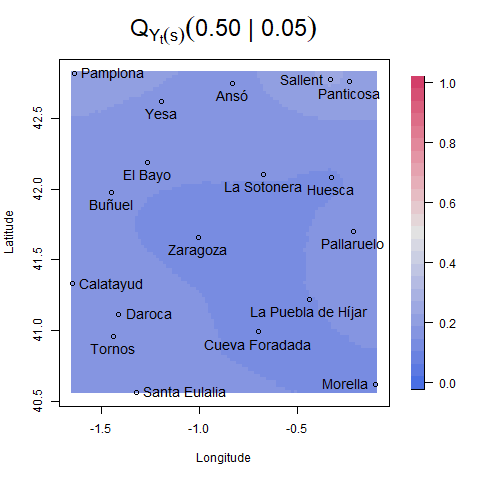}
\includegraphics[width=4cm]{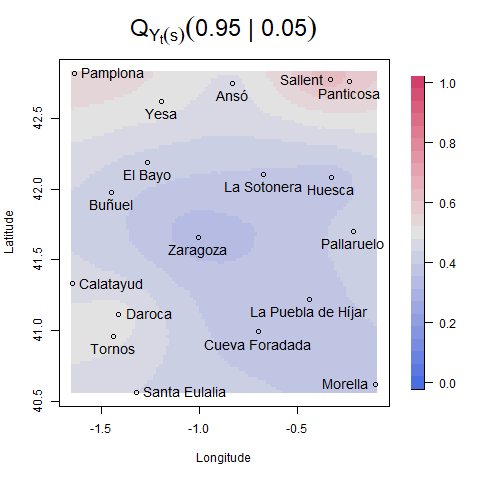} \\
\includegraphics[width=4cm]{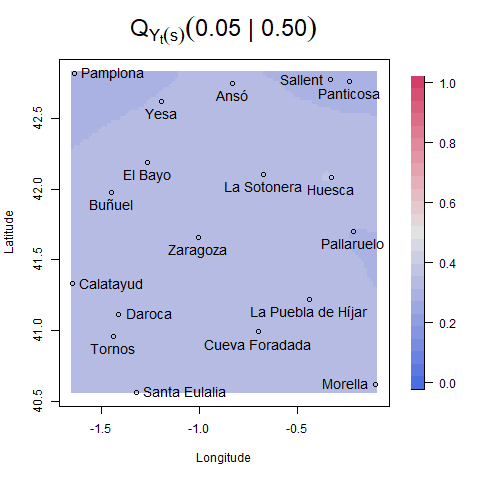}
\includegraphics[width=4cm]{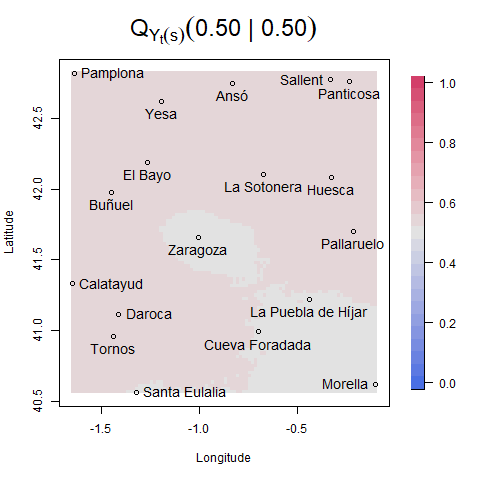}
\includegraphics[width=4cm]{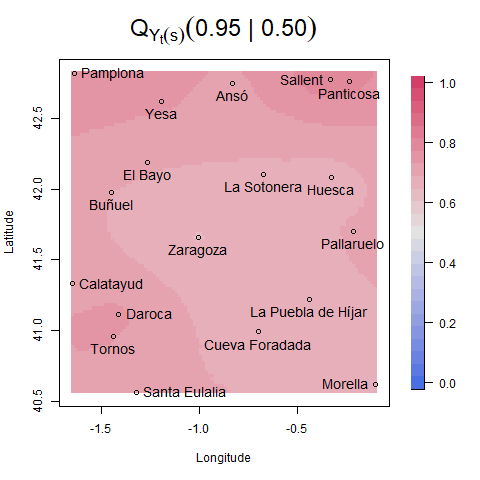} \\
\includegraphics[width=4cm]{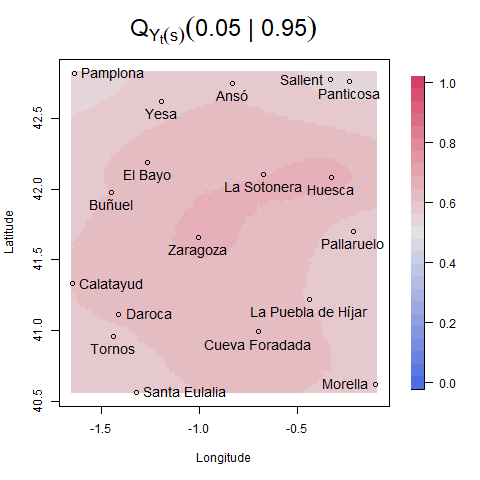}
\includegraphics[width=4cm]{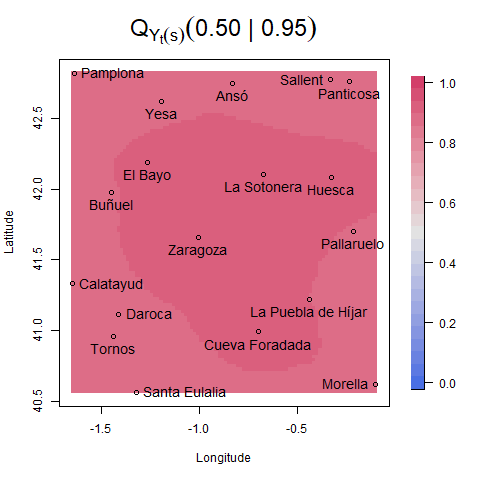}
\includegraphics[width=4cm]{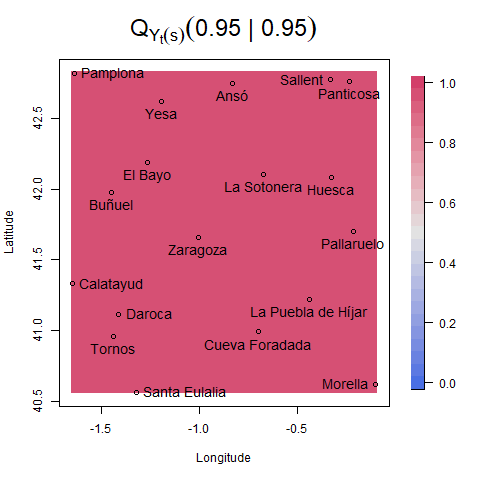}
\caption{Maps of the posterior mean of $Q_{Y_t(\bs)}(\tau \mid y)$ for $\tau = 0.05,0.50,0.95$ and $y = 0.05,0.50,0.95$. } \label{fig:meanQ}
\end{figure}

\clearpage

\bibliographystyle{sn-chicago}
\bibliography{arxiv_v1}